\documentclass[journal,11pt,onecolumn,draftclsnofoot,]{IEEEtran}
\IEEEoverridecommandlockouts

\newcommand{\circled}[1]{\small{\raisebox{.6pt}{\textcircled{\raisebox{-.8pt}{#1}}}}}

\usepackage{amssymb}
\usepackage{amsmath,mathrsfs,dsfont}
\usepackage{nicefrac}
\usepackage{algorithm}
\usepackage{algorithmicx}
\usepackage{algpseudocode}

\usepackage{booktabs}

\usepackage{color}
\usepackage{enumitem}

\usepackage{array,cite}
\usepackage{graphicx,tikz}
\usepackage[mathscr]{euscript}
\usepackage{amsthm}
\usepackage{cite}
\usepackage{calrsfs}

\usepackage{bm}
\usepackage{bbm}
\usepackage{color}

\usepackage{color}
\usepackage{epstopdf}
\usepackage{subcaption}
\usepackage{cleveref}
\usepackage{thmtools}
\usepackage{thm-restate}

\usepackage{bbding}

\captionsetup[subfigure]{subrefformat=simple,labelformat=simple}

\newcounter{optproblem}

\theoremstyle{plain}
\newtheorem{theorem}{Theorem}
\newtheorem{remark}[theorem]{Remark}
\newtheorem{proposition}[theorem]{Proposition}
\newtheorem{corollary}[theorem]{Corollary}

\newtheorem*{theorem*}{Theorem}
\newtheorem*{lemma*}{Lemma}
\newtheorem*{remark*}{Remark}
\newtheorem{lemma}[theorem]{Lemma}

\newenvironment{proofoutline}
{\begin{proof}[Proof outline]}
{\end{proof}}

\theoremstyle{definition}

\theoremstyle{remark}

\DeclareMathAlphabet{\pazocal}{OMS}{zplm}{m}{n}
\DeclareMathAlphabet{\mathpzc}{OMS}{pzc}{m}{it}

\setlist[itemize]{leftmargin=*}



\renewcommand{\hat}{\widehat}

\newcommand{\bfm}[1]{\ensuremath{\mathbf{#1}}}
\newcommand{\bfsym}[1]{\ensuremath{\boldsymbol{#1}}}

   \def\bA{\bfm A}  
   \def\bB{\bfm B}  \def\BB{\mathbb{B}}

\def\be{\bfm e}     
    
\def\bg{\bfm g}     
     
   \def\bI{\bfm I}

\def\br{\bfm r}     \def\RR{\mathbb{R}}

   \def\bU{\bfm U}  
   \def\bV{\bfm V}  
   \def\bW{\bfm W}  
\def\bx{\bfm x}   \def\bX{\bfm X}  
\def\by{\bfm y}   \def\bY{\bfm Y}  
\def\bz{\bfm z}   \def\bZ{\bfm Z}

\def\calE{{\cal  E}}

\def\calH{{\cal  H}}

\def\calP{{\cal  P}}

\def\bZero{\bfm 0}


\def\bmu{\bfsym {\mu}}

\def\bpi{\bfsym {\pi}}
\def\bPi{\bfsym {\Pi}}
\def\btheta{\bfsym {\theta}}
\def\bTheta {\bfsym {\Theta}}

\def\bSigma{\bfsym \Sigma}

\def\bxi{\bfsym {\xi}}


\def\+#1{\mathcal{#1}}
\def\-#1{\textup{#1}}

\def\br#1{\overline{#1}}

\def\bracket#1{\left(#1\right)}
\def\Bracket#1{\left[#1\right]}

\def\defequal {\triangleq}

\def\wh{\widehat}
\def\wt{\widetilde}

\newcommand{\abs}[1]{\left\lvert#1\right\rvert}

\newcommand{\diag}{\textup{diag}}

\newcommand{\la}{\left \langle}
\newcommand{\ra}{\right \rangle}

\newcommand{\La}{\left\langle\kern-0.64ex\left\langle}
\newcommand{\Ra}{\right\rangle\kern-0.64ex\right\rangle}

\def\Norm#1#2{{\left\vert\kern-0.4ex\left\vert\kern-0.4ex\left\vert #1
    \right\vert\kern-0.4ex\right\vert\kern-0.4ex\right\vert}_{#2}}
\def\norm#1#2{{\left\|#1\right\|}_{#2}}

\def\ltwonorm#1{\norm{#1}{2}}
\def\fnorm#1{\norm{#1}{\textup{F}}}

\def\opnorm#1{\norm{#1}{\textup{OP}}}

\def \Proj  {\mathbb{P}}

\def \rmt   {\top}

\newcommand{\trace}{\textup{trace}}



\def\set#1{\left\{#1\right\}}

\newcommand{\argmax}{\textup{argmax}}
\newcommand{\argmin}{\textup{argmin}}

\def \Ind {\mathbbm{1}}
\def \St  {\textup{~s.t.~}}


\def \Expc {\mathbb{E}}
\def \Cov  {\textup{cov}}

\def \logdet {\log\det}
\def \normdist {\mathcal{N}}
\def \Prob {\textup{Pr}}

\def \Var  {\textup{var}}

\def \lsim {\lesssim}
\def \gsim {\gtrsim}


\def \vcap {\wedge}
\def \vcup {\vee}

\def \snr {\mathsf{\mathbf{snr}}}
\def \dh {\mathsf{d_H}}
\def \entH {\mathsf{H}}
\def \enth {\mathsf{h}}
\def \entI {\textup{I}}

\def\Proj{P}


\title{\huge \bfseries The Benefits of Diversity: Permutation Recovery in Unlabeled Sensing from
Multiple Measurement Vectors}
\author{\vspace{0.2in}\textbf{Hang Zhang, Martin Slawski, and Ping Li}
\\ Cognitive Computing Lab\\ Baidu Research\\
10900 NE 8th ST, Bellevue, WA 98004, USA
}
\date{}
\begin{document}
\maketitle


\begin{abstract}
In~\footnote{Partial preliminary results appeared in 2019 IEEE International Symposium on Information Theory (ISIT'19), Paris, France.
}  ``Unlabeled Sensing", one observes a set of linear measurements of an underlying signal with incomplete or missing information about their ordering,
which can be modeled in terms of an unknown permutation.
Previous work on the case of a single noisy measurement vector has exposed two main challenges: 1) a high requirement concerning the \emph{signal-to-noise ratio} ($\snr$), i.e., approximately
of the order of $n^{5}$, and 2) a massive computational burden in light of NP-hardness in general.
In this paper, we study the case of \emph{multiple} noisy measurement vectors (MMVs)
resulting from a \emph{common} permutation and investigate to what extent the
number of MMVs $m$ facilitates permutation recovery by ``borrowing strength''.
The above two challenges have at least partially been resolved within our work.
First, we show that a large stable rank of the signal significantly reduces the required snr which can drop from a polynomial in $n$ for $m = 1$ to a constant for $m
= \Omega(\log n)$, where $m$ denotes the number of MMVs and
$n$ denotes the number of measurements per MV.
This bound is shown to be sharp and is associated with a phase transition phenomenon.
Second, we propose a computational scheme for recovering the unknown permutation in practice. For the ``oracle case" with the known signal, the maximum likelihood (ML) estimator reduces to a linear assignment problem whose
global optimum can be obtained efficiently. For the case in which both the signal and permutation are unknown,
the problem is reformulated as a bi-convex optimization problem with an auxiliary variable, which can be solved by the Alternating Direction Method of Multipliers (ADMM). Numerical experiments based on the proposed computational scheme confirm the tightness of our theoretical analysis.
\end{abstract}

\newpage

\section{Introduction}
Noisy linear sensing with $m$ measurement vectors is described by the relation
\begin{equation}
\label{eq:limo}
\bY = \bX\bB^{*} + \bW,
\end{equation}
where $\bY \in \RR^{n\times m}$ represents the observed $m$ measurements, $\bX
\in \RR^{n\times p}$ represents the sensing matrix, and the columns of $\bB^{*}
\in \RR^{p \times m}$ contain $m$ signals of interest with dimension $p$
each, and $\bW \in \RR^{n\times m}$ represents additive noise. Model~\eqref{eq:limo} also arises in linear regression modeling with $m$ response
variables and $p$ explanatory variables~\cite{Mardia1979}. Least squares
regression yields the estimator $\hat{\bB} = (\bX)^{\dagger}\bY$, where
$(\cdot)^{\dagger}$ denotes the Moore-Penrose inverse. The properties of
$\hat{\bB}$ under various assumptions on the noise $\bW$ are well-known. In this
paper, we consider the more challenging situation in which we observe $m$
measurements with missing or incomplete information about their ordering, i.e.,
the correspondence between the rows of $\bY$ and the rows of $\bX$ has been lost.
Put differently, we observe data according to~\eqref{eq:limo} up to an unknown
permutation:
\begin{equation}\label{eq:limo_perm}
\bY = \bPi^{*} \bX\bB^{*} + \bW,
\end{equation}
where $\bPi^{*}$ is an $n$-by-$n$ permutation matrix. Ignoring the unknown
permutation can significantly impair performance with regard to the estimation of
$\bB^{*}$. We herein consider recovery of $\bPi^{*}$ given $(\bX, \bY)$. The latter
suffices for signal recovery since with restored correspondence the setup
becomes standard. In addition, recovery of $\bPi^{*}$ may be of its own interest,
as can be seen from selected example applications sketched below that motivate
the setting~\eqref{eq:limo_perm}. It is worth emphasizing that the latter assumes
that the permutation is shared across the $m$ sets of measurements, and hence does
not apply to situations in which each of those involves its individual permutation.

\emph{Header-Free Communication}. As discussed, e.g., in~\cite{pananjady2017denoising, Shi2019HFC}, in sensor networks with stringent requirements concerning latency and communication footprint, it can be beneficial to omit sensor metadata when transmitting measurements to the fusion center in an effort to minimize latency and communication cost. In this case, signal recovery without metadata such as sensor identifiers involves an unknown permutation.

\emph{Post-Linkage Data Analysis}.
It is often much more cost-efficient to combine data from existing databases rather than collecting new data containing all variables of interest. Due to data formatting and data quality issues, linkage of records pertaining to the same entity can be error-prone. As a result, downstream data analysis such as linear regression or estimation of the cross-covariance between $\bX$ and $\bY$ can be affected, and modeling mismatches via a permutation has been studied recently as a mitigation strategy~\cite{slawski2017linear}. 

\emph{Data Privacy}. In linkage attacks, intruders aim at the disclosure of sensitive data by using external data and record linkage. There is a long history of attacks in which public data was combined with de-identified data to reveal sensitive information, e.g.,~\cite{Sweeney2001, narayanan2008robust}. Those examples involve direct comparison of two data sets $\bPi^* \bX$ and $\bY$; the regression setup~\eqref{eq:limo_perm} arises as a natural generalization.

\emph{Unsupervised Alignment}. Aligning two sets of points is a fundamental task with applications in computer vision, curve registration, and natural language processing. A problem of recent interest is the alignment of embeddings of text corpora into the unit sphere $\mathbb{S}^{p-1}$ in $\mathbb{R}^p$~\cite{Shi2018, Grave19}. For example,~\cite{Shi2018} {formulate} the automated translation between different versions of medical diagnosis codes used in electronic health systems as a problem of the form~\eqref{eq:limo_perm} given two sets of vectors $\bX$ and $\bY$ in $\mathbb{S}^{p-1}$ representing embeddings of two different versions of medical diagnosis codes.

Additional examples can be found among the references provided in the next section.
\subsection{Related work}
The work~\cite{unnikrishnan2015unlabeled} discusses \emph{signal recovery} under
setup~\eqref{eq:limo_perm} dubbed `Unlabeled Sensing'' therein for the
case of a single measurement vector ($m = 1$) and no noise ($\bW = 0$). It is
shown that if the entries of the sensing matrix $\bX$ are drawn from a continuous
distribution over $\RR$, the condition $n \geq 2p$ is required for signal
recovery by means of exhaustive search over all permutation matrices. The
authors also motivate the problem from a variety of applications, including the
reconstruction of spatial fields using mobile sensors, time-domain sampling in
the presence of clock jitter, and multi-target tracking in radar. Alternative
proofs of the main result in~\cite{unnikrishnan2015unlabeled} are shown in
\cite{Domankic2018, Tsakiris2018b}.

A number of recent papers discuss the case $m = 1$ and Gaussian $\bW$. The paper
\cite{pananjady2016linear} {establishes} the statistical limits of exact and
approximate permutation recovery based on the ratio of signal energy and
noise variance henceforth referred to as ``$\snr$". In~\cite{pananjady2016linear}, it is also demonstrated that the least squares estimation of $\bPi^{*}$ is NP-hard in
general. In~\cite{hsu2017linear}, a polynomial-time approximation algorithm is
proposed, and lower bounds on the required $\snr$ for approximate signal recovery
in the noisy case are shown; related results can be found in
\cite{abid2017linear, slawski2017linear}. The works~\cite{slawski2017linear,slawski2019two,SlawskiRahmaniLi2018,Shi2018} discuss both signal and permutation recovery if $\bPi^{*}$ only permutes a
small fraction of the rows of the sensing matrix. An interesting variation of
\eqref{eq:limo_perm} in which $\bPi^{*}$ is an unknown selection matrix that
selects a fraction measurements in an order-preserving fashion is studied in~\cite{Haghighatshoar2017}. The papers~\cite{TsakirisICML19, Peng2020} develop the approach in~\cite{Haghighatshoar2017} further by combining it with a careful branch-and-bound scheme to solve general unlabeled sensing problems.

Several papers~\cite{Emiya2014, pananjady2017denoising,slawski2019two,SlawskiRahmaniLi2018} have studied the setting of multiple
measurement vectors ($m\geq 2$) and associated potential benefits for permutation
recovery. The paper~\cite{Emiya2014} {discusses} a practical
branch-and-bound scheme for permutation recovery but does not provide
theoretical insights. The work~\cite{pananjady2017denoising} analyzes the
\emph{denoising problem}, i.e., recovery of $\bPi^{*} \bX \bB^{*}$, rather than
individual recovery of $\bPi^{*}$ and $\bB^{*}$. In~\cite{slawski2019two,SlawskiRahmaniLi2018}, the number of permuted rows
in the sensing matrix is assumed to be small, and are treated as outliers. Methods for
robust regression and outlier detection are proposed to perform signal recovery. While
both~\cite{slawski2019two,SlawskiRahmaniLi2018} also contain achievability
results for permutation recovery given an estimate of the signal, none of these
works provides information-theoretic lower bounds to assess the sharpness of the results.
Moreover, the method in~\cite{slawski2019two} limits the fraction of permuted rows to a constant multiple of the reciprocal of the signal dimension $p$, while the method in~\cite{SlawskiRahmaniLi2018} requires the number of MMVs $m$
to be of the same order of $p$ and additionally exhibits an unfavorable
running time that is cubic in the number of measurements. In the present paper, we eliminate
the limitations in~\cite{slawski2019two,SlawskiRahmaniLi2018} to a good extent.

\subsection{Summary of contributions}
Results in~\cite{pananjady2016linear} on the case $m = 1$ indicate that the \emph{maximum likelihood} (ML) estimator in~\eqref{eq:sys_ml_estimator} can be regarded as impractical from both statistical and computational viewpoints. On one hand, exact recovery of $\bPi^{*}$
requires $\snr = \Omega(n^c)$, where $c >
0$ is a constant that is approximately equal to $5$ according to simulations. As $n$ grows, this requirement becomes prohibitively strong. On the other hand,
the ML estimator~\eqref{eq:sys_ml_estimator} has been proven to be
NP-hard except for the special case $m = 1$ and $p=1$. To the best
of our knowledge, no efficient algorithm has been proposed yet. 
In this paper, by contrasting $m =1$ and $m \gg 1$,  our goal is  to tackle both obstacles.
Before giving a detailed account of our contribution, we first define
a crucial quantity, the \emph{signal-to-noise-ratio} ($\snr$)
\begin{equation}\label{eq:snr}
    \snr = {\fnorm{\bB^{*}}^2}/{(m \cdot \sigma^2)},
\end{equation}
where $\fnorm{\bA} = \sqrt{\sum_{i,j} \bA_{ij}^2}$ denotes the Frobenius of a matrix $\bA$ of arbitrary dimension.

\begin{itemize}
\item
We improve the requirement $\snr=\Omega(n^c)$ to roughly $\snr=
\Omega(n^{c/\varrho(\bB^{*})})$, where
$\varrho(\bB^{*}) = \frac{\fnorm{\bB^*}^2}{\opnorm{\bB^*}^2}$ is the
so-called stable rank of $\bB^{*}$, which is given by the squared ratio of the Frobenius norm and the operator norm $\opnorm{\cdot}$ of a matrix and constitutes a lower bound on its rank (e.g.,  Section 2.1.15 in~\cite{tropp2015introduction}). Once $\varrho(\bB^{*})$ is of the order
$\Omega(\log n)$, we notice that the $\snr$ is only required to be
of the order $\Omega(1)$ and hence does no longer need to increase with $n$.
The underlying intuition is that larger values of $m$ lead to relaxed
requirements on the $\snr$ since
$1)$ the overall signal energy increases, $2)$ all MMVs result
from the same permutation matrix $\bPi^{*}$, which is expected to yield extra
information. In our analysis, $1)$ is reflected by conditions on permutation
recovery involving dependence on the overall signal energy, while $2)$ enters
via a dependence on the stable rank $\varrho(\bB^{*})$ of the signal matrix $\bB^{*}$.
\item
We verify that the theoretical results can be attained in practice. For this purpose, we develop a practical algorithm for recovery
of $\bPi^{*}$ and $\bB^{*}$ via least squares fitting, which
is an NP-hard problem except for the special case with $p = m =1$. Our computational approach is based
on a reformulation as a bi-convex optimization problem involving an auxiliary variable that can be tackled via a scalable \emph{alternating direction method of multipliers} (ADMM) scheme~\cite{boyd2011distributed}.
Extensive numerical results based on this approach align with our theorems and confirm significant reductions of the  $\snr$ required for recovery of $\bPi^*$ as the stable rank $\varrho(\bB^{*})$ of the signal matrix $\bB^*$ increases.
\end{itemize}

We conclude this summary of contributions with an overview presented in Table~\ref{tab:relate_work} that compares the results herein to those obtained in related work.
\begin{table}[h!]
\centering
\caption{Overview on results in related work in comparison to those shown herein. The column ``$h_{\max}$" refers to the maximum Hamming distance between $\bPi^*$ and the identity
matrix with $h_{\max} = n$ referring to the fully shuffled case. ``Computable" refers
to the availability of practical computational schemes that achieve the theoretical guarantees established in each work.}
\begin{tabular}{c|ccccc}
\noalign{\hrule height 0.7pt}
Related work & $\snr$ & $n/p$ & $h_{\mathsf{max}}$ & Minimax Analysis & Computable \\
\hline
\cite{unnikrishnan2015unlabeled, Domankic2018, Tsakiris2018b}& $\infty$ & $\Omega(1)$ &  $n$ & $\boldsymbol{\times}$ &  $\boldsymbol{\times}$ \\
\cite{pananjady2016linear}  & $\Omega(n^c)$ & $\Omega(1)$ & $n$ & $\checkmark$ & $\boldsymbol{\times}$  \\
\cite{hsu2017linear} & $\infty$ & $\Omega(1)$ & $n$ & $\boldsymbol{\times}$ &  $\checkmark$ \\
\cite{slawski2017linear}  & $\Omega(n^c)$ & $\Omega(1)$ & $O\Bracket{\frac{n-p}{\log(n/h_{\mathsf{max}})}}$ & $\boldsymbol{\times}$ & $\checkmark$\\
\textbf{This work} & \textbf{$\Omega(n^{c/\varrho(\bB^{*})})$} & \textbf{$\Omega(1)$} & \textbf{$\Omega(\frac{n}{\log n})$} & $\checkmark$ & $\checkmark$ \\
\cite{slawski2019two} &$\Omega(n^{c/\varrho(\bB^{*})})$ & $\Omega(p\vcup h_{\mathsf{max}} )$ & $\Omega\Bracket{\frac{n}{\log(n/h_{\mathsf{max}})}}$ & $\boldsymbol{\times}$ & $\checkmark$   \\
\noalign{\hrule height 0.7pt}
\end{tabular}
\label{tab:relate_work}
\end{table}

\subsection{Outline}

The rest of the paper is organized as follows. The underlying sensing model is reviewed in Section~\ref{sec:sys_mdl}.
In Section~\ref{sec:oracle_recover},  we establish conditions that imply failure of recovery (inachievability).
This is followed by achievability results presented in Section~\ref{sec:prac_recover} and a discussion
of their tightness in relation to the corresponding inachievability results. The empirical evaluation and concluding remarks are
provided in Section~\ref{sec:num_simul} and Section~\ref{sec:conclusion},
respectively. A graphical representation of the structure of this paper is provided in
Fig.~\ref{fig:results}.

\begin{figure}[h!]
\begin{center}
\mbox{
\includegraphics[width=2.7in]{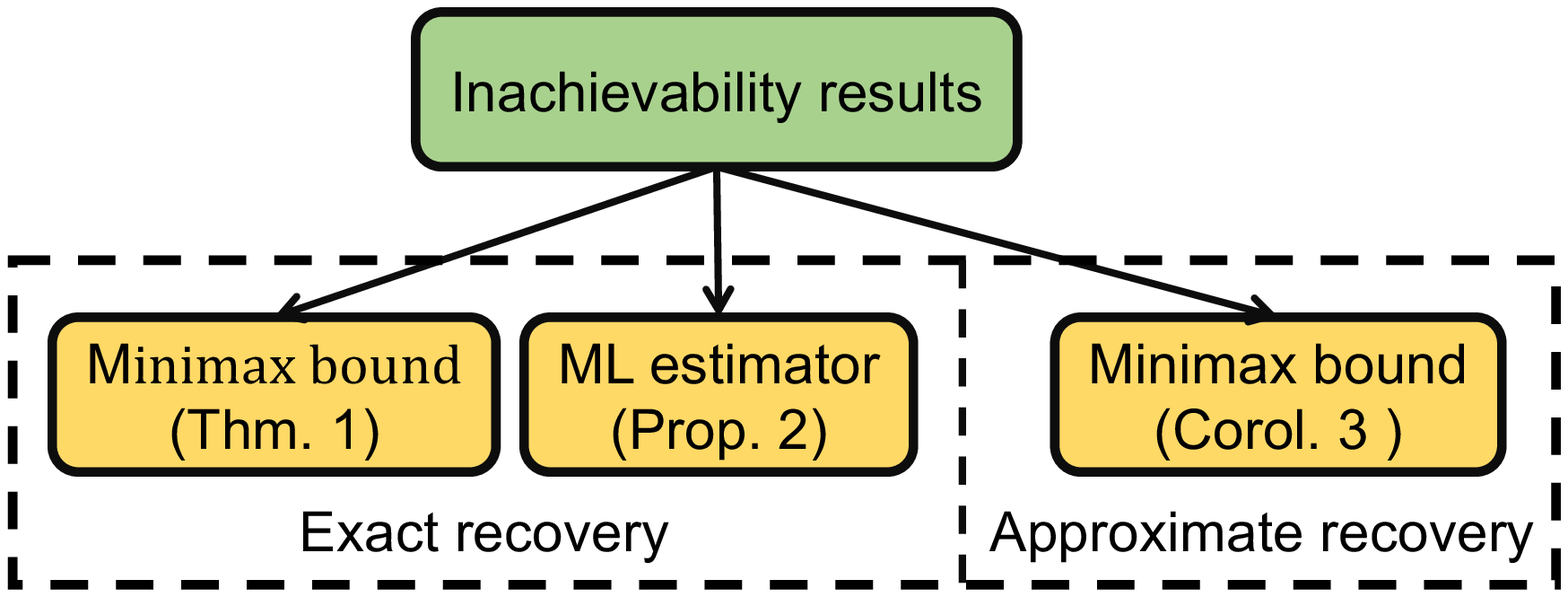}
\includegraphics[width=3.7in]{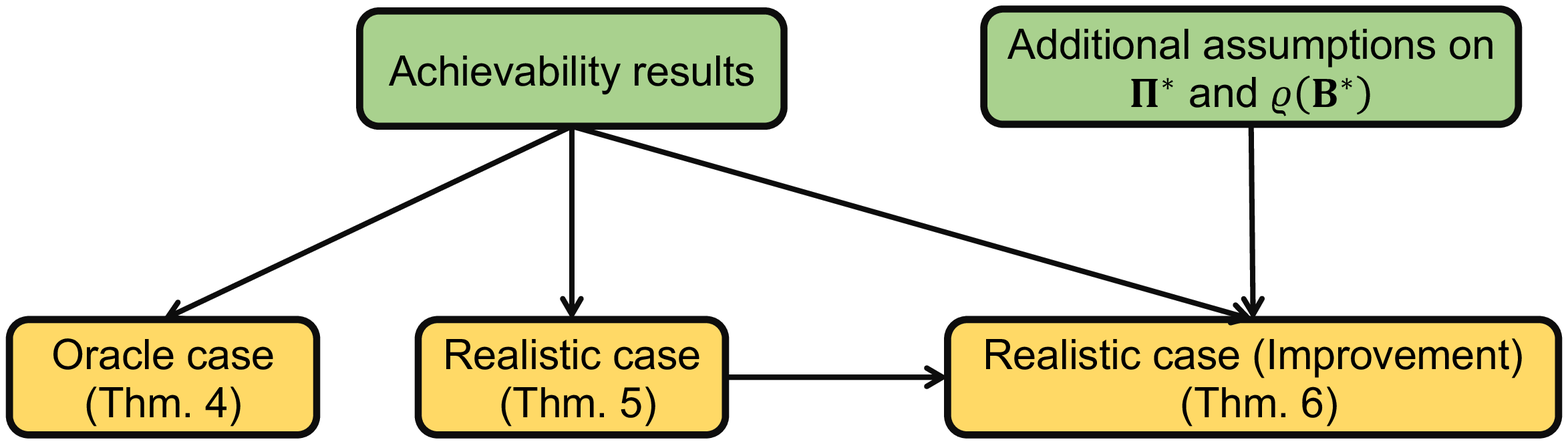}

}
\end{center}
\caption{A roadmap of the main results to be presented in the paper. \textbf{Left panel}:
inachievability results (Section~\ref{sec:oracle_recover});
\textbf{Right panel}: achievability results (Section~\ref{sec:prac_recover}).}
\label{fig:results}
\end{figure}

\section{System Model}\label{sec:sys_mdl}

We recall that the sensing model under consideration reads
\begin{equation}
\label{eq:sys_sense_relation}
\bY = \bPi^{*} \bX \bB^{*} + \bW, 	
\end{equation}
where $\bY \in \RR^{n\times m}$ represents the
results of the sensing process, $\bPi^{*} \in \RR^{n\times n}$ denotes
the unknown permutation matrix, $\bX \in \RR^{n\times p}$
($n\geq 2p$) is the sensing matrix, $\bB^{*} \in \RR^{p\times m}$
is the matrix of signals, and $\bW \in \RR^{n\times m}$
is the sensing noise. For what follows, we assume that the entries
$(X_{ij})$ of $\bX$ are i.i.d.~standard Gaussian random variables, i.e.,
$X_{ij} \sim \normdist(0,1)$, $1 \leq i \leq n, \; 1 \leq j \leq p$. Likewise, we assume that the entries
of $\bW$ are i.i.d.~$\normdist(0, \sigma^2)$-random variables,
where $\sigma^2 > 0$ denotes the noise variance.
The ML estimator of
$(\bPi^{*}, \bB^{*})$ then results as the least squares solution
\begin{equation}
\label{eq:sys_ml_estimator}	
(\hat{\bPi}, \wh{\bB}) = \argmin_{(\bPi,\bB)}~\fnorm{\bY - \bPi \bX \bB}^2.
\end{equation}
Note that for a fixed permutation matrix $\bPi$, we obtain
\begin{equation}
\label{eq:sys_b_estimator}	
\wh{\bB}(\bPi) =(\bPi \bX)^{\dagger} \bY,
\end{equation}
where the superscript $\dagger$ denotes the generalized inverse. From the above, we can see the importance of accurate estimation of $\bPi^{*}$ in a least
squares approach since errors may significantly degrade the quality of the corresponding
estimator $\wh{\bB}$, while exact permutation recovery, i.e., $\wh{\bPi} = \bPi^{*}$ yields the
usual least squares estimator as in the absence of $\bPi^{*}$.
In the following, we put estimation of $\bB^{*}$ aside and concentrate on analyzing the determining factors
for recovery of $\bPi^{*}$. Broadly speaking, this task involves two main sources
of difficulty.
\begin{itemize}
\item
\emph{Sensing noise $\bW$}. In the \emph{oracle case} in which $\bB^{*}$ is known,
computation of the ML estimator of $\bPi^{*}$ reduces to the \emph{linear assignment problem}~\cite{burkard2012assignment} 
\begin{equation}\label{eq:LAP}
\hat{\bPi} = \argmax_{\bPi}~\la \bPi,~\bY \bB^{*\rmt}\bX^{\rmt}\ra,
\end{equation}
where $\la \bU, \bV \ra = \trace({\bU^{\rmt} \bV})$ here refers to the inner product between matrices $\bU$ and $\bV$ that induces the Frobenius norm. Even though the solution of \eqref{eq:LAP} can be obtained efficiently by solving a linear program, recovery of $\bPi^{*}$ is still likely to fail if the noise
 level $\sigma^2$ is large enough.
\item \emph{Unknown $\bB^{*}$}. In contrast to the oracle case above,
we have no access to $\bB^{*}$ in practice, which suggests that recovery becomes more challenging.
\end{itemize}
In the sequel, we will show that the
sensing noise $\bW$ constitutes the major difficulty
in recovering $\bPi^{*}$ rather than {the} missing knowledge of
$\bB^{*}$. Before delving into our main results, we first define the following notations. \\

\noindent\textbf{Notations}: \ Positive constants are denoted by $c$, $c'$, $c_0$, $c_1$,
etc.
We write $a \lsim b$ if there is a constant
$c_0$ such that $a \leq c_0 b$. Similarly, we
define $\gsim$.
If both $a \lsim b$ and $a \gsim b$ hold, we write $a \asymp b$. For two numbers $a$ and $b$,  we let $a \vcup b = \max\{a,b\}$ and $a \vcap b = \min\{a,b\}$. 
For a matrix $\bA\in \RR^{m\times n}$, we denote $\bA_{:, i} \in \RR^n$
as the $i^{\mathsf{th}}$ column of $\bA$ while $\bA_{i, :}$ denotes its $i^{\mathsf{th}}$ row,
viewed as a column vector.
The Frobenius norm of a matrix is represented as
$\fnorm{\cdot}$ while the  operator norm
is denoted as $\opnorm{\cdot}$ whose definition can be found
in~\cite{golub2012matrix} (Section 2.3, P71).
The ratio $\varrho(\cdot) = \fnorm{\cdot}^2/\opnorm{\cdot}^2$ represents the stable-rank
while $r(\cdot)$ represents the (usual) rank.
We denote the \emph{singular value decomposition} (SVD)
of the matrix $\bA$
as $\textup{SVD}(\bA)$, whose definition
can be found in ~\cite{golub2012matrix} (Section 2.4, P76)
and is listed in Appendix~\ref{sec:appendix_notation} as well.
We let $\calP_n$ denote the set of permutation matrices of size $n$. Associating each
$\bPi \in \calP_n$ with a mapping $\bpi$ of
$\set{1, 2,\ldots, n}$
which moves index $i$ to $\bpi(i)$, $1 \leq i \leq n$,
we define the Hamming distance $\dh(\cdot; \cdot)$
between two permutation matrices
as $\dh(\bPi_1; \bPi_2) \defequal \sum_{i=1}^n \Ind\bracket{\bpi_1(i)\neq \bpi_2(i)}$.
The \emph{signal-to-noise-ratio} ($\snr$) is defined as
$\snr = {\fnorm{\bB^{*}}^2}/({m\sigma^2})$.
Additional notation
can be found in Appendix~\ref{sec:appendix_notation}.

\section{Inachievability results}\label{sec:oracle_recover}
In this section, we present conditions under which exact and approximate recovery of $\bPi^{*}$ would \emph{fail} with high probability. To be specific, \emph{exact recovery} refers to
the event $\{ \wh{\bPi} = \bPi^{*} \}$, and \emph{approximate recovery} of  $\bPi^{*}$ within a Hamming ball of radius $\mathsf{D} \in \{0, 1,\ldots,n\}$ refers to the event
$\{ \dh(\bPi^{*}; \wh{\bPi}) = \sum_{i = 1}^n \Ind\bracket{\bpi^*(i)\neq \widehat{\bpi}(i)}\leq \mathsf{D} \}$,
where $\bpi^*$ and $\wh{\bpi}$ denote the mappings associated with $\bPi^{*}$ and $\wh{\bPi}$, respectively, $1 \leq i \leq n$. The investigation of these cases is intended to provide valuable insights into the fundamental
statistical limits.
In order to establish inachievability results, it suffices to consider the \emph{oracle case}
with $\bB^{*}$ known. The resulting limits apply
to the case of unknown $\bB^{*}$ as well, since
it is hopeless to recover $\bPi^{*}$  even if
knowledge of $\bB^{*}$ does not suffice for recovery.

Compared with the case  $m=1$ in which
$\snr$ is the only prominent factor in determining the recovery performance~\cite{pananjady2016linear},
our analysis uncovers another crucial factor, namely,
the energy distribution over singular values of $\bB^*$.
Our work shows that
a more uniform spread of the signal energy over singular values can
greatly facilitate the recovery of $\bPi^{*}$. 



\subsection{Exact recovery of $\bPi^{*}$}

We start by presenting an inachievability result concerning exact recovery.
\begin{theorem}
\label{thm:exact_minimax}
Let $\calH$ be any subset of $\calP_n$. Assuming that $\bB^{*}$ is known, we
have
\begin{equation}
\label{eqn:exact_minimax}
\inf_{\wh{\bPi}}\sup_{\bPi^{*} \in \calH}\Expc\Ind(\wh{\bPi}\neq \bPi^{*}) \geq
\frac{1}{2} \qquad \text{if} \;\;\, \logdet\bracket{\bI + \frac{\bB^{*\rmt}\bB^{*}}{\sigma^2}}<\
\frac{\log(|\calH|) - 2}{n},
\end{equation}
where the expectation $\Expc$ is taken w.r.t. $\bX$ and $\bW$,
and the infimum is over all estimators $\wh{\bPi}$.
\end{theorem}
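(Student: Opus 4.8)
The plan is to prove this inachievability bound by the standard reduction from minimax risk to average (Bayes) risk over a uniform prior, followed by Fano's inequality. Placing the uniform distribution on $\bPi^{*}\in\calH$ makes the worst-case error at least the average error, so it suffices to lower bound $\frac{1}{|\calH|}\sum_{\bPi^{*}\in\calH}\Prob(\wh{\bPi}\neq\bPi^{*})$. Fano's inequality then yields $\frac{1}{|\calH|}\sum_{\bPi^{*}}\Prob(\wh{\bPi}\neq\bPi^{*}) \ge 1 - \frac{I(\bPi^{*};(\bX,\bY)) + \log 2}{\log|\calH|}$, where $I$ is the mutual information between the (uniformly random) parameter $\bPi^{*}$ and the observed data $(\bX,\bY)$. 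Consequently the whole theorem reduces to establishing the single clean bound
\begin{equation*}
I(\bPi^{*};(\bX,\bY)) \;\le\; \frac{n}{2}\,\logdet\!\Bracket{\bI + \frac{\bB^{*\rmt}\bB^{*}}{\sigma^2}},
\end{equation*}
after which the hypothesis $n\,\logdet(\cdots) < \log|\calH| - 2$ guarantees the Fano lower bound is at least $1/2$; the stated constant $2$ is a mild strengthening of the exact requirement $2\log 2$ produced by the additive $\log 2$ term in Fano.

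Next I would control the mutual information. Since $\bX$ is independent of $\bPi^{*}$, the chain rule gives $I(\bPi^{*};(\bX,\bY)) = I(\bPi^{*};\bY\mid\bX) = H(\bY\mid\bX) - H(\bY\mid\bX,\bPi^{*})$, and the second term is simply the differential entropy of the noise $\bW$, namely $\tfrac{nm}{2}\log(2\pi e\sigma^2)$. Writing $\bu_j = \bB^{*\rmt}\bX_{j,:}\in\RR^m$ for the $j$-th row of $\bX\bB^{*}$, each row $\bY_{i,:}$ is a noisy copy $\bu_{\tau(i)} + \bW_{i,:}$ of one of the $\bu_j$, the index being selected through $\bPi^{*}$. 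I would bound $H(\bY\mid\bX)$ by row-wise subadditivity, $H(\bY\mid\bX=\bx)\le\sum_{i=1}^{n} H(\bY_{i,:}\mid\bX=\bx)$, then invoke the Gaussian maximum-entropy inequality on each row together with Jensen's inequality for the concave map $\log\det$ to push the expectation over $\bX$ inside, obtaining $\Expc_{\bX} H(\bY_{i,:}\mid\bX) \le \tfrac12\logdet\!\big(2\pi e\,\Expc_{\bX}\Cov(\bY_{i,:}\mid\bX)\big)$.

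The key computation, and the crux of the argument, is the per-row covariance. Using $\Cov(\bu_{\tau(i)})\preceq\Expc[\bu_{\tau(i)}\bu_{\tau(i)}^{\rmt}]$ together with the identity $\Expc_{\bX}[\bu_j\bu_j^{\rmt}] = \bB^{*\rmt}\Expc[\bX_{j,:}\bX_{j,:}^{\rmt}]\bB^{*} = \bB^{*\rmt}\bB^{*}$, which holds for \emph{every} $j$, the mixing weights induced by $\bPi^{*}$ sum to one and give $\Expc_{\bX}\Cov(\bY_{i,:}\mid\bX) \preceq \sigma^2\bI + \bB^{*\rmt}\bB^{*}$, independently of $i$ and of the choice of $\calH$. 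Summing over the $n$ rows and subtracting the noise entropy collapses everything to $\tfrac{n}{2}\logdet(\bI + \bB^{*\rmt}\bB^{*}/\sigma^2)$, which is the bound required above.

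The main obstacle is precisely this covariance step, and it is where the role of the stable rank surfaces. A naive route, bounding $I$ through the pairwise Kullback--Leibler divergences $D(P_{\bPi}\,\|\,P_{\bPi'})$, evaluates to $\dh(\bPi;\bPi')\,\fnorm{\bB^{*}}^2/\sigma^2$ and therefore yields a bound governed by $\fnorm{\bB^{*}}^2$ alone, reproducing the pessimistic $m=1$ picture and entirely missing the benefit of spreading the signal energy across singular values. The fix is to take the expectation over $\bX$ \emph{before} bounding the row covariance: the per-row identity $\Expc_{\bX}[\bu_j\bu_j^{\rmt}] = \bB^{*\rmt}\bB^{*}$ is what decouples the estimate from the factor $n$ that a worst-case, fixed-$\bX$ bound would incur, and replaces $\fnorm{\bB^{*}}^2 = \trace(\bB^{*\rmt}\bB^{*})$ by $\logdet(\bI + \bB^{*\rmt}\bB^{*}/\sigma^2) = \sum_k \log\!\big(1+\sigma_k^2(\bB^{*})/\sigma^2\big)$, in which the energy distribution over the singular values, hence the stable rank, enters explicitly. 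Verifying the maximum-entropy and concavity/Jensen steps for differential entropies, and checking that the constants align with the stated threshold, are then routine.
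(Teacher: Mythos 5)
Your proposal is correct and follows essentially the same route as the paper: the minimax-to-Bayes reduction under a uniform prior on $\calH$, Fano's inequality, and the key bound $\entI(\bPi^{*};\bY~|~\bX)\leq \frac{n}{2}\logdet\bracket{\bI + \bB^{*\rmt}\bB^{*}/\sigma^2}$ derived from the Gaussian maximum-entropy inequality applied to an averaged second moment (the paper's Lemma~\ref{lemma:fail_channel_capacity}), with the same observation that the stated constant $2$ absorbs the $\log 2$ (or ``$+1$'') term from Fano. The only cosmetic difference is that you bound $\enth(\bY~|~\bX)$ row by row via subadditivity and the per-row mixture-covariance bound $\Expc_{\bX}\Cov(\bY_{i,:}~|~\bX)\preceq \sigma^2\bI + \bB^{*\rmt}\bB^{*}$, whereas the paper computes the full second moment of $\textup{vec}(\bY)$ and exploits its Kronecker structure $\bracket{\sigma^2\bI + \bB^{*\rmt}\bB^{*}}\otimes \bI_{n\times n}$; the cross-row independence that makes the paper's matrix block-diagonal is precisely what renders your subadditivity step lossless, so the two computations yield the identical bound.
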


\begin{proofoutline} 
Given knowledge of $\bB^{*}$, we may view
the sensing relation \eqref{eq:sys_sense_relation} as a process
such that $1)$ $\bPi^{*}$ is encoded via the codeword
$\bPi^{*}\bX\bB^{*}$ and $2)$ is passed through a Gaussian
channel with additive noise $\bW$.
We complete the proof based on Fano's inequality
following the
procedure in~\cite{cover2012elements} (cf. Section~$7.9$, P$206$).
The key technical contribution is
the derivation of a tight upper bound on the conditional mutual information between $\bPi^*$ and $\bY$ given
$\bX$ when $\bPi^*$ is drawn uniformly at random from $\calH$.
\end{proofoutline}

Let us point out important implications of Theorem \ref{thm:exact_minimax}. When $\calH = \calP_n$, we have $\log|\calH| = \log n! \approx n\log n$
and the condition in~\eqref{eqn:exact_minimax} simplifies as
$\logdet(\bI +\bB^{*\rmt}\bB^{*}/\sigma^2 ) \lsim \log n$.
With a smaller set $\calH$,
the inachievability
condition~\eqref{eqn:exact_minimax} is less likely to be fulfilled.
For example, consider the special case in which $\calH$ is a Hamming ball
around the identity, i.e., $\calH = \{\bPi \in \calP_n: \; \dh(\bI; \bPi) \leq \mathsf{D} \}$
for some fixed non-negative integer $\mathsf{D}$. Then the condition in
\eqref{eqn:exact_minimax} reduces to
$\logdet(\bI +\bB^{*\rmt}\bB^{*}/\sigma^2 ) \lsim  \mathsf{D}/(n- \mathsf{D}) \ll \log n$
when $n$ is sufficiently large.

The second major ingredient in condition~\eqref{eqn:exact_minimax} is the
term $\logdet(\bI +\bB^{*\rmt}\bB^{*}/\sigma^2) =
\sum_i \log(1 + \lambda_i^2/\sigma^2)$, where $\lambda_i$ denotes
the $i^{\mathsf{th}}$ singular value of $\bB^{*}$.
Since each singular value $\lambda_i$ is determined by the matrix $\bB^{*}$ as whole rather than by individual columns, we conclude that linear independence among multiple measurements
can positively impact the recovery of $\bPi^{*}$, which
implies extra benefits apart from mere energy accumulation.

When maximizing the term $\sum_i \log\bracket{1 + \lambda_i^2/\sigma^2}$
given fixed signal energy $\fnorm{\bB^{*}}^2 = \sum_i \lambda_i^2$,
it is easy to determine the most favorable configuration to
avoid failure of recovery: the signal energy is evenly
spread over all singular values.
In contrast, if
$\bB^{*}$ has rank one with all signal energy concentrated on the principal
singular value, condition~\eqref{eqn:exact_minimax}
reduces to the same as for a single MV ($m = 1$) with signal energy
$\fnorm{\bB^{*}}^2$ since
\begin{equation}\label{eq:logsignal}
\logdet\bracket{\bI +\frac{\bB^{*\rmt}\bB^{*}}{\sigma^2}}
\hspace{-1pt}=\hspace{-1pt}
\log \left(1 + \frac{\lambda_1^2}{\sigma^2}\right) =
\log \left(1 + \frac{\fnorm{\bB^{*}}^2}{\sigma^2}\right).
\end{equation}
This indicates that in accordance with the intuition of ``borrowing strength'' across
different sets of measurements, performance is expected to improve as the stable rank of $\varrho(\bB^{*})$ of $\bB^*$ increases.
To give an illustration of the
benefits brought by large stable rank $\varrho(\bB^{*})$, we
numerically evaluate the required $\snr = \fnorm{\bB^*}^2/(m\sigma^2)$ for the leftmost quantity in \eqref{eq:logsignal} to exceed specific thresholds in dependence of selected choices of $\varrho(\bB^{*})$. The results are listed in Table~\ref{tab:snr_require}.

\newpage

\begin{table}[h!]
\centering
\caption{The required values of $\snr = \fnorm{\bB^*}^2/(m\sigma^2)$ for the condition $\logdet\bracket{\bI +\frac{\bB^{*\rmt}\bB^{*}}{\sigma^2}} > c \cdot \log n$ to hold, $c \in \{ 1,2,\ldots,6\}$, when $n = 1000$, $p = 100$,
and $\bB_{:, i}^{*} = \be_i$, where
$\be_i$ denotes the $i^{\mathsf{th}}$ canonical basis vector, $1 \leq i \leq m$, $m \in \{ 1,10,20,50,100 \}$ (leftmost column).}
\label{tab:snr_require}
\begin{tabular}{l|cccccc}
\noalign{\hrule height 0.7pt}
$\frac{\log\det\left(\bI + \bB^{*\rmt}\bB^{*}/\sigma^2\right)}{\log n}$ &
\hspace{-1mm} $1$ & $2$ & $3$ & $4$ & $5$ & $6$ \\ \hline
$\varrho(\bB^{*}) =1$ & $10^3$ & $10^6$ &
$10^9$ & $10^{12}$ &  $10^{15}$ & $10^{18}$\\
$\varrho(\bB^{*}) = 10$& $1$ & $2.98$ & $6.94$ &  $14.85$  & $30.62$   & $62.10$  \\
$\varrho(\bB^{*}) = 20$ & $0.41$ & $1.00$  &  $1.82$ &  $2.98$ & $4.62$ &$6.94$ \\
$\varrho(\bB^{*}) = 50$ & $0.15$ & $0.32$  &  $0.51$ &  $0.74$ & $1.00$ &$1.29$ \\
$\varrho(\bB^{*}) = 100$ & $0.07$& $0.15$  &  $0.23$ &  $0.32$ & $0.41$ &$0.51$ \\
\noalign{\hrule height 0.7pt}
\end{tabular}
\end{table}

The statement below provides a condition for failure of recovery when using a specific
estimator of $\bPi^*$, namely the
ML estimator in~\eqref{eq:sys_ml_estimator}. The latter
is computationally feasible if $\bB^{*}$ is known.
%
\begin{proposition}
\label{thm:fail_recover_optim}
Let $\bPi^*$ be an arbitrary element of $\calP_n$. The ML estimator $\wh{\bPi}$ with knowledge of $\bB^*$ given in
(\ref{eq:LAP})
satisfies $\Prob(\wh{\bPi}\neq {\bPi^{*}}) \geq \frac{1}{2}$
for $n\geq 10$ if
\begin{equation}
\label{eqn:fail_recover_rank_assump}
\dfrac{\fnorm{\bB^{*}}^2}{\sigma^2} \leq
\dfrac{2\log n}{4\left(1 + c \varrho^{-1/2}(\bB^{*})\right)^2},
\end{equation}
where $\varrho(\bB^{*}) ={\fnorm{\bB^{*}}^2}/{\opnorm{\bB^{*}}^2}$ is the
\emph{stable rank} of $\bB^{*}$.
\end{proposition}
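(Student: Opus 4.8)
The plan is to show that, with probability at least $1/2$, some permutation $\bPi\neq\bPi^*$ attains an objective value in \eqref{eq:LAP} at least as large as $\bPi^*$ itself, which forces the maximizer $\wh{\bPi}$ away from $\bPi^*$. By the invariance of the i.i.d.\ Gaussian laws of $\bX$ and $\bW$ under row permutations, I would first reduce to $\bPi^*=\bI$: with $\bZ=\bPi^{*\rmt}\bY=\bX\bB^*+\bPi^{*\rmt}\bW$ and $\bPi^{*\rmt}\bW\stackrel{d}{=}\bW$, the substitution $\bQ=\bPi^{*\rmt}\bPi$ in \eqref{eq:LAP} shows $\wh{\bPi}=\bPi^*\wh{\bQ}$, so that $\Prob(\wh{\bPi}\neq\bPi^*)=\Prob(\wh{\bQ}\neq\bI)$ in the model $\bZ=\bR+\wt{\bW}$ with $\bR=\bX\bB^*$ and $\wt{\bW}\stackrel{d}{=}\bW$. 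It therefore suffices to work in this reduced model.

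As competitors I would use transpositions: for a pair $(a,b)$ let $\bPi_{(ab)}$ swap rows $a$ and $b$. Expanding the objective gives
\[
\la\bPi_{(ab)}\bR,\bZ\ra-\la\bR,\bZ\ra=-\ltwonorm{\bu}^2-\la\bu,\bbeta\ra,
\]
where $\bu=\bB^{*\rmt}(\bX_{a,:}-\bX_{b,:})\in\RR^m$ and $\bbeta=\wt{\bW}_{a,:}-\wt{\bW}_{b,:}\sim\normdist(\bZero,2\sigma^2\bI_m)$. Hence $\bPi_{(ab)}$ beats the identity iff $\ltwonorm{\bu}^2\le-\la\bu,\bbeta\ra$; conditioning on $\bu$ and writing $-\la\bu,\bbeta\ra\stackrel{d}{=}\sqrt2\,\sigma\ltwonorm{\bu}\,Z$ with $Z\sim\normdist(0,1)$, this is exactly the event $\{Z\ge\ltwonorm{\bu}/(\sqrt2\,\sigma)\}$.

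The next step controls $\ltwonorm{\bu}$. Since $\bX_{a,:}-\bX_{b,:}\sim\normdist(\bZero,2\bI_p)$, diagonalizing $\bB^*\bB^{*\rmt}$ shows $\ltwonorm{\bu}^2=2\sum_i\lambda_i^2\tilde g_i^2$ is a weighted chi-square with mean $2\fnorm{\bB^*}^2$, the $\lambda_i$ being the singular values of $\bB^*$ and $\tilde g_i$ i.i.d.\ standard normal. A Laurent--Massart tail bound, together with $\sum_i\lambda_i^4\le\opnorm{\bB^*}^2\fnorm{\bB^*}^2$, then yields $\ltwonorm{\bu}\le\sqrt2\,\fnorm{\bB^*}\bigl(1+c\,\varrho^{-1/2}(\bB^*)\bigr)$ on an event $E$ whose probability is bounded below by an absolute constant; this is precisely where the stable rank enters, as the fluctuations of $\ltwonorm{\bu}^2$ are governed by $\opnorm{\bB^*}/\fnorm{\bB^*}=\varrho^{-1/2}(\bB^*)$. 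On $E$, hypothesis \eqref{eqn:fail_recover_rank_assump} gives $\ltwonorm{\bu}/(\sqrt2\,\sigma)\le\sqrt{(\log n)/2}$, so a single transposition beats the identity with probability at least $q:=\Prob(E)\,\Phi\!\bigl(-\sqrt{(\log n)/2}\bigr)\gtrsim n^{-1/4}/\sqrt{\log n}$ by the standard Gaussian tail lower bound, valid once $\sqrt{(\log n)/2}>1$, which the threshold $n\ge 10$ guarantees.

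Finally I would amplify over $\lfloor n/2\rfloor$ \emph{disjoint} transpositions. Disjoint pairs depend on disjoint rows of $\bX$ and $\wt{\bW}$, so the corresponding ``beats-the-identity'' events are independent, each of probability at least $q$; hence the probability that none occurs is at most $(1-q)^{\lfloor n/2\rfloor}\le\exp(-q\lfloor n/2\rfloor)$, and $q\lfloor n/2\rfloor\gtrsim n^{3/4}/\sqrt{\log n}$ drives this below $1/2$ for $n\ge 10$, giving $\Prob(\wh{\bPi}\neq\bPi^*)\ge\tfrac12$. I expect the third paragraph to be the main obstacle: one must extract the exact factor $\bigl(1+c\,\varrho^{-1/2}(\bB^*)\bigr)^2$ from the chi-square concentration and then combine the high-probability bound on $\ltwonorm{\bu}$ with the Gaussian tail so that the per-swap probability $q$ stays polynomially above $1/n$, all while keeping the constants compatible with the stated threshold $n\ge 10$.
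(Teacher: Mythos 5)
Your proposal is correct, and it shares the paper's backbone: reduction to $\bPi^{*}=\bI$ by invariance, transpositions as the competing permutations, the per-pair comparison reducing to a standard Gaussian exceeding $\ltwonorm{\bu}/(\sqrt{2}\sigma)$, and the stable rank entering through concentration of $\ltwonorm{\bB^{*\rmt}(\bX_{a,:}-\bX_{b,:})}$ near $\sqrt{2}\fnorm{\bB^{*}}$ (you via Laurent--Massart on the weighted chi-square; the paper via the Hanson--Wright bound of Rudelson--Vershynin, which is the same estimate in different clothing). Where you genuinely diverge is the amplification step that harvests the $\log n$. The paper fixes row $1$ and takes $j_0=\argmax_j \widetilde{W}_{1,j}$, so that the maximum of $n-1$ noise inner products reaches amplitude $2\sqrt{2\sigma^2\log n}$ with probability $1-n^{-1}$; it then needs only a constant-probability (namely $5/9$) small-ball event for the signal gap of the single selected pair, and $(1-n^{-1})\cdot\frac{5}{9}\geq\frac{1}{2}$ at $n\geq 10$ gives the threshold directly. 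You instead run $\lfloor n/2\rfloor$ disjoint transpositions, each succeeding with probability only $q\asymp n^{-1/4}/\sqrt{\log n}$, and use independence across disjoint row pairs to drive $(1-q)^{\lfloor n/2\rfloor}$ below $\frac{1}{2}$. Your route buys slack in the condition: since $n\,\Phi(-x)$ stays bounded away from zero up to $x\approx\sqrt{2\log n}$ rather than your $\sqrt{(\log n)/2}$, your argument would in fact tolerate a hypothesis four times weaker than \eqref{eqn:fail_recover_rank_assump}, and this headroom is what rescues the constant chase at small $n$: you need roughly $q\lfloor n/2\rfloor\geq\log 2$, and at $n=10$ you have only five independent trials with $\Phi\bigl(-\sqrt{(\log 10)/2}\bigr)\approx 0.14$, so you must push $\Prob(E)$ close to $1$ by enlarging the constant $c$ in the Laurent--Massart event --- legitimate, since the proposition leaves $c$ unspecified, but it is the one place requiring explicit numerics, exactly as you flagged. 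The paper's route avoids small per-trial probabilities altogether but pays by needing a lower bound on the supremum of the correlated noise process $\{\widetilde{W}_{1,j}\}_j$ and an independence argument between that maximum and the signal gap of the selected pair. One shared cosmetic point: a swap that merely ties the identity does not strictly force $\wh{\bPi}\neq\bI$, but ties occur with probability zero under the continuous noise, so neither argument is affected.
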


\begin{proofoutline} Without loss of generality,
we may work with $\bPi^{*} = \bI$. We then use a direct argument involving corresponding rows of $\bY$ and $\bPi^*\bX$ and concentration of measure results to show that if \eqref{eqn:fail_recover_rank_assump} holds, $\wh{\bPi}$ cannot be $\bI$ with the stated probability.
\end{proofoutline}

The proposition states that the total signal energy given by
$m\cdot\snr$ should
be at least
of the order $\log n$ to avoid failure in recovery.
This is in agreement with
Theorem~\ref{thm:exact_minimax}
in the full-rank case.

\subsection{Approximate recovery of $\bPi^{*}$}
Exact recovery of $\bPi^{*}$ may not always be necessary. The following corollary of
Theorem~\ref{thm:exact_minimax} yields a condition under which
even approximate recovery of $\bPi^*$ within Hamming distance $ \mathsf{D}$, i.e.,
$\dh(\bPi^{*}; \wh{\bPi})\leq  \mathsf{D}$,  cannot be guaranteed.
Specifically, we state the following corollary of Theorem \ref{thm:exact_minimax}.

\begin{corollary}
\label{corollary:oracleapprox}
Assuming that $\bB^{*}$ is known, we have
\begin{equation}\label{eq:fail_approx_recover}
\inf_{\wh{\bPi}}\sup_{\bPi^{*} \in \calP_n} \Expc \Ind(\dh(\wh{\bPi};~\bPi^{*}) \geq  \mathsf{D}) \geq \frac{1}{2}
\qquad \text{if} \;\;\, \logdet\bracket{\bI + \frac{\bB^{*\rmt}\bB^{*}}{\sigma^2}}
\leq \dfrac{\log(n- \mathsf{D}+1)! - \log 4}{n},
\end{equation}
where the expectation $\Expc$ is taken w.r.t. $\bX$ and $\bW$,
and the infimum is over all estimators $\wh{\bPi}$. 
\end{corollary}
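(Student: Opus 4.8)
The plan is to adapt the Fano-based argument underlying Theorem~\ref{thm:exact_minimax} to the regime of approximate recovery, rather than to invoke its statement as a black box. Concretely, I would place the uniform prior on $\bPi^{*} \in \calP_n$ and lower bound the minimax risk by the corresponding Bayes risk. The payoff of this choice is that essentially all of the analytic work — the upper bound on the conditional mutual information $\entI(\bPi^{*}; \bY \mid \bX)$, identified as the key technical contribution in the proof of Theorem~\ref{thm:exact_minimax} — can be reused verbatim, namely $\entI(\bPi^{*}; \bY \mid \bX) \le \tfrac{n}{2}\logdet\bracket{\bI + \bB^{*\rmt}\bB^{*}/\sigma^2}$ for $\bPi^{*}$ uniform on $\calP_n$. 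What is genuinely new is purely combinatorial: I replace the ordinary Fano inequality, which charges the full $\log|\calP_n|$ in its denominator, by its approximate-recovery refinement, in which that denominator is reduced by the log-cardinality of a Hamming ball.

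First I would invoke the generalized (ball) Fano inequality. Writing $N_{\mathsf{D}-1} = \max_{\bPi_0 \in \calP_n}\card{\bPi \in \calP_n : \dh(\bPi;\bPi_0) \le \mathsf{D}-1}$ for the maximal cardinality of a Hamming ball of radius $\mathsf{D}-1$, and conditioning on $\bX$, the standard Fano derivation applied with the success event $\{\dh(\wh{\bPi};\bPi^{*}) \le \mathsf{D}-1\}$ (so that the residual uncertainty on success is at most $\log N_{\mathsf{D}-1}$ nats) yields
\begin{equation}
\inf_{\wh{\bPi}}\sup_{\bPi^{*}\in\calP_n}\Prob\bracket{\dh(\wh{\bPi};\bPi^{*}) \ge \mathsf{D}} \ge 1 - \frac{\entI(\bPi^{*};\bY\mid\bX) + \log 2}{\log\bracket{n!/N_{\mathsf{D}-1}}}.
\end{equation}
The third ingredient — and the step that produces the factorial $(n-\mathsf{D}+1)!$ — is the ball-cardinality bound. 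Counting the permutations at Hamming distance exactly $k$ from a fixed one as $\binom{n}{k}D_k$, with $D_k$ the number of derangements of $k$ symbols, gives $N_{\mathsf{D}-1} = \sum_{k=0}^{\mathsf{D}-1}\binom{n}{k}D_k$, which I would bound by $n!/(n-\mathsf{D}+1)!$. Substituting this together with the mutual-information bound into the display, the right-hand side is at least $1/2$ exactly when $\tfrac{n}{2}\logdet(\cdot) \le \tfrac12\bracket{\log\bracket{(n-\mathsf{D}+1)!} - \log 4}$, which rearranges precisely to the stated condition~\eqref{eq:fail_approx_recover}.

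The main obstacle is the combinatorial ball bound: the naive estimate $D_k \le k!$ makes the geometric sum exceed $n!/(n-\mathsf{D}+1)!$, so one must exploit the derangement structure. The clean route is $D_k \le k!/2$ for $k \ge 2$ (with $D_1 = 0$), after which the sum is dominated by its top term and telescopes, leaving only a negligible slack that lets one land on the constant $\log 4$ rather than a larger one. A secondary subtlety is the radius bookkeeping: because the failure event is $\{\dh \ge \mathsf{D}\}$, the relevant balls have radius $\mathsf{D}-1$, and this is exactly what converts the factor $n!/(n-\mathsf{D})!$ one would naively expect into the sharper $n!/(n-\mathsf{D}+1)!$. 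Finally, it is worth noting why the direct argument is needed: a black-box reduction that restricts $\bPi^{*}$ to a packing of $\calP_n$ with pairwise Hamming distance exceeding $2\mathsf{D}$ and then applies Theorem~\ref{thm:exact_minimax} would be governed by balls of radius $\asymp 2\mathsf{D}$ and hence lose a factor of two in the radius; retaining the full uniform prior inside the generalized Fano inequality is what makes the corollary sharp.
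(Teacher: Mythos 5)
Your proposal follows essentially the same route as the paper's own proof: a uniform prior on $\calP_n$, the Fano argument refined by the success event $\{\dh(\wh{\bPi};\bPi^{*})\leq \mathsf{D}-1\}$ (the paper derives this ball-Fano bound inline via the joint entropy decomposition of the failure indicator and $\bPi^{*}$, arriving at the same denominator $\log (n-\mathsf{D}+1)!$), and verbatim reuse of the conditional mutual information bound from Lemma~\ref{lemma:fail_channel_capacity}. Your only addition is to spell out the Hamming-ball cardinality bound $\sum_{k\leq \mathsf{D}-1}\binom{n}{k}D_k \leq n!/(n-\mathsf{D}+1)!$ via $D_k\leq k!/2$, a counting step the paper asserts without proof, and that verification is correct.
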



Comparing the above result with Theorem~\ref{thm:exact_minimax}, one can see that
the essentially only difference is the replacement of the term
$\log\abs{\calH}$ by $\log (n- \mathsf{D}+1)!$.
An intuitive interpretation is as follows:

\begin{itemize}
\item

The set of $n$-by-$n$ permutation matrices under consideration  can be covered by a subset\\ $\{\bPi^{(1)}, \bPi^{(2)}, \cdots,~\bPi^{((n- \mathsf{D}+1)!)}\}$  such that for any permutation matrix $\bPi$, there exists an element\\
$\bPi^{\dagger} \in \{\bPi^{(1)}, \bPi^{(2)}, \cdots,~\bPi^{((n- \mathsf{D}+1)!)}\}$ such that
$\dh(\bPi;~\bPi^{\dagger}) \leq  \mathsf{D}$.

\vspace{0.1in}

\item
We would like to recover
$\bPi^{\dagger}$ from data $(\bX,\bY)$. \\
\end{itemize}

Consequently, since the cardinality of the covering is
$(n- \mathsf{D}+1)!$, we encounter the term $\log(n -  \mathsf{D} + 1)!$
in place of $\log\abs{\calH}\leq \log(n!)$; setting $ \mathsf{D} = 0$ or $1$ gives back
Theorem~\ref{thm:exact_minimax}.
Additionally, we can obtain

\noindent a lower bound on the minimax risk with respect to $\dh(\cdot; \cdot)$ effortlessly from the proof of
Corollary~\ref{corollary:oracleapprox}, as
\begin{align}
\label{eq:dh_minimax}
\inf_{\wh{\bPi}}\sup_{\bPi^{*}}\Expc\dh(\wh{\bPi}; \bPi^{*})
\geq \max_{d \in \{0,1,\ldots,n\}}  \; \, (d+1)\bracket{1-\dfrac{({n}/{2})
\logdet\bracket{\bI + \bB^{*\rmt}\bB^{*}/\sigma^2}+
\log 2}{\log (n-d+1)!}}.
\end{align}

\vspace{0.1in}

A unified proof for~\eqref{eq:fail_approx_recover} and~\eqref{eq:dh_minimax}
can be found in
Appendix~\ref{corollary_proof:oracleapprox}. To an extent, Eq.~(12) strengthens the assertion of Theorem \ref{thm:exact_minimax} in the sense that if $\logdet\bracket{\bI + \bB^{*\rmt}\bB^{*}/\sigma^2} \ll \log n$, $\wh{\bPi}$ can be rather far from recovering $\bPi^*$ in the sense that $\dh(\wh{\bPi}; \bPi^{*}) = \Omega(n)$.

To conclude this section, we would like to emphasize that the
above conditions reflect the price to compensate for the
uncertainty induced by the sensing noise $\bW$, as
there is no uncertainty in $\bB^{*}$ involved.

\section{Successful Recovery}\label{sec:prac_recover}
In the previous section, we have studied conditions under which recovery is expected to fail.
In this section, we state
conditions under which the true permutation $\bPi^{*}$ can be recovered
with high probability, for both the oracle case with known $\bB^{*}$ as well as the  ``realistic case" with unknown $\bB^{*}$.
For the conciseness of presentation, we hide explicit values for numerical constants in most cases and provide them in the
appendix for interested readers.
We believe that those values can be improved further
since no specific effort was made to obtain optimal constants.

\subsection{Oracle case: known $\bB^{*}$}
As previously mentioned, in this case the ML estimator in~\eqref{eq:sys_ml_estimator}
is given by~\eqref{eq:LAP}.
The condition on the $\snr$ in the following statement can serve both as an
upper bound for the failure of permutation recovery and as a lower bound
for the more challenging case with unknown $\bB^{*}$.

\begin{theorem}
\label{thm:oracle_succ_optim}
Given the knowledge of $\bB^{*}$, if
\begin{align}
\label{eqn:oracle_succ_optim_assump}
\log\bracket{\frac{\fnorm{\bB^{*}}^2}{\sigma^2}} \geq \dfrac{8\log n}{\kappa \varrho(\bB^{*})} + \log\bracket{\kappa\varrho(\bB^{*})\vcup \alpha_1\log n }
+ \alpha_2,
\end{align}
then the ML estimator in~\eqref{eq:sys_ml_estimator} satisfies
\[
\Prob(\hat{\bPi} \neq \bPi^{*})
\leq \dfrac{2\alpha_0^{2 \kappa \varrho(\bB^{*})}}{n^2} \
\stackrel{(\alpha_0 < 1)}{<} \frac{2}{n^2},
\]
where $0 < \alpha_0 < 1$, $\kappa > 0$ are universal constants,
$\alpha_1 = 2/\log(\alpha_0^{-1})$, and
$\alpha_2 = \log(64\alpha_0^{-4}\log \alpha_0^{-1})$.
\end{theorem}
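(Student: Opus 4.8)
The plan is to reduce to $\bPi^{*} = \bI$ without loss of generality (the model~\eqref{eq:sys_sense_relation} is invariant under a common relabeling of the rows of $\bX$, $\bY$, $\bW$) and then to show that, with high probability, the linear-assignment objective $\la \bPi,\,\bY\bB^{*\rmt}\bX^{\rmt}\ra$ in~\eqref{eq:LAP} is strictly maximized at $\bPi=\bI$. Writing $\bmu_i = \bB^{*\rmt}\bx_i \in \RR^m$ for the $i$-th clean row, so that $\by_i = \bmu_i + \bw_i$, the objective gap at any $\bpi \neq \mathrm{id}$ splits into a part that is deterministic given $\bX$ and a Gaussian noise part:
\[
\la \bPi - \bI,\; \bY\bB^{*\rmt}\bX^{\rmt}\ra = -\tfrac12 S(\bpi) + N(\bpi), \qquad S(\bpi) = \sum_{i} \ltwonorm{\bmu_{\bpi(i)} - \bmu_i}^2 ,
\]
where I use the identity $\sum_i \bmu_i^{\rmt}(\bmu_{\bpi(i)}-\bmu_i) = -\tfrac12 S(\bpi)$ (valid since $\bpi$ merely permutes the $\bmu_i$), and where, conditionally on $\bX$, $N(\bpi)\sim\normdist(0,\sigma^2 S(\bpi))$ because the rows $\bw_i$ of $\bW$ are independent $\normdist(0,\sigma^2\bI_m)$. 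A Gaussian tail bound then yields the per-permutation failure probability $\Prob(\text{gap}\ge 0\mid\bX)\le \exp(-S(\bpi)/(8\sigma^2))$.

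Next I would remove the conditioning on $\bX$ by bounding $\Expc_{\bX}\exp(-S(\bpi)/(8\sigma^2))$ for each fixed $\bpi$ and summing over $\bpi$. Since $\bmu_1,\ldots,\bmu_n$ are i.i.d.\ $\normdist(0,\bB^{*\rmt}\bB^{*})$ and $S(\bpi)$ decomposes across the cycles of $\bpi$ into sums of squared consecutive differences on disjoint index sets, the moment generating function factorizes over cycles, and each cycle factor is a Gaussian quadratic-form MGF equal to a determinant raised to $-1/2$. The basic building block is the transposition, whose factor is exactly $\det(\bI + \bB^{*\rmt}\bB^{*}/\sigma^2)^{-1/2}$, tying the analysis back to the very log-determinant that governs Theorem~\ref{thm:exact_minimax}. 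Organizing the union bound by cycle type and counting the at most $n^{d}$ permutations displacing $d$ indices, the resulting series is geometric and is dominated by the $\le \binom{n}{2}$ transposition terms; forcing the total to be $\lsim n^{-2}$ requires $\det(\bI + \bB^{*\rmt}\bB^{*}/\sigma^2) \gsim n^{8}$, which is the source of the constant $8$ in~\eqref{eqn:oracle_succ_optim_assump}.

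I would then convert this determinant condition into the stated $\snr$ condition. With $\lambda_1\ge\lambda_2\ge\cdots$ the singular values of $\bB^{*}$, we have $\logdet(\bI + \bB^{*\rmt}\bB^{*}/\sigma^2) = \sum_i \log(1+\lambda_i^2/\sigma^2)$; minimizing this concave functional subject to fixed $\fnorm{\bB^{*}}^2=\sum_i\lambda_i^2$ and fixed $\opnorm{\bB^{*}}^2=\lambda_1^2$ concentrates the energy on about $\varrho(\bB^{*})$ equal singular values, giving $\logdet(\bI + \bB^{*\rmt}\bB^{*}/\sigma^2)\gsim \varrho(\bB^{*})\log\!\big(1+\fnorm{\bB^{*}}^2/(\varrho(\bB^{*})\sigma^2)\big)$. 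Combining with $\logdet\gsim 8\log n$ and solving for $\fnorm{\bB^{*}}^2/\sigma^2$ reproduces the threshold $\log(\fnorm{\bB^{*}}^2/\sigma^2)\gsim 8\log n/(\kappa\varrho(\bB^{*})) + \log(\kappa\varrho(\bB^{*})) + \alpha_2$, with the term $\alpha_1\log n$ entering the maximum $\kappa\varrho(\bB^{*})\vcup\alpha_1\log n$ to cover the regime where $\varrho(\bB^{*})$ is small relative to $\log n$; the slack over the threshold is what produces the sharper factor $2\alpha_0^{2\kappa\varrho(\bB^{*})}/n^2$.

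The crux I anticipate is the uniform control of the noise exponent, i.e.\ bounding $\Expc_{\bX}\exp(-S(\bpi)/(8\sigma^2))$ simultaneously over super-exponentially many permutations while distilling a clean threshold, and in particular pinning down where the stable rank enters. The mechanism is that for a single displaced pair, $\ltwonorm{\bmu_i-\bmu_j}^2 = (\bx_i-\bx_j)^{\rmt}\bB^{*}\bB^{*\rmt}(\bx_i-\bx_j)$ is a Gaussian quadratic form whose lower-tail concentration around its mean $2\fnorm{\bB^{*}}^2$ is governed by the stable rank of $\bB^{*}\bB^{*\rmt}$, which is at least $\varrho(\bB^{*})$; a Hanson--Wright / Laurent--Massart estimate shows such a difference falls below half its mean only with probability $\exp(-\Omega(\varrho(\bB^{*})))$, so a larger stable rank directly tightens the bound. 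The accompanying difficulty is verifying that the long-cycle factors do not overwhelm the transposition term despite their growing count, which amounts to checking that the per-displaced-index signal growth outpaces the $\log n$ per-index entropy so that the series converges and the transposition term dominates. Arranging all constants — the cycle-length and concentration constants absorbed into a single universal $\kappa$ — so that the final condition reads exactly as~\eqref{eqn:oracle_succ_optim_assump} is the delicate bookkeeping at the heart of the argument.
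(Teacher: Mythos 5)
Your proposal is correct in outline but takes a genuinely different route from the paper. The paper never sums over permutations: its first step shows that the oracle ML solution equals $\bI$ on the event that every row $\bY_{i, :}$ is strictly closer to $\bB^{*\rmt}\bX_{i, :}$ than to any $\bB^{*\rmt}\bX_{j, :}$, $j \neq i$, so failure is contained in a union of only $O(n^2)$ pairwise events; each such event is then split at a threshold $\delta \asymp \sigma\sqrt{\log(n/\epsilon_0)}$ into a Gaussian noise tail and the small-ball event $\{\ltwonorm{\bB^{*\rmt}(\bX_{i, :}-\bX_{j, :})}\leq \delta\}$, the latter controlled by the polynomial small-ball estimate of Lemma~2.6 in~\cite{latala2007banach}, $\Prob(\ltwonorm{\bB^{*\rmt}\bZ}\leq \sqrt{2}\delta) \leq (\sqrt{2}\delta/\fnorm{\bB^*})^{\kappa\varrho(\bB^{*})}$ --- this lemma is precisely where the constants $\kappa$, $\alpha_0$ (hence $\alpha_1,\alpha_2$) in the statement originate. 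You instead union-bound the assignment-gap events over all $n!$ permutations and integrate out $\bX$ exactly through Gaussian quadratic-form MGFs factorizing over cycles; your transposition factor $\det(\bI + \bB^{*\rmt}\bB^{*}/\sigma^2)^{-1/2}$ is exactly right (note $S(\bpi)$ counts each transposed pair twice, which is what removes the stray factor of $2$), the long-cycle factors can indeed be tamed since a constant fraction of the eigenvalues $2(1-\cos(2\pi j/k))$ of each cycle Laplacian are bounded below, so the $n^d$ entropy per $d$ displaced indices is beaten and transpositions dominate; and your extremal spectral step, $\logdet(\bI+\bB^{*\rmt}\bB^{*}/\sigma^2) \geq \lfloor \varrho(\bB^{*})\rfloor \log(1+\opnorm{\bB^*}^2/\sigma^2)$ with $\opnorm{\bB^*}^2 = \fnorm{\bB^*}^2/\varrho(\bB^{*})$, legitimately converts the determinant threshold into the form of~\eqref{eqn:oracle_succ_optim_assump}. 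What each approach buys: the paper's nearest-row reduction collapses the $n!$-fold union into $n^2$ events and is combinatorially much lighter, at the price of invoking a black-box small-ball lemma with unspecified constants; your route carries heavier cycle-type bookkeeping but produces exactly the quantity $\logdet(\bI+\bB^{*\rmt}\bB^{*}/\sigma^2)$ that governs the converse in Theorem~\ref{thm:exact_minimax}, making the achievability/inachievability match transparent and the constants explicit (you would not reproduce the paper's literal $\alpha_1,\alpha_2$, but since $\kappa,\alpha_0$ are unspecified universal constants this is harmless).

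One correction to your closing paragraph: you misattribute the mechanism by which the stable rank enters. A half-mean Hanson--Wright bound of the form $\exp(-\Omega(\varrho(\bB^{*})))$ cannot by itself yield the $n^{8/(\kappa\varrho(\bB^{*}))}$ trade-off, because it carries no dependence on the depth $\delta/\fnorm{\bB^*}$ of the small ball --- that depth dependence is exactly what Latala's polynomial small-ball estimate supplies in the paper's proof, and without it one only obtains conditions of the type $\varrho(\bB^{*}) \gsim \log n$ with no interpolation in $\snr$. In your MGF route the point is moot, since the integration over $\bX$ is exact and the stable rank enters through the spectral extremal inequality, not through any concentration estimate; but as stated, that paragraph describes a mechanism that would not suffice.
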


\begin{proofoutline} We show that each row of $\bY$ is closest in
Euclidean distance to its matching row in $\bX \bB^*$ with the stated
probability, which implies the desired event $\{ \wh{\bPi}\neq \bPi^{*} \}$. The key ingredient is a careful probabilistic lower bound on the minimum distance between any pairs of rows in $\bX \bB^*$ based on a small ball probability result
in high-dimensional geometry due to Latala et al.~\cite{latala2007banach}.
\end{proofoutline}

To illustrate the tightness of conditions in~\eqref{eqn:oracle_succ_optim_assump}, we would like to consider two special cases for $\bB^*$, namely, the full-rank case and the rank-one case,
and compare it with the condition for failure of recovery in
Theorem~\ref{thm:exact_minimax}.
First, we consider the full-rank case with constant singular values,
i.e., $\bB^{*\rmt}\bB^{*} = \gamma \bI$, where $\gamma > 0$ is a positive constant; in particular,
$\varrho(\bB^{*}) = m$. Then a simple term re-arrangement of~\eqref{eqn:oracle_succ_optim_assump} suggests that having
\begin{equation}\label{eq:cond_success}
\log\left(\frac{\fnorm{\bB^{*}}^2}{\varrho(\bB^{*}) \sigma^2} \right) = \log\left(\frac{\fnorm{\bB^{*}}^2}{m \sigma^2} \right) = \log\bracket{\frac{\gamma}{\sigma^2}} \gsim \frac{\log n}{\varrho(\bB^{*})}
\end{equation}
ensures success, while Theorem~\ref{thm:exact_minimax} suggests
\begin{equation}
\label{eq:cond_failure}
\log\bracket{1 + \dfrac{\fnorm{\bB^{*}}^2}{m\sigma^2}} = \log\bracket{1+\frac{\gamma}{\sigma^2}}\lsim \dfrac{\log n}{\varrho(\bB^{*})}
\end{equation}
implies failure. Conditions~\eqref{eq:cond_success} and~\eqref{eq:cond_failure}
thus match up to multiplicative factors.  \\

Next, we consider the rank-one case. Without loss of generality, we set $\bB^* = \bB^{*}_{:, 1} = \gamma \be_1$.
Theorem~\ref{thm:oracle_succ_optim} (cf.~\eqref{eqn:oracle_succ_optim_assump})
suggests that
$\log\bracket{\frac{\fnorm{\bB^{*}}^2}{\sigma^2}} = \log\bracket{\frac{\gamma}{\sigma^2}}\gsim \log n$ ensures success, while
Theorem~\ref{thm:exact_minimax} suggests
that $\log\bracket{1 + \frac{\fnorm{\bB^{*}}^2}{\sigma^2}} =
\log\bracket{1 + \frac{\gamma}{\sigma^2}} \lsim \log n$ leads to failure.
Putting things together, we conclude tightness for this case.
For a more clear view of
\eqref{eqn:oracle_succ_optim_assump},
we omit the non-dominant terms, and state the implications in the following remark.

\begin{remark*}
Consider the oracle case with known $\bB^{*}$.
The ML estimator of $\bPi^*$ achieves permutation recovery with
high probability (w.h.p.) in the following situations:
\begin{align*}
\{ \wh{\bPi} = \bPi^*\} \; \text{holds w.h.p. if} \; \;\begin{cases}
&\kappa\varrho(\bB^{*}) < \alpha_1 \log n \; \,\text{and} \; \,\log(\snr) \geq \frac{(8+\kappa)\log n}{\kappa \varrho(\bB^{*})} + c_0,  \\
&\kappa\varrho(\bB^{*}) \geq \alpha_1 \log n \;\, \text{and} \; \,\snr \geq c_1.
\end{cases}
\end{align*}


\end{remark*}


In summary, Theorem~\ref{thm:oracle_succ_optim} yields a more relaxed
requirement on the $\snr$ needed to recover $\bPi^{*}$
as the stable rank $\varrho(\bB^{*})$ exceeds a certain threshold.
More specifically, the requirement becomes $\snr \geq c_1$ for some positive constant
$c_1$ (in particular, the right hand side does not grow with $n$) once $\kappa \rho(\bB^{*}) \gsim \log n$.

\subsection{Realistic case: unknown $\bB^{*}$}
For this case with $\bB^{*}$ unknown, we first present a basic result in Theorem~\ref{thm:succ_recover} that will be improved
upon later under additional assumptions.

\begin{theorem}
\label{thm:succ_recover}
Let $\epsilon > 0$ be arbitrary, and suppose that $n > N_1(\epsilon)$, where $N_1(\epsilon) > 0$ is a positive
constant depending
only on $\epsilon$.
Provided the following conditions hold:
$(i)$ $\snr \cdot n^{-\frac{2n}{n-p}} \geq 1$; and $(ii)$
\begin{align}
\label{eq:succ_snr_condition}
\log (m\cdot \snr) & \geq
\bracket{c_0 + c_1\epsilon}\log n,  
\end{align}
then the ML estimator in~\eqref{eq:sys_ml_estimator}
gives the ground-truth permutation matrix $\bPi^{*}$ with probability
exceeding $1-c_3 n^{-\epsilon}\Bracket{(n^{\epsilon} - 1)^{-1} \vcup 1}$,
where $c_0, c_1, c_2, c_3> 0$ are fixed positive constants. 	
\end{theorem}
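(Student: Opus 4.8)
The plan is to show that, with the stated probability, the profiled least-squares objective at the truth is strictly below that at every competing permutation, i.e.\ that $\textup{RSS}(\bPi^{*}) < \textup{RSS}(\bPi)$ simultaneously for all $\bPi \neq \bPi^{*}$. First I would rewrite the objective through the residual projector: since $\wh{\bB}(\bPi) = (\bPi\bX)^{\dagger}\bY$, we have $\textup{RSS}(\bPi) = \fnorm{(\bI - \bPi\bX(\bPi\bX)^{\dagger})\bY}^2 = \fnorm{\bQ\,\bPi^{\rmt}\bY}^2$, where $\bP = \bX\bX^{\dagger}$ and $\bQ = \bI - \bP$ projects onto $\range(\bX)^{\perp}$ (of rank $n-p$ almost surely), using that $\bPi$ is orthogonal. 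Substituting $\bY = \bPi^{*}\bX\bB^{*} + \bW$ and noting that $\bPi^{*\rmt}\bW$ has the same law as $\bW$, I may take $\bPi^{*} = \bI$ without loss of generality. Writing the relative permutation $\bfsym{\Psi} = \bPi^{\rmt}\bPi^{*}$ and $\bZ = \bX\bB^{*} + \bW$, the recovery event becomes $\fnorm{\bQ\bZ}^2 < \fnorm{\bQ\bfsym{\Psi}\bZ}^2$ for every $\bfsym{\Psi} \neq \bI$.

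Next I would expand the gap. Since $\bQ\bX = \bZero$, the benchmark is pure noise, $\fnorm{\bQ\bZ}^2 = \fnorm{\bQ\bW}^2$, and with $\bDelta = (\bfsym{\Psi}-\bI)\bX$, which has nonzero rows exactly on $S = \supp(\bfsym{\Psi})$ with $h = |S| = \dh(\bfsym{\Psi};\bI)$, one gets
\begin{align*}
\fnorm{\bQ\bfsym{\Psi}\bZ}^2 - \fnorm{\bQ\bZ}^2
&= \underbrace{\fnorm{\bQ\bDelta\bB^{*}}^2}_{\textup{signal}}
+ \underbrace{2\la \bQ\bDelta\bB^{*},\, \bQ\bfsym{\Psi}\bW\ra}_{\textup{cross}}
+ \underbrace{\fnorm{\bQ\bfsym{\Psi}\bW}^2 - \fnorm{\bQ\bW}^2}_{\textup{noise}}.
\end{align*}
The target is a uniform, high-probability lower bound showing the nonnegative signal term dominates the two fluctuation terms for all $\bfsym{\Psi} \neq \bI$. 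Conditionally on $\bX$, the cross term is a centered Gaussian whose variance is proportional to $\sigma^2\fnorm{\bQ\bDelta\bB^{*}}^2$, so it is self-normalized by $\sqrt{\textup{signal}}$ and is controlled by a Gaussian tail bound. The noise term equals a zero-mean quadratic form in $\bW$ whose kernel $\bP - \bfsym{\Psi}^{\rmt}\bP\bfsym{\Psi}$ is supported on the rows and columns indexed by $S$, has trace zero and operator norm at most one; I would bound it by Hanson--Wright so that its deviations are governed by its (small) Frobenius norm and scale only mildly in $h$.

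The heart of the argument, and the step I expect to be the main obstacle, is a uniform lower bound on the signal term $\fnorm{\bQ\bDelta\bB^{*}}^2$ that grows linearly in $h$ and whose lower tail is small enough to survive a union bound over the at most $n^{h}$ permutations with $\dh(\bfsym{\Psi};\bI) = h$. I would split this into two pieces: (a)~an anti-concentration bound showing $\fnorm{\bDelta\bB^{*}}^2 \gsim h\,\fnorm{\bB^{*}}^2$ with overwhelming probability over the Gaussian $\bX$, where the small-ball estimate of Latala et al.~\cite{latala2007banach}, already used for Theorem~\ref{thm:oracle_succ_optim}, supplies the correct polynomial small-ball rate for the differences of rows of $\bX\bB^{*}$; and (b)~a control of the shrinkage incurred by $\bQ$, namely a lower bound on $\fnorm{\bQ\bDelta\bB^{*}}^2 / \fnorm{\bDelta\bB^{*}}^2$ uniform over the $\binom{n}{h}$ possible supports. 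This last volumetric estimate over coordinate subspaces is precisely where the polynomial factor $n^{2n/(n-p)}$ in hypothesis $(i)$ is consumed: it compensates for the worst-case loss when a signal living on $\leq h$ displaced rows is projected onto $\range(\bX)^{\perp}$.

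Finally I would assemble the pieces. For each $h$, I threshold the signal at a level of order $\sigma^2 h (1+\epsilon)\log n$, so that the Gaussian tail for the cross term and the Hanson--Wright tail for the noise term, multiplied by the $n^{h}$ competitors of displacement $h$, yield a per-$h$ failure probability of order $n^{-h\epsilon}$. Hypothesis $(ii)$, which reads $m\cdot\snr \gsim n^{c_0 + c_1\epsilon}$, together with $(i)$ absorbing the projection loss, guarantees that this threshold is met once $n > N_1(\epsilon)$. Summing the geometric series $\sum_{h\geq 1} n^{-h\epsilon} = (n^{\epsilon}-1)^{-1}$ over all displacements then produces the claimed success probability $1 - c_3 n^{-\epsilon}\big[(n^{\epsilon}-1)^{-1}\vcup 1\big]$. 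This mirrors the ``each row is matched to its closest template'' strategy behind Theorem~\ref{thm:oracle_succ_optim}, the essential new ingredient being the passage from the known template $\bX\bB^{*}$ to the projected residual $\bQ\bfsym{\Psi}\bX\bB^{*}$ that the unknown $\bB^{*}$ forces upon us.
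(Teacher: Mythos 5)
Your skeleton coincides with the paper's proof in Appendix~\ref{thm_proof:succ_recover}: reduce to $\bPi^{*} = \bI$, profile out $\bB$ to get the residual-projector objective, decompose the gap $\fnorm{P^{\perp}_{\bPi\bX}\bY}^2 - \fnorm{P^{\perp}_{\bPi^{*}\bX}\bY}^2$ into a signal term $T_{\bPi}$, a cross term, and a noise-difference term, control the cross term by a conditional Gaussian tail (the paper gets exactly your $\normdist(T_{\bPi}, 4\sigma^2 T_{\bPi})$), control the noise difference by concentration of the projection-difference quadratic form (the paper uses a $\chi^2$ bound on $\fnorm{\Proj_{\bPi\bX\setminus\bPi^{*}\bX}\bW}^2$ with $m(p\vcap h)$ degrees of freedom, which is interchangeable with your Hanson--Wright bound on $\bP - \bfsym{\Psi}^{\rmt}\bP\bfsym{\Psi}$), then union-bound over the at most $n^h$ permutations at Hamming distance $h$ and sum the geometric series. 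All of that is sound and matches Stages I--IV.

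The genuine gap is in the signal-term step you yourself flag as the main obstacle. Your plan (a)$+$(b) --- Latala-type anti-concentration of $\fnorm{\bDelta\bB^{*}}^2$ plus a \emph{uniform} lower bound on $\fnorm{\bQ\btheta}^2$ over $h$-sparse unit directions via an $\varepsilon$-net over the $\binom{n}{h}$ coordinate supports --- is precisely the strategy of the refined Theorem~\ref{thm:succ_recover_refine} (events $\calE_7$, $\calE_8$ in Appendix~\ref{thm_proof:succ_recover_refine}), and it is exactly why that theorem carries the constraint $h_{\mathsf{max}}\, r(\bB^{*}) \leq n/8$. Theorem~\ref{thm:succ_recover} imposes no constraint on $\dh(\bI;\bPi^{*})$ and must cover $h$ up to $n$; in that regime your step (b) is false, not merely difficult. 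Once $h > n-p$, for every support $S$ of size $h$ one has $\dim\bracket{\range(\bX)\cap \RR^{S}} \geq p + h - n \geq 1$, so there exist $h$-sparse unit vectors annihilated by $\bQ$, and no uniform shrinkage bound over coordinate subspaces can hold; since the paper allows $p$ up to $n/2$, this regime is nonempty and squarely inside the theorem's scope (the paper makes the same point in Remark~\ref{remark:h_constraint}: as $h\rightarrow n$ the net cardinality $(C/\varepsilon)^{rh}$ becomes uncontrollable and the small-ball probability degenerates to one). A secondary issue: anti-concentration at a constant threshold gives failure probability only $\exp(-c\,h\,\varrho(\bB^{*}))$, which beats the $n^h$ union bound only if $\varrho(\bB^{*}) \gsim \log n$ --- an assumption Theorem~\ref{thm:succ_recover_refine} makes but Theorem~\ref{thm:succ_recover} cannot (it must cover rank-one $\bB^{*}$). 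The paper's actual route avoids both problems: it rotates by the SVD of $\bB^{*}$, notes that $T_{\bPi} \leq t\fnorm{\bB^{*}}^2/m$ forces the analogous one-column event $\fnorm{P^{\perp}_{\bPi\bX}\bPi^{*}\bX\beta_i\be_i}^2 \leq t\beta_i^2/m$ for some singular direction, and invokes Lemma 5 of Pananjady et al.~\cite{pananjady2016linear} for the $m=1$ case, which exploits the specific dependent structure of the projector and the permuted Gaussian column and is valid for all $2 \leq h \leq n$; that lemma, not a volumetric estimate, is where the $n^{\frac{2n}{n-p}}$ factor of hypothesis $(i)$ is consumed. Without this columnwise reduction (or some other substitute valid for $h$ near $n$), your argument does not close.
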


\begin{proofoutline} The proof extends the proof strategy employed in
Pananjady et al.~\cite{pananjady2016linear} for $m = 1$ to arbitrary $m$, which amounts
to showing that
$\fnorm{P^{\perp}_{\bPi^{*}\bX}\bY} <
\fnorm{P^{\perp}_{\bPi\bX}\bY}$ holds true for all $\bPi \neq \bPi^{*}$, where
for $\bPi \in \calP_n$, $P_{\bPi \bX}^{\perp}$ denotes the projection on the orthogonal
complement of the range of $\bPi \bX$. Several critical steps in \cite{pananjady2016linear} do no longer apply for the case of multiple $m$ considered herein, and prompt new technical challenges
to be overcome.
Details of the proof are available in
Appendix~\ref{thm_proof:succ_recover}, including specific values
of the constants $c_0$ and $c_1$.
\end{proofoutline}

Theorem~\ref{thm:succ_recover} states that
exact recovery of $\bPi^*$ can be achieved with
high probability if $\log(m \cdot \snr) \gsim \log n$.
For the rank-one case,
we can see that this result is tight up to multiplicative constants in light
of Theorem~\ref{thm:exact_minimax},
which implies failure of recovery with high probability provided that
$\log(1 + m\snr)\lsim \log n$.
However, Theorem~\ref{thm:succ_recover}
suggests that multiple measurements behave like a single
measurement with the same energy level, which can be far from the actual behavior beyond
the rank-one case.
Unlike Theorem ~\ref{thm:oracle_succ_optim} concerning the oracle case, Theorem~\ref{thm:succ_recover} thus fails to capture potential improvement brought by higher
measurement diversity as quantified by the
stable rank $\varrho(\bB^{*})$. To address this limitation, we present a refined result that comes at the expense of
additional assumptions on $\dh(\bI; \bPi^{*})$ and $\varrho(\bB^{*})$.
\begin{theorem}
\label{thm:succ_recover_refine}
Suppose that $\dh(\bI; \bPi^{*}) \leq h_{\max}$ with $h_{\max}$ satisfying the relation
$h_{\mathsf{max}}r(\bB^{*}) \leq n/8$. Let further $\epsilon > 0$ be arbitrary, and suppose that
$n > N_2(\epsilon)$, where $N_2(\epsilon) > 0$ is a positive constant depending
only on $\epsilon$.
In addition, suppose that the following conditions hold:
\begin{equation}
\label{eq:snr_refine_cond}
(i)\;\snr > c_0, \quad (ii)\;\varrho(\bB^{*})\geq  c_1(1+\epsilon)\log n, \quad (iii)\;\log\bracket{\snr}\geq \frac{c_3(1+\epsilon)\log n}{\varrho(\bB^{*})} + c_4.
\end{equation}
Then the ML estimator \eqref{eq:sys_ml_estimator} subject to the constraint $\dh(\bI; \bPi)\leq h_{\mathsf{max}}$
equals $\bPi^{*}$ with probability at least
$1 - 10n^{-\epsilon}\Bracket{(n^{\epsilon} - 1)^{-1}\vcup 1}$,
where $c_0, \ldots, c_4 > 0$ are some positive constants.
\end{theorem}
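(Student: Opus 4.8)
The plan is to proceed as in the proof of Theorem~\ref{thm:succ_recover}, by showing that $\bPi^{*}$ strictly minimizes the profiled least-squares objective $\bPi\mapsto\fnorm{P^{\perp}_{\bPi\bX}\bY}^2$ over the feasible set, but to replace the crude signal bound used there by one that exploits the stable rank, and to restrict the union bound to the Hamming ball. Since the law of $\bX$ is invariant under row permutations, I would take $\bPi^{*}=\bI$ without loss of generality, so that $\bY=\bX\bB^{*}+\bW$ and $\fnorm{P^{\perp}_{\bX}\bY}^2=\fnorm{P^{\perp}_{\bX}\bW}^2$. It then suffices to prove that, with the stated probability,
\[
\Delta(\bPi):=\fnorm{P^{\perp}_{\bPi\bX}\bY}^2-\fnorm{P^{\perp}_{\bX}\bW}^2>0
\]
holds simultaneously for every $\bPi$ with $1\le\dh(\bI;\bPi)\le h_{\mathsf{max}}$. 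Using the identity $P^{\perp}_{\bPi\bX}\bX\bB^{*}=P^{\perp}_{\bPi\bX}(\bI-\bPi)\bX\bB^{*}$, which holds because $P^{\perp}_{\bPi\bX}\bPi\bX=0$, I would decompose
\[
\Delta(\bPi)=\fnorm{P^{\perp}_{\bPi\bX}\bX\bB^{*}}^2+2\la P^{\perp}_{\bPi\bX}\bX\bB^{*},\,\bW\ra+\bracket{\fnorm{P^{\perp}_{\bPi\bX}\bW}^2-\fnorm{P^{\perp}_{\bX}\bW}^2}
\]
and argue that the surviving signal energy (first term) dominates the cross term and the noise fluctuation.

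The core is a high-probability, stable-rank-aware lower bound on the surviving signal energy. Setting $\bG:=(\bI-\bPi)\bX\bB^{*}$, which is supported on the $h:=\dh(\bI;\bPi)$ rows moved by $\bpi$, Pythagoras gives $\fnorm{P^{\perp}_{\bPi\bX}\bX\bB^{*}}^2=\fnorm{\bG}^2-\fnorm{P_{\bPi\bX}\bG}^2$. Each nonzero row of $\bG$ equals $\bB^{*\rmt}\bd_i$ with $\bd_i=\bX_{i,:}-\bX_{\bpi^{-1}(i),:}\sim\normdist(0,2\bI_p)$, so $\|\bB^{*\rmt}\bd_i\|^2=\sum_j 2\lambda_j^2 g_{ij}^2$ is a weighted chi-square with mean $2\fnorm{\bB^{*}}^2$. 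The crucial point is that the event $\|\bB^{*\rmt}\bd_i\|^2\le\delta\fnorm{\bB^{*}}^2$ is a small-ball event whose probability decays as a power of $\delta$ with exponent proportional to $\varrho(\bB^{*})$; this is precisely the Latala et al.~\cite{latala2007banach} estimate already invoked for Theorem~\ref{thm:oracle_succ_optim}. Aggregating over the $h$ moved rows, and absorbing the correlations coming from rows shared among distinct $\bd_i$, yields $\fnorm{\bG}^2\gsim\delta\,h\fnorm{\bB^{*}}^2$ except with probability at most $(C\delta)^{c\,h\varrho(\bB^{*})}$. For the captured term I would use $(\bPi\bX)^{\rmt}\bPi\bX=\bX^{\rmt}\bX\approx n\bI_p$ to reduce $\fnorm{P_{\bPi\bX}\bG}^2$ to $n^{-1}\fnorm{(\bPi\bX)^{\rmt}\bG}^2$, and then invoke an incoherence/restricted-isometry argument: because $\bG$ is supported on $h$ rows and has rank at most $r(\bB^{*})$, the assumption $h_{\mathsf{max}}\,r(\bB^{*})\le n/8$ keeps the captured fraction bounded away from one uniformly over the Hamming ball, giving $\fnorm{P^{\perp}_{\bPi\bX}\bX\bB^{*}}^2\gsim\delta\,h\fnorm{\bB^{*}}^2$.

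The two remaining terms are controlled by Gaussian concentration conditional on $\bX$. Writing $s^2:=\fnorm{P^{\perp}_{\bPi\bX}\bX\bB^{*}}^2$, the cross term is centered Gaussian with standard deviation $\sigma s$, while the noise difference equals $\la\bW,(P_{\bX}-P_{\bPi\bX})\bW\ra$, a trace-zero quadratic form whose Hanson--Wright fluctuation is of smaller order once $h\,p\ll n$. Thus $\Delta(\bPi)>0$ follows as soon as $s^2$ exceeds the Gaussian scale $\sigma^2 h\log n$ by a constant factor. Combining this with the signal bound of the previous paragraph and choosing $\delta\asymp n^{-(1+\epsilon)/(c\varrho(\bB^{*}))}$ — the largest value for which the small-ball failure $(C\delta)^{c h\varrho(\bB^{*})}$ still beats the union-bound cardinality $\binom{n}{h}h!\le n^{h}$ — the requirement becomes $\fnorm{\bB^{*}}^2/\sigma^2\gsim(\log n)\,n^{(1+\epsilon)/(c\varrho(\bB^{*}))}$. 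Taking logarithms and using $\fnorm{\bB^{*}}^2/\sigma^2=m\cdot\snr$ together with $m\ge\varrho(\bB^{*})$ reproduces conditions~(i)--(iii): (ii) $\varrho(\bB^{*})\gsim(1+\epsilon)\log n$ guarantees that the small-ball exponent dominates the entropy $h\log n$ of the Hamming ball so that $\delta$ is a usable margin, (iii) $\log(\snr)\gsim(1+\epsilon)\log n/\varrho(\bB^{*})+c_4$ balances this margin against the noise, and (i) $\snr>c_0$ removes the degenerate low-energy regime. Summing the per-$\bPi$ failure probabilities over $h=2,\ldots,h_{\mathsf{max}}$ then gives the claimed bound $1-10n^{-\epsilon}\Bracket{(n^{\epsilon}-1)^{-1}\vcup 1}$.

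The step I expect to be the main obstacle is the signal lower bound of the second paragraph, specifically the need to certify simultaneously that (a) each of the $h$ moved rows contributes $\Omega(\delta\fnorm{\bB^{*}}^2)$ of energy except with a probability whose exponent tracks $\varrho(\bB^{*})$ rather than the ambient dimension $m$ or $p$ — this is what makes the $\log n/\varrho(\bB^{*})$ scaling possible and requires the sharp small-ball input — and that (b) the $p$-dimensional range of $\bPi\bX$ captures only a constant fraction of the energy of the rank-$\le r(\bB^{*})$, $h$-row-supported matrix $\bG$. The coupling between $\bG$ and $\bPi\bX$, both built from the same rows of $\bX$, and the correlations among the $\bd_i$ obstruct a direct independence argument, and reconciling them uniformly over the entire Hamming ball is where the condition $h_{\mathsf{max}}\,r(\bB^{*})\le n/8$ must be used carefully.
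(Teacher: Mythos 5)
Your skeleton coincides with the paper's own proof: reduce to $\bPi^{*}=\bI$, decompose the objective difference into surviving signal energy, Gaussian cross term, and noise-projection difference, union bound over Hamming shells of size at most $n^{h}$, and extract the stable-rank scaling from a Latala-type bound on the energy of the moved rows (one small slip: after the reduction the constraint ball recenters, so you must control all $\bPi$ with $\dh(\bI;\bPi)\le 2h_{\max}$, which is exactly why the theorem assumes $h_{\max}r(\bB^{*})\le n/8$ while the paper's lemmas work under $rh\le n/4$). However, the two steps you yourself flag as the crux are left unresolved, and your sketches of them would not close as written. For the lower bound on $\fnorm{\bG}^2$ with $\bG=(\bI-\bPi)\bX\bB^{*}$: the differences $\bd_i$ share rows along permutation cycles, and ``absorbing the correlations'' is not automatic; the paper makes this rigorous by partitioning the moved index set into three subsets such that $j$ and $\bpi(j)$ never lie in the same subset (imported from Lemma 8 of~\cite{pananjady2016linear}), so that within each subset the row differences are genuinely independent Gaussians, and then applies Theorem 2.5 of~\cite{latala2007banach} to the resulting quadratic form directly, at a constant margin, with failure probability $6\exp(-c_0 h\varrho(\bB^{*})/5)$; condition (ii) then beats the $n^{h}$ shell entropy. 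No per-row small-ball aggregation or subset counting is needed, and your polynomially small $\delta$ is unnecessary (under (ii) it is a constant anyway).

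The more serious gap is the captured-energy term $\fnorm{P_{\bPi\bX}\bG}^2$. Your route via $\bX^{\rmt}\bX\approx n\bI_p$ and a generic ``incoherence/RIP'' argument at best shows the captured fraction is about $p/n$ for a \emph{fixed} $\bPi$ at the typical scale; to survive the union bound you need per-permutation failure probabilities of order $n^{-(1+\epsilon)h}$ uniformly over the shell, and you need them despite the fact that $\bG/\fnorm{\bG}$ is built from the very rows of $\bX$ spanning $\range(\bPi\bX)$ — this coupling produces deterministic alignment terms (e.g., $\Expc\bigl[\bX_{j,:}(\bX_{i,:}-\bX_{j,:})^{\rmt}\bB^{*}\bigr]=-\bB^{*}$ for $j\ne i$) that an independence-based argument cannot see. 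The paper's device, which your proposal lacks, is to observe that the normalized direction always lies in the \emph{deterministic} set $\Theta_h$ of $h$-sparse unit vectors, cover $\Theta_h$ with an $\varepsilon$-net of cardinality at most $(3/\varepsilon)^{h}$ (with a further union over the $r$ columns), and apply the Dasgupta--Gupta projection tail $\Prob\bigl(\norm{P_{\bX}^{\perp}\btheta_0}{2}^2\le \alpha(n-p)/n\bigr)\le \exp\bigl(\tfrac{n-p}{2}(\log\alpha-\alpha+1)\bigr)$ to each fixed net point, which decouples the data dependence completely; the assumption $rh\le n/4$ is then precisely what lets the $\exp(-\Theta(n))$ tail absorb the net cardinality (this is the paper's term $T_4(t,h)$, and its Remark~\ref{remark:h_constraint} explains why the same mechanism forbids $h\to n$). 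Without this uniform decoupling over sparse directions, or an equivalent substitute, your treatment of the captured term is a genuine gap rather than a technicality.
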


\begin{remark}
The $\snr$ requirement in \eqref{eq:snr_refine_cond} matches
the minimax bound in Theorem~\ref{thm:exact_minimax} up
to a logarithmic factor, when setting $\calH$ as $\{\bPi \in \calP_n:\dh(\bI; \bPi) \leq h_{\mathsf{max}}\}$ and
$h_{\mathsf{max}}\asymp \frac{n}{\log n}$.
\end{remark}

In contrast to Theorem~\ref{thm:succ_recover}, the above theorem uncovers the benefits
brought by larger stable rank $\varrho(\bB^{*})$. The outline of the
proof strategy is given as follows with the technical details being placed in
Section~\ref{thm_proof:succ_recover_refine}.

\begin{proofoutline}
The structure of the proof is similar to that of Theorem~\ref{thm:succ_recover}. The
main innovation is an improved upper bound on the probability
$\Prob\bracket{\fnorm{P_{\bPi^{*} \bX}^{\perp}\bX\bB^{*}} \leq c\fnorm{\bB^{*}}^2}$,
where $P_{\bPi^{*} \bX}^{\perp}$ denotes the projection on the orthogonal complement
of the range of $\bPi^{*} \bX$. The above probability is bounded based on an
$\varepsilon$-covering of the set of sparse unit vectors whose relevance is a consequence of the constraint $\dh(\bI; \bPi^{*}) \leq h_{\max}$.
\end{proofoutline}

Let us comment on the additional constraint (i)
$\dh(\bI; \bPi^{*}) \leq h_{\max}$ in Theorem \ref{thm:succ_recover_refine}. To ensure that signal diversity as quantified by
$\varrho(\bB^{*})$ improves the
recovery performance, we require $\varrho(\bB^{*})= \Omega\bracket{\log n}$. In this case,  we obtain the condition $\snr \geq C$ for
some constant $C > 0$, which then also matches the assertion in  Theorem~\ref{thm:oracle_succ_optim}.
At the same time, $h_{\mathsf{max}}$ is required to be of the order
$h_{\mathsf{max}} \lsim \frac{n}{\log n}$,
which is only slightly sub-optimal compared to $h_{\max}$ being a linear
fraction of $n$.
We hypothesize that the constraint on $\dh\bracket{\bI; \bPi^{*}}$ can be eliminated, either with the help of more advanced proof techniques or by imposing more stringent constraints involving the ratio $n / p$.

Since the order for the required $\snr$ to achieve correct recovery remains the same as in Theorem~\ref{thm:oracle_succ_optim},
we can draw the conclusion that the major difficulty in
recovering $(\bPi^{*}, \bB^{*})$ is due to the
sensing noise $\bW$
while the fact that $\bB^{*}$ is not given a priori
does not change the level of difficulty significantly.

\section{Computational Approach}\label{sec:comp_mtd}
In this section, we focus on the computational aspects of the problem.
Recall that for the oracle case, the ML estimator in
\eqref{eq:sys_ml_estimator} reduces to the linear assignment problem
and can be solved efficiently by the Hungarian
algorithm~\cite{kuhn1955hungarian}
or the auction algorithm~\cite{bertsekas1992forward}.
Hence the emphasis in this sections is on the realistic case
with $\bB^{*}$ unknown.
As proved in~\cite{pananjady2016linear}, the computation of the ML estimator is NP-hard
except for the special case $m= p=1$.
In this paper,
we reformulate~\eqref{eq:sys_ml_estimator} as a bi-convex problem
by introducing an auxiliary variable,
and
solve the resulting problem with the ADMM algorithm~\cite{boyd2011distributed}; details
are given in Algorithm~\ref{alg:compt_mtd_admm}.
Numerical experiments based on that algorithm confirm the behavior predicted by
Theorem~\ref{thm:succ_recover_refine}. 

\begin{algorithm}[h!]
\caption{ADMM algorithm for the recovery of $\bPi$.}
\label{alg:compt_mtd_admm}
\begin{algorithmic}[1]
\Statex \textbullet~
\textbf{Input}: observations $\bY$,
sensing matrix $\bX$, penalty parameter
$\rho > 0$, and maximum iteration number
$T_{\mathsf{max}}$.

\Statex \textbullet~\
\textbf{Initialization:}
Initialize $\bPi_1^{(0)},\bPi_2^{(0)}$ as $\bPi^{(0)} = \argmax \bracket{\la \br{\bY}, \bPi \br{\bX}\ra}^2$,
where  $\br{\bY} = m^{-1}\bracket{\sum_{i=1}^m \bY_{:, i}}$
and $\br{\bX} = p^{-1}\bracket{\sum_{i=1}^p \bX_{:, i}}$.

\Statex \textbullet~
\textbf{For $t$ from $0$ to $(T_{\mathsf{max}}-1)$:}
Update $\bPi_1^{(t+1)}, \bPi_2^{(t+1)}$ as
\begin{align}
\label{eq:num_simul_admm}
\bPi_1^{(t+1)} =&\argmin_{\bPi_1}
\la \bPi_1,- \bY\bY^{\rmt}\bPi^{(t)}_2 \Proj_{\bX}^{\rmt} +
\bmu^{(t)} - \rho \bPi^{(t)}_2\ra \notag \\
\bPi_2^{(t+1)} =&
\argmin_{\bPi_2}\la \bPi_2,~\bY\bY^{\rmt} \bPi^{(t+1)}_1 \Proj_{\bX}
- \bmu^{(t)} - \rho\bPi^{(t+1)}_1\ra  \notag \\
\bmu^{(t+1)} =&~ \bmu^{(t)} +
\rho\left(\bPi^{(t+1)}_1 - \bPi_2^{(t+1)}\right),\notag
\end{align}$\quad$ where $\Proj_{\bX} \defequal \bX\bracket{\bX^{\rmt}\bX}^{-1}\bX^{\rmt}$ is the projection onto the
column space of $\bX$.
\Statex \textbullet~
\textbf{Termination}: Stop the ADMM algorithm once $\bPi_1^{(t+1)}$
is identical to $\bPi_2^{(t+1)}$; or $t = T_{\mathsf{max}}$.

\end{algorithmic}
\end{algorithm}

\subsection{Initialization}
\label{subsec:compt_mtd_sort}
This subsection provides a heuristic justification of the
initialization method.
To begin with, we consider the special case when $p = 1$; note that the matrices $\bX$ and $\bB^*$ then reduce to a column vector and to a row vector, respectively.
The basic idea is to generalize the sorting method
in~\cite{pananjady2016linear} applicable for $p=m=1$.
For the case $m\geq 2$, we estimate
$\bB_{:, i}^*$ as ${\la \bPi \bX, \bY_{:, i} \ra}/{\ltwonorm{\bX}^2}$
for any fixed $\bPi$.
Back-substitution yields
\[
\fnorm{\bY - \bPi \bX\bB}^2 =
\sum_{i=1}^m \bracket{\ltwonorm{\bY_{:, i}}^2 - \dfrac{\left(\la \bY_{:, i}, \bPi \bX \ra \right)^2}{\ltwonorm{\bX}^2}},
\]
which means that we can recover $\wh{\bPi}$ by
\begin{equation}
\label{eq:compt_mtd_max_coherence}
\wh{\bPi}\in \argmax_{\bPi}~
\sum_{i=1}^m {\la \bY_{:, i},~\bPi \bX\ra}^2.
\end{equation}
Given that the all entries of $\bB^{*}$ are non-negative, we assume
\begin{equation*}
\la \bY_{:, i}, \bPi\bX \ra \approx \Expc \la \bY_{:, i}, \bPi\bX \ra =
\Expc\la \bPi^{*}\bX, \bPi \bX\ra  \bB^{*}_{:, i} \geq 0, \quad 1 \leq i \leq m,
\end{equation*}
and relax~\eqref{eq:compt_mtd_max_coherence} to
\begin{equation}
\label{eq:compt_mtd_simple_sort}
\sum_{i=1}^m {\left|\la \bY_{:, i},~\bPi \bX\ra\right|}^2 \leq
m^2{\la \br{\bY},~\bPi \bX \ra}^2,
\end{equation}
where $\br{\bY} = m^{-1}\bracket{\sum_{i=1}^m \bY_{:, i}}$.
We then compute an estimator $\wh{\bPi}$ by maximizing the
above upper bound in~\eqref{eq:compt_mtd_simple_sort},
which can be formulated as a linear assignment problem
as in~\eqref{eq:LAP}.
The computational complexity of this estimator is
$\Omega(nm + n\log n)$, since only averaging and
sorting are needed~\cite{cormen2009introduction}.

As for the case when $p\geq 2$, we replace
$\bX$ in~\eqref{eq:compt_mtd_simple_sort} by its average, i.e., $\br{\bX} = p^{-1}\bracket{\sum_{i=1}^p \bX_{:, i}}$.
Hence we obtain the starting point $\wh{\bPi}^{(0)}$ for
the ADMM scheme in Algorithm~\ref{alg:compt_mtd_admm}.

\subsection{Update equation}
This subsection presents the  detailed derivation of the
update equation of the ADMM estimator in Algorithm~\ref{alg:compt_mtd_admm}.
First, we have the following relation in light of~\eqref{eq:sys_b_estimator}:
\begin{equation}
\label{eq:num_simul_p2_ml}
\min_{\bPi, \bB}~\fnorm{\bY - \bPi \bX \bB}^2 =
\min_{\bPi}\fnorm{\Proj_{\bPi \bX}^{\perp}\bY}^2,
\end{equation}
where $\Proj_{\bPi \bX}^{\perp}$ is defined as
$\bI - \bPi \bX \left(\bX^{\rmt}\bX\right)^{-1}\bX^{\rmt}\bPi^{\rmt}$. Note that we can decompose $\bY$ as
$\Proj_{\bPi \bX}^{\perp} \bY + \Proj_{\bPi \bX} \bY$. Since $\fnorm{\bY}^2 = \fnorm{\Proj_{\bPi \bX}^{\perp} \bY}^2 +
\fnorm{\Proj_{\bPi \bX} \bY}^2$ can be treated as a constant,
minimizing $\fnorm{\Proj_{\bPi \bX}^{\perp} \bY}^2$ is equivalent
to maximizing $\fnorm{\Proj_{\bPi \bX}\bY}^2$.
Introducing two redundant variables $\bPi_1$ and $\bPi_2$,
we formulate~\eqref{eq:num_simul_p2_ml} as
\begin{equation}\label{eq:num_simul_biconvex}
\min_{\bPi_1,~\bPi_2} -
\trace\left( \bPi_1 \Proj_{\bX}
\bPi_2^{\rmt} \bY\bY^{\rmt}\right),~
\St \bPi_1 = \bPi_2,
\end{equation}
where $\Proj_{\bX} \defequal \bX\bracket{\bX^{\rmt}\bX}^{-1}\bX^{\rmt}$.
We propose to solve~\eqref{eq:num_simul_biconvex} with
the ADMM algorithm~\cite{boyd2011distributed}, whose scheme then yields Algorithm~\ref{alg:compt_mtd_admm}.

\section{Numerical Results}\label{sec:num_simul}
In this section, we present simulation results
and investigate the relation between the
correct recovery rate $\Prob(\wh{\bPi} = \bPi^{*})$
and the signal energy.
The experiments are
divided into two parts:
1) the oracle case (known $\bB^{*}$) and 2) the realistic case with $\bB^{*}$ being unknown.

\subsection{Oracle case}

In this subsection, we study the relation between the
correct probability $\Prob(\wh{\bPi}= \bPi^{*})$
and the $\snr$ in the oracle case with known $\bB^{*}$.
As mentioned previously, the ML estimator is obtained as the
solution of the linear assignment problem \eqref{eq:LAP}. The latter is here solved by the auction algorithm~\cite{bertsekas1992forward}.
The simulation results confirm
our theoretical results in
Theorem~\ref{thm:exact_minimax} and
Proposition~\ref{thm:fail_recover_optim}.
In virtue of Theorem~\ref{thm:exact_minimax}, we plot
\[
\frac{\log \det(\bI + \bB^{*\rmt}\bB^{*}/\sigma^2)}{\log n} =
\frac{\sum_i \log\left(1 + {\lambda_i^2}/{\sigma^2}\right)}{\log n}
\]
on the horizontal axis, and the empirical probability of
permutation recovery
on the vertical axis. We also use
$\snr$ on the horizontal axis to illustrate the
energy savings
brought by multiple measurement vectors.

\paragraph{Rank-one case}
We use $\bB^{*}$ such that all $\{ \bB_{:, i}^{*} \}_{i=1}^m$ are identical. The
simulation results are displayed in Fig.~\ref{fig:oracle_recover_rank1}.
The left panels use $\snr$ for the horizontal axis while
the right panels show the corresponding values of
$\frac{\logdet\bracket{\bI + \bB^{*\rmt}\bB^{*}/\sigma^2}}{\log n}$,
which can be rewritten as
$\frac{\log(1 + m\snr)}{\log n}$ in this case. Observing that the
curves coincide in the right panels, we conclude that
increasing values of $m$ are irrelevant in this case
given a fixed ratio of the total signal energy to the noise
variance $\fnorm{\bB^{*}}^2/\sigma^2$.

\begin{figure}[t]
\begin{center}
\mbox{
\includegraphics[width=2.1in]{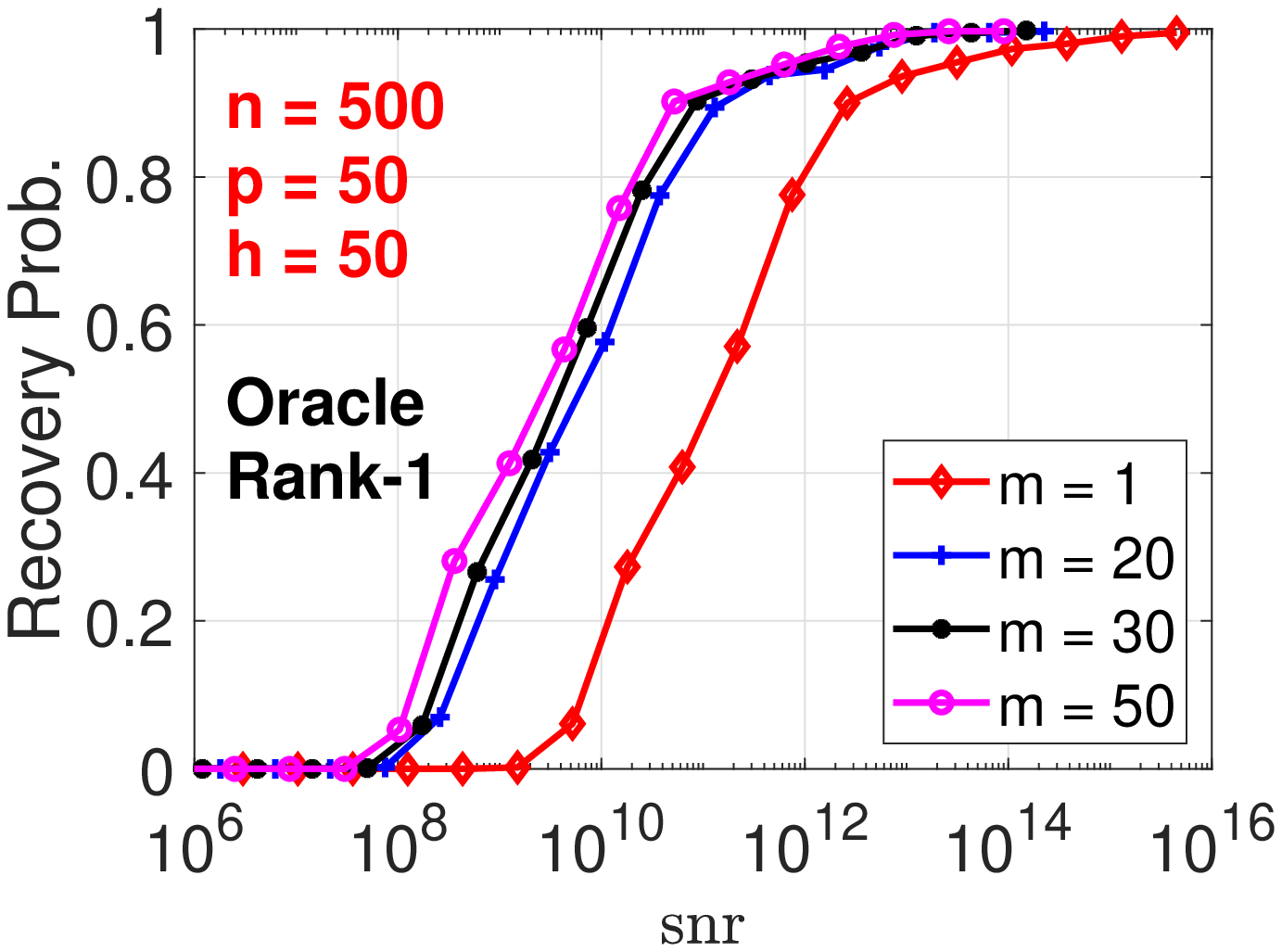}
\includegraphics[width=2.1in]{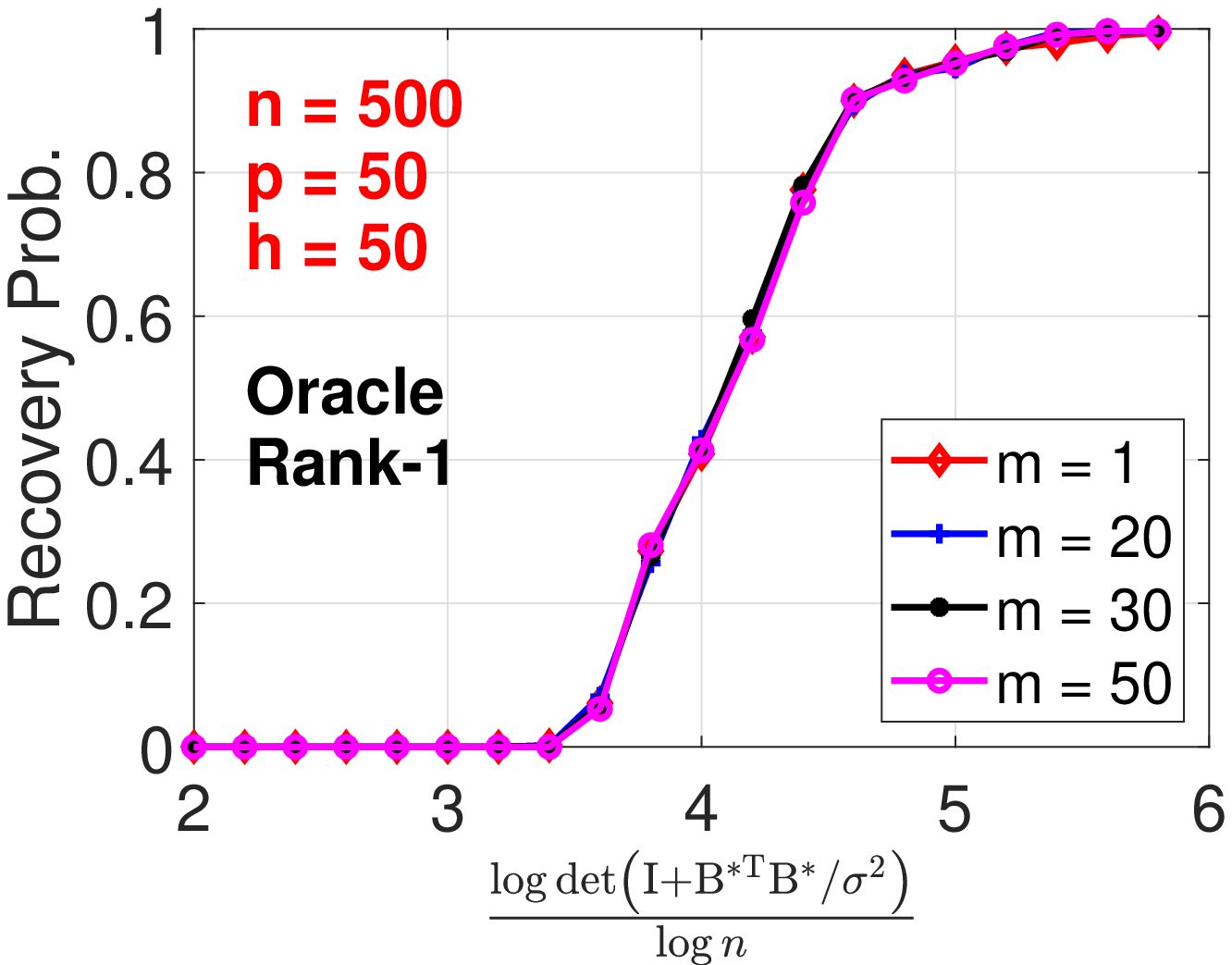}
}
\mbox{
\includegraphics[width=2.1in]{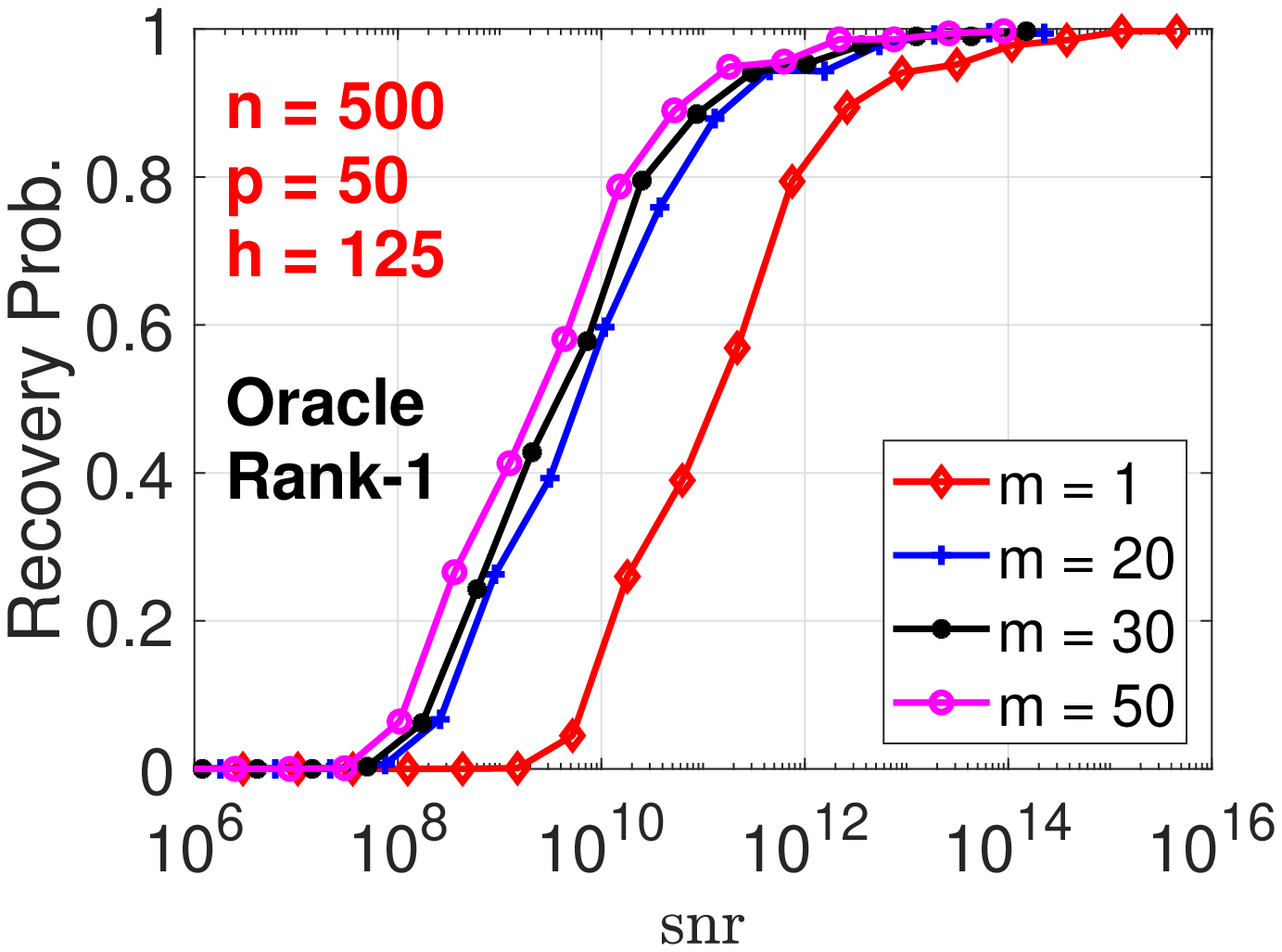}
\includegraphics[width=2.1in]{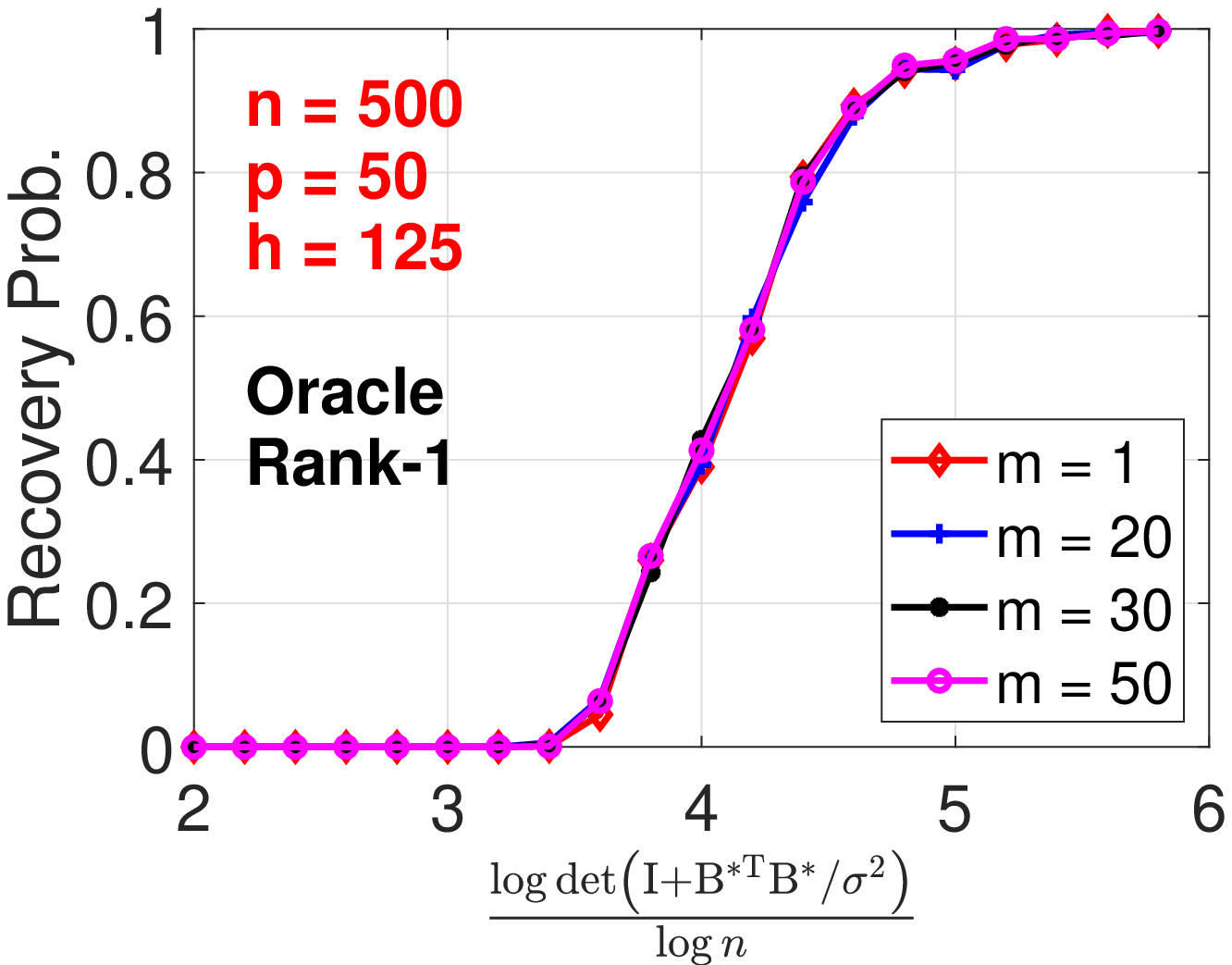}
}

\mbox{
\includegraphics[width=2.1in]{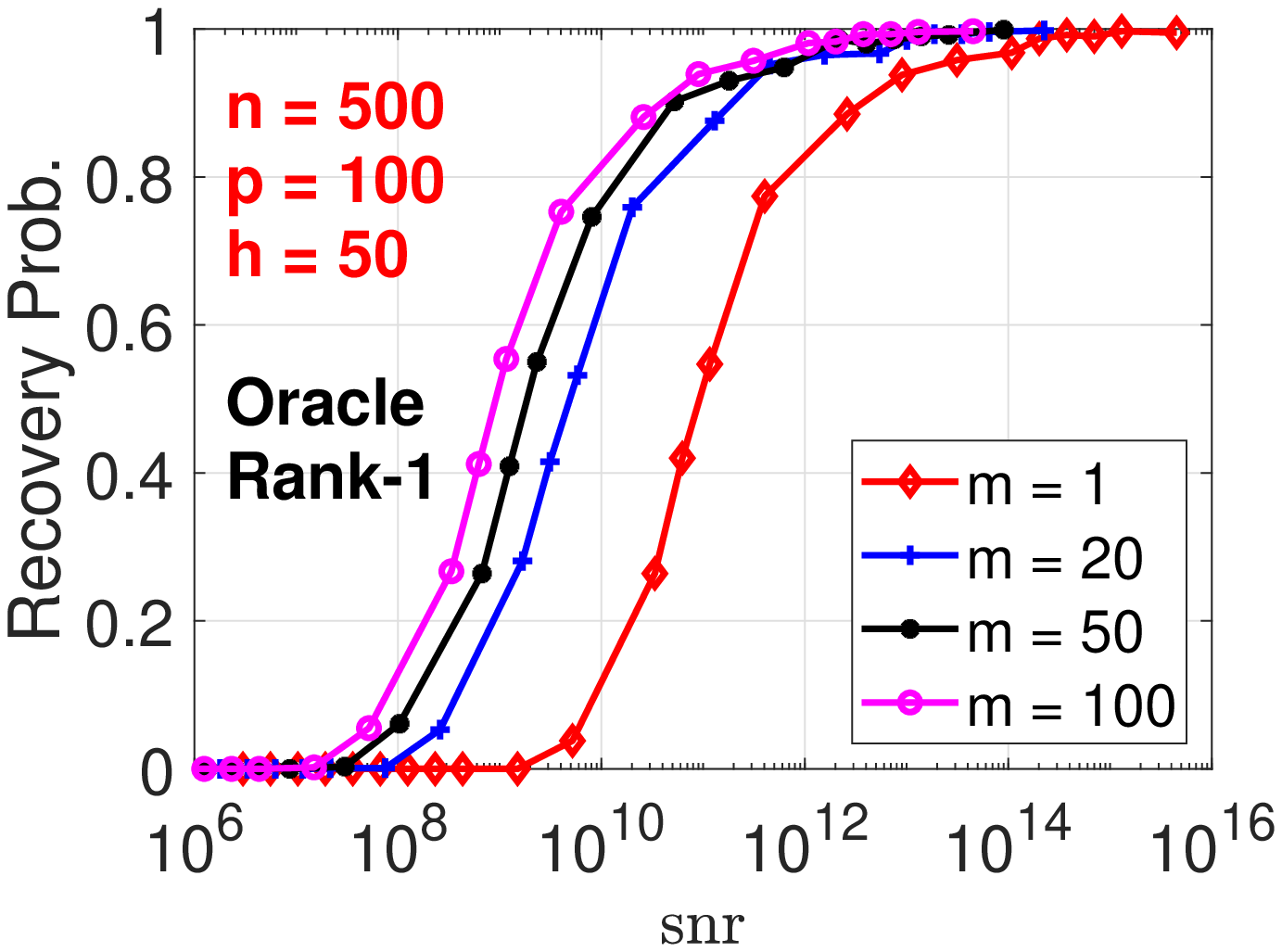}
\includegraphics[width=2.1in]{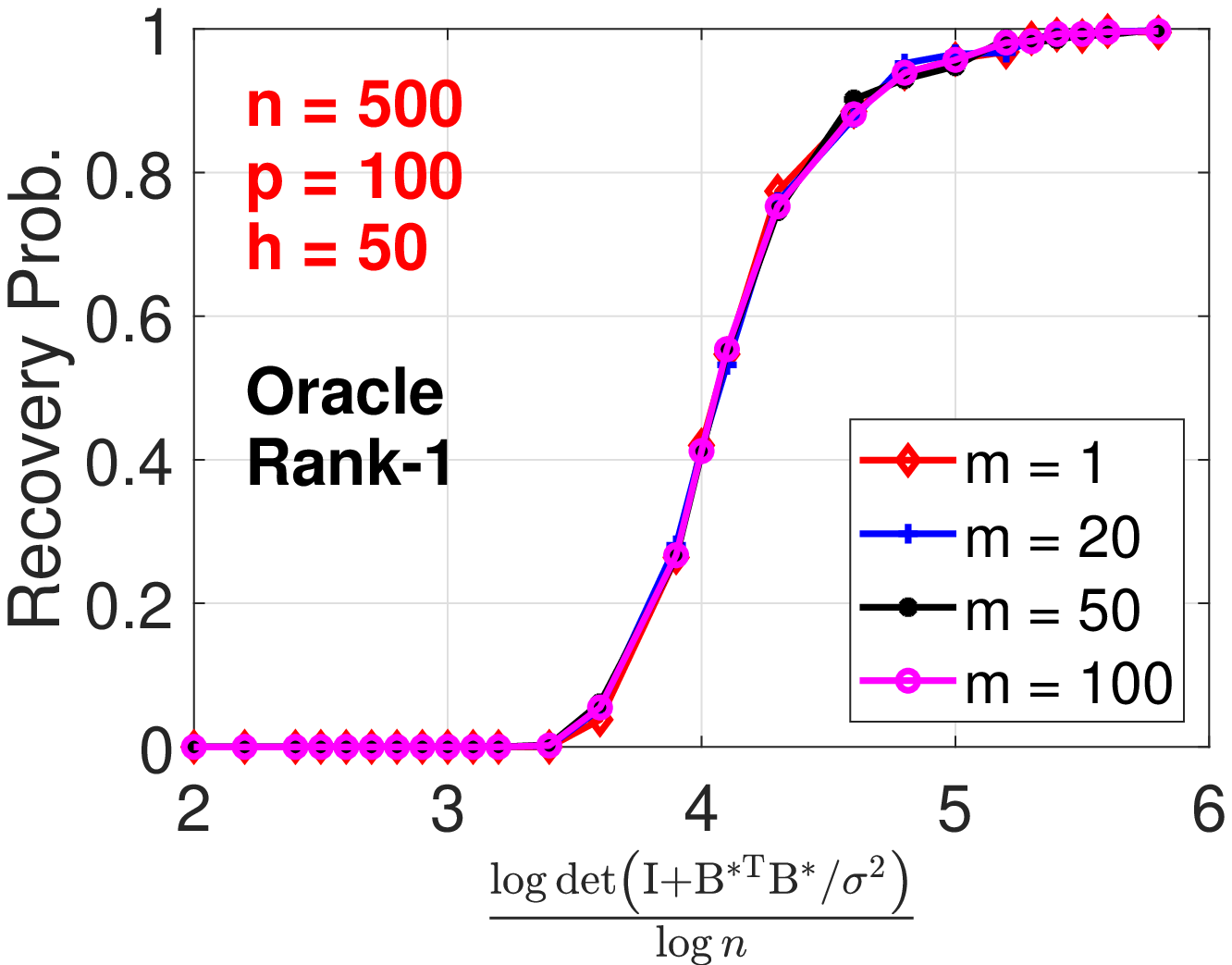}
}
\mbox{
\includegraphics[width=2.1in]{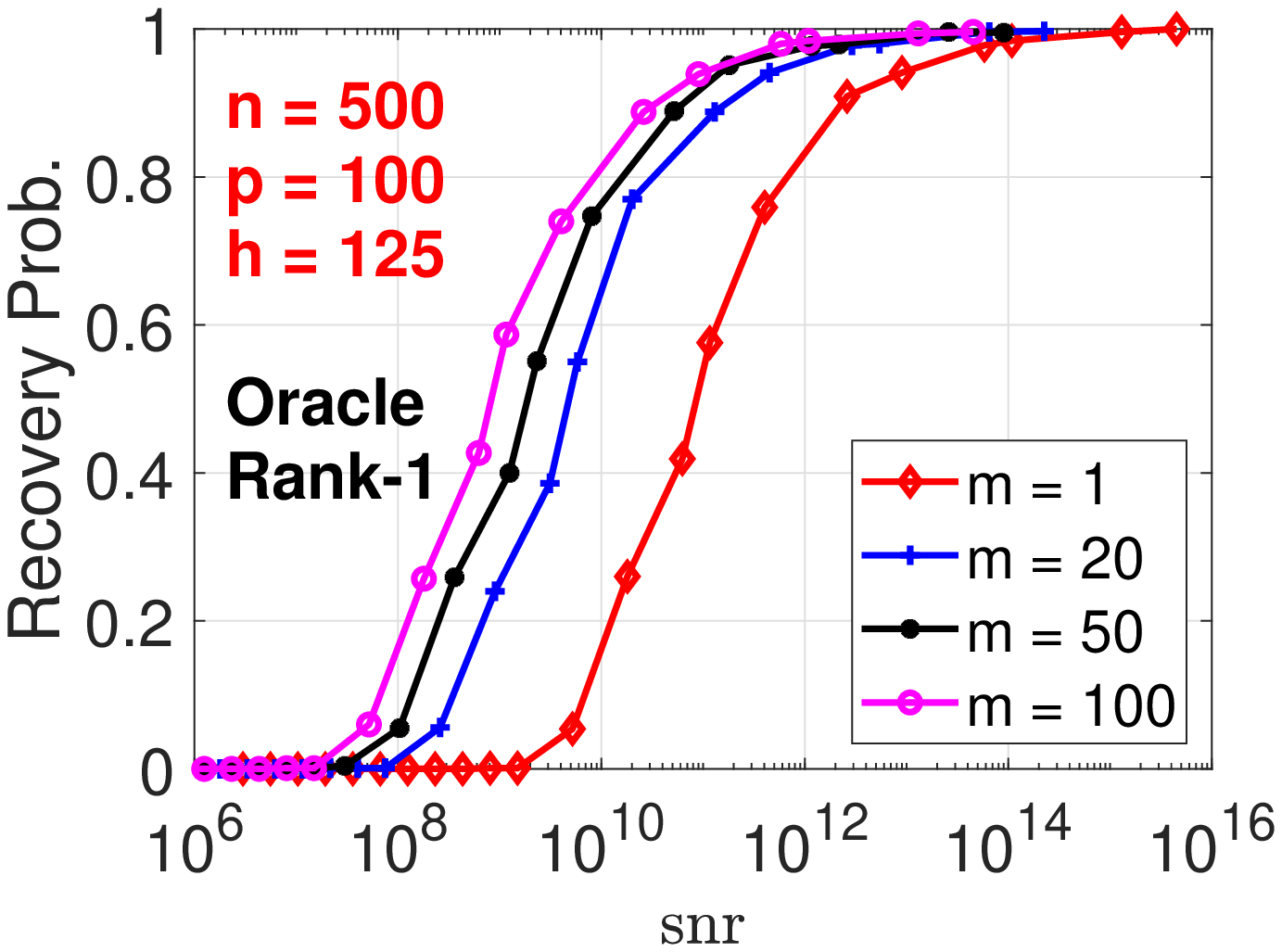}
\includegraphics[width=2.1in]{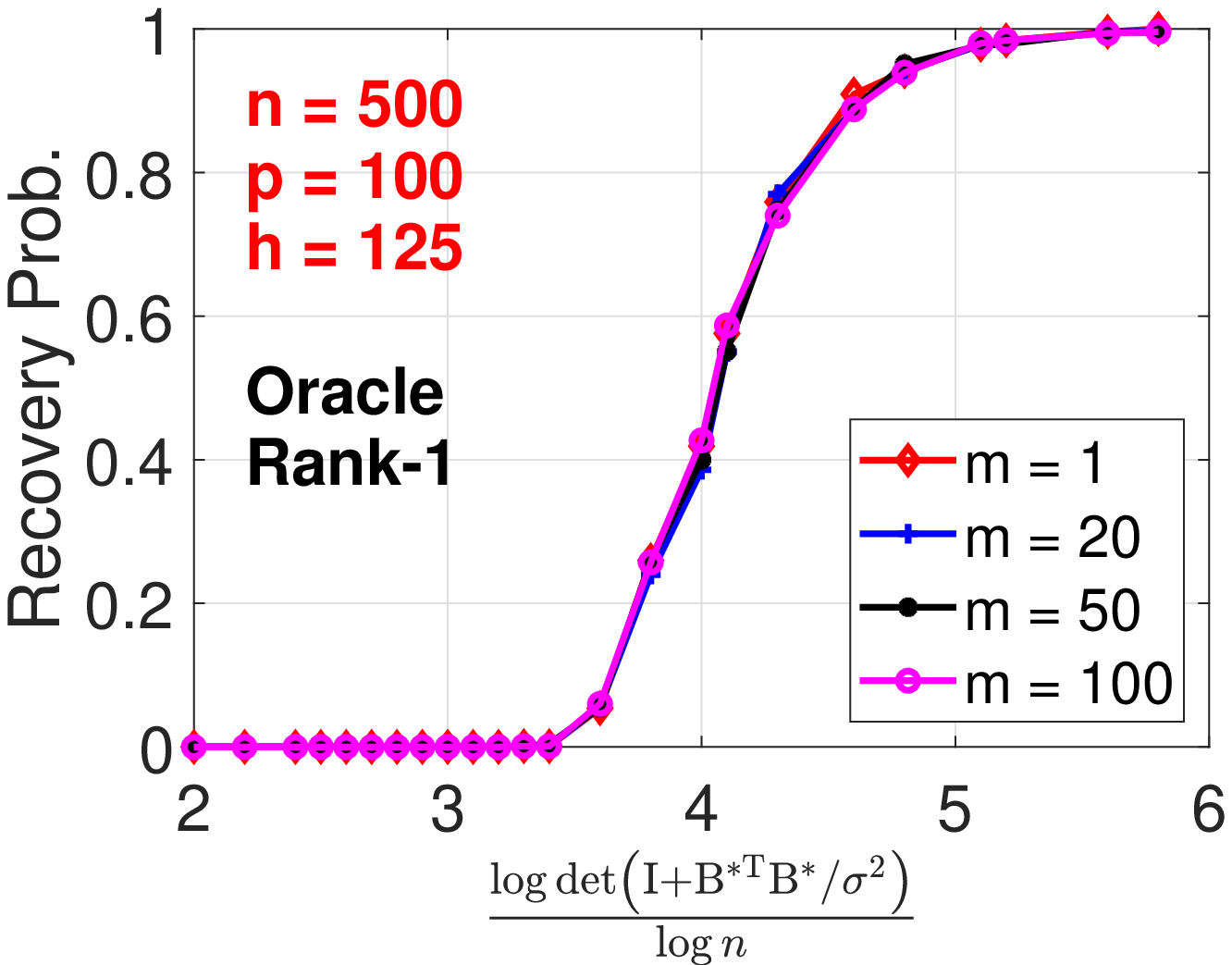}
}
\end{center}
\vspace{-0.2in}
\caption{Oracle case (Rank-1):  Correct recovery probability
$\Prob(\hat{\bPi} = \bPi^{*})$ (vertical axis)
versus $\snr$ (\textbf{left panels}) or
 $\frac{\logdet\bracket{\bI + \bB^{*\rmt}\bB^{*}/\sigma^2}}{\log n} = \frac{\log
 \left(1+m\cdot\snr\right)}{\log n}$ (\textbf{right panels}).
 }
\label{fig:oracle_recover_rank1}
\end{figure}

\begin{figure}[t]
\begin{center}
\mbox{
\includegraphics[width=2.1in]{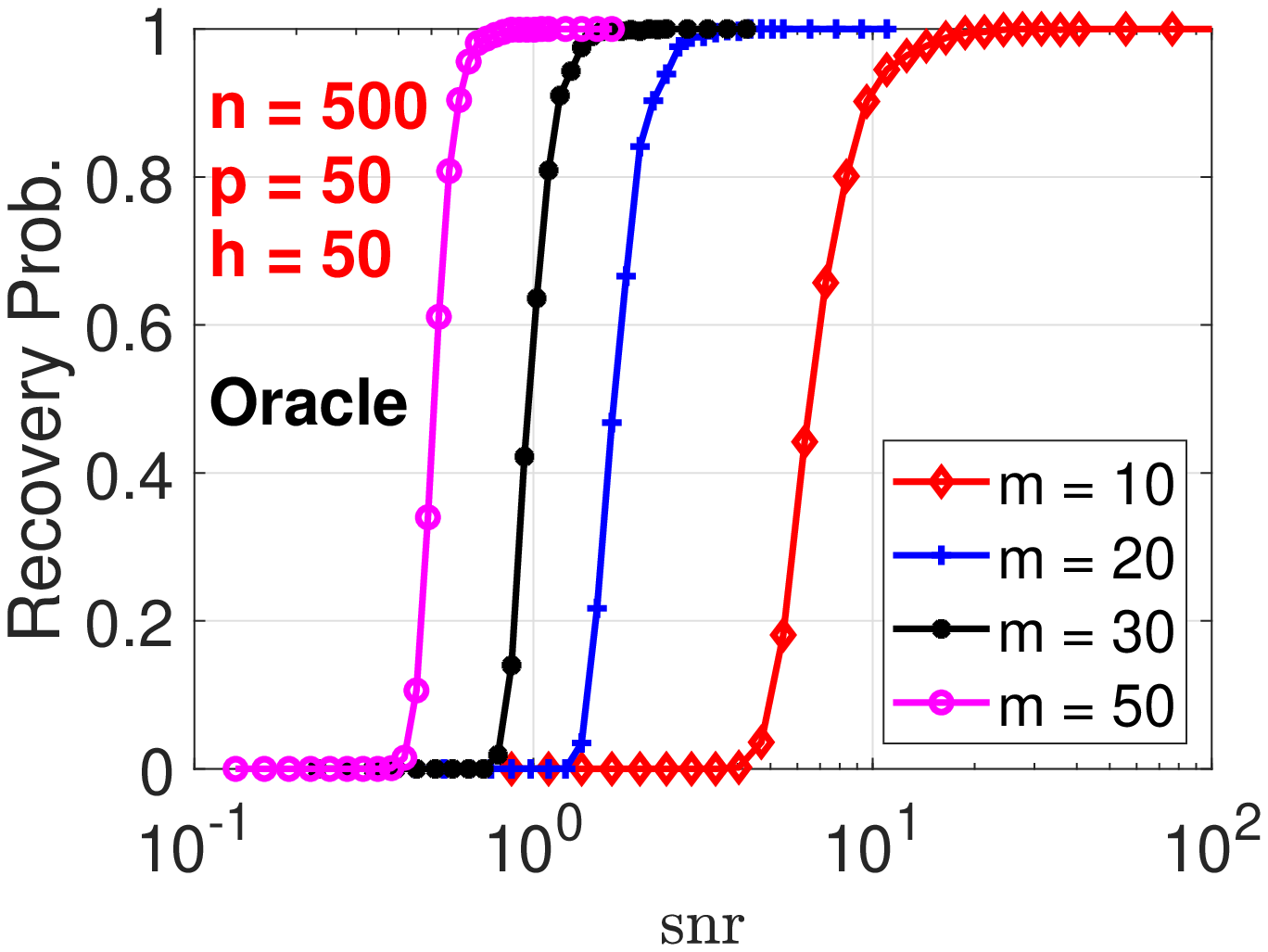}
\includegraphics[width=2.1in]{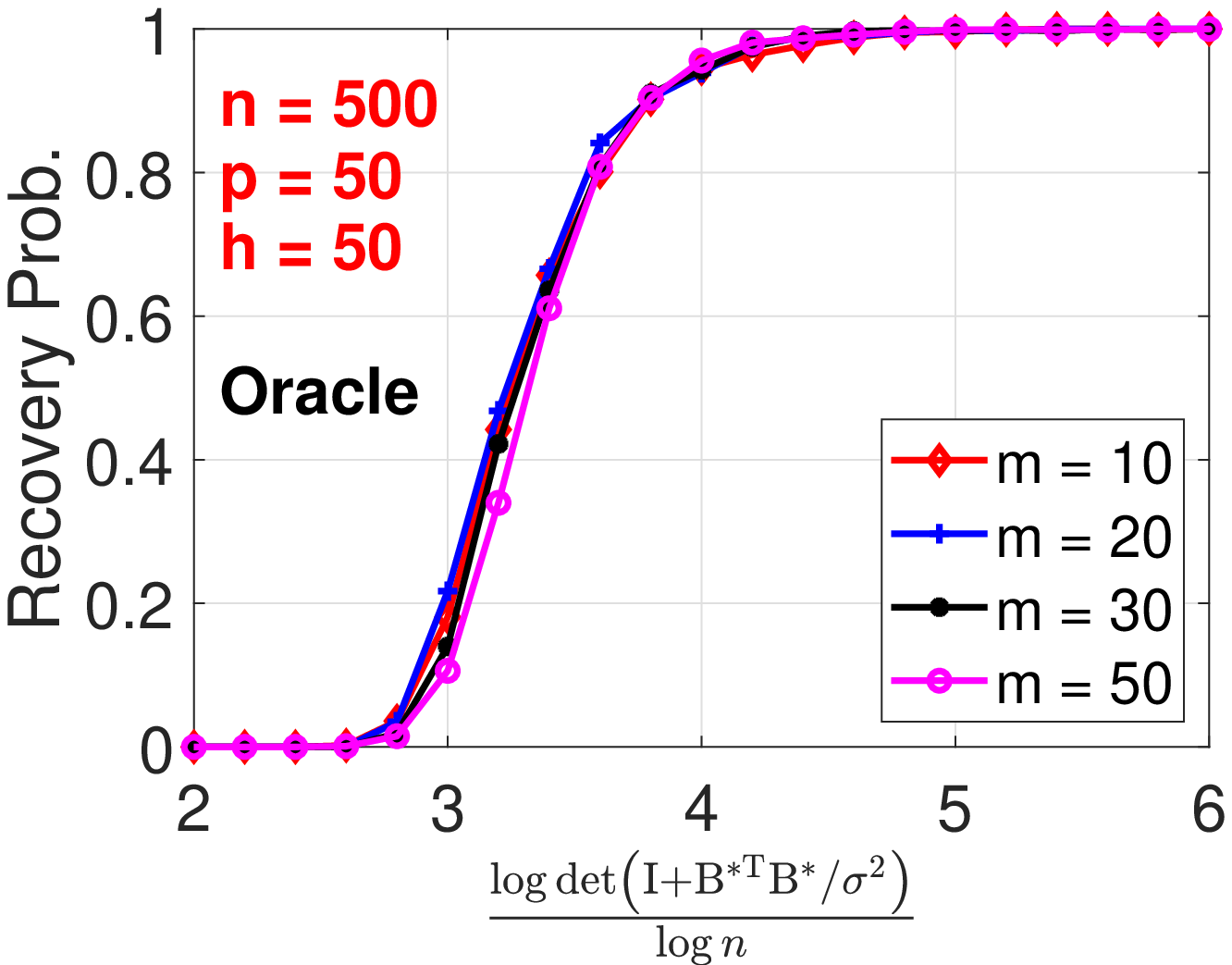}
}
\mbox{
\includegraphics[width=2.1in]{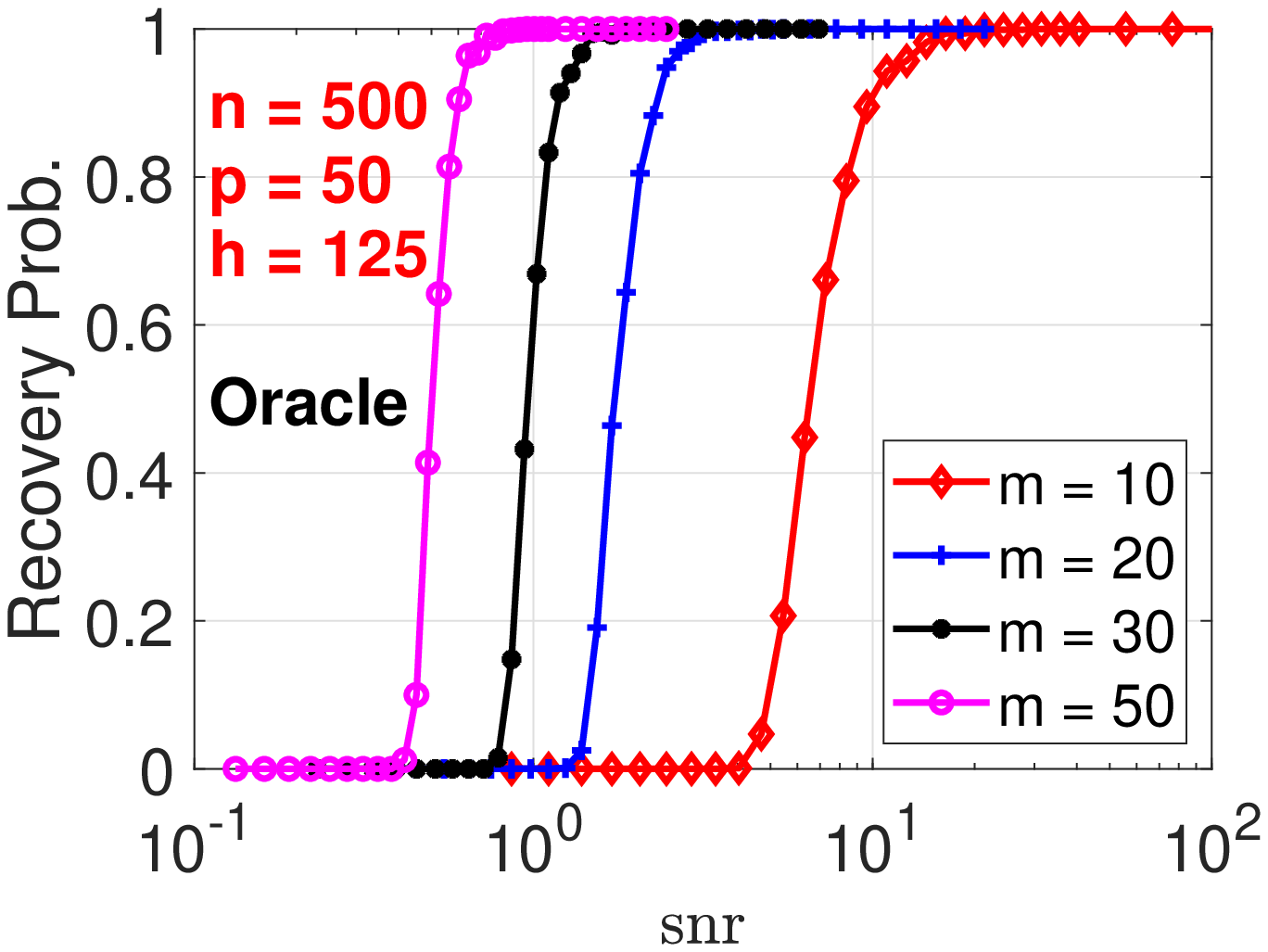}
\includegraphics[width=2.1in]{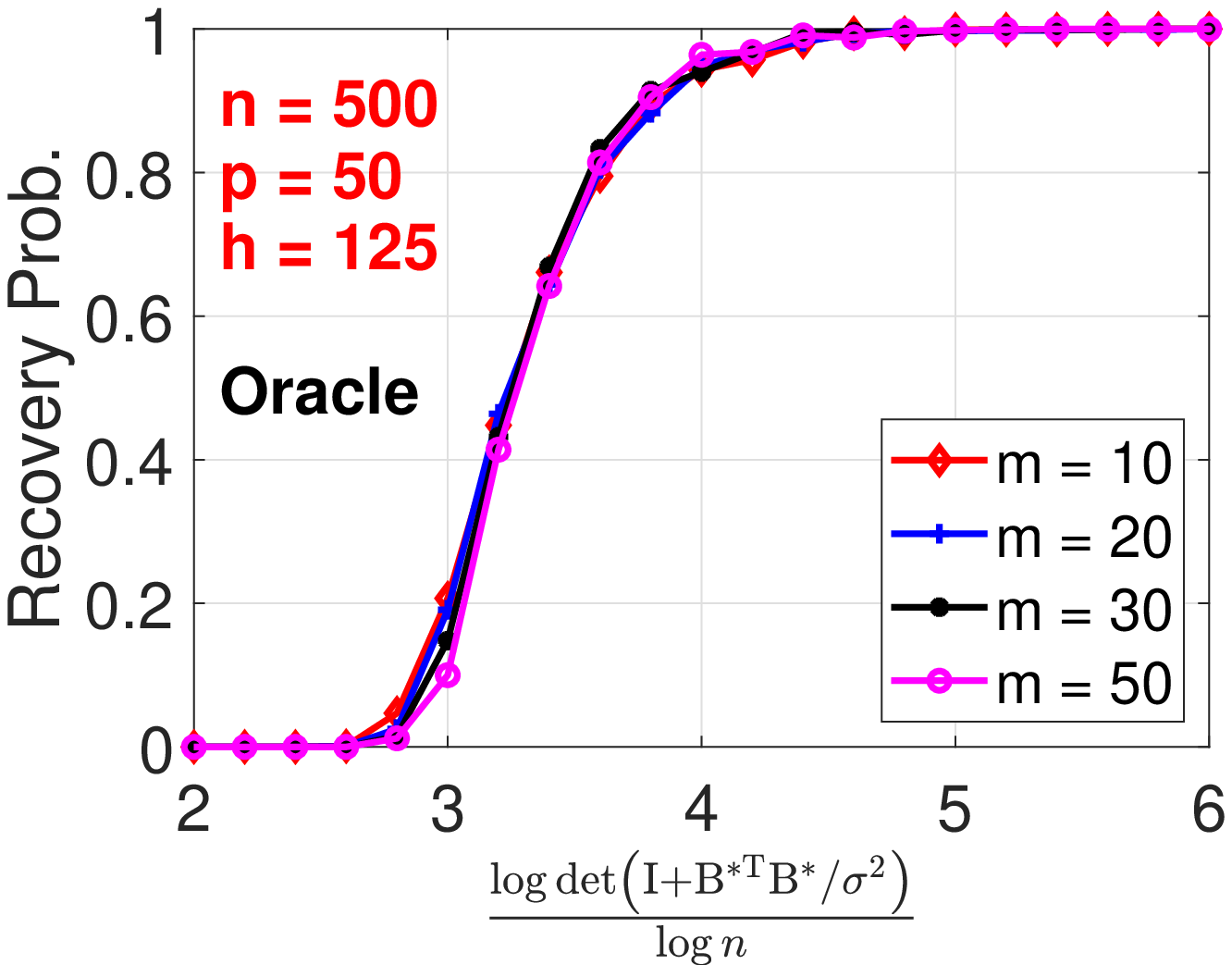}
}

\mbox{
\includegraphics[width=2.1in]{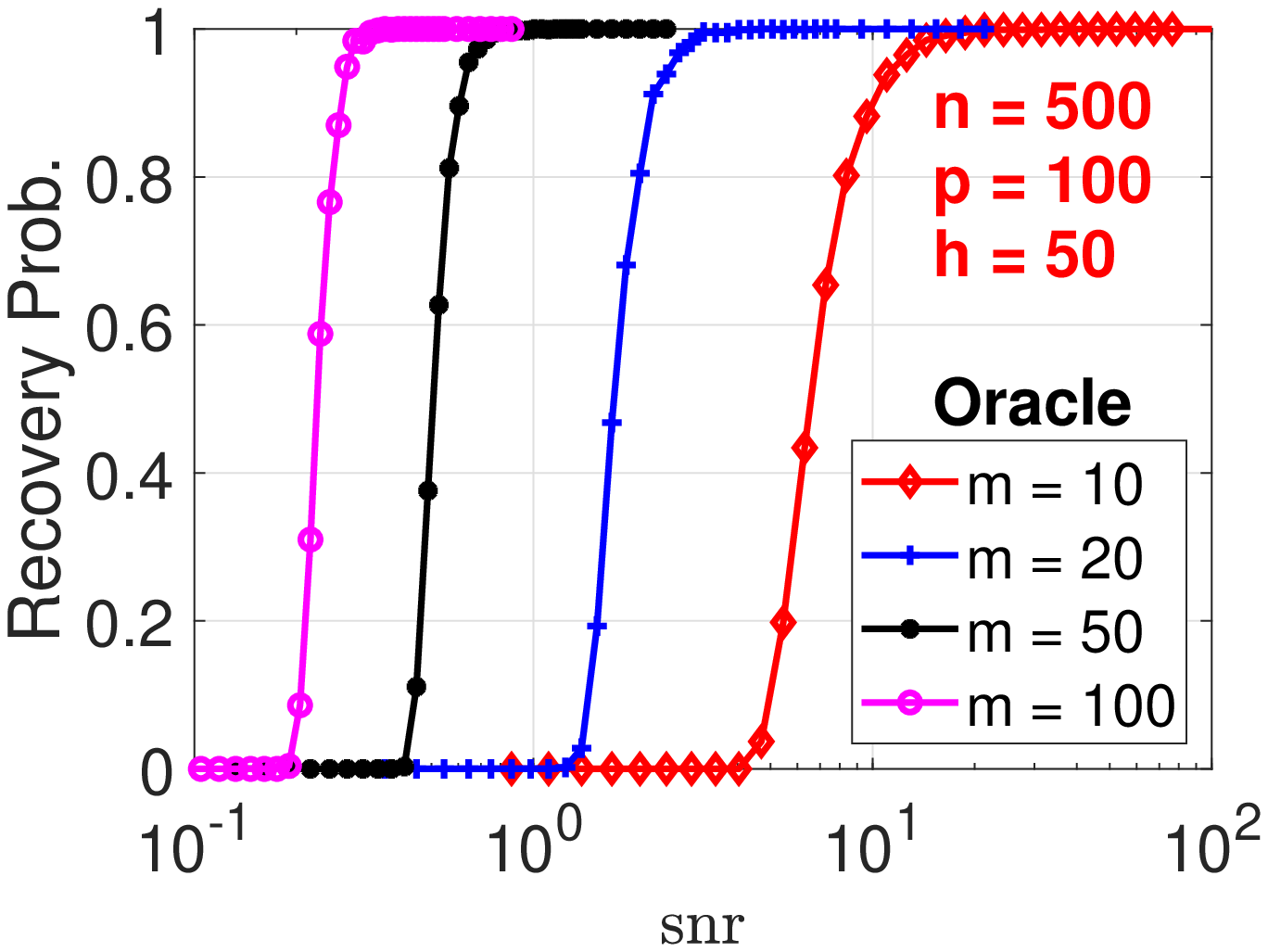}
\includegraphics[width=2.1in]{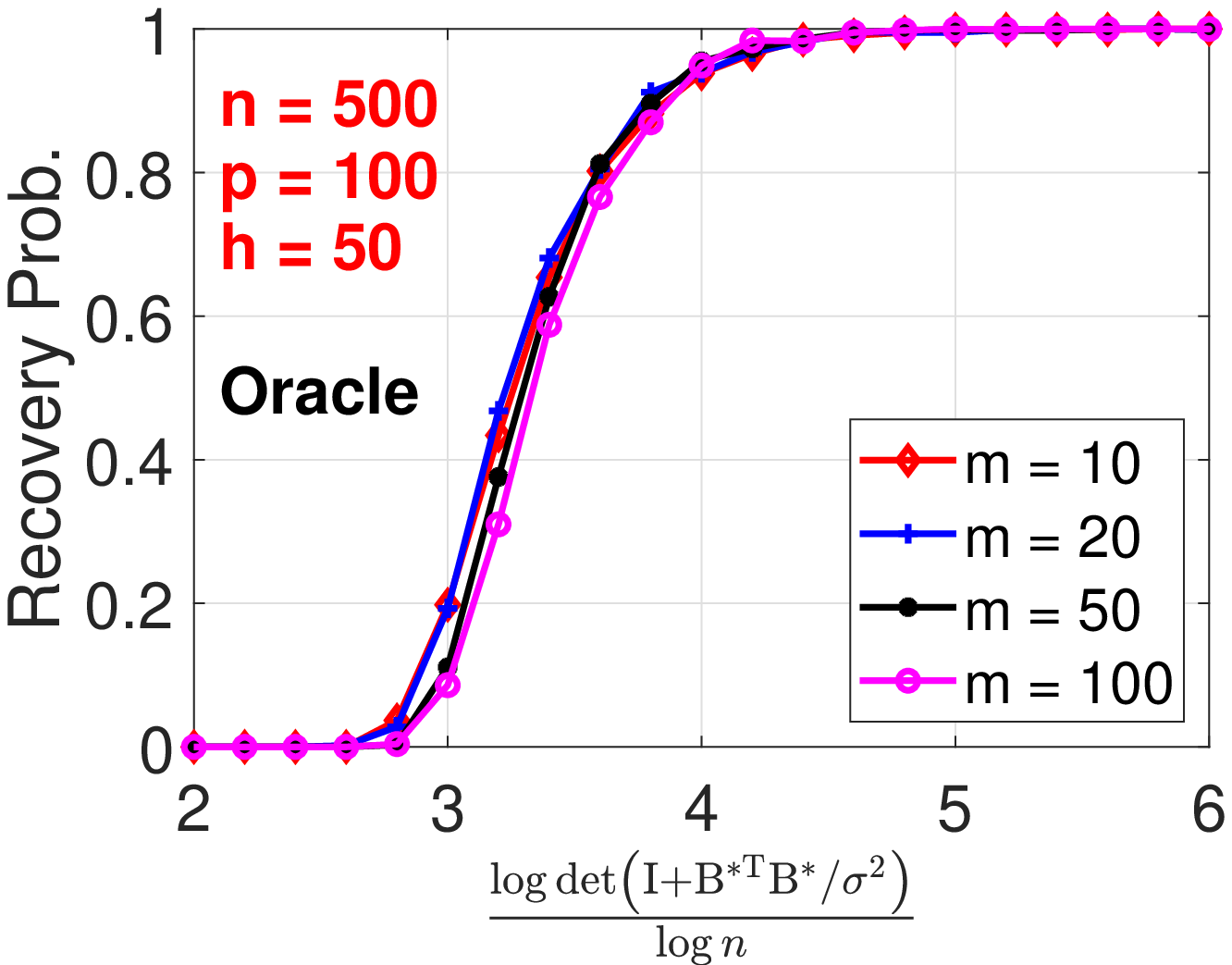}
}
\mbox{
\includegraphics[width=2.1in]{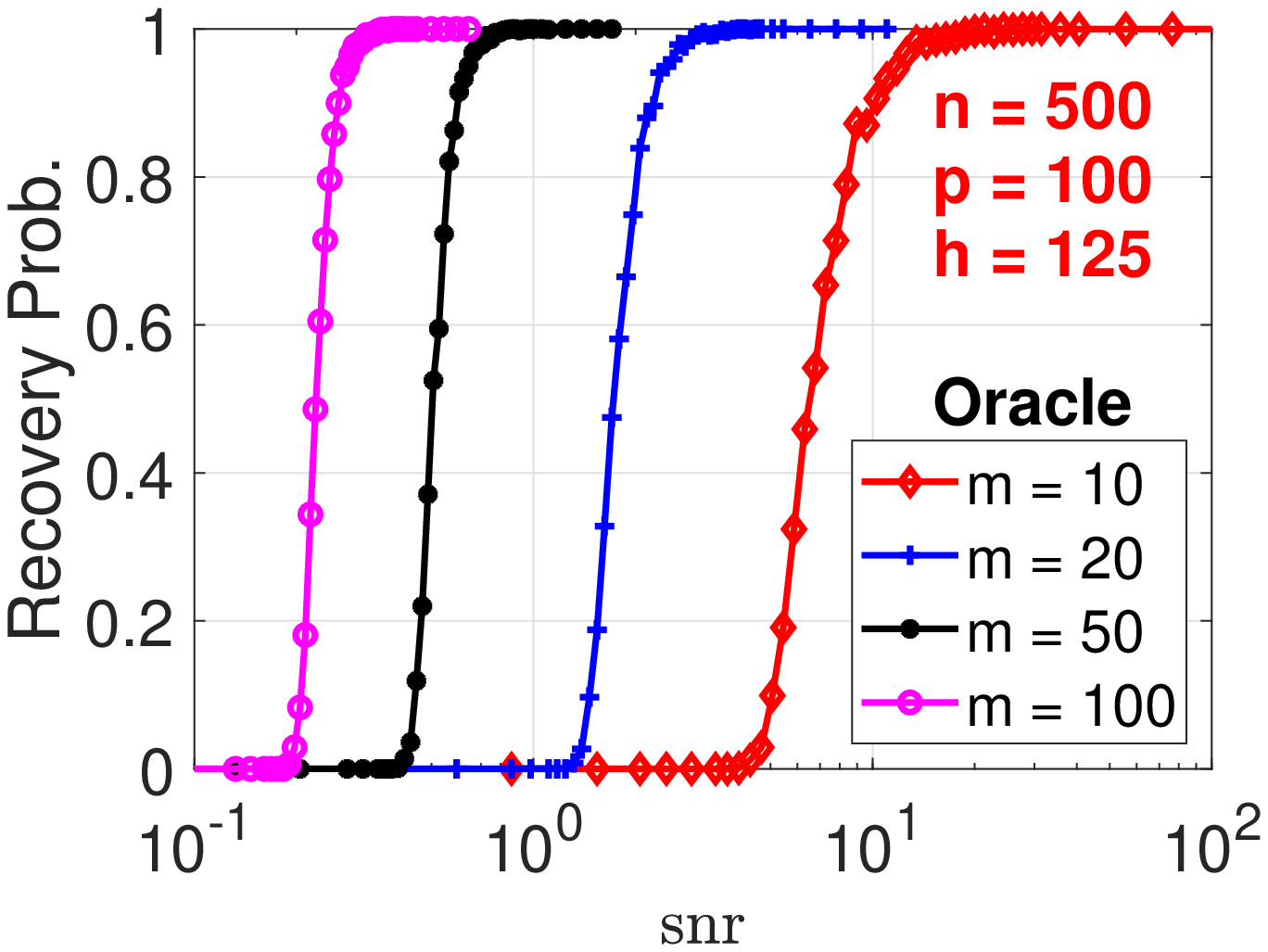}
\includegraphics[width=2.1in]{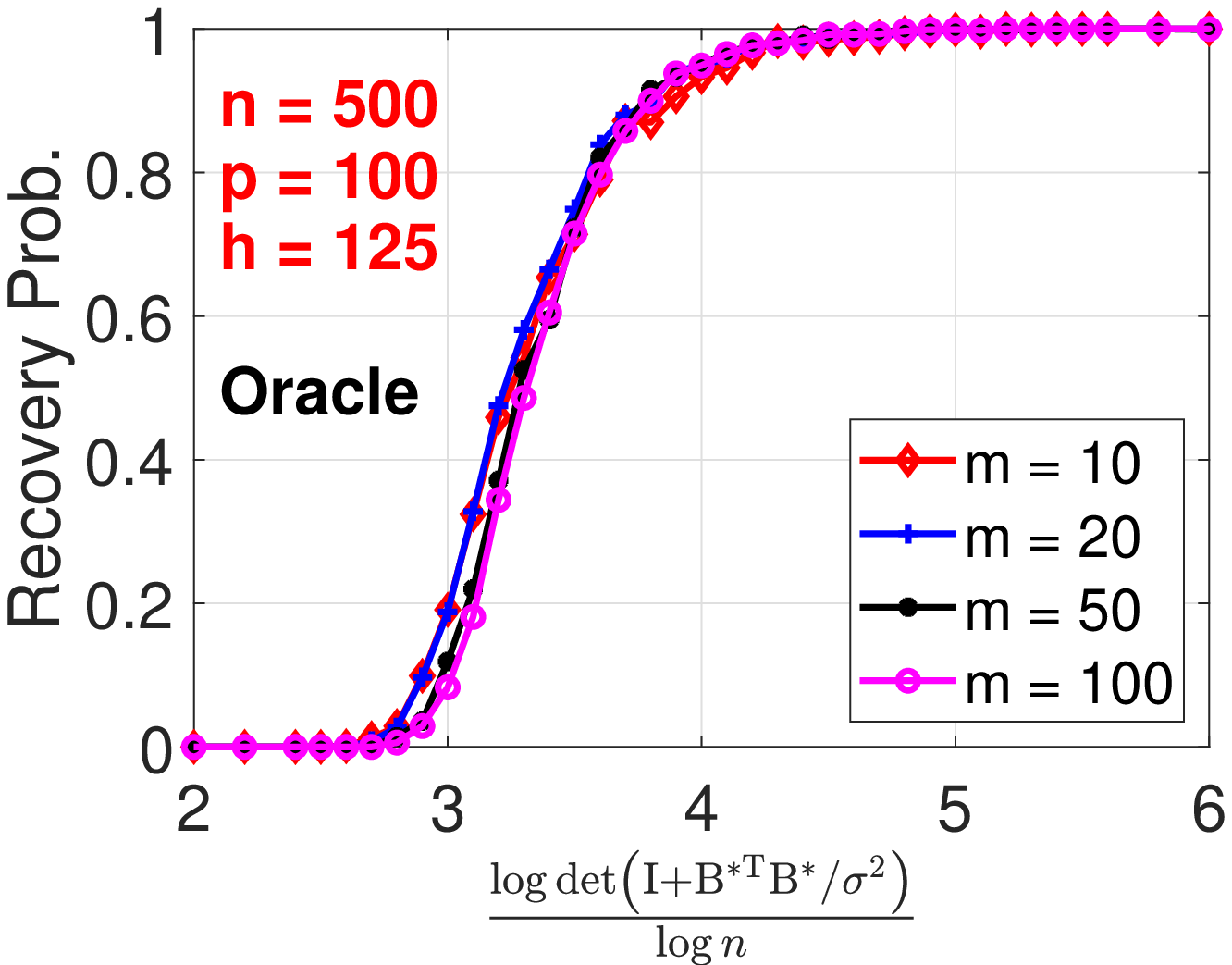}
}
\end{center}
\vspace{-0.2in}
\caption{Oracle case (Full-rank):  Correct recovery probability
$\Prob(\hat{\bPi} = \bPi^{*})$ (vertical axis) versus $\snr$ (\textbf{left panels}) or
 $\frac{\log
 \left(1+m\cdot\snr\right)}{\log n}$ (\textbf{right panels}).}
\label{fig:oracle_recover_fullrank}
\end{figure}

\paragraph{Full-rank case}
We consider the case in which the columns of $\bB^*$ are orthogonal to each other, i.e., $\bB_{:, i}^{*} \perp \bB_{:, j}^*$, $1\leq i \neq j \leq m$. For simplicity, we set $\bB_{:, i}^{*}  \parallel \be_i$, where $\{ \be_i \}$ denotes the canonical basis. The simulation results for this setting are shown in
Fig.~\ref{fig:oracle_recover_fullrank}.
As for the rank-one case, we observe that the curves
displaying the correct recovery rate $\Prob(\hat{\bPi} = \bPi^{*})$
for different values of $m$
almost coincide when using the quantity $\frac{\logdet\bracket{\bI + \bB^{*\rmt}\bB^{*}/\sigma^2}}{\log n}$ for the horizontal axis.
The latter is thus confirmed to be the central determining factor in predicting
whether $\bPi^{*}$ can be successfully
recovered or not.  However, different from the
rank-one case, we witness a significant decrease
regarding the required $\snr$ needed for high recovery rates. For example,
$\snr \approx 10^{14}$ is required in the
rank-one case, while in the full-rank case, the required value
of $\snr$ is less than $10$. As predicted by Theorem~\ref{thm:exact_minimax} and Theorem~\ref{thm:oracle_succ_optim}, this reduction is a consequence of an increased stable rank $\varrho(\bB^{*})$.


\subsection{Realistic case}

\begin{figure}[h!]
\begin{center}
\mbox{
\includegraphics[width=2.1in]{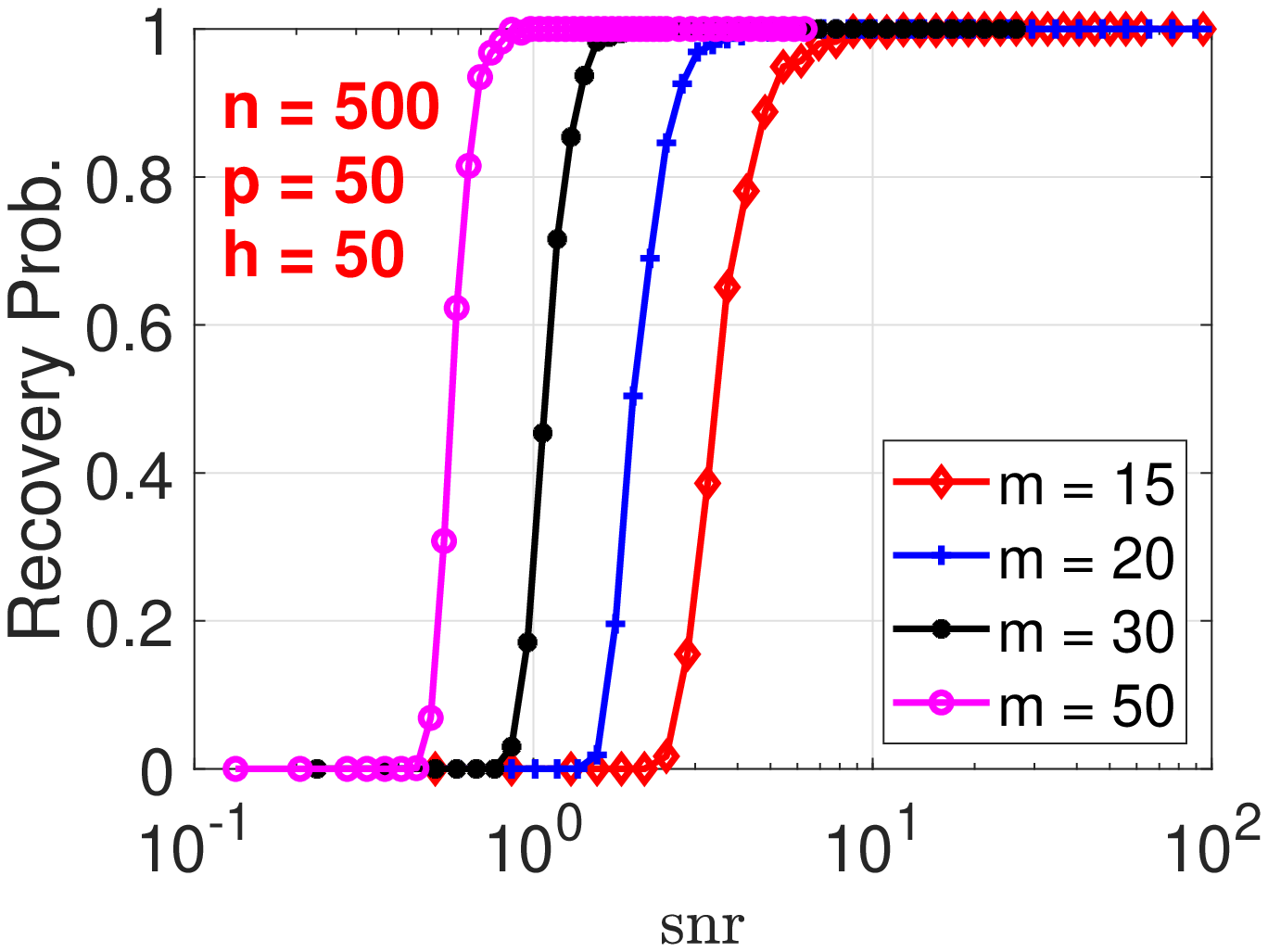}	
\includegraphics[width=2.1in]{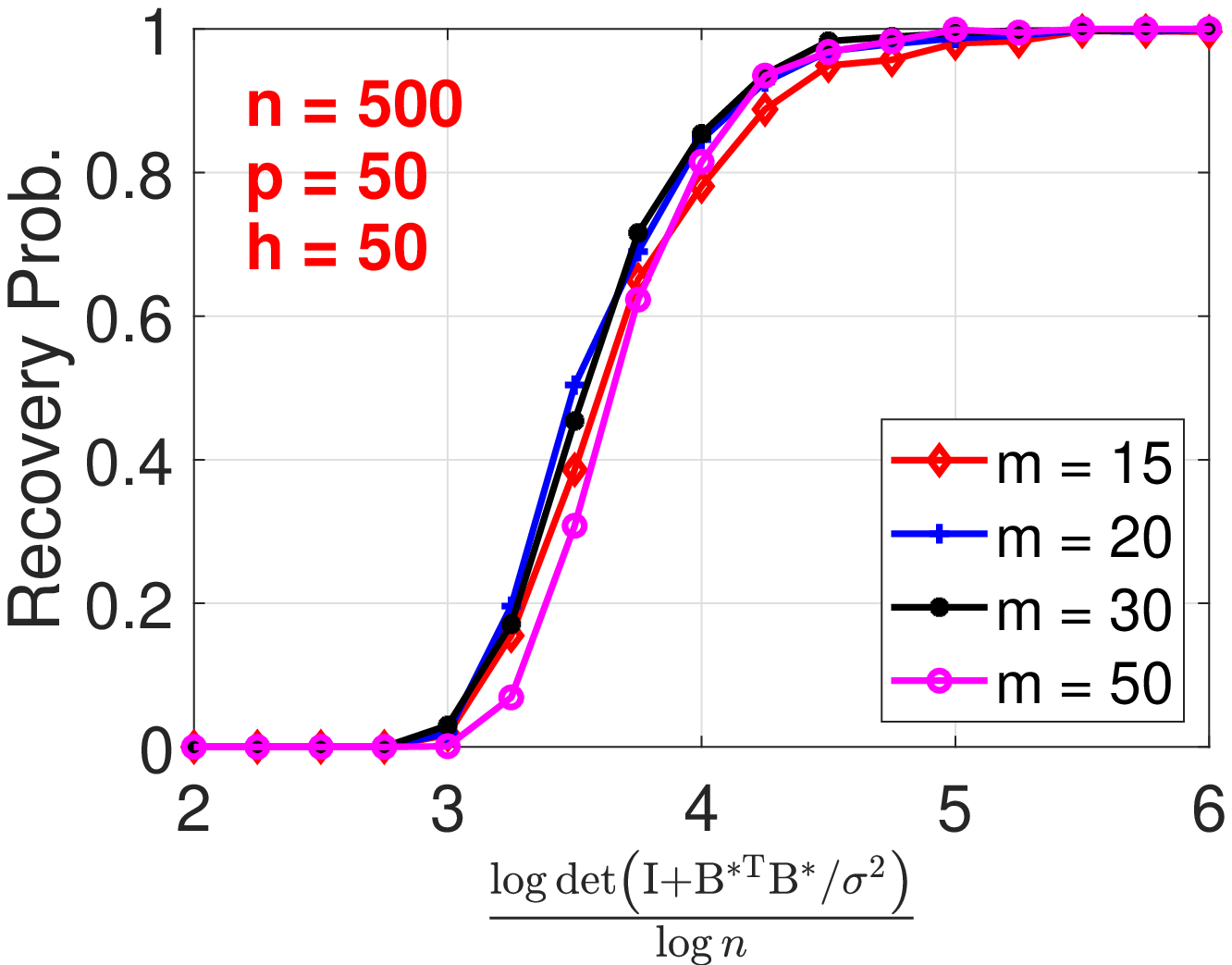}	
\includegraphics[width=2.1in]{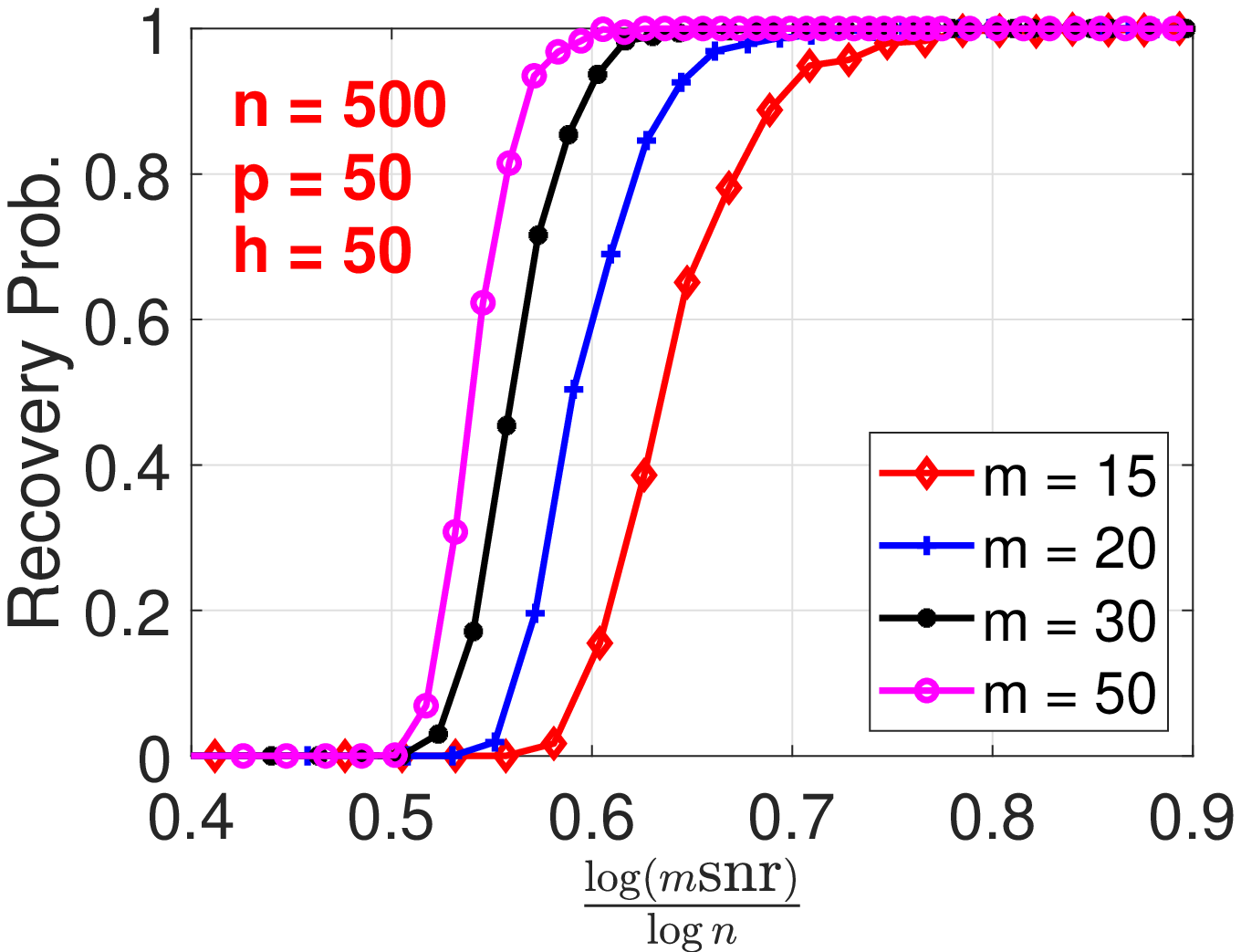}
}

\mbox{
\includegraphics[width=2.1in]{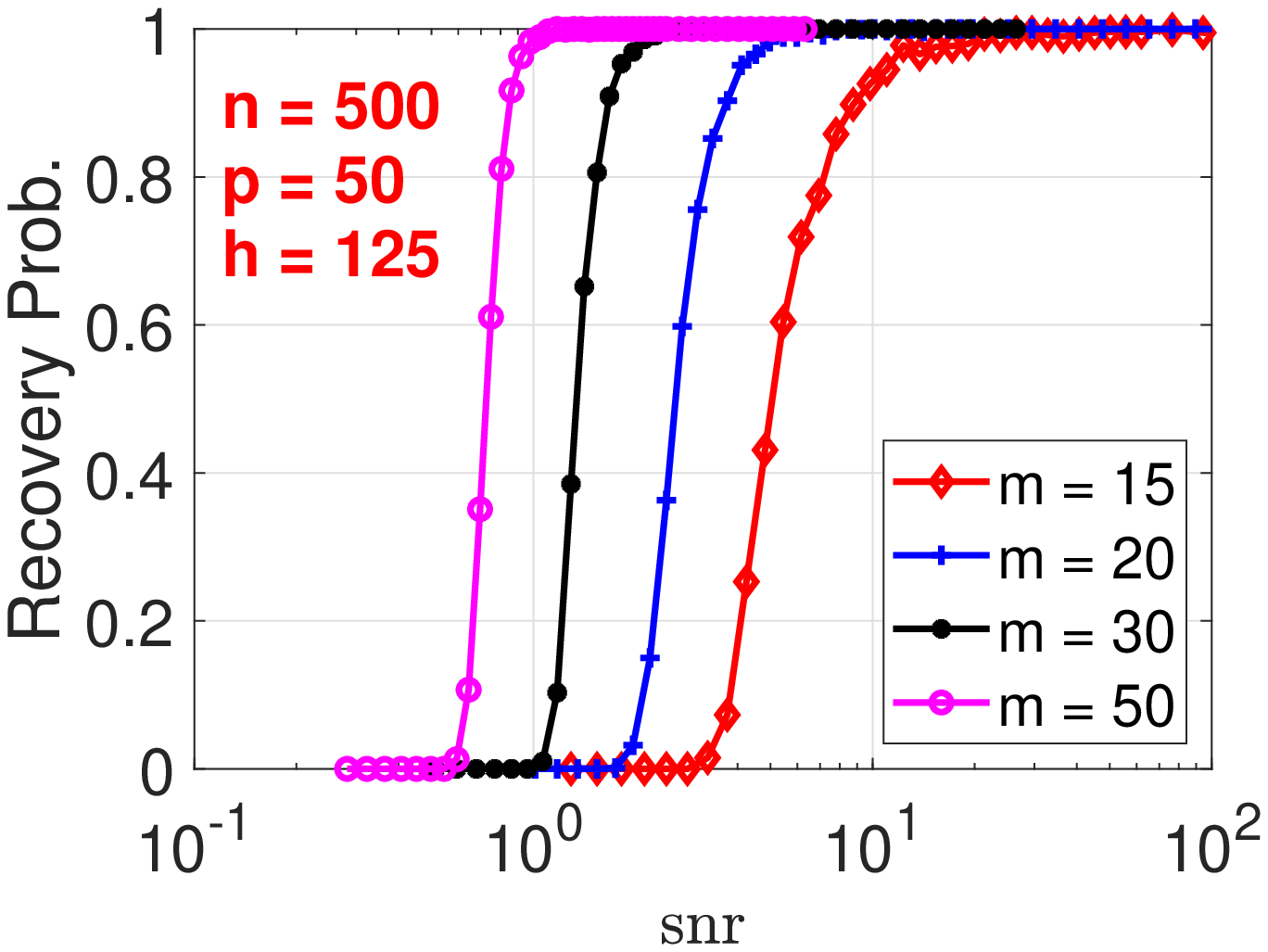}	
\includegraphics[width=2.1in]{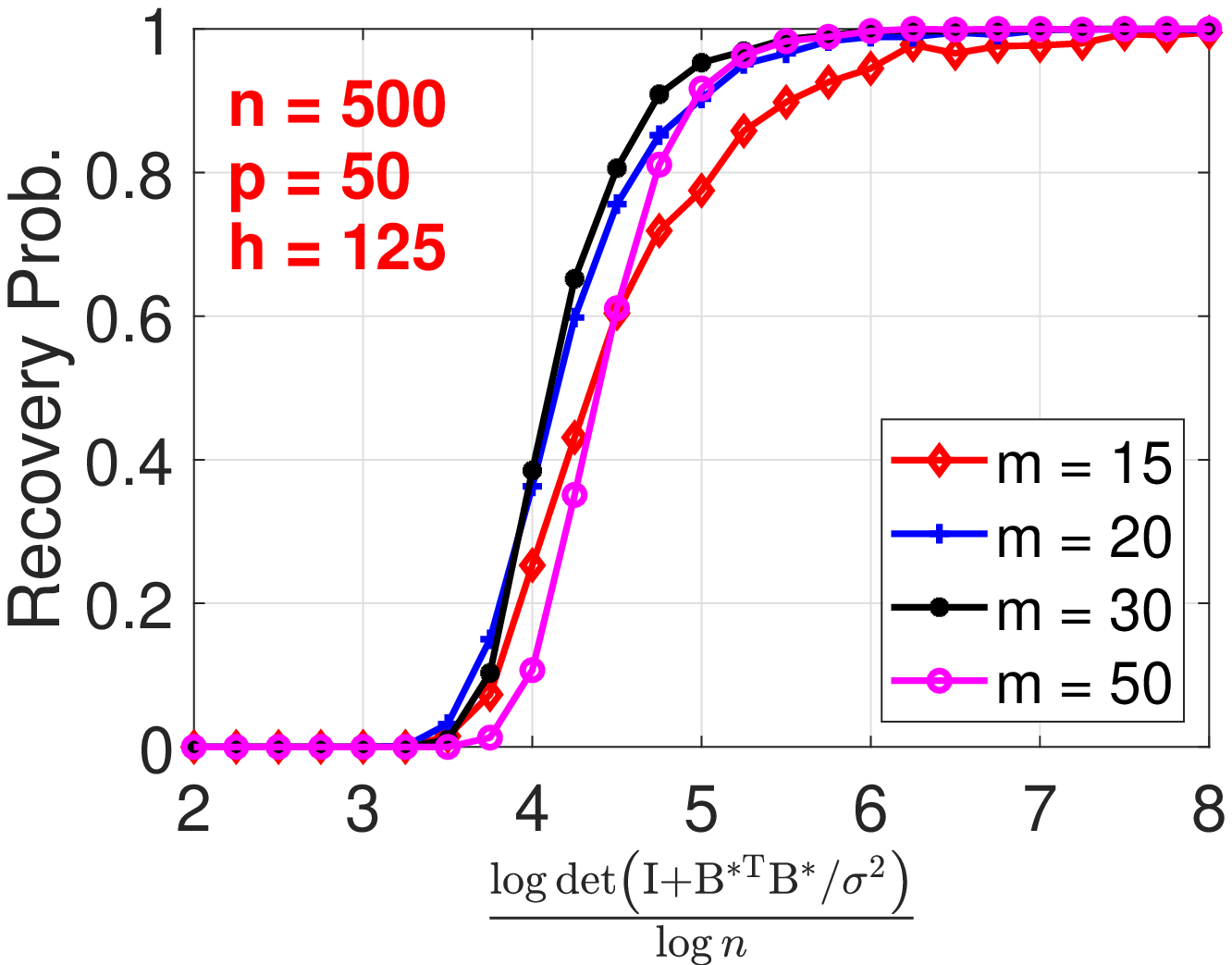}	
\includegraphics[width=2.1in]{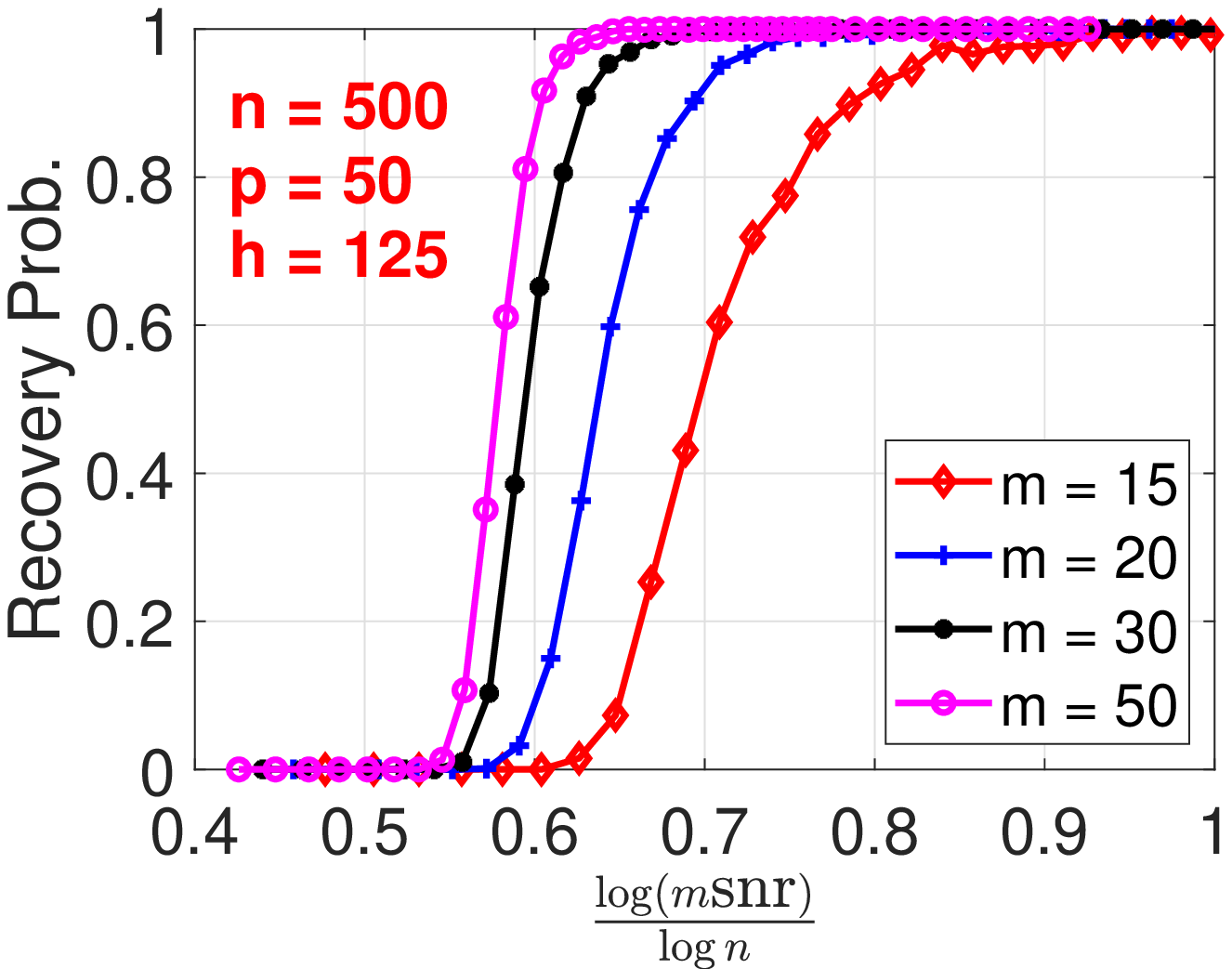}
}

\mbox{
\includegraphics[width=2.1in]{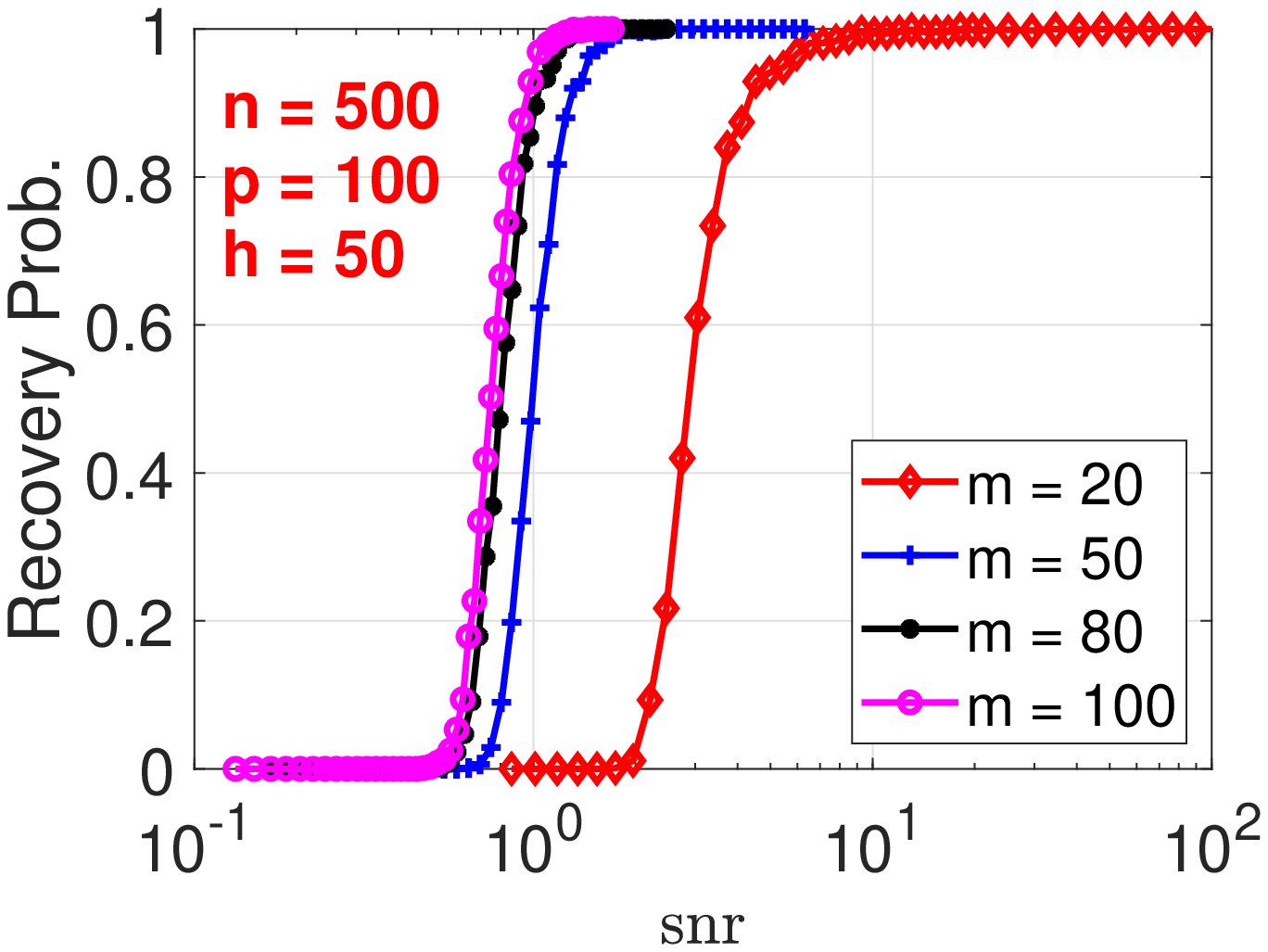}	
\includegraphics[width=2.1in]{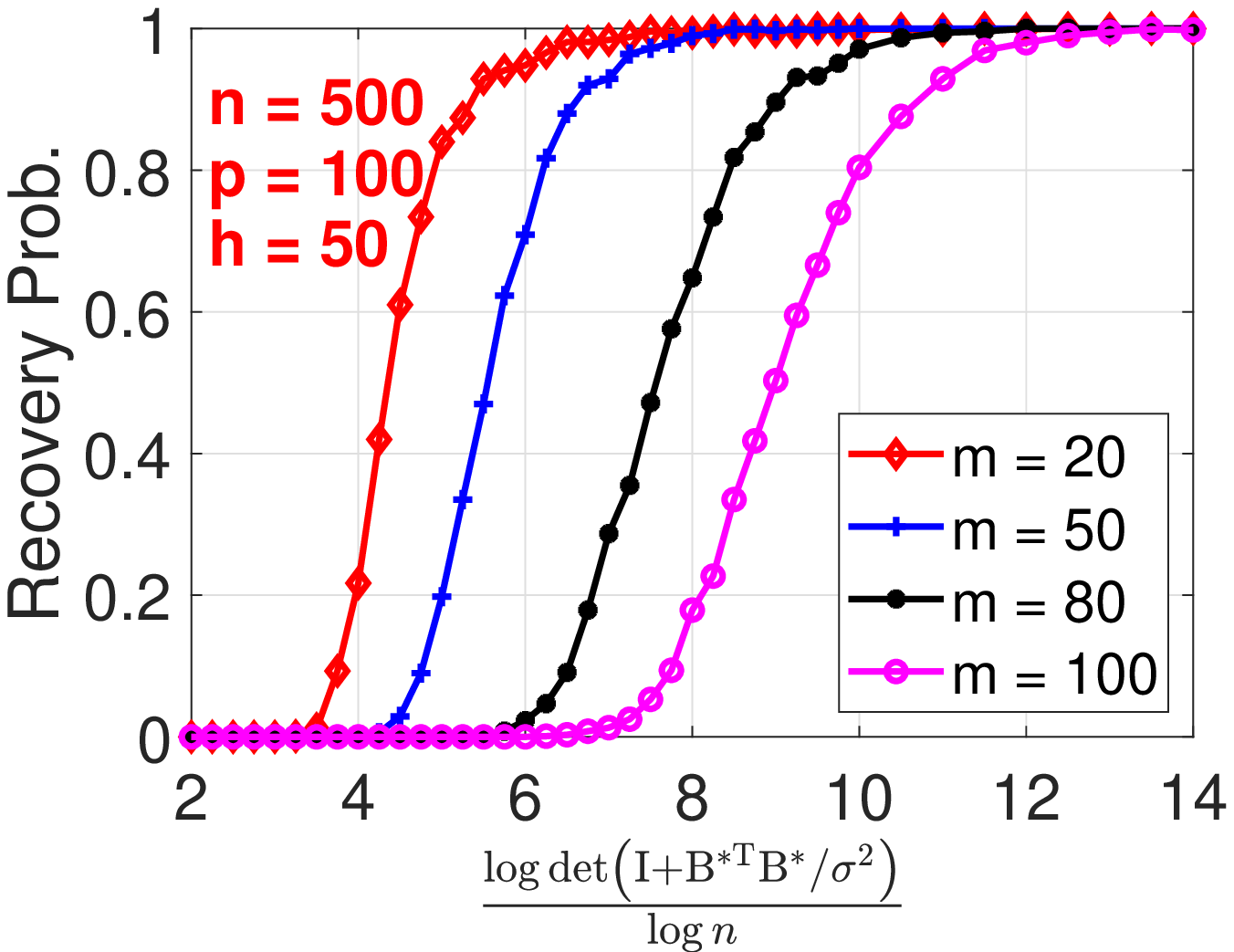}	
\includegraphics[width=2.1in]{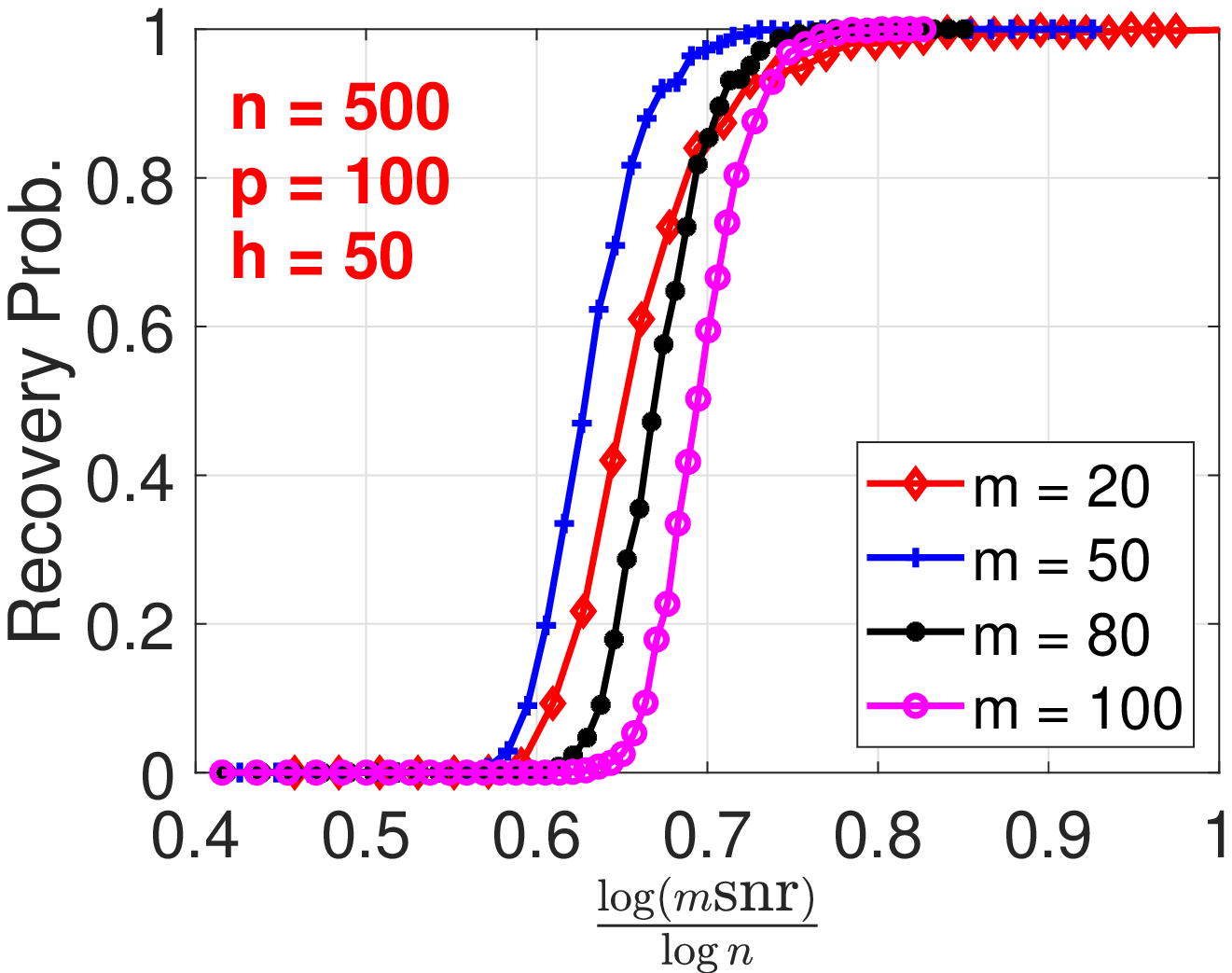}
}

\mbox{
\includegraphics[width=2.1in]{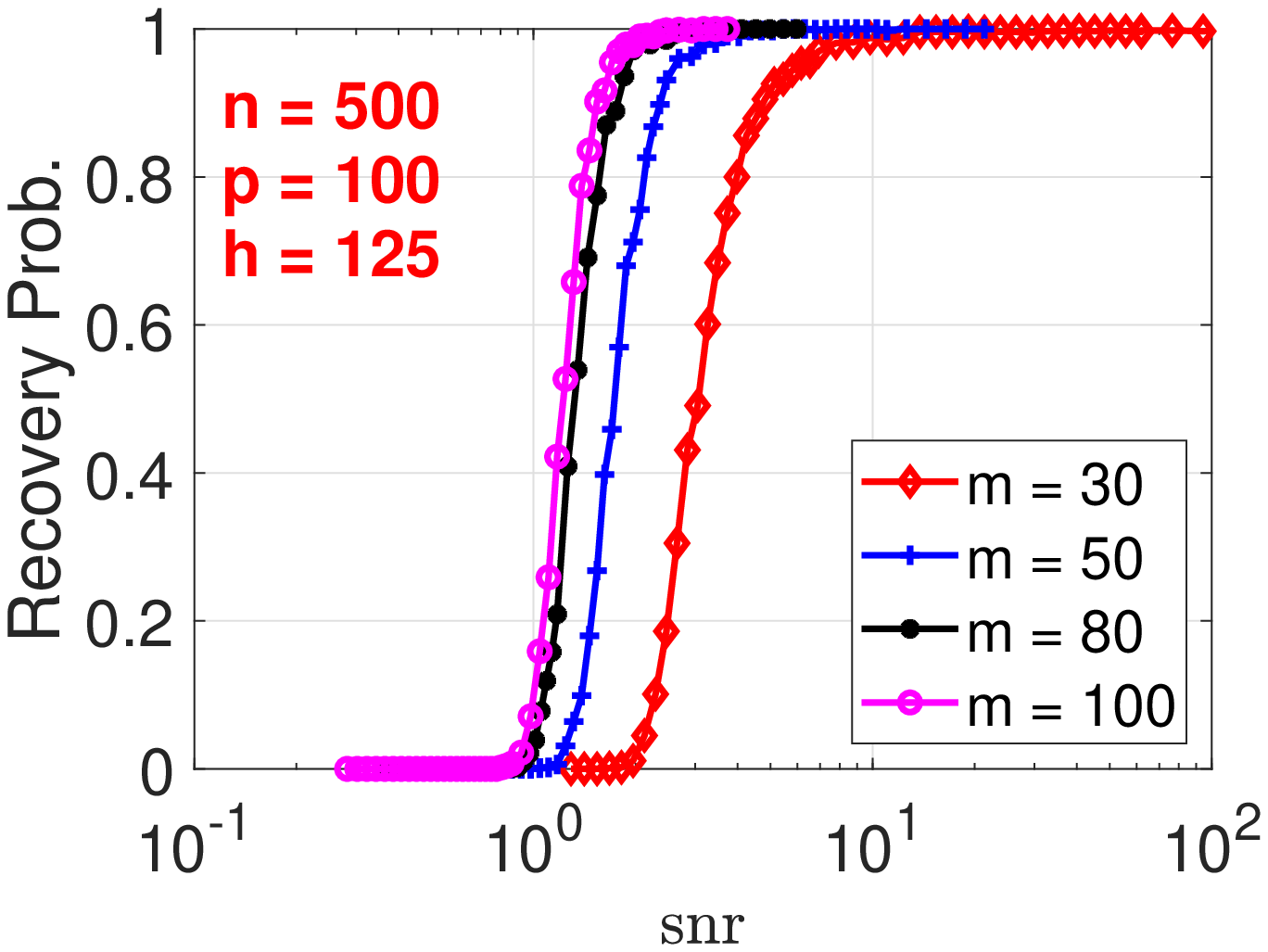}	
\includegraphics[width=2.1in]{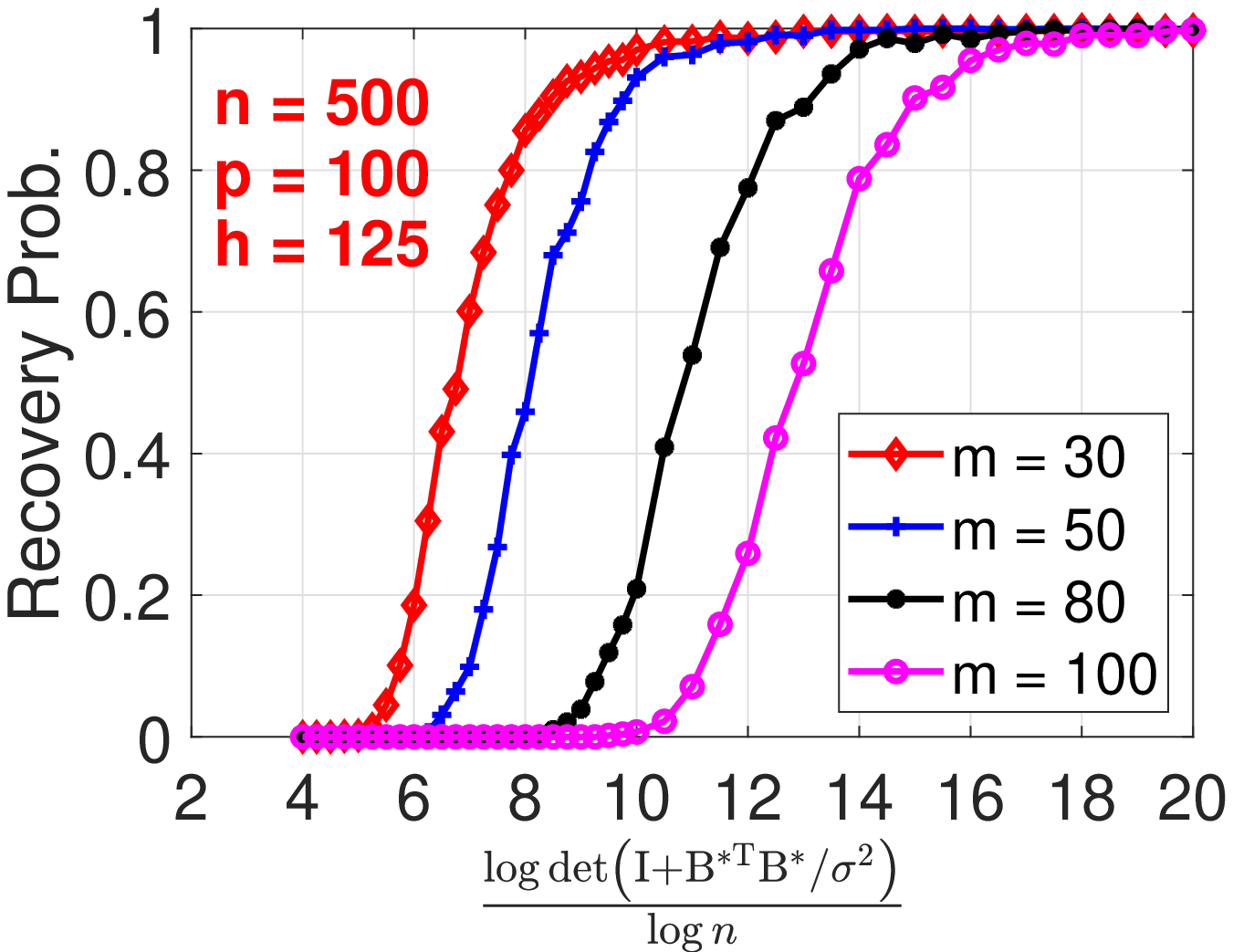}	
\includegraphics[width=2.1in]{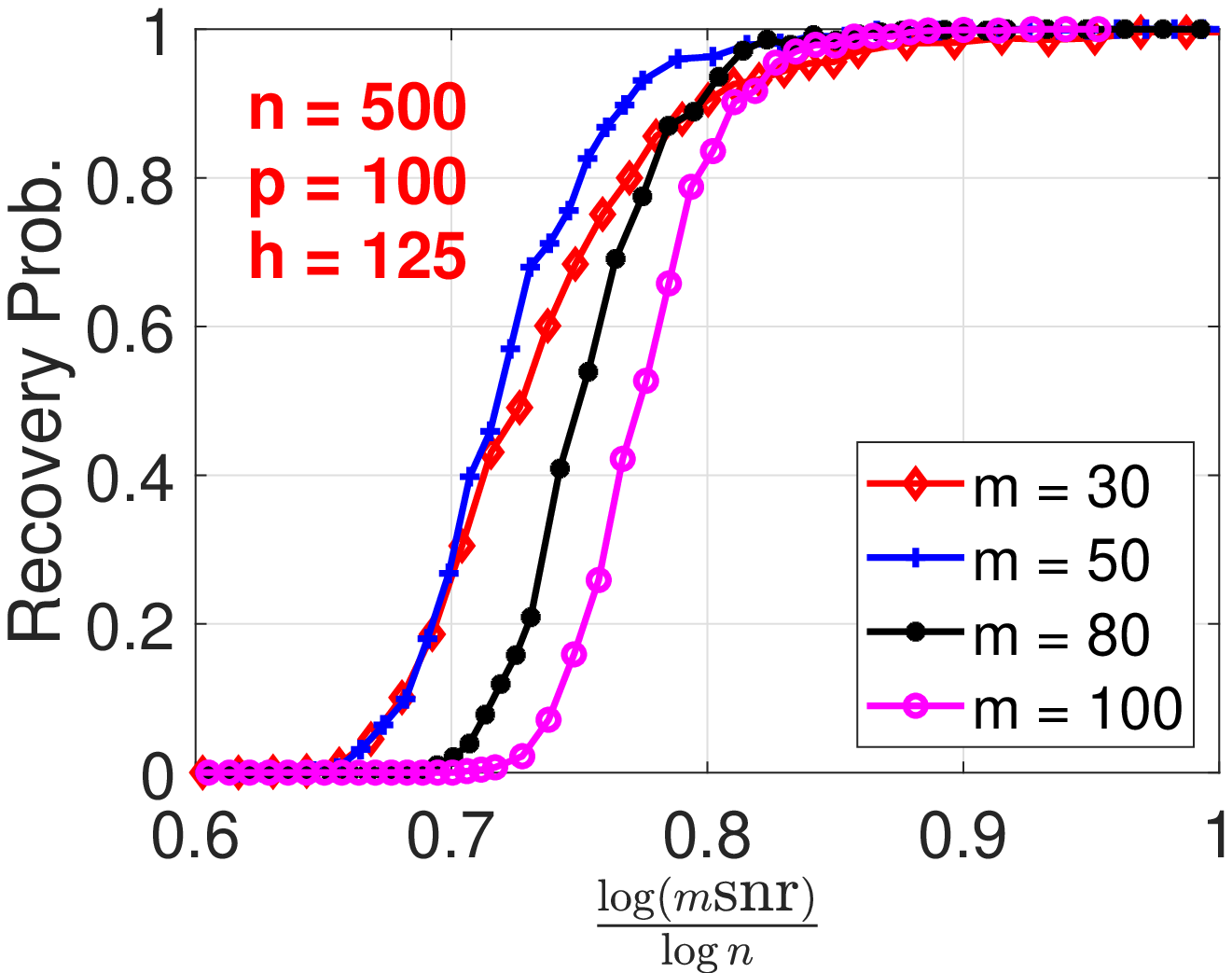}
}

\end{center}
\vspace{-0.2in}
\caption{Realistic case: Correct recovery probability
$\Prob(\hat{\bPi} = \bPi^{*})$ (vertical axis) versus
$\snr$ (\textbf{left panels})
or $\frac{\log \det(\bI + \bB^{*\rmt}\bB^{*}/\sigma^2)}{\log n}$
(\textbf{middle panels}), or
$\frac{\log(m\snr)}{\log n}$ (\textbf{right panels}).
}
\label{fig:prac_recover}
\end{figure}

This subsection is concerned with the realistic case in which
$\bB^{*}$ is not known. We fix $n = 500$ and consider $p = \set{50, 100}$
as well as $h = \set{50, 125}$, where $h = \dh\bracket{\bI; \bPi^{*}}$. The estimator of $\bPi^*$ is obtained by applying Algorithm~\ref{alg:compt_mtd_admm}. The results are
shown in Fig.~\ref{fig:prac_recover}. Given the excessive requirements regarding $\snr$ in the rank-one case even in the oracle case, we here focus on the full-rank case. Apart from
using $\snr$ and $\frac{\logdet\bracket{\bI + \bB^{*\rmt}\bB^{*}/\sigma^2}}{\log n}$ for the horizontal axis, we additionally consider
$\frac{\log(m\snr)}{\log n}$ in virtue of Theorem~\ref{thm:succ_recover}.

Inspection of the left panels of Fig.~\ref{fig:prac_recover} indicates a similar phenomenon as observed in the oracle case, namely,
a significant reduction of the required $\snr$ with large stable rank $\rho(\bB^{*})$ ($=m$).
When $m = 10$, the $\snr$'s requirement
is within the range $[10,~100]$. When $m$ increases to $100$, the
required $\snr$ drops below $1$ in alignment with
the implications of Theorem~\ref{thm:exact_minimax} and
Theorem~\ref{thm:succ_recover_refine}.

However, different from the
oracle case where
the ratio $\frac{\logdet\bracket{\bI + \bB^{*\rmt}\bB^{*}/\sigma^2}}{\log n}$
required for permutation recovery is almost independent of the triple $(n, p, h)$, we now observe variation across different settings. When $n = 500, p = 50, h = 50$, we need
$\frac{\logdet\bracket{\bI + \bB^{*\rmt}\bB^{*}/\sigma^2}}{\log n}$ to be above
$5$ for the correct permutation recovery, which is almost the same
as for the oracle case shown in Fig.~\ref{fig:oracle_recover_fullrank}). However, this
ratio inflates to $[6,~7]$ when $(n, p, h) = (500, 50, 125)$; $12$ when $(n, p, h) = (500, 100, 50)$; and
$16$ when $(n, p, h) = (500, 100, 125)$. Generally speaking,
high $n/p$ ratio and small values of $h$ reduce the required value of the ratio $\frac{\logdet\bracket{\bI + \bB^{*\rmt}\bB^{*}/\sigma^2}}{\log n}$.

\section{Conclusion}\label{sec:conclusion}

In this paper, we have studied the unlabeled sensing problem given
multiple measurement vectors.
First, we establish the statistical limits in terms
of conditions on the $\snr$ implying failure of recovery with high probability,
namely, $\varrho(\bB^{*})\log\bracket{\snr}\lsim \log n$.
The tightness of these conditions is consolidated by the
corresponding condition for correct recovery with $\bB^{*}$
being known. Without knowledge of $\bB^{*}$, we need
$\log\bracket{m\cdot\snr} \gsim \log n$
for correct recovery, which matches the lower bound
for the oracle case with $\varrho(\bB^{*}) = 1$.
By imposing the additional assumption $\dh(\bI; \bPi^{*}) \leq h_{\mathsf{max}}$,
it can be proved that $\varrho(\bB^{*})\log\bracket{\snr}\gsim \log n$ is
sufficient for correct recovery, which matches the minimax bound differ in
a logarithmic factor.
Moreover, we propose an optimization scheme based on the Alternating Direction Methods of Multipliers (ADMM) to tackle the computational difficulties associated with the ML estimator.
The simulation results can largely corroborate our theoretical findings.


\bibliographystyle{IEEEtran}
\bibliography{ref}

\begin{appendices}

\section{Notations}\label{sec:appendix_notation}
We begin the appendix with a restatement of the notations we use.
For an arbitrary matrix $\bA \in \RR^{m\times n}$, we denote by $\bA_{:, i} \in \RR^n$
the $i^{\mathsf{th}}$ column of $\bA$ while
$\bA_{i, :} \in \RR^m$ denotes the $i^{\mathsf{th}}$ row, treated as column vector.
Moreover, $A_{ij}$ denotes the $(i,j)^{\mathsf{th}}$ element of the matrix
$\bA$.
The pseudo-inverse $\bA^{\dagger}$ of the matrix $\bA$ is defined
as $\bracket{\bA^{\rmt}\bA}^{-1}\bA^{\rmt}$.
We define $P_{\bA} = \bA \bA^{\dagger}$
as the projection onto the column space of $\bA$, while
$P_{\bA}^{\perp}= \bI-P_{\bA}$ denotes the projection onto its orthogonal complement.
The \emph{singular value decomposition} (SVD)
of the matrix $\bA$ ~\cite{golub2012matrix} (Section 2.4, P76)
is represented by $\textup{SVD}(\bA)$,
such that
$\textup{SVD}(\bA) = \bU \bSigma \bV^{\rmt}$, $\bU \in \RR^{m\times m}$,
$\bSigma \in \RR^{m\times n}$, and $\bV\in \RR^{n\times n}$,
where $\bU^{\rmt}\bU = \bU \bU^{\rmt} = \bI_{m\times m}$,
$\bV^{\rmt}\bV = \bV\bV^{\rmt} = \bI_{n\times n}$.
The operator
$\textup{vec}(\bA)$ denotes
the vectorization of $\bA$ that is obtained by concatenating the columns of $\bA$
into a vector.
We write $\fnorm{\cdot}$ for the Frobenius norm
while $\opnorm{\cdot}$ is used for the operator norm,
whose definitions can be found
in~\cite{golub2012matrix} (Section 2.3, P71). The ratio $\varrho(\cdot) = \fnorm{\cdot}^2/\opnorm{\cdot}^2$ represents the stable rank
while $r(\cdot)$ represents the usual rank of a matrix.

We write $\bpi(\cdot)$ for a permutation of $\set{1, 2,\cdots, n}$
that moves index $i$ to $\bpi(i)$, $1 \leq i \leq n$.  The corresponding
permutation matrix is denoted by $\bPi$.
We use $\dh(\cdot; \cdot)$ to denote
the Hamming distance between two permutation matrices, $i.e.$,
$\dh(\bPi_1; \bPi_2) = \sum_{i=1}^n \Ind\bracket{\bpi_1(i)\neq \bpi_2(i)}$.
Viewing $\bPi$ as a RV distributed among set $\calH$, we denote
its entropy as $\entH(\bPi)$. The differential entropy is denoted as
$\enth(\cdot)$ and the mutual information is
denoted as  $\entI(\cdot; \cdot)$.

For an event $\+E$, we denote its complement by $\br{\+E}$, and
use $\Psi(\calE)$ to denote $\Expc \Ind(\calE)$.
In addition, we use $a\vcup b$ to denote the maximum of $a$ and $b$
while $a\vcap b$ to denote the minimum of $a$ and $b$.

\section{Proof of Theorem~\ref{thm:exact_minimax}}
\label{thm_proof:oracle_fail_general}

\begin{proof}
The proof of Theorem~\ref{thm:exact_minimax} heavily relies on
Lemma~\ref{lemma:exact_recover_error_bound}.
We put a uniform prior on $\bPi^{*}$ over the support
$\calH$, which maximizes the entropy $H(\bPi^{*}) = \log\abs{\calH}$,
and exploit the inequality
\begin{align}
\label{eq:minimax_error_prob_ub}
\sup_{\bPi} \Expc \Ind(\wh{\bPi}\neq \bPi)
\geq \Prob\bracket{\wh{\bPi}\neq \bPi^{*}}.
\end{align}
Since Lemma~\ref{lemma:exact_recover_error_bound} holds for
arbitrary estimator $\wh{\bPi}$, we can safely add
$\inf_{\wh{\bPi}}$ to the left-hand side in~\eqref{eq:minimax_error_prob_ub} and
complete the proof. 	
\end{proof}

\begin{lemma}
\label{lemma:exact_recover_error_bound}
Viewing $\bPi^{*}$ as a RV distributed among the
set $\calH$, we have
\[
\Prob(\wh{\bPi}\neq \bPi^{*}) \geq
\dfrac{\entH(\bPi^{*}) - 1 -  (n/2)
\logdet\bracket{\bI + \bB^{*\rmt}\bB^{*}/\sigma^2}
}{\log\bracket{|\calH|}},
\]
for an arbitrary estimator $\wh{\bPi}$, where
$\entH(\cdot)$ is the entropy of $\bPi^{*}$.
\end{lemma}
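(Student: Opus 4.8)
The plan is to establish the bound through Fano's inequality applied to the task of decoding $\bPi^{*}$ from the observation pair $(\bX,\bY)$. Since $\wh{\bPi}$ is an arbitrary (measurable) function of $(\bX,\bY)$ and $\bPi^{*}$ ranges over $\calH$, the standard form of Fano's inequality (e.g.~\cite{cover2012elements}, Section~7.9) yields
\[
\Prob(\wh{\bPi}\neq\bPi^{*})\;\ge\;\frac{\entH(\bPi^{*}\mid\bX,\bY)-1}{\log\abs{\calH}},
\]
where the additive constant $1$ absorbs the binary-entropy term, which is bounded by $\log 2<1$. It therefore suffices to lower bound $\entH(\bPi^{*}\mid\bX,\bY)$, or equivalently to upper bound the information that $(\bX,\bY)$ carries about $\bPi^{*}$. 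Writing $\entH(\bPi^{*}\mid\bX,\bY)=\entH(\bPi^{*})-\entI(\bPi^{*};\bX,\bY)$ and exploiting that $\bPi^{*}$ is drawn independently of the sensing matrix $\bX$, the chain rule collapses $\entI(\bPi^{*};\bX,\bY)=\entI(\bPi^{*};\bX)+\entI(\bPi^{*};\bY\mid\bX)=\entI(\bPi^{*};\bY\mid\bX)$, so the problem reduces to controlling the single conditional mutual information $\entI(\bPi^{*};\bY\mid\bX)$.

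The heart of the argument, and the step I expect to be the main obstacle, is the inequality $\entI(\bPi^{*};\bY\mid\bX)\le\frac{n}{2}\logdet\bracket{\bI+\bB^{*\rmt}\bB^{*}/\sigma^2}$. I would expand $\entI(\bPi^{*};\bY\mid\bX)=\enth(\bY\mid\bX)-\enth(\bY\mid\bX,\bPi^{*})$. The subtracted term is exact: conditioned on both $\bX$ and $\bPi^{*}$, the only randomness left in $\bY=\bPi^{*}\bX\bB^{*}+\bW$ is the i.i.d.\ Gaussian noise, so $\enth(\bY\mid\bX,\bPi^{*})=\tfrac{nm}{2}\log(2\pi e\sigma^2)$. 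For the leading term I would first discard the conditioning via $\enth(\bY\mid\bX)\le\enth(\bY)$ (conditioning cannot increase differential entropy), then invoke subadditivity across the $n$ rows, $\enth(\bY)\le\sum_{i=1}^{n}\enth(\bY_{i,:})$, and finally bound each row through the Gaussian maximum-entropy inequality $\enth(\bY_{i,:})\le\tfrac12\log\bracket{(2\pi e)^m\det\Cov(\bY_{i,:})}$. The observation that keeps the covariance clean is that, marginally, each row of $\bPi^{*}\bX$ is a standard Gaussian vector in $\RR^{p}$ irrespective of $\bPi^{*}$ (a row permutation of i.i.d.\ rows is again i.i.d.); hence $\bY_{i,:}=\bB^{*\rmt}\bX_{k,:}+\bW_{i,:}\sim\normdist(\bZero,\bB^{*\rmt}\bB^{*}+\sigma^2\bI)$, so that $\det\Cov(\bY_{i,:})=\det(\bB^{*\rmt}\bB^{*}+\sigma^2\bI)$ for every $i$.

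Collecting these pieces gives $\enth(\bY\mid\bX)\le\tfrac{n}{2}\log\bracket{(2\pi e)^m\det(\bB^{*\rmt}\bB^{*}+\sigma^2\bI)}$, and subtracting the exact noise entropy leaves
\[
\entI(\bPi^{*};\bY\mid\bX)\le\frac{n}{2}\Bracket{\log\det(\bB^{*\rmt}\bB^{*}+\sigma^2\bI)-m\log\sigma^2}=\frac{n}{2}\logdet\!\Bracket{\bI+\frac{\bB^{*\rmt}\bB^{*}}{\sigma^2}},
\]
which is precisely the required bound. Substituting this into the Fano inequality and using $\entH(\bPi^{*}\mid\bX,\bY)=\entH(\bPi^{*})-\entI(\bPi^{*};\bY\mid\bX)$ reproduces the statement of the lemma. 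The delicate point throughout is the handling of $\enth(\bY\mid\bX)$: the mixture over $\bPi^{*}$ of the permuted signal rows renders $\bY$ non-Gaussian, and the tightness of the final bound hinges on combining the ``conditioning reduces entropy'' relaxation with row-wise subadditivity so that only the clean marginal row covariance $\bB^{*\rmt}\bB^{*}+\sigma^2\bI$ survives, which is what simultaneously produces the factor $n$ and the $m\times m$ log-determinant.
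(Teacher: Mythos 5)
Your proof is correct, and although it shares the paper's overall architecture (Fano's inequality plus an upper bound on $\entI(\bPi^{*};\bY\mid\bX)$), you handle the key technical step by a genuinely different and more elementary route. The paper's Lemma~\ref{lemma:fail_channel_capacity} keeps the conditioning on $\bX$, applies the Gaussian maximum-entropy bound to the full $nm$-dimensional vector $\vect{\bY}$, pushes the expectation over $\bX$ inside via concavity of $\logdet$, and then computes the entire $nm\times nm$ second-moment matrix entrywise, verifying that all cross-row blocks vanish so that $\bSigma = \bSigma_1 \otimes \bI_{n\times n}$ with $\bSigma_1 = \bB^{*\rmt}\bB^{*}+\sigma^2\bI$; the factor $n$ then emerges from the Kronecker determinant identity. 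You instead discard the conditioning at the outset via $\enth(\bY\mid\bX)\le\enth(\bY)$, invoke subadditivity of differential entropy across the $n$ rows, and apply the maximum-entropy bound per row, using the clean observation that each row of $\bY$ is marginally $\normdist(\bZero,\,\bB^{*\rmt}\bB^{*}+\sigma^2\bI)$ irrespective of $\bPi^{*}$ (each mixture component over $\bPi^{*}$ has the identical Gaussian law). This avoids both the explicit covariance computation and the Kronecker machinery---subadditivity renders the zero cross-row covariances irrelevant---at no loss of tightness, since both relaxations land on exactly $\frac{n}{2}\logdet\bracket{\bI + \bB^{*\rmt}\bB^{*}/\sigma^2}$; what the paper's longer computation buys in exchange is the explicit structural fact that distinct rows of $\bY$ are uncorrelated, which your argument never needs. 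Your outer Fano step is also a mild variant (Fano applied directly to $\entH(\bPi^{*}\mid\bX,\bY)$ for an arbitrary estimator, rather than to $\entH(\bPi^{*}\mid\wh{\bPi})$ combined with the data-processing inequality along the Markov chain $\bPi^{*}\rightarrow\bY\rightarrow\wh{\bPi}$ as in the paper), but the two are equivalent and both correctly absorb the binary-entropy term into the additive constant $1$.
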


\begin{proof}
Without loss of generality, we assume that $\bB^{*}$ is known.
Note that if we cannot recover $\bPi^{*}$
even when $\bB^{*}$ is known, it is hopeless to recover $\bPi^{*}$
with unknown $\bB^{*}$.
We can reformulate the sensing relation~\eqref{eq:sys_sense_relation}, i.e.,
$\bY = \bPi^{*} \bX \bB^{*} + \bW$,
as the following transmission process
\begin{equation}\label{eq:fail_transimt}
\bPi^{*} \stackrel{\circled{1}}{\rightarrow}
\bPi^{*} \bX\bB^{*}
\stackrel{\circled{2}}{\rightarrow}
\bPi^{*} \bX \bB^{*} + \bW,
\end{equation}
where in $\circled{1}$ the signal $\bPi^{*}$ is encoded to the
codeword $\bPi^{*} \bX \bB_{:, i}^{*}$, and in $\circled{2}$ the
$n$ codewords $\bPi^{*} \bX \bB_{:, i}^{*}$ are transmitted through
$n$ i.i.d. Gaussian channels.
With this reformulation, we can treat the recovery of
$\bPi^{*}$ as a decoding problem. Denote the
recovered permutation matrix as $\wh{\bPi}$.
Following a similar approach as in~\cite{cover2012elements}
(cf. Section~$7.9$, P$206$), we have
\[
\begin{aligned}
\entH(\bPi^{*}) \stackrel{\circled{3}}{=} \entH(\bPi^{*}~|~\bX) 
\stackrel{\circled{4}}{=}&~ \entH(\bPi^{*}~|~\wh{\bPi}, \bX) + \entI(\bPi^{*};~\wh{\bPi}~|~\bX) \\
\stackrel{\circled{5}}{\leq} &~ \entH(\bPi^{*}~|~\wh{\bPi}) + \entI(\bPi^{*};~\wh{\bPi}~|~\bX) \\
\stackrel{\circled{6}}{\leq}&~
1 + \log\bracket{|\calH|}\Prob(\wh{\bPi}\neq \bPi^{*}) + \entI(\bPi^{*};~\wh{\bPi}~|~\bX) \\
\stackrel{\circled{7}}{\leq}&~
1 + \log\bracket{|\calH|}\Prob(\wh{\bPi}\neq \bPi^{*}) + \entI(\bPi^{*};~\bY~|~\bX)\\
\stackrel{\circled{8}}{\leq}&~1 + \log\bracket{|\calH|}\Prob(\wh{\bPi}\neq \bPi^{*}) +
\frac{n}{2}
\logdet\bracket{\bI + \frac{\bB^{*\rmt}\bB^{*}}{\sigma^2}},
\end{aligned}
\]
where $\lambda_i$
denote the  $i^{\mathsf{th}}$ singular value of $\bB^{*}$,
in $\circled{3}$ we use the fact that $\bX$ and $\bPi^{*}$ are independent,
in $\circled{4}$ we use the definition
of the conditional mutual information $\entI(\bPi^{*}; \wh{\bPi}~|~\bX)$,
in $\circled{5}$ we use $\entH(\bPi^{*}~|~\wh{\bPi},\bX) \leq \entH(\bPi^{*}~|~\wh{\bPi})$,
in $\circled{6}$ we use Fano's inequality in Theorem~$2.10.1$ in
\cite{cover2012elements}, in $\circled{7}$
we use the data-processing inequality, noting that $\bPi^{*}\rightarrow \bY \rightarrow \wh{\bPi}$ forms a Markov chain
\cite{cover2012elements},
and in $\circled{8}$ we use Lemma~\ref{lemma:fail_channel_capacity}
to upper bound the conditional mutual information
$\entI(\bPi^{*};~\bY~|~\bX)$.

We thus obtain the following lower
bound on $\Prob(\wh{\bPi}\neq \bPi^{*})$
\[
\Prob(\wh{\bPi}\neq \bPi^{*}) \geq
\dfrac{\entH(\bPi^{*}) - 1 -  (n/2)\sum_i \log\left(1 +{\lambda_i^2}/{\sigma^2} \right)}{\log\bracket{|\calH|}},
\]
which is bounded below by $1/2$ provided
\[
\entH(\bPi^{*}) > 1 + \dfrac{n}{2}
\logdet\bracket{\bI + \frac{\bB^{*\rmt}\bB^{*}}{\sigma^2}} +
\dfrac{\log\bracket{|\calH|}}{2},
\]
and complete the proof.
\end{proof}

\begin{restatable}{lemma}{failsnrchannelcapacity}
\label{lemma:fail_channel_capacity}
For the channel described in~\eqref{eq:fail_transimt},
we have
\[
\entI(\bPi^{*};~\bY_{:, 1}, \bY_{:, 2}, \dots, \bY_{:, m}|~\bX) \leq
\dfrac{n}{2}
\logdet\bracket{\bI + \frac{\bB^{*\rmt}\bB^{*}}{\sigma^2}},
\]	
where $\lambda_i$ denotes the $i^{\mathsf{th}}$ singular value of $\bB^{*}$.
\end{restatable}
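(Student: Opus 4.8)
The plan is to expand the conditional mutual information as $\entI(\bPi^{*}; \bY \mid \bX) = \enth(\bY\mid\bX) - \enth(\bY\mid\bPi^{*},\bX)$ and to handle the two terms separately. The second term is immediate: conditioned on both $\bPi^{*}$ and $\bX$, the matrix $\bY = \bPi^{*}\bX\bB^{*} + \bW$ has the deterministic mean $\bPi^{*}\bX\bB^{*}$ and differs from it only by the Gaussian noise $\bW$, so $\enth(\bY\mid\bPi^{*},\bX) = \enth(\bW) = \frac{nm}{2}\log(2\pi e\sigma^2)$. Everything therefore reduces to an upper bound on $\enth(\bY\mid\bX)$.

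For that first term I would decompose $\bY$ into its $n$ rows and apply the chain rule together with ``conditioning reduces entropy'' to obtain $\enth(\bY\mid\bX)\leq \sum_{k=1}^{n}\enth(\bY_{k,:}\mid\bX)$. Writing the $k$-th row as $\bY_{k,:} = \bB^{*\rmt}\bX_{j_k,:} + \bW_{k,:}$, where $j_k$ is the (random) index of the row of $\bX$ that $\bPi^{*}$ maps to position $k$, I observe that, conditioned on $\bX$, the vector $\bY_{k,:}\in\RR^{m}$ is a mixture over $j_k$ of Gaussians each having covariance $\sigma^2\bI$. The maximum-entropy property of the Gaussian then gives $\enth(\bY_{k,:}\mid\bX=\bx)\leq \frac12\log\bracket{(2\pi e)^{m}\det\Cov(\bY_{k,:}\mid\bX=\bx)}$, and since $\bW$ is independent of $\bPi^{*}$ the covariance splits as $\Cov(\bY_{k,:}\mid\bX) = \Cov(\bB^{*\rmt}\bX_{j_k,:}\mid\bX) + \sigma^2\bI$.

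The remaining task is to control the $\bX$-dependent covariance $\Cov(\bB^{*\rmt}\bX_{j_k,:}\mid\bX)$, and this is where I expect the main difficulty to lie: for a fixed realization $\bx$ it is merely the covariance of a randomly selected row of $\bx\bB^{*}$ and admits no clean closed form, so a naive conditional bound would drag along an unwieldy $\bX$-dependent determinant. The remedy is to defer all bounding until after taking $\Expc_{\bX}$. By concavity of $\logdet$ on positive-definite matrices, Jensen's inequality moves the expectation inside, reducing matters to bounding $\Expc_{\bX}\bracket{\Cov(\bB^{*\rmt}\bX_{j_k,:}\mid\bX)}$. Dominating a covariance by its second moment in the positive-semidefinite order, and using that $j_k$ is independent of $\bX$ while $\Expc\bracket{\bX_{j,:}\bX_{j,:}^{\rmt}}=\bI$ for each fixed $j$, this expected covariance is $\preceq \bB^{*\rmt}\Expc\bracket{\bX_{j_k,:}\bX_{j_k,:}^{\rmt}}\bB^{*} = \bB^{*\rmt}\bB^{*}$. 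Monotonicity of $\logdet$ then yields the per-row bound $\frac12\logdet\bracket{\bI + \bB^{*\rmt}\bB^{*}/\sigma^2}$; summing over the $n$ rows and subtracting the noise entropy $\enth(\bW)$ produces exactly the claimed inequality. A subtlety worth flagging is that the argument only uses independence of $j_k$ from $\bX$ and $\Expc\bracket{\bX_{j,:}\bX_{j,:}^{\rmt}}=\bI$, so it applies verbatim for any prior on $\bPi^{*}$ supported on $\calH$ and the bound is consequently independent of $\calH$.
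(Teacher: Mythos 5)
Your proof is correct, and while it shares the paper's high-level strategy (write $\entI(\bPi^{*};\bY\mid\bX)$ as an entropy difference, kill the second term via $\enth(\bW)$, and control the first by the Gaussian maximum-entropy bound plus Jensen over $\bX$), it reaches the factor of $n$ by a genuinely different decomposition. The paper works with the full vectorization $\vect{\bY}\in\RR^{nm}$: it applies the max-entropy bound and Jensen to the entire $nm$-dimensional vector and then computes the unconditional second-moment matrix $\Expc\,\vect{\bY}\vect{\bY}^{\rmt}$ \emph{exactly}, verifying that all cross-row entries vanish (for $j_1\neq j_2$ one has $\bpi^{*}(j_1)\neq\bpi^{*}(j_2)$, so distinct independent rows of $\bX$ are involved), which yields the Kronecker structure $\bracket{\bB^{*\rmt}\bB^{*}+\sigma^2\bI}\otimes\bI_{n\times n}$ and the determinant identity producing the factor $n$. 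You instead extract the factor $n$ from entropy subadditivity --- chain rule plus ``conditioning reduces entropy'' applied row-by-row --- and then treat each $m$-dimensional row with the same toolkit: max-entropy Gaussian bound, Jensen for the concave $\logdet$, domination of the covariance by the second moment in the PSD order, the identity $\Expc\bracket{\bX_{j,:}\bX_{j,:}^{\rmt}}=\bI$ together with independence of $j_k$ and $\bX$, and finally monotonicity of $\logdet$. Your route is more elementary in that it never requires the $nm\times nm$ covariance computation, the vanishing of the cross terms, or the Kronecker determinant formula; what the paper's exact computation buys in exchange is the knowledge that the cross-row second moments are exactly zero, i.e., your subadditivity step is lossless at the second-moment level, so both arguments land on the identical constant. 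One small caveat on your closing remark: the prior-independence you flag is not an extra generality of your argument, since the paper's computed entries likewise do not depend on the law of $\bPi^{*}$ (only on the fact that a permutation maps distinct indices to distinct rows). Your explicit tracking of the $2\pi e$ factors, which the paper silently drops, cancels exactly as you indicate.
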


\begin{proof}
Let $\textup{vec}\bracket{\bY}$ ($\textup{vec}\bracket{\bW}$) be the vector  by
concatenating $\bY_{:, 1}, \bY_{:, 2}, \cdots, \bY_{:, m}$
($\bW_{:, 1}, \bW_{:, 2}, \cdots, \bW_{:, m}$),
according to the definition in Appendix~\ref{sec:appendix_notation}.
For simplicity of notation, we use
$\entI\bracket{\bPi^{*};~\textup{vec}\bracket{\bY}|\bX}$ as
a shortcut for $\entI\bracket{\bPi^{*};~\bY_{:, 1}, \bY_{:, 2}, \dots, \bY_{:, m}|~\bX}$.
We then calculate the conditional mutual information
$\entI\bracket{\bPi^{*};~\textup{vec}\bracket{\bY}|\bX}$
as
\begin{align}\label{eq:mutual_intropy_yy_bound}
\entI\bracket{\bPi^{*};\textup{vec}\bracket{\bY}|\bX}
\stackrel{\circled{1}}{=}~&\enth(\textup{vec}(\bY)|\bX) - \enth(\textup{vec}(\bY)|\bX, \bPi^{*}) \notag \\
\stackrel{\circled{2}}{=}~& \Expc_{\bPi^{*}, \bX, \bW}
\enth\bracket{\textup{vec}(\bY)|\bX = \bx}
- \enth(\textup{vec}(\bW))\notag \\
\stackrel{\circled{3}}{=}~& \Expc_{\bPi^{*}, \bX, \bW}\enth\bracket{\textup{vec}(\bY)|\bX = \bx}
-\dfrac{mn}{2}\log \sigma^2 \notag \\
\stackrel{\circled{4}}{\leq}~&\Expc_{\bX}\dfrac{1}{2}\log \det \bracket{\Expc_{\bPi^{*}, \bW|\bX =\bx} \textup{vec}\bracket{\bY}\textup{vec}\bracket{\bY}^{\rmt}}
-\dfrac{mn}{2}\log \sigma^2,\notag \\
\leq~&\dfrac{1}{2}\log \det  \Expc_{\bPi^{*}, \bX, \bW} \textup{vec}\bracket{\bY}\textup{vec}\bracket{\bY}^{\rmt} -
\dfrac{mn}{2}\log \sigma^2,
\end{align}
where in $\circled{1}$ we use the definition of the conditional mutual information
$\entI(\bPi^{*};~\textup{vec}\bracket{\bY}|~\bX)$,
in $\circled{2}$ we have used that
\begin{align*}
 \enth(\textup{vec}\bracket{\bY}|~\bX, \bPi^{*})
= \enth(\textup{vec}\bracket{\bPi^{*}\bX \bB^{*} + \bW}|\bX, \bPi^{*}) 
=\enth(\textup{vec}\bracket{\bW}|\bX, \bPi^{*}),
\end{align*}
in $\circled{3}$ we use that the  $mn$  entries of $\textup{vec}(\bW)$ are i.i.d Gaussian distributed
with entropy is $\frac{1}{2}\log(\sigma^2)$ each,
in $\circled{4}$ we use a result in~\cite{cover2012elements} (Thm 8.6.5, P254)
which yields
\[
\enth(\bZ) \leq \dfrac{1}{2}\log \det \Cov(\bZ) \leq
\dfrac{1}{2}\log\det \Expc [\bZ\bZ^{\rmt}],
\]
where $\bZ$ is an arbitrary RV with finite covariance matrix $\Cov(\bZ)$,
and we use the concavity:
$\Expc \log \det (\cdot) \leq \log \det \Expc(\cdot)$.

In the sequel, we compute the entries of
the matrix $\Expc_{\bPi^{*}, \bX, \bW}\textup{vec}\bracket{\bY}\textup{vec}\bracket{\bY}^{\rmt}$.
For simplicity of notation, the latter matrix will henceforth be denoted by $\bSigma$.
First note that $\textup{vec}\bracket{\bY}$ equals the concatenation of $\bY_{:, 1}, \bY_{:, 2}, \cdots, \bY_{:, m}$.
We decompose the matrix $\bSigma$ into sub-matrices
$\bSigma_{i_1,i_2} = \Expc_{\bPi^{*}, \bX, \bW} \bY_{:, i_1} \bY_{:, i_2}^{\rmt}$,
$1\leq i_1, i_2 \leq m$,
which corresponds to the covariance matrix between $\bY_{:, i_1}$ and $\bY_{:, i_2}$.
The $(j_1, j_2)^{\mathsf{th}}$ element of sub-matrix
$\Sigma_{i_1,i_2}$ is defined as $\Sigma_{i_1,i_2,j_1,j_2}$.
The latter can be expressed as
\begin{align*}
&\Sigma_{i_1, i_2, j_1, j_2} =
\Expc_{\bPi^{*}, \bX, \bW} \bracket{Y_{j_1, i_1}Y_{j_2, i_2}} \\
=~&\Expc_{\bPi^{*}, \bX, \bW} \Bracket{ \bracket{\bX \bB^{*}_{ :, i_1}}_{\bpi^{*}(j_1)} + W_{j_1, i_1}} \times \Bracket{ \bracket{\bX \bB^{*}_{:, i_2}}_{\bpi^{*}(j_2)} + W_{j_2, i_2}} \\
=~&\Expc_{\bPi^{*}, \bX}\bracket{\la X_{\bpi^{*}(j_1), :}, \bB^{*}_{:, i_1} \ra \la \bX_{\bpi^{*}(j_2), :}, \bB^{*}_{:, i_2} \ra } + \Expc_{\bW} W_{j_1, i_1}W_{j_2, i_2},
\end{align*}
where $\bpi^{*}$ is the permutation corresponding to the permutation matrix $\bPi^{*}$ as defined in Appendix~\ref{sec:appendix_notation}.

We then split the calculation into three sub-cases:
\begin{equation*}
\begin{cases}
\textbf{Case $i_1 = i_2, j_1 = j_2$:}&\Sigma_{i_1,~i_1,~j_1,~j_1}=
\|\bB^{*}_{:, i_1} \|_2^2 + \sigma^2.\\
\textbf{Case $i_1 \neq i_2, j_1 = j_2$:}&\Sigma_{i_1, i_2, j_1, j_1} = \la  \bB^{*}_{:, i_1},  \bB^{*}_{:, i_2}\ra. \\
\textbf{Case $j_1 \neq j_2$:}&\Sigma_{i_1, i_2, j_1, j_2}= 0.
\end{cases}
\end{equation*}
In conclusion, the matrix $\bSigma$ can be expressed as
\[
\begin{aligned}
\bSigma = \underbrace{\begin{bmatrix}
\ltwonorm{\bB^{*}_{:, 1}}^2 + \sigma^2 & \la \bB^{*}_{:, 1}, \bB^{*}_{:, 2}\ra & \cdots & \la \bB^{*}_{:, 1}, \bB^{*}_{:, m}\ra \\
 \la \bB^{*}_{:, 2}, \bB^{*}_{:, 1}\ra   & \ltwonorm{\bB^{*}_{:, 2}}^2 + \sigma^2 & \cdots &  \la\bB^{*}_{:, 2}, \bB^{*}_{:, m}\ra\\
 \vdots &  & \ddots & \vdots \\
\la \bB^{*}_{:, m}, \bB^{*}_{:, 1}\ra   & \la \bB^{*}_{:, m}, \bB^{*}_{:, 2}\ra & \cdots &   \ltwonorm{\bB^{*}_{:, m}}^2 + \sigma^2	
\end{bmatrix}}_{\defequal~\bSigma_{1} }
\otimes \bI_{n\times n}
\end{aligned}
\]
where $\otimes$ denotes the Kronecker product
\cite{golub2012matrix} (Section 1.3.6, P27).
According to~\cite{golub2012matrix} (Section 12.3.1, P709), we have
\begin{align}\label{eq:mutual_entropy_sigma_matrix}
\det\left(\bSigma\right) = \left(\det\left(\bSigma_{1}\right)\right)^n\
\left(\det\left(\bI_{n\times n}\right)\right)^m \stackrel{\circled{5}}{=} \sigma^{2nm}\left(\det\left(\bI + \dfrac{\bB^{*\rmt}\bB^{*}}{\sigma^2}\right)\right)^n,
\end{align}
where in $\circled{5}$ we have calculated $\det(\bSigma_{1})$ as
\[
\det\left(\bSigma_{1}\right) = \det\left(\sigma^2 \bI +\bB^{*\rmt}\bB^{*}\right) =
\sigma^{2m}\det\left(\bI + \dfrac{\bB^{*\rmt}\bB^{*}}{\sigma^2}\right).
\]
By combining~\eqref{eq:mutual_intropy_yy_bound} and
\eqref{eq:mutual_entropy_sigma_matrix}, we have obtained the upper bound
\begin{align*}
\entI\left(\bPi^{*};~\textup{vec}\bracket{\bY}|\bX\right) \leq~&
\dfrac{n}{2}\log \det\left(\bI + \dfrac{\bB^{*\rmt}\bB^{*}}{\sigma^{2}}\right) 
\stackrel{\circled{6}}{=} \sum_i \log\left(1 + \dfrac{\lambda_i^2}{\sigma^2} \right),
\end{align*}
where $\circled{6}$ can be  verified via the singular value decomposition
$\textup{SVD}\bracket{\bB^{*}} = \bU \bSigma \bV^{\rmt}$
as introduced in Appendix~\ref{sec:appendix_notation}) and
by using basic properties of the matrix determinant~\cite{horn1990matrix}
(Section~0.3, P8).
\end{proof}

\section{Proof of Proposition~\ref{thm:fail_recover_optim}}
\label{thm_proof:fail_recover_optim}

\subsection{Roadmap}
Observe that the sensing relation
$\bY = \bPi^{*}\bX \bB^{*} + \bW$ is equivalent to
$\bPi^{*\rmt}\bY = \bX \bB^{*} + \bPi^{*\rmt}\bW$. As a consequence of the
rotational invariance of the Gaussian distribution,
$\bPi^{*\rmt}\bW$ follows the same distribution as $\bW$.
Since our proof applies to any instance of the permutation matrix
$\bPi^{*}$, we may assume $\bPi^{*} = \bI$ w.l.o.g.
We proceed the proof with the following three stages.

\noindent\textbf{Stage I:}
Define $\widetilde{W}_{i,j}$ as
\[
\begin{aligned}
\widetilde{W}_{i, j} = \la \bW_{j, :} - \bW_{i, :},~\dfrac{\bB^{*\rmt}(\bX_{i, :} - \bX_{j, :})}{\ltwonorm{\bB^{*\rmt}(\bX_{i, :} - \bX_{j, :})}}\ra,
\end{aligned}
\]
for $1\leq i < j \leq n$, we would like to prove that
\[
\set{\exists~(i,j),\textup{s.t.}~\widetilde{W}_{i, j} \geq \ltwonorm{\bB^{*\rmt}\bracket{\bX_{i, :} - \bX_{j, :}} } } \subseteq \
\set{\wh{\bPi}\neq \bI}.
\]
\noindent
We then lower bound the probability $\Prob\bracket{\wh{\bPi}\neq \bI}$
as
\[
\Prob(\wh{\bPi}\neq \bI)\geq
\Prob\Bracket{\exists~(i,j),\textup{s.t.}~\widetilde{W}_{i, j} \geq \ltwonorm{\bB^{*\rmt}\bracket{\bX_{i, :} - \bX_{j, :}} } }.
\]

\noindent\textbf{Stage II:}
We lower bound the probability
$\Prob\bracket{\exists~(i,j),\textup{s.t.}~\widetilde{W}_{i, j} \geq \ltwonorm{\bB^{*\rmt}\bracket{\bX_{i, :} - \bX_{j, :}} } }$
by two separate probabilities, namely
\begin{align*}
&\Prob\bracket{\exists~(i,j),\textup{s.t.}~\widetilde{W}_{i, j} \geq \ltwonorm{\bB^{*\rmt}\bracket{\bX_{i, :} - \bX_{j, :}} } }  
\geq ~  \Prob\bracket{\widetilde{W}_{1, j_0} \geq \rho_0 } \Prob\bracket{\ltwonorm{\bB^{*\rmt}\bracket{\bX_{1, :} - \bX_{j_0, :}}} \leq \rho_0},
\end{align*}
where
$j_0$ is picked as $\argmax_{j} \widetilde{W}_{1, j}$,
and $\rho_0$ is one positive parameter waiting to be set.

\noindent\textbf{Stage III}:
Provided
Condition~(\ref{eqn:fail_recover_rank_assump}) holds,
we are allowed to set $\rho_0 = 2\sqrt{2\sigma^2\log n}$
without violating the requirement of Lemma~\ref{lemma:fail_recover_smallball}.
We thereby conclude the proof by setting
$\rho_0 = 2\sqrt{2\sigma^2\log n}$ and
invoking
Lemma~\ref{lemma:fail_recover_noise_bound} and
Lemma~\ref{lemma:fail_recover_smallball}.

\subsection{Proof details}
\begin{proof}
Detailed calculation comes as follows.

\noindent \textbf{Stage I:}
We conclude the proof by showing if
$\set{\widetilde{W}_{i,j} \geq \ltwonorm{\bB^{*\rmt}\bracket{\bX_{i, :} - \bX_{j, :}}}}$ holds,
we would have
\[
\begin{aligned}
&\ltwonorm{\bY_{i, :}- \bB^{*\rmt}\bX_{j, :}}^2 +
\ltwonorm{\bY_{j, :} - \bB^{*\rmt}\bX_{i, :}}^2 
\leq~\ltwonorm{\bY_{i, :} - \bB^{*\rmt}\bX_{i, :}}^2 +
\ltwonorm{\bY_{j, :} - \bB^{*\rmt}\bX_{j, :}}^2,
\end{aligned}
\]
which implies that $\min_{\bPi} \fnorm{\bY -  \bPi\bX\bB^{*}}^2 \leq \fnorm{\bY -  \bX\bB^{*}}^2$ since $\bPi$ can be chosen as the transposition that swaps
$\bY_{i, :}$ and $\bY_{j, :}$.
This implies failure of recovery, i.e., the event $\{ \wh{\bPi} \neq \bI \}$.

\noindent \textbf{Stage II:}
We lower bound the error probability
$\Prob\bracket{\wh{\bPi}\neq \bI}$ as
\[
\begin{aligned}
\Prob\bracket{\wh{\bPi}\neq \bI} 
\geq~& \
\Prob\bracket{\exists~(i,j),\textup{s.t.}~\widetilde{W}_{i, j} \geq \ltwonorm{\bB^{*\rmt}\bracket{\bX_{i, :} - \bX_{j, :}} } } \\
\stackrel{\circled{1}}{\geq}~&\Prob\bracket{\widetilde{W}_{1, j_0} \geq
\ltwonorm{\bB^{*\rmt}\bracket{\bX_{1, :} - \bX_{j_0, :}}}}\\
\geq~&\Prob\bracket{\widetilde{W}_{1, j_0} \geq \rho_0~\bigg|
\ltwonorm{\bB^{*\rmt}\bracket{\bX_{1, :} - \bX_{j_0, :}}} \leq \rho_0 }
\Prob\bracket{\ltwonorm{\bB^{*\rmt}\bracket{\bX_{1, :} - \bX_{j_0, :}} } \leq \rho_0 } \\
\stackrel{\circled{2}}{=}~& \Prob\bracket{\widetilde{W}_{1, j_0} \geq \rho_0 }
\Prob\bracket{\ltwonorm{\bB^{*\rmt}\bracket{\bX_{1, :} - \bX_{j_0, :}}} \leq \rho_0 },
\end{aligned}
\]
where in $\circled{1}$ we pick $j_0$ as
$\argmax_{j} \widetilde{W}_{1, j}$ and
in $\circled{2}$ we use the independence
between $\wt{W}_{i,j}$ and \newline $\ltwonorm{\bB^{*\rmt}\bracket{\bX_{i, :} - \bX_{j, :}}}$.
\end{proof}

\subsection{Supporting lemmas}
\begin{lemma}
\label{lemma:fail_recover_noise_bound}
When $n$ is large ($n \geq 10$), we have
\[
\Prob\bracket{\sup_j \wt{W}_{1,j} \geq 2\sqrt{2\sigma^2\log n}}
\geq 1 - n^{-1}.
\]
\end{lemma}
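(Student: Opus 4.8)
The plan is to reduce the maximum of the \emph{dependent} family $\set{\wt{W}_{1,j}}_{j=2}^{n}$ to a maximum of \emph{conditionally independent} Gaussians, and then to lower-bound the latter by the standard ``expected number of exceedances'' argument. Recall $\wt{W}_{1,j} = \la \bW_{j,:} - \bW_{1,:},\, \bu_j\ra$ with $\bu_j = \bB^{*\rmt}(\bX_{1,:}-\bX_{j,:})/\ltwonorm{\bB^{*\rmt}(\bX_{1,:}-\bX_{j,:})}$ a unit vector depending only on $\bX$. Since $\bW$ is independent of $\bX$ and its rows are i.i.d.\ $\normdist(0,\sigma^2\bI)$, each $\wt{W}_{1,j}$ is marginally centered Gaussian with variance $2\sigma^2$. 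First I would condition on the pair $(\bX, \bW_{1,:})$. Writing $\eta_j = \la \bW_{j,:}, \bu_j\ra$ and $\mu_j = -\la \bW_{1,:}, \bu_j\ra$, we have $\wt{W}_{1,j} = \eta_j + \mu_j$, where, conditionally, the $\mu_j$ are fixed constants while the $\eta_j$ are \emph{independent} across $j$ (each depends on the distinct row $\bW_{j,:}$) with $\eta_j \sim \normdist(0,\sigma^2)$.

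Next I would exploit this conditional independence. For a threshold $t$ of order $\sqrt{\sigma^2 \log n}$,
\[
\Prob\bracket{\sup_{j}\wt{W}_{1,j} < t \,\big|\, \bX, \bW_{1,:}} = \prod_{j=2}^{n} \Phi\bracket{\frac{t-\mu_j}{\sigma}} \leq \exp\bracket{-\sum_{j=2}^{n}\Prob\bracket{\eta_j \geq t - \mu_j \,\big|\, \bX,\bW_{1,:}}},
\]
using $\Phi(x) \leq \exp(-(1-\Phi(x)))$. It therefore suffices to show that the expected number of exceedances $\sum_{j}\Prob(\eta_j \geq t-\mu_j \mid \cdots)$ is $\gsim \log n$ on a high-probability event for $(\bX,\bW_{1,:})$; a Gaussian lower-tail bound on each term then converts this into the claimed $1-n^{-1}$ probability after a union bound with that event. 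The remaining ingredient is control of the conditional means: since $\mu_j \sim \normdist(0,\sigma^2)$ marginally, one would argue that with probability $\geq 1-\tfrac{1}{2n}$ a constant fraction of the indices $j$ satisfy $\mu_j \geq -c\sigma$, and restrict the product above to those indices, keeping each tail probability at least $\Prob(\eta \geq t + c\sigma)$, which remains large enough once $t \lsim \sqrt{\sigma^2\log n}$.

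The step I expect to be the main obstacle is precisely the control of the conditional means $\mu_j = -\la \bW_{1,:}, \bu_j\ra$, because the directions $\bu_j$ are \emph{not} independent across $j$ (they share $\bX_{1,:}$ and are further coupled through $\bB^*$), so the $\mu_j$ cannot be treated as i.i.d.\ samples and the count $\card{j:\mu_j < -c\sigma}$ is a sum of dependent indicators. Handling this requires either a worst-case estimate showing that only a small fraction of the $\mu_j$ can be strongly negative, or a direct argument that the shared row $\bW_{1,:}$ merely shifts the whole family by an amount negligible relative to $t$ (e.g.\ via $\sup_j|\mu_j| \leq \ltwonorm{\bW_{1,:}}$ together with a concentration bound on $\ltwonorm{\bW_{1,:}}$). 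In either case, making this estimate quantitative is what fixes the numerical constant in the threshold and forces the mild restriction $n \geq 10$; once the means are controlled, the conditional exceedance bound combined with the high-probability event for $(\bX,\bW_{1,:})$ yields the stated lower bound on $\Prob(\sup_j \wt{W}_{1,j} \geq 2\sqrt{2\sigma^2\log n})$.
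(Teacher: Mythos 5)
Your reduction to conditionally independent Gaussians is set up correctly ($\wt{W}_{1,j}=\eta_j+\mu_j$ with $\eta_j\sim\normdist(0,\sigma^2)$ independent across $j$ given $(\bX,\bW_{1,:})$), but the quantitative heart of the plan fails at the lemma's actual threshold. Note that conditioning on $\bW_{1,:}$ leaves $\eta_j$ with standard deviation $\sigma$, not $\sqrt{2}\sigma$: half of the variance of $\wt{W}_{1,j}$ sits in the shared term $\mu_j$. At a threshold $t=c\,\sigma\sqrt{\log n}$, with $\mu_j=O(\sigma)$, each conditional exceedance probability is roughly $e^{-c^2\log n/2}=n^{-c^2/2}$, so the expected number of exceedances is of order $n^{1-c^2/2}$; your sufficient condition ``$\sum_j \Prob(\eta_j\geq t-\mu_j)\gsim\log n$'' is therefore attainable only for $c<\sqrt{2}$. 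The lemma's threshold is $2\sqrt{2\sigma^2\log n}=2\sqrt{2}\,\sigma\sqrt{\log n}$, i.e., $c=2\sqrt{2}$, a factor $2$ above the largest value your scheme can reach: there the expected exceedance count is of order $n^{-3}$, vanishing rather than growing, and no choice of your free constants repairs this. The hedge ``once $t\lsim\sqrt{\sigma^2\log n}$'' is exactly where the proposal silently parts company with the statement it is supposed to prove.

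In fact no proof can close this gap, because the statement as printed is false. Conditional on $\bX$, the direction $\bu_j$ is a fixed unit vector and $\bW$ is independent of $\bX$, so $\wt{W}_{1,j}\sim\normdist(0,2\sigma^2)$ exactly, and a union bound (valid despite the dependence) gives
\begin{equation*}
\Prob\bracket{\sup_j \wt{W}_{1,j} \geq 2\sqrt{2\sigma^2\log n}} \leq (n-1)\,\Prob\bracket{\normdist(0,1)\geq 2\sqrt{\log n}} \leq \frac{n-1}{2n^{2}} < \frac{1}{2n},
\end{equation*}
the opposite of the claimed lower bound $1-n^{-1}$. The threshold carries a spurious factor of $2$: the value consistent with condition~\eqref{eqn:fail_recover_rank_assump} and with the requirement of Lemma~\ref{lemma:fail_recover_smallball} in the proof of Proposition~\ref{thm:fail_recover_optim} is $\rho_0=\sqrt{2\sigma^2\log n}$, i.e., $c=\sqrt{2}$ — and even there the probability $1-n^{-1}$ is delicate, since the conditional exceedance count is then only $\Theta(1/\sqrt{\log n})$ and one must harvest the indices where $\mu_j$ is large and positive; for any $c<\sqrt{2}$ your computation does go through. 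For comparison, the paper gives no written argument, only a pointer to the maximal inequality and Gaussian concentration in~\cite{Boucheron2012}; that route is blocked at the stated constant as well, because the maximal inequality yields the \emph{upper} bound $\Expc\sup_j\wt{W}_{1,j}\leq 2\sigma\sqrt{\log n}$, strictly below the level to be exceeded, and concentration can only certify levels below the mean plus fluctuations. Your secondary worry about the dependent $\mu_j$ is legitimate — the crude bound $\sup_j\abs{\mu_j}\leq\ltwonorm{\bW_{1,:}}\asymp\sigma\sqrt{m}$ is useless for large $m$, and Markov on the dependent indicators gives only constant (not $O(1/n)$) failure probability — but it is downstream of the fatal constant, which a one-line union-bound sanity check would have exposed.
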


\begin{proof}
This result is quite standard and can be
easily proved by combining Section~2.5 (P31) and
Thm 5.6 (P126) in~\cite{Boucheron2012}.
We omit the details for the sake of brevity.
\end{proof}

\begin{lemma}
\label{lemma:fail_recover_smallball}
Given that $\rho_0 \geq 2\left(1 +2\sqrt{\frac{\log 2}{c_1\varrho(\bB^{*})}}\right)\fnorm{\bB^{*}}$,
we have
\[
\Prob\bracket{\ltwonorm{\bB^{*\rmt}\bracket{\bX_{i, :} - \bX_{j, :}}} \leq \rho_0}
\geq \dfrac{5}{9},
\]
where $c_1 > 0$ is some constant, and
$\varrho(\bB^{*})$ is the stable rank
of the matrix $\bB^{*}$.
\end{lemma}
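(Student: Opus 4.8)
The plan is to reduce the claim to a one-sided Gaussian concentration bound for the Euclidean norm of a Gaussian vector, and to read off the role of the stable rank from the resulting variance proxy. First I would observe that the rows $\bX_{i, :}$ and $\bX_{j, :}$ are independent $\normdist(\bZero, \bI_p)$ vectors, so that $\bX_{i, :} - \bX_{j, :} \sim \normdist(\bZero, 2\bI_p)$; writing $\bX_{i, :} - \bX_{j, :} = \sqrt{2}\,\bz$ with $\bz \sim \normdist(\bZero, \bI_p)$, the quantity of interest becomes $\ltwonorm{\bB^{*\rmt}(\bX_{i, :} - \bX_{j, :})} = \sqrt{2}\,\ltwonorm{\bB^{*\rmt}\bz}$. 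Thus it suffices to control the upper tail of $f(\bz) \defequal \ltwonorm{\bB^{*\rmt}\bz}$.

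Next I would collect the two ingredients feeding the concentration inequality. The map $\bz \mapsto \ltwonorm{\bB^{*\rmt}\bz}$ is Lipschitz with constant $\opnorm{\bB^{*}}$, since $\abs{f(\bz_1) - f(\bz_2)} \leq \ltwonorm{\bB^{*\rmt}(\bz_1 - \bz_2)} \leq \opnorm{\bB^{*}}\ltwonorm{\bz_1 - \bz_2}$, and its mean is controlled by Jensen's inequality, $\Expc f(\bz) \leq \sqrt{\Expc[f(\bz)^2]} = \sqrt{\trace(\bB^{*}\bB^{*\rmt})} = \fnorm{\bB^{*}}$. Applying the Gaussian concentration inequality (the same tool invoked in Lemma~\ref{lemma:fail_recover_noise_bound}, cf. Boucheron et al., Thm 5.6) then yields $\Prob(f(\bz) \geq \fnorm{\bB^{*}} + u) \leq \exp(-u^2/(2\opnorm{\bB^{*}}^2))$ for every $u > 0$.

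To finish, I would translate the event $\{\ltwonorm{\bB^{*\rmt}(\bX_{i, :} - \bX_{j, :})} \leq \rho_0\}$ into $\{f(\bz) \leq \rho_0/\sqrt{2}\}$ and set $u = \rho_0/\sqrt{2} - \fnorm{\bB^{*}}$. Substituting the hypothesis $\rho_0 \geq 2(1 + 2\sqrt{\log 2/(c_1 \varrho(\bB^{*}))})\fnorm{\bB^{*}}$ and using $\opnorm{\bB^{*}}^2 = \fnorm{\bB^{*}}^2/\varrho(\bB^{*})$, the exponent $u^2/(2\opnorm{\bB^{*}}^2)$ splits into a strictly positive term proportional to $\varrho(\bB^{*})$ (coming from the gap $\rho_0/\sqrt{2} - \fnorm{\bB^{*}} \geq (\sqrt{2}-1)\fnorm{\bB^{*}}$ produced by the leading factor $2$) plus a constant floor equal to $4\log 2/c_1$. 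Choosing $c_1$ so that this floor alone drives the tail below $4/9$ gives $\Prob(f(\bz) > \rho_0/\sqrt{2}) \leq 4/9$, i.e. $\Prob(\ltwonorm{\bB^{*\rmt}(\bX_{i, :} - \bX_{j, :})} \leq \rho_0) \geq 5/9$, as desired.

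I would flag where the subtlety lies: the naive route of bounding the squared norm by its mean $\Expc\ltwonorm{\bB^{*\rmt}(\bX_{i, :} - \bX_{j, :})}^2 = 2\fnorm{\bB^{*}}^2$ via Markov's inequality only delivers probability $1/2$, which falls short of the target $5/9$. The crux is therefore to harvest the extra margin from concentration, and this is exactly where the stable rank $\varrho(\bB^{*})$ enters: a larger $\varrho(\bB^{*})$ sharpens the sub-Gaussian variance proxy $\opnorm{\bB^{*}}^2$ relative to the mean $\fnorm{\bB^{*}}$, letting the correction $\sqrt{\log 2/(c_1 \varrho(\bB^{*}))}$ shrink while still pushing the probability strictly past $1/2$. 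Keeping the constant $c_1$, the $\sqrt{2}$ factors, and the $4/9$ threshold mutually consistent is the main bookkeeping obstacle, but no new idea is required beyond the Lipschitz concentration step.
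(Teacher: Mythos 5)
Your proof is correct, but it follows a genuinely different route from the paper's. The paper never touches the difference vector directly: it introduces the ball $\BB = \{\bx : \ltwonorm{\bB^{*\rmt}\bx} \leq \rho_0/2\}$, notes via the triangle inequality that $\{\bX_{i,:} \in \BB\} \cap \{\bX_{j,:} \in \BB\}$ implies the target event, and uses independence of the rows to get the lower bound $\zeta^2$, where $\zeta$ is the single-row probability; $\zeta \geq \sqrt{5}/3$ is then extracted from the concentration of $\ltwonorm{\bB^{*\rmt}\bx}$ about $\fnorm{\bB^{*}}$ in Theorem~2.1 of~\cite{rudelson2013hanson}, whose exponent $c_1 t^2 \varrho(\bB^{*})$ supplies the (unspecified) constant $c_1$ appearing in the lemma. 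You instead observe $\bX_{i,:} - \bX_{j,:} = \sqrt{2}\,\bz$ with $\bz \sim \normdist(\bZero, \bI_p)$ and apply Borell--TIS concentration for the $\opnorm{\bB^{*}}$-Lipschitz map $\bz \mapsto \ltwonorm{\bB^{*\rmt}\bz}$ together with $\Expc\ltwonorm{\bB^{*\rmt}\bz} \leq \fnorm{\bB^{*}}$. Your route buys two things: it avoids the quadratic loss $\zeta \mapsto \zeta^2$ (you only need a one-sided tail below $4/9$ on a single event, rather than a per-row probability above $\sqrt{5}/3$), and it yields a fully explicit sub-Gaussian variance proxy $\opnorm{\bB^{*}}^2$, so no unspecified universal constant enters the tail bound. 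The paper's route, by contrast, uses only per-row concentration of Hanson--Wright type, which would survive verbatim if the Gaussian design were relaxed to sub-Gaussian, whereas Borell--TIS is Gaussian-specific; since the paper assumes Gaussian $\bX$, nothing is lost here. One point to be aware of: in the paper $c_1$ is dictated by~\cite{rudelson2013hanson} and is not at the prover's disposal, while you choose $c_1$ (any $c_1 \leq 4\log 2/\log(9/4) \approx 3.4$ makes the floor $\exp(-4\log 2/c_1) \leq 4/9$). Since the lemma only asserts the bound for ``some constant $c_1 > 0$,'' and the downstream use in Proposition~\ref{thm:fail_recover_optim} absorbs $c_1$ into the generic constant $c$ of condition~\eqref{eqn:fail_recover_rank_assump}, fixing $c_1$ explicitly is legitimate and your bookkeeping (the gap $(\sqrt{2}-1)\fnorm{\bB^{*}}$ from the leading factor $2$, plus the floor $4\log 2/c_1$ from the correction term) checks out.
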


\begin{proof}
We begin the analysis by defining the following notations,
\begin{align*}
A^{(i,j)}_{\rho_0} \triangleq & \set{
\ltwonorm{\bB^{*\rmt}\bracket{\bX_{i, :} - \bX_{j, :}}} \leq \rho_0},
~1\leq i < j \leq n; \\
\BB\triangleq &
\left\{\bx|\|\bB^{*\rmt}\bx\|_2 \leq
\dfrac{\rho_0}{{2}}\right\}; \\
\zeta = &~\Prob\bracket{\|\bB^{*\rmt}\bx\|_2 \leq \dfrac{\rho_0}{2}},
\end{align*}
respectively, where $\bx \in \RR^p$ is a Gaussian RV satisfying
$\normdist(\bZero, \bI_{p\times p})$, and $\rho_0$ is some
positive parameter awaiting to be set.
First we prove the inequality
$\Prob({A}^{(i, j)}_{\rho_0}) \geq \zeta^2$.
Provided that $\bX_{i, :}\in \BB, \bX_{j,:}\in \BB$, we have
$\+{A}^{(i,j)}_{\rho_0}$ be true because
\[
\ltwonorm{\bB^{*\rmt}\bracket{\bX_{i, :} - \bX_{j, :}}} \leq
\ltwonorm{\bB^{*\rmt}\bX_{i, :}} +
\ltwonorm{\bB^{*\rmt}\bX_{j, :}} \leq \rho_0.
\]
Then we conclude
\begin{align}
\label{eq:geometric_ball}
\Prob\left({A}^{(i, j)}_{\rho_0}\right) \geq &~
\Expc\Ind\bracket{\bX_{i, :}\in \BB, \bX_{j,:}\in \BB} \
\stackrel{\circled{1}}{=} \zeta^2,
\end{align}
where $\circled{1}$ is because of the independence
between $\bX_{i, :}$ and $\bX_{j, :}$.
It thus remains to lower bound the $\zeta$, which
is accomplished by setting $\rho_0$ to be $\rho_0\geq 2(1 +t)\fnorm{\bB^{*}}$
and invoking Thm 2.1 in~\cite{rudelson2013hanson}
\begin{align*}
\Prob\bracket{\ltwonorm{\bB^{*\rmt}\bx} \geq (1 + t)\fnorm{\bB^{*}} } 
\leq~&\Prob\bracket{\big|\|\bB^{*\rmt}\bx\|_2- \fnorm{\bB^{*}}\big| \geq t\fnorm{\bB^{*}}}
\leq 2e^{-c_1t^2 \varrho(\bB^{*})} \quad \forall t\geq 0.
\end{align*}
Setting $t = 1.4356/\sqrt{c_1\varrho(\bB^{*})}$, we have
$\zeta \geq \sqrt{5}/{3}$,
which implies
$\Prob\bracket{\ltwonorm{\bB^{*\rmt}\bracket{\bX_{i, :} - \bX_{j, :}}} \leq \rho_0} \geq \zeta^2 \geq 5/9$
in view of~\eqref{eq:geometric_ball}
and completes the proof.
\end{proof}

\section{Proof of Corollary~\ref{corollary:oracleapprox}}
\label{corollary_proof:oracleapprox}

\begin{proof}
First we define $\+E \defequal \Ind\left\{\dh(\wh{\bPi}; \bPi^{*}) \geq  \mathsf{D}\right\}$,
which corresponds to the failure in obtaining an approximation of $\bPi^{*}$
within a Hamming ball of radius $ \mathsf{D}$.
Moreover, we suppose that $\bPi^{*}$ follows a uniform distribution over the
set of all $n!$ possible permutation matrices. Using the same logic as
in Section~\ref{thm:exact_minimax}, we conclude
\[
\inf_{\wh{\bPi}}\sup_{\bPi^{*}}
\Expc \Ind(\dh(\wh{\bPi};~\bPi^{*}) \geq  \mathsf{D}) \geq
\Expc \Ind(\calE).
\]
Then we consider the conditional
entropy $\entH(\+E, \bPi^{*}|~\wh{\bPi}, \bY, \bX)$.
The latter can be decomposed as
\begin{align}
\entH(\+E, \bPi^{*}|~\wh{\bPi}, \bY, \bX) 
=~& \entH(\bPi^{*}~|~\wh{\bPi}, \bY, \bX) +
\entH(\+E~|~\bPi^{*}, \wh{\bPi}, \bY, \bX)\notag \\
\stackrel{\circled{1}}{=}~& \entH(\bPi^{*}~|~\wh{\bPi}, \bY, \bX)
\stackrel{\circled{2}}{=} \entH(\bPi^{*}~|~\bY, \bX),
\label{eq:orac_approx_t1}
\end{align}
where in $\circled{1}$ we have used that
$\entH\bracket{\+E~|~\bPi^{*}, \wh{\bPi}, \bY, \bX} = 0$
since $\+E$ is deterministic
once $\bPi^{*}, \wh{\bPi}, \bY, \bX$ are given,
and in $\circled{2}$ we use the
fact $\entI(\wh{\bPi}; \bPi^{*}~|~\bY, \bX) = 0$
since $\wh{\bPi}$ and $\bPi^{*}$ are independent
given $\bX$ and $\bY$.  At the same time, we have
\begin{align}
 \entH\bracket{\+E, \bPi^{*}|~\wh{\bPi}, \bY, \bX} 
=~& \entH\bracket{\+E|~\wh{\bPi}} + \entH\bracket{\bPi^{*}~|~\+E, \wh{\bPi}, \bY, \bX}\notag  \\
\stackrel{\circled{3}}{\leq} ~&\log 2 + \entH\bracket{\bPi^{*}~|~\+E, \wh{\bPi}, \bY, \bX} \notag \\
\leq ~& \log 2 + \Prob(\+E = 1) \entH(\bPi^{*}|\+E = 1, \wh{\bPi}, \bY, \bX) 
+\Prob(\+E = 0) \entH(\bPi^{*}|\+E = 0, \wh{\bPi}, \bY, \bX) \notag \\
\leq~& \log 2 + \Prob\left(\+E = 1\right) \entH(\bPi^{*}|\+E = 1,~\wh{\bPi})
+ \Prob\left(\+E = 0\right)\entH(\bPi^{*}|\+E = 0,~\wh{\bPi}) \notag \\
\leq~& \log 2 + \left(1 - \Prob\left(\+E = 0\right)\right) \entH(\bPi^{*}) 
+ \Prob\left(\+E = 0\right)\log \dfrac{n!}{(n- \mathsf{D}+1)!} \notag \\
\stackrel{\circled{4}}{=}~& \log 2 + \entH(\bPi^{*}) -
\Prob\left(\+E = 0\right)\log(n- \mathsf{D}+1)!
\label{eq:orac_approx_t2},
\end{align}
where in $\circled{3}$ we use the fact that $\+E$ is binary and hence
$\entH(\+E|\wh{\bPi}) \leq \log (2)$,
and in $\circled{4}$ we use the fact that $\entH(\bPi^{*}) = \log (n!)$.
Combing~\eqref{eq:orac_approx_t1} and~\eqref{eq:orac_approx_t2}
yields
\begin{align}
\Prob\left(\+E = 0\right) \leq&~ \dfrac{\entI(\bPi^{*};~\bY,~\bX) + \log 2}{\log (n- \mathsf{D}+1)!} \notag \\
\stackrel{\circled{5}}{=}&~ \dfrac{\entI(\bPi^{*}; \bX) +
\entI(\bPi^{*}; \bY|\bX) + \log 2}{\log (n- \mathsf{D}+1)!} \notag \\
\stackrel{\circled{6}}{=}&~\dfrac{\entI(\bPi^{*};\bY|\bX) + \log 2}{\log (n- \mathsf{D}+1)!}  \notag \\
\stackrel{\circled{7}}{\leq}&~ \dfrac{({n}/{2})\sum_i \log\left(1 + {\lambda_i^2}/{\sigma^2} \right)+
\log 2}{\log (n- \mathsf{D}+1)!},
\label{eq:orac_approx_sum}
\end{align}
which completes the proof of Corollary~\ref{corollary:oracleapprox},
where $\circled{5}$ is because of the chain rule of $I(\bPi^{*}; \bY, \bX)$,
$\circled{6}$ is because $\bPi^{*}$ and $\bX$ are independent and hence
$\entI(\bPi^{*}; \bX) = 0$,
and $\circled{7}$ is because of Lemma~\ref{lemma:fail_channel_capacity}.

In the end, we present the proof for~\eqref{eq:dh_minimax}, which proceeds
as
\[
\inf_{\wh{\bPi}}\sup_{\bPi^{*}}
\Expc\dh(\wh{\bPi}; \bPi^{*}) \geq \
(d+1)\Prob\Bracket{\dh\bracket{\wh{\bPi}; \bPi^{*}} \geq d +1},
\]
where $d$ is an arbitrary integer between $0$ and $n$. Replacing
$\mathsf{D}$ with $d$, we finish the proof with~\eqref{eq:orac_approx_sum}.
\end{proof}

\section{Proof of Theorem~\ref{thm:oracle_succ_optim}}
\label{lemma_proof:oracle_succ_optim}

\subsection{Notations}
\noindent
We first define the events $\calE_0, \calE_1(\delta), \calE_2(\delta)$ as
\begin{align*}
\calE_0 &\triangleq
\bigcap_{i=1}^n \left\{\ltwonorm{\bY_{i, :} -  \bB^{*\rmt}\bX_{i, :} }^2
< \min_{j\neq i} \ltwonorm{\bY_{i, :} - \bB^{*\rmt}\bX_{j, :}}^2 \right\}; \\
\calE_1(\delta) &\defequal \bigcup_{i=1}^n \bigcup_{j\neq i}
\set{2\la \bW_{i, :},~\dfrac{\bB^{*\rmt}\bracket{\bX_{j,:} - \bX_{i, :}}}{\ltwonorm{\bB^{*\rmt}\bracket{\bX_{j,:} - \bX_{i, :}}}}\ra \geq \delta} \\
\calE_2(\delta) &\defequal \bigcup_{i=1}^n \bigcup_{j\neq i}
\set{\ltwonorm{\bB^{*\rmt}\bracket{\bX_{j,:} - \bX_{i, :}}}\leq \delta},
\end{align*}
where $\delta > 0$ is an arbitrary positive number. In addition,
we define probabilities $\calP_1$ and $\calP_2$ as
\begin{align*}
\zeta_1 &\defequal \sum_{i=1}^n \sum_{j\neq i}
\Prob\left\{2\la \bW_{i, :}, \dfrac{\bB^{*\rmt}\bracket{\bX_{j,:}-\bX_{i,:}} }{\ltwonorm{\bB^{*\rmt}\bracket{\bX_{j,:}-\bX_{i,:}}}}\ra \geq \delta \right\}; \\
\zeta_2 &\defequal \sum_{i=1}^n \sum_{j\neq i}
\Prob\bracket{ \ltwonorm{\bB^{*\rmt}\bracket{\bX_{j, :} - \bX_{i, :}}}\leq \delta}.
\end{align*}

\subsection{Roadmap}
We start the proof
by first outlining the proof strategy.

\noindent\textbf{Stage I:}
We first show that $\set{\hat{\bPi} \neq \bI} \subseteq \br{\+E}_0$.

\noindent\textbf{Stage II:}
We would like to upper bound the probability of error
$\Prob(\wh{\bPi}\neq \bI)$ by $\Psi(\br{\+E}_0)$.
By re-arranging terms, we show
$\br{\+E}_0 \subseteq \calE_1(\delta) \bigcup \calE_2(\delta)$,
and separately upper bound $\Psi\bracket{\+E_1(\delta)}$ and
$\Psi\bracket{\+E_2(\delta)}$.

\noindent\textbf{Stage III:}
Treating the above upper bounds as functions
of $\delta$,
we complete the proof by choosing $\delta$ appropriately
and invoking the Condition~(\ref{eqn:oracle_succ_optim_assump}).

\subsection{Proof details}
\begin{proof}
Following a similar argument as in Appendix~\ref{thm_proof:fail_recover_optim},
we assume that $\bPi^{*} = \bI$ w.l.o.g. and consider correct recovery
$\{\wh{\bPi} = \bI\}$.

\noindent\textbf{Stage I:}
We first establish that $\set{\wh{\bPi} \neq \bI} \subseteq \br{\+E}_0$ by
showing that $\+E_0 \subseteq \set{\wh{\bPi} = \bI}$.
Notice that $\+E_0$ can be rewritten as
\[
\+E_0 =
\bigcap_{i=1}^n \bigcap_{j\neq i}\left\{\ltwonorm{\bY_{i, :} - \bB^{*\rmt}\bX_{i, :}}^2
< \ltwonorm{\bY_{i, :} -\bB^{*\rmt}\bX_{j, :}}^2 \right\}.
\]
Based on the definition of the ML estimator~\eqref{eq:sys_ml_estimator},
we must have
\begin{equation}
\label{eqn:oracle_succ_optim_relation}
\begin{aligned}
&\ltwonorm{\bY - \wh{\bPi}\bX\bB^{*}}^2\
\leq \ltwonorm{\bY - \bX\bB^{*}}^2,
\end{aligned}
\end{equation}
Assuming that $\wh{\bPi}\neq \bI$, then
for each term $\ltwonorm{\bY_{i, :} - \bB^{*\rmt}\bX_{i, :}}$ we have
\begin{align*}
\ltwonorm{\bY_{i, :} - \bB^{*\rmt}\bX_{i, :} }^2 \leq&~
\min_{j\neq i}\ltwonorm{\bY_{i, :} - \bB^{*\rmt}\bX_{j, :}}^2
< \ltwonorm{\bY_{i, :} - \bB^{*\rmt}\bracket{\wh{\bPi}\bX}_{i, :} }^2,
\end{align*}
which leads to
$\ltwonorm{\bY - \bX\bB^{*}}^2 < \ltwonorm{\bY - \wh{\bPi}\bX\bB^{*}}^2$,
contradicting~\eqref{eqn:oracle_succ_optim_relation}.
Hence we have proved that $\+E_0 \subseteq \set{\wh{\bPi} = \bI}$.

\noindent\textbf{Stage II:}
In this stage, we will prove that
$\br{\+E}_0 \subseteq \+E_1(\delta) \bigcup \+E_2(\delta)$. First, we expand
$\br{\+E}_0$ as
\[
\br{\+E}_0 =
\bigcup_{i=1}^n \bigcup_{j\neq i}\set{\ltwonorm{\bY_{i, :} - \bB^{*\rmt}\bX_{i, :}}^2
\geq \ltwonorm{\bY_{i, :} -  \bB^{*\rmt}\bX_{j, :} }^2}.
\]
Note that for each event in the union,
the left hand side can be rewritten as $\ltwonorm{\bW_{i, :}}^2$ and
the right hand side can be written as
\begin{align}\notag
&\ltwonorm{\bY_{i, :} - \bB^{*\rmt}\bX_{j, :}}^2
=\ltwonorm{\bB^{*\rmt}\bX_{i, :} + \bW_{i, :} - \bB^{*\rmt}\bX_{j, :}}^2 \\\notag
=~& \ltwonorm{\bW_{i, :}}^2 + \ltwonorm{\bB^{*\rmt}\bracket{\bX_{i, :} - \bX_{j, :}}}^2 
+2\la \bW_{i, :},~\bB^{*\rmt}\bracket{\bX_{i, :}- \bX_{j, :}} \ra.
\end{align}
Hence, the event $\br{\+E}_0$ is equivalent to
\noindent
\[
\begin{aligned}
\br{\+E}_0 =&~ \bigcup_{i=1}^n \bigcup_{j\neq i}\bigg\{2\la \bW_{i, :},\dfrac{\bB^{*\rmt}\bracket{\bX_{j, :} -\bX_{i, :}}}{\ltwonorm{\bB^{*\rmt}\bracket{\bX_{j, :} -\bX_{i, :}}}}\ra 
\geq \ltwonorm{\bB^{*\rmt}\bracket{\bX_{j, :} -\bX_{i, :}}}\bigg\}
\subseteq \+E_1(\delta) \bigcup \+E_2(\delta),
\end{aligned}
\]
\noindent
since otherwise we will have the inequality reversed.
Hence, we can upper bound $\Psi(\br{\+E}_0)$ as
\begin{align*}
\Psi\bracket{\br{\+E}_0} \leq~&\Psi(\+E_1(\delta)) + \Psi(\+E_2(\delta))
\leq \zeta_1 + \zeta_2,
\end{align*}
where $\Psi(\cdot)$ denotes $\Expc \Ind(\cdot)$, and the terms $\zeta_1$ and
$\zeta_2$ can be bounded by Lemma~\ref{lemma:oracle_succ_optim_noise} and
Lemma~\ref{lemma:oracle_succ_optim_small_ball} (given below), respectively.

\noindent\textbf{Stage III:}
Set $\delta^2$ as $16\sigma^2 \log \frac{n}{\epsilon_0}$,
where $\epsilon_0 = \alpha_0^{\kappa \varrho(\bB^{*})}/n$.
We can bound $\zeta_1$ as
\begin{align}
\label{eq:succ_oracle_p1_bound}
\zeta_1 \leq n^2 \
\exp\bracket{-\dfrac{16\sigma^2}{8\sigma^2}\log \frac{n}{\epsilon_0}} =  \epsilon_0^2.
\end{align}
At the same time, we can show that $\zeta_2$
is no greater than $\epsilon_0^2$.
To invoke Lemma~\ref{lemma:oracle_succ_optim_small_ball},
first we need to verify the condition
$\delta^2 < \alpha_0^2 \fnorm{\bB^{*}}^2/2$.
This is proved by
\begin{align}\notag
\dfrac{\fnorm{\bB^{*}}^2}{\sigma^2} \stackrel{\circled{1}}{\geq}&~
32\log\left(\dfrac{n}{\epsilon_0}\right)
\left(\dfrac{n}{\epsilon_0}\right)^{4/\kappa \varrho(\bB^{*})}
\stackrel{\circled{2}}{=}~32 \log\left(\dfrac{n}{\epsilon_0}\right)
\left(\dfrac{n^2}{\alpha_0^{\kappa \varrho(\bB^{*})}}\right)^{4/\kappa  \varrho(\bB^{*})} 
\stackrel{\circled{3}}{\geq}~\dfrac{32}{\alpha_0^2} \log\left(\dfrac{n}{\epsilon_0}\right),
\end{align}
where in $\circled{1}$ we use condition~\eqref{eqn:oracle_succ_optim_assump},
in $\circled{2}$ we use the definition of $\epsilon_0 = \alpha_0^{\kappa \varrho(\bB^{*})}/n$,
and in $\circled{3}$ we use $\alpha_0 \in (0, 1)$ and $n \geq 1$.

We can then invoke Lemma~\ref{lemma:oracle_succ_optim_small_ball}
and bound $\zeta_2$ as
\begin{align}
\zeta_2 \leq&~ n^2\left(\dfrac{2\delta^2}{\fnorm{\bB^{*}}^2}\right)^{\kappa  \varrho(\bB^{*})/2}\notag \\
\stackrel{\circled{4}}{=}&~ n^2 \exp\left[-\frac{\kappa  \varrho(\bB^{*})}{2}
\left(\log\left(\frac{\fnorm{\bB^{*}}^2}{\sigma^2}\right) - \log\left(32\log\left(\frac{n}{\epsilon_0}\right)\right) \right)\right] \notag \\
\stackrel{\circled{5}}{\leq} &~
n^2 \exp\left[-\frac{\kappa  \varrho(\bB^{*})}{2}
\left(\frac{4}{\kappa  \varrho(\bB^{*})}\log\dfrac{n}{\epsilon_0}\right)\right] = \epsilon_0^2,
\label{eq:succ_oracle_p2_bound}
\end{align}
where in $\circled{4}$ we plug in the definition
$\delta^2 = 16\sigma^2 \log(n/\epsilon_0)$,
and in $\circled{5}$ we use condition~\eqref{eqn:oracle_succ_optim_assump}.
Combining the bounds for $\zeta_1$ in~\eqref{eq:succ_oracle_p1_bound} and
$\zeta_2$ in~\eqref{eq:succ_oracle_p2_bound} will
complete the proof.
\end{proof}

\subsection{Supporting Lemmas}
\begin{lemma}
\label{lemma:oracle_succ_optim_noise}
It holds that
\begin{align*}
&\sum_{i=1}^n \sum_{j\neq i}
\Prob\bracket{2\la \bW_{i, :},~\dfrac{\bB^{*\rmt}\bracket{\bX_{j,:} - \bX_{i, :}} }{\ltwonorm{\bB^{*\rmt}\bracket{\bX_{j,:} - \bX_{i, :}}}}\ra \geq \delta }
\leq~ n^2 e^{-\delta^2/8\sigma^2}.
\end{align*}	
\end{lemma}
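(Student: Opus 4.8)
The plan is to condition on the sensing matrix $\bX$ and to exploit the independence $\bW \perp \bX$. First I would fix an ordered pair $(i,j)$ with $i \neq j$ and introduce the vector
\[
\bu_{ij} \defequal \frac{\bB^{*\rmt}\bracket{\bX_{j,:} - \bX_{i,:}}}{\ltwonorm{\bB^{*\rmt}\bracket{\bX_{j,:} - \bX_{i,:}}}} \in \RR^m,
\]
which is a unit vector and is a measurable function of $\bX$ alone (well defined almost surely, since the denominator vanishes only on an $\bX$-null set). Conditioning on $\bX$ thus freezes $\bu_{ij}$ to a deterministic unit vector, while the row $\bW_{i,:} \in \RR^m$ still has i.i.d.\ $\normdist(0,\sigma^2)$ entries, because $\bW$ is independent of $\bX$.

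Next I would use the rotational invariance of the isotropic Gaussian: for a fixed unit vector $\bu \in \RR^m$, the projection $\la \bW_{i,:}, \bu\ra$ is a centered Gaussian with variance $\sigma^2 \ltwonorm{\bu}^2 = \sigma^2$. Hence, conditionally on $\bX$,
\[
2\la \bW_{i,:}, \bu_{ij}\ra \sim \normdist\bracket{0, 4\sigma^2},
\]
and the standard one-sided Gaussian tail bound $\Prob(Z \geq \delta) \leq e^{-\delta^2/(2s^2)}$ for $Z \sim \normdist(0, s^2)$, applied with $s^2 = 4\sigma^2$ and $\delta > 0$, yields
\[
\Prob\bracket{2\la \bW_{i,:}, \bu_{ij}\ra \geq \delta \mid \bX} \leq \exp\bracket{-\frac{\delta^2}{8\sigma^2}}.
\]
The key point is that this estimate is uniform in $\bX$; taking the expectation over $\bX$ therefore removes the conditioning and gives the identical bound for the unconditional probability of each summand.

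Finally I would sum over all ordered pairs $(i,j)$ with $i \neq j$. There are $n(n-1) < n^2$ such pairs, each contributing at most $e^{-\delta^2/(8\sigma^2)}$, so the total is at most $n^2 e^{-\delta^2/(8\sigma^2)}$, as claimed. There is no substantive obstacle here; the only point requiring care is the order of operations—one must condition on $\bX$ first so that $\bu_{ij}$ acts as a fixed direction, verify that the resulting Gaussian tail is independent of $\bX$, and only then marginalize. The independence $\bW \perp \bX$, together with the fact that projecting an isotropic $\normdist(0,\sigma^2 \bI_{m\times m})$ vector onto any unit direction returns exactly an $\normdist(0,\sigma^2)$ scalar, is precisely what decouples the noise from the data-dependent direction $\bu_{ij}$ and makes the per-pair bound hold regardless of the realized $\bX$.
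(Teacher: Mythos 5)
Your proof is correct and follows essentially the same route as the paper: condition on $\bX$ so that the normalized direction is fixed, observe that $2\la \bW_{i,:}, \bu_{ij}\ra$ is $\normdist(0,4\sigma^2)$, apply the one-sided Gaussian tail bound uniformly in $\bX$, and union-bound over the at most $n^2$ pairs. Your treatment is in fact slightly more careful than the paper's, since you explicitly note the almost-sure nonvanishing of the denominator and the uniformity of the conditional bound before marginalizing.
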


\begin{proof}
First, we consider a single term, namely
$2\la \bW_{i, :},\frac{\bB^{*\rmt}\bracket{\bX_{j,:} - \bX_{i, :}} }{\ltwonorm{\bB^{*\rmt}\bracket{\bX_{j,:} - \bX_{i, :}}}}\ra$, ($1\leq i<j \leq n$).
With $\bX$ fixed, it is easy to check that
this term is a Gaussian random variable with zero mean
and variance $4\sigma^2$.

Then the probability
$\Prob\bracket{2\la \bW_{i, :},~\frac{\bB^{*\rmt}\bracket{\bX_{j,:} - \bX_{i, :}} }{\ltwonorm{\bB^{*\rmt}\bracket{\bX_{j,:} - \bX_{i, :}}}}\ra \geq \delta }$
can be bounded as
\[
\begin{aligned}
\Prob\bracket{2\la \bW_{i, :},~\dfrac{\bB^{*\rmt}\bracket{\bX_{j,:} - \bX_{i, :}} }{\ltwonorm{\bB^{*\rmt}\bracket{\bX_{j,:} - \bX_{i, :}}}}\ra \geq \delta } 
=~&\Expc_{\bX}\Prob\bracket{2\la \bW_{i, :},
\dfrac{\bB^{*\rmt}\bracket{\bX_{j,:} - \bX_{i, :}}}{\ltwonorm{\bB^{*\rmt}\bracket{\bX_{j,:} - \bX_{i, :}}}}\ra \geq \delta~|~\bX} \\
\stackrel{\circled{1}}{\leq} ~&\Expc_{\bX}~e^{-{\delta^2}/{8\sigma^2}} =
e^{-{\delta^2}/{8\sigma^2}},
\end{aligned}
\]
where in $\circled{1}$ we use the
tail bound for the Gaussian RV $\bW_{i, :}$.
Combining the above together, we show that
$\calP_1 \leq n^2 e^{-{\delta^2}/{8\sigma^2}}$ and complete the proof.
\end{proof}

\begin{lemma}\label{lemma:oracle_succ_optim_small_ball}

Given that $\fnorm{\bB^{*}}^2 > 2\delta^2/\alpha_0^2$, we have
\[
\sum_{i=1}^n \sum_{j\neq i}
\Prob\bracket{ \norm{\bB^{*\rmt}(\bX_{j, :} -
\bX_{i,:})}{2} \leq \delta} \leq
n^2\left(\dfrac{2\delta^2}{\fnorm{\bB^{*}}^2}\right)^{{\kappa \varrho(\bB^{*})}/{2}},
\]
where $\alpha_0 \in (0, 1)$ is a universal constant.
\end{lemma}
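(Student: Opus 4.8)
The plan is to reduce the double sum to a single small-ball estimate and then establish that estimate by a lower-tail Chernoff (Laplace-transform) bound, whose one genuinely delicate step is responsible for producing the \emph{stable-rank} rather than the \emph{full-rank} scaling in the exponent.

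First I would fix a pair $(i,j)$ with $i \neq j$ and set $\bg = \bX_{j,:} - \bX_{i,:}$. Since the rows of $\bX$ are i.i.d.\ $\normdist(\bZero, \bI)$, we have $\bg \sim \normdist(\bZero, 2\bI)$. Writing the eigendecomposition $\bB^{*}\bB^{*\rmt} = \sum_{k=1}^{r}\lambda_k^2\,\bu_k\bu_k^{\rmt}$, where $\lambda_1 \ge \cdots \ge \lambda_r > 0$ are the nonzero singular values of $\bB^{*}$ and the $\bu_k$ are orthonormal, the projections $\la \bu_k, \bg\ra$ are independent $\normdist(0,2)$. Hence $\ltwonorm{\bB^{*\rmt}\bg}^2 = 2\sum_{k=1}^{r}\lambda_k^2 w_k^2$ with $w_k$ i.i.d.\ $\normdist(0,1)$, so that $\Prob(\ltwonorm{\bB^{*\rmt}(\bX_{j,:}-\bX_{i,:})} \le \delta) = \Prob(S \le \delta^2/2)$ for $S \defequal \sum_k \lambda_k^2 w_k^2$. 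After this reduction it suffices to bound a single small-ball probability for $S$ and then multiply by the number of pairs, which is $n(n-1) < n^2$.

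The core estimate I would prove is the following. Setting $t^2 = \delta^2/2$, the lower-tail Chernoff bound gives, for every $u > 0$,
\[
\Prob(S \le t^2) \le e^{u t^2}\,\Expc e^{-u S} = e^{u t^2}\prod_{k=1}^{r}(1 + 2u\lambda_k^2)^{-1/2}.
\]
The crucial step is to extract a dependence on $\varrho(\bB^{*}) = \fnorm{\bB^{*}}^2/\opnorm{\bB^{*}}^2 = \sum_k \lambda_k^2/\lambda_1^2$ rather than on $r$. Writing $\mu_k = \lambda_k^2/\lambda_1^2 \in (0,1]$ and $u = \beta/\lambda_1^2$, and invoking concavity of $x \mapsto \log(1 + 2\beta x)$ together with its vanishing at $x = 0$ (whence $\log(1 + 2\beta\mu_k) \ge \mu_k\log(1+2\beta)$), I obtain $\sum_k \log(1+2u\lambda_k^2) \ge \varrho(\bB^{*})\log(1+2\beta)$. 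Substituting and optimizing over $\beta$ — the optimizer being $1 + 2\beta = \fnorm{\bB^{*}}^2/t^2$, which is positive precisely because the hypothesis forces $\fnorm{\bB^{*}}^2/t^2 > 1$ — yields the clean bound
\[
\Prob(S \le t^2) \le \left(\frac{e\,t^2}{\fnorm{\bB^{*}}^2}\right)^{\varrho(\bB^{*})/2}.
\]

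Finally, back-substituting $t^2 = \delta^2/2$ and summing over the fewer than $n^2$ index pairs delivers the claimed bound after the universal constant $e/2$ is absorbed into $\kappa$. Here the hypothesis $\fnorm{\bB^{*}}^2 > 2\delta^2/\alpha_0^2$ (equivalently $2\delta^2/\fnorm{\bB^{*}}^2 < \alpha_0^2 < 1$) plays a double role: it places us in the small-ball regime where the optimal $\beta$ is positive, and it keeps the base below one so that lowering the exponent from $\varrho/2$ to $\kappa\varrho/2$ with a suitable universal $\kappa \le 1$ only weakens the bound; the same inequality can alternatively be cited directly from the small-ball result of Latala et al.~\cite{latala2007banach}, with $\kappa$ and $\alpha_0$ identified with the constants appearing there. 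I expect the concavity step to be the main obstacle, since it is exactly what turns the naive full-rank (determinant-type) estimate into the sharp stable-rank scaling: the crude bound $(1+2u\lambda_k^2)^{-1/2} \le (2u\lambda_k^2)^{-1/2}$ would instead produce the determinant of $\bB^{*}\bB^{*\rmt}$ and fail to exhibit the $\varrho(\bB^{*})$ dependence that the downstream argument in Stage~III requires.
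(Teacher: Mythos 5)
Your proof is correct, and its outer structure --- reducing to a single pair via $\bZ = (\bX_{i,:}-\bX_{j,:})/\sqrt{2} \sim \normdist(\bZero,\bI_{p\times p})$ and applying a union bound over fewer than $n^2$ pairs --- coincides with the paper's. Where you genuinely diverge is the per-pair small-ball estimate: the paper disposes of it in one line by invoking Lemma 2.6 of Latala et al.~\cite{latala2007banach} (restated as Lemma~\ref{lemma:subgauss_proj}) with $\by=\bZero$, $\bA = \bB^{*\rmt}$, and $\alpha = \sqrt{2}\delta/\fnorm{\bB^{*}}$, the hypothesis $\fnorm{\bB^{*}}^2 > 2\delta^2/\alpha_0^2$ serving precisely to guarantee $\alpha < \alpha_0$, and it thereby inherits the unspecified universal constants $\kappa,\alpha_0$ from that lemma. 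You instead reprove the needed zero-centered Gaussian special case from scratch: your Laplace-transform bound is exact, the concavity step $\log(1+2\beta\mu_k) \geq \mu_k\log(1+2\beta)$ for $\mu_k \in (0,1]$ is valid (concave function vanishing at the origin) and is indeed the mechanism that converts the naive determinant-type product into a stable-rank exponent, and the optimization at $1+2\beta = \fnorm{\bB^{*}}^2/t^2 > 1$ correctly yields $\bracket{et^2/\fnorm{\bB^{*}}^2}^{\varrho(\bB^{*})/2}$, i.e., after $t^2 = \delta^2/2$, the bound $\bracket{e/4}^{\varrho(\bB^{*})/2}\bracket{2\delta^2/\fnorm{\bB^{*}}^2}^{\varrho(\bB^{*})/2}$, which is strictly stronger than the claim; since the base $2\delta^2/\fnorm{\bB^{*}}^2 < \alpha_0^2 < 1$ by hypothesis, relaxing the exponent to $\kappa\varrho(\bB^{*})/2$ with $\kappa \leq 1$ is legitimate, exactly as you argue. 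What each approach buys: the paper's citation is shorter and keeps the \emph{same} constants $\kappa,\alpha_0$ that thread through the statement and Stage III of the proof of Theorem~\ref{thm:oracle_succ_optim} (condition~\eqref{eqn:oracle_succ_optim_assump} and the error probability $2\alpha_0^{2\kappa\varrho(\bB^{*})}/n^2$ are phrased in those constants); your argument is self-contained and makes the constants explicit (effectively $\kappa = 1$, with no mysterious $\alpha_0$ needed beyond $2\delta^2 < \fnorm{\bB^{*}}^2$), at the cost of covering only the centered Gaussian case --- which is all this lemma requires, though note the paper reuses the more general Latala results (Theorem 2.5, arbitrary center, sub-Gaussian entries) elsewhere, e.g., in Lemma~\ref{lemma:event7}, where your computation would not directly substitute.
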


\begin{proof}
We consider an arbitrary term $\ltwonorm{\bB^{*\rmt}(\bX_{i, :} - \bX_{j, :})}$, $(i < j)$, and
define $\bZ = \frac{\bX_{i, :} - \bX_{j, :}}{\sqrt{2}}$. It is easy to verify that
$\bZ$ is a $p$-dimensional random vector with i.i.d. $\normdist(0, 1)$-entries.
We then have
\[
\Prob\bracket{\ltwonorm{\bB^{*\rmt}(\bX_{i, :} - \bX_{j, :})} \leq \delta} =
\Prob\bracket{\ltwonorm{\bB^{*\rmt}\bZ}^2\leq 2\delta^2 }.
\]
According to Lemma 2.6 in~\cite{latala2007banach} (which is re-stated
in Appendix~\ref{sec:appendix_prob} herein),
this probability can
be bounded as
\[
\begin{aligned}
&\Prob\bracket{\ltwonorm{\bB^{*\rmt}\bZ}^2 \leq 2\delta^2 } =
\Prob\bracket{\ltwonorm{\bB^{*\rmt}\bZ} \leq \sqrt{2}\delta } 
\leq \exp\left(-\kappa \varrho(\bB^{*}) \log\left(\dfrac{\fnorm{\bB^{*}}}{\sqrt{2}\delta}\right)
\right) = \left(\dfrac{2\delta^2}{\fnorm{\bB^{*}}^2}\right)^{\kappa \varrho(\bB^{*})/2},
\end{aligned}
\]
provided  $\delta^2 < {\alpha_0^2\fnorm{\bB^{*}}^2}/{2}$,
where $\alpha_0 \in (0, 1)$ is a universal constant.
With the union bound, we complete the proof.
\end{proof}

\section{Proof of Theorem~\ref{thm:succ_recover}}
\label{thm_proof:succ_recover}

\subsection{Notations}
We define the events $\calE_3(h), \calE_5(t, h), \calE_6(t, h)$ as
\begin{align}
\label{eq:event3456}
\calE_3(h) & \defequal
\set{\fnorm{\Proj_{\bPi \bX}^{\perp} \bY}^2 \leq
\fnorm{\Proj_{\bPi^{*}\bX}^{\perp} \bY}^2,~\dh(\bPi; \bPi^{*}) = h}; \notag  \\	
\calE_4(t, h) & \defequal
\set{T_{\bPi} \leq \frac{t\fnorm{\bB^{*}}^2}{m}, \dh(\bPi; \bPi^{*}) = h}; \notag \\
\calE_5(t, h) &\defequal \left\{\
\fnorm{ P_{\bPi \bX}^{\perp} \bY }^2 -
\fnorm{ P_{\bPi \bX}^{\perp} \bW }^2
\leq  \frac{2T_{\bPi}}{3},\dh(\bPi; \bPi^{*}) = h \right\};\notag \\
\calE_6(t,h) &\defequal \left\{\
\bigg|\fnorm{ P_{\bPi^{*}\bX}^{\perp} \bW }^2 -
\fnorm{ P_{\bPi \bX}^{\perp} \bW }^2\bigg|
\geq \frac{T_{\bPi}}{3},\dh(\bPi; \bPi^{*}) = h\right\},
\end{align}
where $T_{\bPi}\defequal \fnorm{ \Proj_{\bPi \bX}^{\perp} \bPi^{*}\bX\bB^{*}}^2$.
Additionally we define $T_i(t, h)$ $(1\leq i \leq 3)$ as
\begin{align}
\label{eq:t123}
T_1(t, h) &\defequal \exp\bracket{-t\times \snr/72}; \notag \\
T_2(t,h) &\defequal 2\exp\bracket{-\bracket{ \frac{t^2\times \snr^2}{mh}\vcap \bracket{t\times \snr}}/288};  \notag \\
T_3(t, h)&\defequal 6r \left[\
\dfrac{tn^{\frac{2n}{n-p}}}{mh} \exp \left(1 - \dfrac{tn^{\frac{2n}{n-p}}}{mh} \right)\right]^{\frac{h}{10}},
\end{align}
respectively, where $t < mh$ and $r$ is the rank of $\bB^{*}$.

\subsection{Roadmap}
We first restate Theorem~\ref{thm:succ_recover} with the
specific values of $c_0, c_1$ and $c_2$.
\begin{theorem*}
Fix $\epsilon > 0$ and let $n > C(\epsilon)$, where $C(\epsilon) > 0$ is a positive
constant depending
only on $\epsilon$.
Provided the following conditions hold:
$(i)$ $\snr \cdot n^{-\frac{2n}{n-p}} \geq 1$; and $(ii)$
\begin{equation}
\label{eq:succ_snr_condition_appendix}
\frac{\log (m\cdot \snr)}{380}\geq
\bracket{1 + \epsilon + \frac{n}{190(n-p)}}\log n
+ \frac{1}{2}\log r(\bB^{*}),
\end{equation}
then the ML estimator in~\eqref{eq:sys_ml_estimator}
gives the ground-truth permutation matrix $\bPi^{*}$ with probability
exceeding $1-c_2 n^{-\epsilon}\Bracket{(n^{\epsilon} - 1)^{-1} \vcup 1}$. 	
\end{theorem*}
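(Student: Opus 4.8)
The plan is to show that, with the stated probability, the true permutation strictly minimizes the profiled least-squares objective, i.e.\ $\fnorm{P_{\bPi^{*}\bX}^{\perp}\bY}^2 < \fnorm{P_{\bPi\bX}^{\perp}\bY}^2$ for every $\bPi \neq \bPi^{*}$; in view of~\eqref{eq:sys_b_estimator} this event is exactly $\{\wh{\bPi} = \bPi^{*}\}$. Stratifying over the Hamming distance $h = \dh(\bPi;\bPi^{*})$ and recalling that $\calE_3(h)$ collects the failure events at distance $h$, a union bound gives $\Prob(\wh{\bPi}\neq \bPi^{*}) \leq \sum_{h=1}^{n}\Psi(\calE_3(h))$, so it suffices to control each $\Psi(\calE_3(h))$ and sum. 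Throughout I fix $h$ together with a free parameter $t \in (0, mh)$ to be optimized at the very end.

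The algebraic core is the decomposition obtained by substituting $\bY = \bPi^{*}\bX\bB^{*} + \bW$. Since $P_{\bPi^{*}\bX}^{\perp}\bPi^{*}\bX = 0$ we have $\fnorm{P_{\bPi^{*}\bX}^{\perp}\bY}^2 = \fnorm{P_{\bPi^{*}\bX}^{\perp}\bW}^2$, whereas $\fnorm{P_{\bPi\bX}^{\perp}\bY}^2 = T_{\bPi} + \fnorm{P_{\bPi\bX}^{\perp}\bW}^2 + 2\la P_{\bPi\bX}^{\perp}\bPi^{*}\bX\bB^{*},\, P_{\bPi\bX}^{\perp}\bW\ra$, with $T_{\bPi}$ the projected signal energy defined in~\eqref{eq:event3456}. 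The key set-theoretic step is the inclusion $\calE_3(h) \subseteq \calE_4(t,h)\cup \calE_5(t,h)\cup \calE_6(t,h)$: on the complementary event $\br{\calE_4(t,h)}$ the signal energy is large ($T_{\bPi} > t\fnorm{\bB^{*}}^2/m$), and if in addition the cross term is not too negative ($\br{\calE_5}$) and the two noise energies do not differ by more than $T_{\bPi}/3$ ($\br{\calE_6}$), then chaining the two inequalities yields $\fnorm{P_{\bPi\bX}^{\perp}\bY}^2 > \tfrac{2T_{\bPi}}{3} + \fnorm{P_{\bPi\bX}^{\perp}\bW}^2 > \tfrac{T_{\bPi}}{3} + \fnorm{P_{\bPi^{*}\bX}^{\perp}\bW}^2 > \fnorm{P_{\bPi^{*}\bX}^{\perp}\bY}^2$, contradicting failure. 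Hence $\calE_3(h)\cap \br{\calE_4(t,h)} \subseteq \calE_5(t,h)\cup \calE_6(t,h)$, which gives the claimed inclusion.

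It then remains to bound the three pieces, each after a union over the at most $\binom{n}{h}h! \leq n^{h}$ permutations at distance $h$. For $\calE_5$, conditionally on $\bX$ the cross term is a centred Gaussian of variance $\sigma^2 T_{\bPi}$, so on $\br{\calE_4}$ a standard Gaussian tail bound yields probability at most $\exp(-T_{\bPi}/(72\sigma^2)) \leq T_1(t,h) = \exp(-t\,\snr/72)$, after inserting $T_{\bPi} > t\fnorm{\bB^{*}}^2/m$ and $\snr = \fnorm{\bB^{*}}^2/(m\sigma^2)$. For $\calE_6$, the quantity $\fnorm{P_{\bPi^{*}\bX}^{\perp}\bW}^2 - \fnorm{P_{\bPi\bX}^{\perp}\bW}^2$ is a difference of $\chi^2$-type quadratic forms in the Gaussian $\bW$, controlled by a Hanson--Wright / Bernstein concentration inequality to produce $T_2(t,h)$. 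The crucial and hardest estimate is $\Psi(\calE_4(t,h)) \leq T_3(t,h)$, a small-ball (anti-concentration) bound showing that $T_{\bPi} = \fnorm{P_{\bPi\bX}^{\perp}\bPi^{*}\bX\bB^{*}}^2$ is rarely small. This is where the matrix setting forces genuinely new work relative to $m=1$: one must control, uniformly over the exponentially many $\bPi$ at distance $h$, the interaction of the $h$ mismatched rows of $\bX$ with the invertibility of the Gram matrix $\bX^{\rmt}\bX$ (whence the factor $n^{2n/(n-p)}$) and with the rank $r(\bB^{*})$ of the signal, yielding the $[\,\cdot\,]^{h/10}$ behaviour of $T_3$ through a small-ball inequality combined with a covering of the distance-$h$ support pattern.

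Collecting the bounds gives $\Psi(\calE_3(h)) \leq n^{h}\bigl(T_1(t,h)+T_2(t,h)+T_3(t,h)\bigr)$. I would then choose $t$ of order $mh\log n / n^{2n/(n-p)}$ (which is admissible, i.e.\ $t < mh$, once $n$ is large, and is compatible with condition $(i)$ so that the base of $T_3$ lies safely below $1$), balancing the requirements that $n^{h}T_1$ and $n^{h}T_3$ both decay geometrically in $h$; condition $(ii)$ on $\log(m\cdot\snr)$ is precisely what renders $n^{h}(T_1+T_2+T_3)$ summable and drives $\sum_{h}\Psi(\calE_3(h))$ below the target $c_2 n^{-\epsilon}\bigl[(n^{\epsilon}-1)^{-1}\vcup 1\bigr]$. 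The main obstacle, as flagged above, is the small-ball bound for $\calE_4$; the Gaussian and $\chi^2$ tail estimates for $\calE_5$ and $\calE_6$ are comparatively routine, and the final selection of $t$ and summation over $h$ amount to bookkeeping dictated by conditions $(i)$ and $(ii)$.
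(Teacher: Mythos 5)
Your architecture coincides with the paper's: the same reduction to $\fnorm{P^{\perp}_{\bPi\bX}\bY}^2$ by profiling out $\bB$, the same inclusion $\calE_3(h)\subseteq \calE_4(t,h)\cup\calE_5(t,h)\cup\calE_6(t,h)$ with the identical $2T_{\bPi}/3$ versus $T_{\bPi}/3$ thresholds, the same Gaussian and $\chi^2$ concentration bounds producing $T_1$ and $T_2$, and the same $\binom{n}{h}h!\leq n^h$ union bound with summation over $h$ (which, incidentally, starts at $h=2$: distinct permutations cannot be at Hamming distance $1$). The step you correctly identify as the crux --- $\Psi(\calE_4(t,h))\leq T_3(t,h)$ --- is, however, exactly the step you do not prove, and the route you sketch for it would fail for this theorem. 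You propose a uniform small-ball bound obtained by ``a small-ball inequality combined with a covering of the distance-$h$ support pattern''. But $\norm{P^{\perp}_{\bX}\btheta}{2}$ cannot be bounded below uniformly over all $h$-sparse unit vectors once $h$ is large: for $h$ comparable to $n$, the set of such directions contains vectors in the range of $\bX$ itself (e.g., the normalized columns of $\bX\bB^{*}$), on which $P^{\perp}_{\bX}$ vanishes identically. The paper makes this obstruction explicit in Remark~\ref{remark:h_constraint}, and it is precisely why the $\varepsilon$-net argument is reserved for the refined Theorem~\ref{thm:succ_recover_refine}, which imposes $h_{\mathsf{max}}\,r(\bB^{*})\leq n/8$; the present theorem allows unrestricted $h$, so its proof of the $\calE_4$ bound must exploit the specific random structure of $T_{\bPi}=\fnorm{P^{\perp}_{\bPi\bX}\bPi^{*}\bX\bB^{*}}^2$ rather than uniformity over sparse directions.

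The missing idea is a reduction, not new small-ball work: by rotational invariance of the Gaussian distribution, $T_{\bPi}$ has the same law as $\fnorm{P^{\perp}_{\bPi\bX}\bPi^{*}\bX\bSigma}^2$ with $\bSigma=\diag\bracket{\beta_1,\ldots,\beta_r,0,\ldots}$, and by pigeonhole the event $\set{T_{\bPi}\leq t\fnorm{\bB^{*}}^2/m}$ forces $\fnorm{P^{\perp}_{\bPi\bX}\bPi^{*}\bX\beta_i\be_i}^2\leq t\beta_i^2/m$ for at least one $i$; each such rank-one event is exactly the $m=1$ quantity already controlled by Lemma~5 of~\cite{pananjady2016linear}, whence
$\Psi(\calE_4(t,h))\leq r\Bracket{\bracket{\frac{2}{n}}^n + 6\bracket{\frac{tn^{\frac{2n}{n-p}}}{mh}\exp\bracket{1-\frac{tn^{\frac{2n}{n-p}}}{mh}}}^{\frac{h}{10}}}$.
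Note the extra $r\bracket{2/n}^n$ term, absent from your bound, which Stage IV must absorb via Stirling's approximation against the $n!$-sized union. Two secondary points: your choice $t\asymp mh\log n/n^{\frac{2n}{n-p}}$ differs from the paper's $t=\sqrt{m}h\log\bracket{\snr\cdot m n^{-\frac{2n}{n-p}}}/\snr$; it can be made to work, but only if the implicit constant exceeds roughly $10(1+\epsilon)$ so that $n^h T_3$ decays (the exponent $h/10$ makes the constant non-negotiable --- with constant $1$, the base is of order $\log n/n$ and $n^h T_3$ blows up), and the term $\frac{1}{2}\log r(\bB^{*})$ in condition (ii) is then needed precisely to absorb the rank prefactor via $r^{1-h/2}\leq 1$, a bookkeeping step your sketch does not address.
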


With the requirement $n\geq 2p$ and $r(\bB^{*}) \leq (m\vcap p) \leq n/2$, we can further relax~\eqref{eq:succ_snr_condition_appendix} to
\begin{align*}
\log (m\cdot \snr)\geq \
(571 + 380\epsilon)\log n,
\end{align*}
which reduces to the form given in Theorem~\ref{thm:succ_recover}.
Before we proceed, we give an outline of our
proof.

\noindent\textbf{Stage I:}
We decompose the event $\set{\wh{\bPi} \neq \bPi^{*}}$ as
\begin{equation}
\label{eq:wrong_recover_decompose}
\left\{\wh{\bPi}\neq \bPi^{*}\right\} = \bigcup_{\bPi \neq \bPi^{*}}\
\set{\fnorm{P_{\bPi \bX}^{\perp} \bY}^2 \leq  \fnorm{P_{\bPi^{*}\bX}^{\perp} \bY}^2}
= \bigcup_{h\geq 2}\calE_3(h),
\end{equation}
and bound the probability of each individual event in
\eqref{eq:wrong_recover_decompose}.

\noindent\textbf{Stage II:}
For fixed Hamming distance $\dh(\bPi; \bPi^{*}) = h$, we will prove
$\Psi(\calE_3(h))\leq  \sum_{i=1}^3 T_i(t, h) + r\exp\left(-n\log \frac{n}{2}\right)$,
where $r$ denotes the rank of $\bB^{*}$, and $t > 0$ is an arbitrary positive number.

\noindent\textbf{Stage III:}
Under the condition specified by~\eqref{eq:succ_snr_condition} and
$\snr\cdot mn^{-\frac{2n}{n-p}} \geq 323$, we set
$t$ as \newline
${\sqrt{m}h \log\left(\snr\cdot mn^{-\frac{2n}{n-p}}\right)}/{\snr}$
and show that
\begin{align}
\label{eq:succ_prac_stage_ii}
\Psi(\calE_3(h)) \leq  9n^{-(1+\epsilon)h} + r\exp\bracket{-n\log\frac{n}{2}}
\end{align}

\noindent\textbf{Stage IV:}
We prove that
\[
\begin{aligned}
\Prob(\wh{\bPi}\neq \bPi^{*})
\leq 10.36\bracket{\dfrac{1}{n^{\epsilon} \left(n^{\epsilon} - 1\right)}\vcup \dfrac{1}{n^{\epsilon}}},
\end{aligned}
\]
when $n$ is large,
where $\epsilon > 0$ is some positive constant.

\subsection{Proof details}
\begin{proof}
As the outline of our proof, we start with
providing the details of
Stage I and Stage IV, while the proofs of Stage II and Stage III
are given in Lemma~\ref{lemma:succ_final_bound} and
Lemma~\ref{lemma:succ_p12_bound}, respectively.

\noindent\textbf{Stage I:}
From the  definition of ML estimator in~\eqref{eq:sys_ml_estimator},
failure of recovery requires at least one pair $\bracket{\bPi, \bB}$ distinct
from  $\bracket{\bPi^{*}, \bB^{*}}$ such that
\begin{equation*}
\fnorm{\bY - \bPi \bX \bB}^2 \leq \fnorm{\bY - \bPi^{*} \bX \bB^{*}}^2.
\end{equation*}
Note that the optimal $\bB$ corresponding to  ${\bPi}$
can be expressed as $\bB =({\bPi} \bX)^{\dagger} \bY$, where
$\bracket{\bPi \bX}^{\dagger} \defequal \bracket{\bX^{\rmt}\bX}^{-1}\bX^{\rmt}\bPi^{\rmt} $. Back-substitution yields
\begin{align*}
\fnorm{ \bY - \bPi\bX({\bPi} \bX)^{+}\bY}^2 = \fnorm{P_{\bPi \bX}^{\perp} \bY}^2,
\end{align*}
which proves the claim.

\noindent\textbf{Stage II and Stage III:}
As stated above, the detailed proof can be
found in  Lemma~\ref{lemma:succ_final_bound} and
Lemma~\ref{lemma:succ_p12_bound}.

\noindent\textbf{Stage IV:}
We have
\[
\begin{aligned}
\Prob(\hat{\bPi}\neq \bPi^{*}) 
\leq &~\sum_{h\geq 2}{n\choose h}h! \Psi(\calE_3(h))
\stackrel{\circled{1}}{\leq} 
\sum_{h\geq 2}{n\choose h}h!
\bracket{9n^{-(1+\epsilon) h} +
r\exp\left(-n\log\dfrac{n}{2}\right)} \\
\stackrel{\circled{2}}{\leq}&~  9\sum_{h\geq 2}n^h n^{-(1+\epsilon)h} +
r\sum_{h\geq 2}n! \exp\left(-n\log \dfrac{n}{2}\right) \\
\stackrel{\circled{3}}{\leq}&~9\sum_{h\geq 2}n^{-\epsilon h} + \
r\sum_{h\geq 2} e\sqrt{n}\
\exp \left(n\log n - n\log\bracket{\frac{n}{2}}-n\right) \\
\leq&~ \dfrac{9}{n^{\epsilon}\left(n^{\epsilon} - 1\right)} +
e\sum_h r n^{\frac{1}{2}}
\exp\left(-n \log\left(\frac{e}{2}\right)\right) \\
\stackrel{\circled{4}}{\leq}&~\dfrac{9}{n^{\epsilon}\left(n^{\epsilon} - 1\right)} +
\dfrac{e}{2}\sum_h  n^{\frac{3}{2}}
\exp\left(-n \log\left(\frac{e}{2}\right)\right)\\
\leq&~\dfrac{9}{n^{\epsilon}\left(n^{\epsilon} - 1\right)}+
\dfrac{e}{2} n^{\frac{5}{2}}
\exp\left(-n \log\left(\frac{e}{2}\right)\right)\\
\stackrel{\circled{5}}{\leq}&~
\dfrac{9}{n^{\epsilon}\left(n^{\epsilon} - 1\right)}+
\dfrac{e}{2}\exp\left(-\epsilon \log n\right)
\leq~10.36 \bracket{\dfrac{1}{n^{\epsilon} \left(n^{\epsilon} - 1\right)}\vcup \dfrac{1}{n^{\epsilon}}},
\end{aligned}
\]
where in $\circled{1}$ we use~\eqref{eq:succ_prac_stage_ii},
in $\circled{2}$ we use $\frac{n!}{(n-h)!}\leq n^h$ and
$\frac{n!}{(n-h)!} \leq n!$, in $\circled{3}$ we use
\emph{Stirling's approximation} in the form $n! \leq en^{n + 0.5}e^{-n}$,
in $\circled{4}$ we use $r\leq \min(m, p)$ and $p \leq \frac{n}{2}$ (according to
our assumption in Section~\ref{sec:sys_mdl}),
and in $\circled{5}$, we use
$n\log(\frac{e}{2}) >\left(\frac{5}{2} + \epsilon\right)\log n$
when $n$ is sufficiently large
(e.g., when $\epsilon = 0.5$, we require
$n \geq 36$; when $\epsilon = 1$, we require
$n\geq 44$). 
The proof is hence complete.
\end{proof}

\subsection{Supporting lemmas}
\begin{lemma}
\label{lemma:succ_final_bound}
We have
\begin{align*}
\Psi(\calE_3(h))\leq &~
 \sum_{i=1}^3 T_i(t, h) +
 r\bracket{\frac{2}{n}}^n, 
\end{align*}
where $\Psi(\cdot)$ denotes $\Expc \Ind(\cdot)$, and $t < mh$ is an arbitrary
positive number.
\end{lemma}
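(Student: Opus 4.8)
The plan is to establish, for a fixed $\bPi$ with $\dh(\bPi;\bPi^{*})=h$, a deterministic inclusion that reduces the failure event $\calE_3(h)$ to the noise-driven events $\calE_5,\calE_6$, and then to bound each of the three auxiliary events by $T_3+r(2/n)^n$, $T_1$, $T_2$. Using $\bY=\bPi^{*}\bX\bB^{*}+\bW$ and $P_{\bPi^{*}\bX}^{\perp}\bPi^{*}\bX\bB^{*}=0$, I would first record $\fnorm{P_{\bPi^{*}\bX}^{\perp}\bY}^2=\fnorm{P_{\bPi^{*}\bX}^{\perp}\bW}^2$ together with
\[
\fnorm{P_{\bPi\bX}^{\perp}\bY}^2=\fnorm{P_{\bPi\bX}^{\perp}\bW}^2+T_{\bPi}+2\la P_{\bPi\bX}^{\perp}\bPi^{*}\bX\bB^{*},\,P_{\bPi\bX}^{\perp}\bW\ra .
\]
Substituting these into the inequality defining $\calE_3(h)$ and invoking the complementary bounds on $\br{\calE_5}$ (the signal-plus-cross term exceeds $2T_{\bPi}/3$) and on $\br{\calE_6}$ (the two noise energies differ by less than $T_{\bPi}/3$) yields $\fnorm{P_{\bPi\bX}^{\perp}\bY}^2>\fnorm{P_{\bPi^{*}\bX}^{\perp}\bW}^2$, contradicting $\calE_3(h)$; hence $\calE_3(h)\subseteq\calE_5\cup\calE_6$. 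Splitting on $\calE_4$ and applying a union bound gives $\Psi(\calE_3(h))\le\Psi(\calE_4)+\Psi(\calE_5\cap\br{\calE_4})+\Psi(\calE_6\cap\br{\calE_4})$, so it remains to match these three terms with $T_3+r(2/n)^n$, $T_1$, and $T_2$.

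For the cross-term event $\calE_5\cap\br{\calE_4}$ I would condition on $\bX$. Then $2\la P_{\bPi\bX}^{\perp}\bPi^{*}\bX\bB^{*},P_{\bPi\bX}^{\perp}\bW\ra=2\la P_{\bPi\bX}^{\perp}\bPi^{*}\bX\bB^{*},\bW\ra$ is a centred Gaussian with variance $4\sigma^2 T_{\bPi}$, so the standard Gaussian tail gives the conditional bound $\exp(-T_{\bPi}/(72\sigma^2))$ for it dropping below $-T_{\bPi}/3$. On $\br{\calE_4}$ we have $T_{\bPi}>t\fnorm{\bB^{*}}^2/m$, and inserting this lower bound turns the estimate into $\exp(-t\fnorm{\bB^{*}}^2/(72 m\sigma^2))=\exp(-t\cdot\snr/72)=T_1(t,h)$, uniformly in $\bX$.

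For the noise-difference event $\calE_6\cap\br{\calE_4}$ I would write the quantity of interest as the centred Gaussian chaos $\la\bW,Q\bW\ra$ with $Q=P_{\bPi\bX}-P_{\bPi^{*}\bX}$, which is centred since both projections have trace $p$. Two structural facts drive the bound: $\opnorm{Q}\le1$, and $\rank(Q)\le 2h$ — the latter because $\bPi-\bPi^{*}$ has at most $h$ nonzero rows, so $\range(\bPi\bX)\cap\range(\bPi^{*}\bX)$ has dimension at least $p-h$ — whence $\fnorm{Q}^2\le 2h$. A Hanson--Wright/Bernstein inequality for Gaussian chaos (e.g. \cite{rudelson2013hanson}) then gives a two-sided tail $2\exp(-c(s^2/(m\sigma^4\fnorm{Q}^2)\vcap s/(\sigma^2\opnorm{Q})))$ at level $s=T_{\bPi}/3$; inserting $\fnorm{Q}^2\le 2h$, $\opnorm{Q}\le1$, and $T_{\bPi}>t\fnorm{\bB^{*}}^2/m$ collapses the two regimes to $\frac{t^2\snr^2}{mh}\vcap(t\cdot\snr)$ and, after fixing the absolute constant, reproduces $T_2(t,h)$.

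The main obstacle is the signal event $\calE_4$, i.e. the small-ball bound $\Psi(\calE_4)=\Prob(T_{\bPi}\le t\fnorm{\bB^{*}}^2/m)\le T_3(t,h)+r(2/n)^n$. Writing $\tilde\bPi=\bPi^{\rmt}\bPi^{*}$, which is at Hamming distance $h$ from $\bI$, orthogonal invariance gives $T_{\bPi}=\fnorm{P_{\bX}^{\perp}(\tilde\bPi-\bI)\bX\bB^{*}}^2$, a quadratic form in the Gaussian matrix $\bX$ supported on the $O(h)$ rows moved by $\tilde\bPi$. I would (i) condition on a regularity event for $\bX$ controlling its least singular value / the relevant Gram determinant, whose complementary probability is of order $r(2/n)^n=r\exp(-n\log(n/2))$ and which also supplies the factor $n^{2n/(n-p)}$, and (ii) on this event apply a Gaussian anti-concentration estimate in the spirit of Latala et al. \cite{latala2007banach} to the $h$ perturbed coordinates, using $\rank(\bB^{*})=r$ to cover the effective directions; this produces the $[x e^{1-x}]^{h/10}$ profile with $x=tn^{2n/(n-p)}/(mh)$ and the prefactor $6r$, namely $T_3(t,h)$. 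The delicate point is extracting the exponent $h/10$ while simultaneously paying only the stated $\bX$-regularity cost, since the $h$ perturbed rows interact through the common projector $P_{\bX}^{\perp}$; isolating $\Theta(h)$ nearly-independent small-ball factors is where the work concentrates, and is naturally relegated to the companion small-ball lemma.
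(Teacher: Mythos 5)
Your proposal follows the same skeleton as the paper's proof: the same inclusion $\calE_3(h)\subseteq\calE_5(t,h)\cup\calE_6(t,h)$ obtained by contradiction from the complements, the same split $\Psi(\calE_3(h))\le\Psi(\calE_4(t,h))+\Psi(\br{\calE}_4\cap\calE_5)+\Psi(\br{\calE}_4\cap\calE_6)$, and an identical treatment of the cross term: conditional on $\bX$ it is $\normdist(T_{\bPi},4\sigma^2T_{\bPi})$, the lower tail gives $\exp\bracket{-T_{\bPi}/(72\sigma^2)}$, and inserting $T_{\bPi}>t\fnorm{\bB^{*}}^2/m$ on $\br{\calE}_4$ yields $T_1(t,h)$. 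You genuinely diverge on $\calE_6$: the paper never invokes Hanson--Wright, but instead writes $\fnorm{P^{\perp}_{\bPi^{*}\bX}\bW}^2-\fnorm{P^{\perp}_{\bPi\bX}\bW}^2$ as a difference of energies of $\bW$ projected onto the two range set-differences, notes each is $\sigma^2$ times a $\chi^2$ variable with $mk$ degrees of freedom, $k=p\vcap h$ (citing Appendix B.1 of \cite{pananjady2016linear}), and applies the explicit $\chi^2$ deviation bound of Lemma~\ref{lemma:appendix_chi_square}; this is what produces the concrete constant $288$ in $T_2(t,h)$. Your chaos route via $Q=P_{\bPi\bX}-P_{\bPi^{*}\bX}$ is sound --- $\trace(Q)=0$, $\opnorm{Q}\le 1$, and your argument that $\rank(Q)\le 2h$ (the two $p$-dimensional ranges share a subspace of dimension at least $p-h$, so $Q$ vanishes there and off $\range(\bPi\bX)+\range(\bPi^{*}\bX)$) is correct, hence $\fnorm{Q}^2\le 2h$. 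What it buys is self-containedness; what it costs is the constant: \cite{rudelson2013hanson} gives an unspecified universal $c$, so strictly you prove $T_2$ only up to that constant, whereas the lemma as stated, and the explicit thresholds downstream in Lemma~\ref{lemma:succ_p12_bound}, use the value $288$. This is immaterial for the order-level conclusions but should be flagged if you keep the stated $T_2$.

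On the term $\Psi(\calE_4)$ there is a real, though fixable, gap: you defer the extraction of the $h/10$ exponent and the $n^{2n/(n-p)}$ factor to an unproven ``companion small-ball lemma,'' and that deferred step is precisely the hard content. The paper does not reprove it either: after the SVD rotation it reduces, by a pigeonhole/union step you should make explicit --- if $T_{\bPi}\le (t/m)\sum_{i=1}^{r}\beta_i^2$ then some index $i$ satisfies $\fnorm{P^{\perp}_{\bPi\bX}\bPi^{*}\bX\beta_i\be_i}^2\le t\beta_i^2/m$ --- to $r$ rank-one small-ball events, and then cites Lemma~5 of \cite{pananjady2016linear}, which delivers exactly $\bracket{2/n}^n+6\Bracket{xe^{1-x}}^{h/10}$ with $x=tn^{2n/(n-p)}/(mh)$ per direction, i.e.\ $T_3(t,h)+r\bracket{2/n}^n$ after the union over $i\le r$. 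Your sketched re-derivation (regularity event for $\bX$ plus a Latala-style anti-concentration over the $h$ moved rows) is plausible but, as you yourself note, the rows interact through the common projector, and making $\Theta(h)$ nearly independent small-ball factors rigorous is nontrivial; replacing that sketch by the reduction-plus-citation closes your argument with no new work and matches the paper exactly.
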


\begin{proof}
The proof is completed by the following decomposition, which reads
\begin{align*}
\Psi(\calE_3(h)) \leq~& \Psi(\calE_4(t, h)) + \
\Psi\bracket{\calE_3(h)\bigcap \br{\calE}_4(t, h)} \\
\stackrel{\circled{1}}{\leq}~& \Psi(\calE_4(t, h)) + \Psi\bracket{\br{\calE}_4(t, h) \bigcap  \calE_5(t, h)} + \Psi\bracket{\br{\calE}_4(t, h) \bigcap\calE_6(t, h)},
\end{align*}
where $\circled{1}$ is due to the
relation
$\calE_3(h) \subseteq \calE_5(t, h)\bigcup \calE_6(t, h)$.
A detailed explanation is given as follows.
Conditional on $\br{\calE_5}(t, h)\bigcap \br{\calE_6}(t, h)$,
we have
\[
\begin{aligned}
\fnorm{P_{{\bPi}\bX}^{\perp}\bY}^2 -
\fnorm{P_{\bPi^{*} \bX}^{\perp}\bY}^2 
\stackrel{\circled{2}}{=}~& \fnorm{P_{{\bPi}\bX}^{\perp}\bY}^2 \hspace{-0.5mm}-
\hspace{-0.5mm}\
\fnorm{P_{\bPi \bX}^{\perp} \bW}^2 +
\fnorm{P_{\bPi \bX}^{\perp} \bW}^2 \hspace{-0.5mm}-\hspace{-0.5mm}
\fnorm{P_{\bPi^{*} \bX}^{\perp}\bW}^2 \\
\geq~& \fnorm{P_{{\bPi}\bX}^{\perp}\bY}^2 \hspace{-0.5mm}-\hspace{-0.5mm}
\fnorm{P_{\bPi \bX}^{\perp} \bW}^2 \hspace{-0.5mm}-\hspace{-0.5mm}
\left|\fnorm{P_{\bPi \bX}^{\perp} \bW}^2\hspace{-0.5mm} -
\hspace{-0.5mm}
\fnorm{P_{\bPi^{*} \bX}^{\perp}\bY}^2 \right| \\
\stackrel{\circled{3}}{>}~&\frac{2T_{\bPi}}{3} - \frac{T_{\bPi}}{3} > 0,
\end{aligned}
\]
where in $\circled{2}$ we use the fact
$P_{\bPi^{*} \bX}^{\perp}\bY = P_{\bPi^{*} \bX}^{\perp}\bW$, and in $\circled{3}$ we use the definitions of $\br{\calE_5}(t, h)$ and $\br{\calE_6}(t, h)$.
This suggests that
$\br{\calE_5}(t, h) \bigcap \br{\calE_6}(t, h) \subseteq \br{\calE}_3(h)$, which
is equivalent to $\calE_3(h) \subseteq \calE_5(t, h)\bigcup \calE_6(t, h)$.
We then separately bound the above terms.

\noindent \textbf{Term $\Psi(\calE_4(t, h))$.}
We first perform $\textup{SVD}\bracket{\bB^{*}} = \bU \bSigma
\bV^{\rmt}$ (as defined in Appendix~\ref{sec:appendix_notation}), such that
$\bSigma = \diag\left(\beta_1, ~\beta_2,~\cdots, \beta_r, 0,\cdots \right)$, where
 $r$ denotes the rank of $\bB^{*}$ ($r \leq \min(m, p)$),
and $\beta_i$ denotes the
corresponding singular values.

Due to the rotational invariance of the Gaussian distribution and
$\bV$ being unitary, it is easy to check that
$T_{\bPi}$ has the same distribution
as $\|P_{\bX}^{\perp} \bPi \bX \bSigma\|_F^2$.
Therefore, we have
\begin{align}
\label{eq:psi4}
\Psi(\calE_4(t, h))
\leq~&\sum_{i=1}^r \Prob\bracket{\fnorm{P_{\bPi\bX}^{\perp}\bPi^{*} \bX\beta_i \be_i}^2 \leq \dfrac{t\beta_i^2}{m}} \notag \\
\stackrel{\circled{4}}{\leq}~& r\Bracket{\bracket{\frac{2}{n}}^n +
6 \left[\dfrac{tn^{\frac{2n}{n-p}}}{mh} \exp \left(1 - \dfrac{tn^{\frac{2n}{n-p}}}{mh} \right)\right]^{\frac{h}{10}}},
\end{align}
where $\circled{4}$ follows from Lemma 5 in~\cite{pananjady2016linear}.

\noindent\textbf{Term $\Psi\bracket{\br{\calE}_4(t, h) \bigcap  \calE_5(t, h)}$.}
We expand
\[
\begin{aligned}
&\fnorm{ P_{\bPi \bX}^{\perp} \bY }^2 -
\fnorm{P_{\bPi \bX}^{\perp} \bW }^2 
=~\fnorm{P_{\bPi \bX}^{\perp} \bPi^{*}\bX\bB^{*}}^2 +
2 \la P_{\bPi \bX }^{\perp} \bPi^{*}\bX\bB^{*},~P_{\bPi \bX}^{\perp} \bW\ra.
\end{aligned}
\]
Conditional on the sensing matrix $\bX$, we have that
$\fnorm{ P_{\bPi \bX}^{\perp} \bY }^2 -
\fnorm{P_{\bPi \bX}^{\perp} \bW }^2$ follows a Gaussian distribution, namely,
$\normdist(T_{\bPi}, 4\sigma^2 T_{\bPi})$. Therefore, we obtain
\begin{align}
\Psi\bracket{\calE_5(t, h)}
\stackrel{\circled{5}}{=}&~\Expc_{\bX}
\bigg[\Ind\bracket{ T_{\bPi} > \frac{t\fnorm{\bB^{*}}^2}{m}}
\times \Expc_{\bW}\Ind\bracket{\fnorm{P_{\bPi \bX}^{\perp} \bY}^2 -
\fnorm{P_{\bPi \bX}^{\perp} \bW}^2
\leq  \frac{2T_{\bPi}}{3}} \bigg]\notag\\
\stackrel{\circled{6}}{\leq} &~
\Expc_{\bX}\left[\Ind\bracket{ T_{\bPi} > \frac{t\fnorm{\bB^{*}}^2}{m}}
\times \exp\bracket{-\frac{T_{\bPi}}{72\sigma^2}}\right]
\leq \exp \bracket{-\frac{t\times \snr}{72}},\label{eq:noise_diff_e1}
\end{align}
where $\circled{5}$ results from  independence
of $\bX$ and $\bW$, and in $\circled{6}$ we use a standard tail bound for Gaussian random variables.

\noindent\textbf{Term $\Psi\bracket{\br{\calE}_4(t, h) \bigcap  \calE_6(t, h)}$.}
 We have
\[
\begin{aligned}
\fnorm{\Proj_{\bPi^{*} \bX}^{\perp}\bW}^2 -
\fnorm{\Proj_{\bPi \bX}^{\perp}\bW}^2 = \fnorm{\Proj_{\bPi \bX}\bW}^2-
\fnorm{\Proj_{\bPi^{*}\bX} \bW}^2  
=\fnorm{\Proj_{\bPi \bX \setminus \bPi^{*}\bX}\bW}^2 -
\fnorm{\Proj_{\bPi^{*}\bX \setminus \bPi \bX }\bW }^2,
\end{aligned}
\]
where $\bPi \bX\setminus \bPi^{*}\bX$ ($\bPi^{*}\bX\setminus \bPi\bX$)
is the short-hand for $\textup{range}\bracket{\bPi\bX}\setminus \textup{range}\bracket{\bPi^{*}\bX}$
($\textup{range}\bracket{\bPi^{*}\bX}\setminus \textup{range}\bracket{\bPi\bX}$).
Setting $k = p\vcap h$, we have that
$\fnorm{\Proj_{\bPi \bX \setminus \bPi^{*}\bX}\bW}^2/\sigma^2$
is $\chi^2$-RV with $mk$ degrees of freedom according to Appendix B.1 in~\cite{pananjady2016linear}.

We conclude that
\begin{align}\label{eq:noise_diff_e2}
 \Psi\bracket{\calE_6(t, h)} 
\leq~& 2\Prob\bracket{
\left|\fnorm{\Proj_{\bPi \bX \setminus \bPi^{*}\bX}\bW}^2 - mk\sigma^2\right|
\geq \frac{T_{\bPi}}{6},~T_{\bPi}>\frac{t\fnorm{\bB^{*}}^2}{m}} \notag \\
\stackrel{\circled{7}}{\leq}~&2\exp\bracket{-\frac{1}{8}\bracket{ \frac{t^2\times \snr^2}{36mk}\vcap \frac{t\times \snr}{6}}} \notag \\
\leq~& 2\exp\bracket{-\frac{1}{8}\bracket{ \frac{t^2\times \snr^2}{36mh}\vcap \frac{t\times \snr}{6}}},
\end{align}
where in $\circled{7}$ we use the concentration inequality for $\chi^2$-RVs given
in Appendix~\ref{sec:appendix_prob}, Lemma~\ref{lemma:appendix_chi_square}.
We complete the proof by
combing~\eqref{eq:psi4},~\eqref{eq:noise_diff_e1} and
\eqref{eq:noise_diff_e2}.
\end{proof}

\begin{lemma}\label{lemma:succ_p12_bound}
Given that $\snr \cdot n^{-\frac{2n}{n-p}} \geq 1$
and $\log(m\cdot \snr) \geq
380\bracket{1 + \epsilon + \frac{n\log n}{190(n-p)} + \frac{1}{2}\log r(\bB^{*})}$,
where  $\epsilon > 0$ is a constant, we have
one positive $0 < t < mh$ such that
$\sum_{i=1}^3 T_{i}(t, h)\leq 9 n^{-(1 + \epsilon)h}$.
\end{lemma}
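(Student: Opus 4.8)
The plan is to choose $t = \sqrt{m}\,h\log q/\snr$, where I abbreviate $q := \snr\cdot m\cdot n^{-\frac{2n}{n-p}}$, and to bound each $T_i(t,h)$ separately by $3n^{-(1+\epsilon)h}$. Rearranging hypothesis $(ii)$ (subtracting $\frac{2n}{n-p}\log n$ from both sides) turns it into the single key inequality
\begin{equation}
\label{eq:pp_keyq}
\log q \;\ge\; 380(1+\epsilon)\log n + 190\log r(\bB^{*}),
\end{equation}
while hypothesis $(i)$ gives $\snr\cdot n^{-\frac{2n}{n-p}}\ge 1$, hence $\snr\ge 1$ and $q\ge m\ge 1$. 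Note that \eqref{eq:pp_keyq} forces $q$ to be large (so $q\ge 323$ holds automatically once $n\ge 2$) and $\log q\vcap\sqrt{m}\ge 1$.

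With this $t$ one computes $t\cdot\snr=\sqrt{m}\,h\log q$, $\frac{t^2\snr^2}{mh}=h(\log q)^2$, and $u:=\frac{t\,n^{2n/(n-p)}}{mh}=\frac{\sqrt{m}\log q}{q}$. I would first dispose of the two tamer terms. Substituting yields $T_1=\exp\!\big(-\sqrt{m}\,h\log q/72\big)$ and $T_2=2\exp\!\big(-\tfrac{1}{288}h\log q\,(\log q\vcap\sqrt{m})\big)$; since $\sqrt{m}\ge 1$ and $\log q\vcap\sqrt{m}\ge 1$, \eqref{eq:pp_keyq} makes both exponents at least $\tfrac{380}{288}(1+\epsilon)h\log n>(1+\epsilon)h\log n$, so $T_1\le n^{-(1+\epsilon)h}$ and $T_2\le 2n^{-(1+\epsilon)h}$, each below $3n^{-(1+\epsilon)h}$.

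The crux is $T_3=6r\,[\phi(u)]^{h/10}$ with $\phi(u)=u\,e^{1-u}$. Because $q\ge m$ forces $u=\sqrt{m}\log q/q\le \log q/\sqrt{q}<1$, I would use the elementary bound $\phi(u)\le e\,u$ (valid for $u\ge 0$), giving $T_3\le 6r\,(e\log q/\sqrt{q})^{h/10}$. Taking logarithms and dividing by $h$ reduces the desired bound to
\begin{equation}
\label{eq:pp_t3}
\tfrac{1}{20}\log q \;\ge\; (1+\epsilon)\log n + \tfrac{1}{h}\log(2r) + \tfrac{1}{10}\log(e\log q),
\end{equation}
which \eqref{eq:pp_keyq} supplies with room to spare: the factor $380\gg 20$ absorbs the $(1+\epsilon)\log n$ term, the $190\log r$ budget easily covers $\tfrac1h\log(2r)$, and for $n$ large the self-referential $\tfrac1{10}\log(e\log q)$ term is swallowed by a fraction of $\tfrac1{20}\log q$ (equivalently $\log q\ge 4+4\log\log q$). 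Finally, the feasibility constraint $t<mh$ amounts to $\log q<\sqrt{m}\,\snr$, which holds since $q\le m\cdot\snr$ and $\log x<\sqrt{x}$ for $x\ge 1$ together with $\snr\ge 1$. Summing the three estimates then gives $\sum_{i=1}^3 T_i(t,h)\le 9n^{-(1+\epsilon)h}$.

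The main obstacle is the $T_3$ estimate. The function $\phi(u)=u\,e^{1-u}$ attains its maximum value $1$ at $u=1$, so the choice of $t$ must be engineered to keep its argument $u$ safely in the small-$u$ regime; one then has to pay a factor $10$ in the exponent $h/10$ and a factor $2$ through the $\sqrt{q}$, and trade these against the budget in \eqref{eq:pp_keyq}. This accounting is precisely what forces the large constant $380$ and the $190\log r$ term in the hypothesis, whereas $T_1$ and $T_2$ reduce to routine Gaussian and $\chi^2$ tail estimates.
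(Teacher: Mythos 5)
Your proposal is correct and takes essentially the same route as the paper: the identical choice $t=\sqrt{m}\,h\log\bracket{\snr\cdot m n^{-\frac{2n}{n-p}}}/\snr$, the same term-by-term bounds on $T_1$ and $T_2$, and the same budget accounting against the hypothesis, with only a cosmetic difference in the $T_3$ step, where you use $u e^{1-u}\le e u$ together with $u\le \log q/\sqrt{q}$ while the paper invokes $\frac{\log z}{z}-1-\log\frac{\log z}{z}\ge\frac{\log z}{1.9}$ for $z\ge 323$ and absorbs the rank prefactor via $r^{1-h/2}\le 1$ for $h\ge 2$. Note also that your reading of hypothesis $(ii)$, with the factor $\log n$ multiplying $(1+\epsilon)$, matches the condition as used in the paper's own proof and in its restated theorem (the lemma statement as printed omits that $\log n$, evidently a typo), so your key inequality $\log q\ge 380(1+\epsilon)\log n+190\log r(\bB^{*})$ is the intended one.
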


\begin{proof} We complete the proof by choosing $t$ as
${\sqrt{m}h\log\bracket{\snr\cdot mn^{-\frac{2n}{n-p}}}}/{\snr}$
and separately bounding $T_i(t, h)$, $1\leq i \leq 3$. Before proceed,
we first check that $t < mh$, which can be easily verified.

\noindent\textbf{Term $T_1(t, h)$:}
We have
\begin{align}
\label{eq:succ_prac_t1_bound}
\exp\bracket{-\frac{t\times \snr}{72}} = & \exp\bracket{-\frac{\sqrt{m}h}{72} \log\bracket{\snr\cdot mn^{-\frac{2n}{n-p}}}} 
\leq & \exp\bracket{-\frac{h}{72} \log\bracket{\snr\cdot mn^{-\frac{2n}{n-p}}}}.
\end{align}

\noindent\textbf{Term $T_2(t, h)$:}
Provided that $\bracket{{t^2\times \snr^2}/(mh)}\vcap \bracket{t\times \snr} = t\times \snr$,
the term $T_2(t, h)$ is of a similar form as $T_1(t, h)$ in
\eqref{eq:succ_prac_t1_bound}.
Here we focus on the case in which
$\frac{t^2\times \snr^2}{mh}\vcap \bracket{t\times \snr} = \frac{t^2\times \snr^2}{mh}$.
The right hand side of this equality can be expanded as
\begin{align*}
\frac{t^2\times \snr^2}{mh} =
h\log^2\bracket{\snr\cdot mn^{-\frac{2n}{n-p}}} \stackrel{\circled{1}}{\geq} h\log\bracket{\snr\cdot mn^{-\frac{2n}{n-p}}},
\end{align*}

\noindent
where in $\circled{1}$ we use the fact $\snr \cdot mn^{-\frac{2n}{n-p}} \geq 323$,
which can be verified by~\eqref{eq:succ_snr_condition}.
We then obtain
\begin{align}\label{eq:succ_prac_t2_bound}
T_2(t, h) \leq 2\exp\Bracket{-\frac{h}{288}\log\bracket{\snr\cdot mn^{-\frac{2n}{n-p}}}}.
\end{align}

\noindent\textbf{Term $T_3(t, h)$:}
We have
\begin{align}
r \left[\
\dfrac{tn^{\frac{2n}{n-p}}}{mh} \exp \left(1 - \dfrac{tn^{\frac{2n}{n-p}}}{mh} \right)\right]^{\frac{h}{10}} 
=~& r \exp \left[-\dfrac{h}{10}
\left(\log \dfrac{mh}{tn^{\frac{2n}{n-p}}}+ \dfrac{tn^{\frac{2n}{n-p}}}{mh} - 1\right)\right] \notag \\
\stackrel{\circled{2}}{=}~&r\exp\Bracket{-\frac{h}{10}
\bracket{-\frac{1}{2}\log m
-\log\frac{\log z}{z} + \frac{\sqrt{m}\log z}{z} - 1
}}\notag \\
\leq~&r\exp\Bracket{-\frac{h}{10}
\bracket{-\frac{1}{2}\log m
-\log\frac{\log z}{z} + \frac{\log z}{z} - 1}}\notag \\
\stackrel{\circled{3}}{\leq}~&r \exp\Bracket{-\frac{h}{10}
\bracket{\frac{\log z}{1.9}-\frac{\log m}{2}}} \notag \\
\stackrel{\circled{4}}{\leq}~&r\exp\Bracket{-\frac{h}{380}\log(\snr \cdot mn^{-\frac{2n}{n-p}})},
\label{eq:succ_prac_t3_bound}
\end{align}
where in $\circled{2}$ we
set $z = \snr \cdot mn^{-\frac{2n}{n-p}} \geq 323$,
in $\circled{3}$
we use the fact
$
\frac{\log z}{z} - 1 - \log \frac{\log z}{z} \geq
\frac{\log z}{1.9}$
for $z \geq 323$, and in $\circled{4}$ we use the fact $\snr\cdot n^{-\frac{2n}{n-p}}\geq 1$.

Combining~\eqref{eq:succ_prac_t1_bound}, ~\eqref{eq:succ_prac_t2_bound}
and ~\eqref{eq:succ_prac_t3_bound}, we conclude that
$
\sum_{i=1}^3 T_i(t, h) \leq 9r\exp\left[-\frac{h}{380}\log \left(\snr\cdot mn^{-\frac{2n}{n-p}} \right)\right]$.
Under the condition specified by~\eqref{eq:succ_snr_condition}, we have
\[
\begin{aligned}
&\frac{\log \left(\snr\cdot mn^{-\frac{2n}{n-p}}\right)}{380} = \
\frac{\log\left(m\cdot \snr \right)}{380} - \dfrac{n\log n}{190(n-p)}
\geq~ \left(1 + \epsilon \right)\log n + \dfrac{1}{2}\log r.
\end{aligned}
\]
Hence, we have
\[
\begin{aligned}
r\exp\Bracket{-\frac{h}{380}\log \left(\snr\cdot mn^{-\frac{2n}{n-p}}\right)}
\leq r\exp\left[-h\left(1 + \epsilon \right)\log n - \dfrac{h}{2}\log r\right] \stackrel{\circled{5}}{\leq} n^{-\left(1 + \epsilon\right) h},
\end{aligned}
\]
where in $\circled{5}$ we have $r^{1-\frac{h}{2}} \leq 1$ since $h \geq 2$. This completes the proof.
\end{proof}

\section{Proof of Theorem~\ref{thm:succ_recover_refine}}
\label{thm_proof:succ_recover_refine}

\subsection{Notations}
We define events $\calE_7(h), \calE_8(t, h)$ as
\begin{align*}
\calE_7(h) \defequal &
\set{0 < \fnorm{(\bI - \bPi)\bX\bB^{*}}^2 \leq h \fnorm{\bB^{*}}^2,~\dh(\bI;\bPi)=h}; \\
\calE_8(t, h) \defequal & \set{\fnorm{\Proj_{\bX}^{\perp}\frac{(\bI-\bPi)\bX\bB^{*}}{\fnorm{(\bI - \bPi)\bX\bB^{*}}}}^2
< \frac{t}{h},~\dh(\bI; \bPi) = h} .
\end{align*}

Furthermore, we define the terms $T_4$ as
\begin{align*}
T_4(t, h) &\defequal \exp\bracket{\frac{rh}{2}\bracket{\log\bracket{\frac{t}{h} } - \frac{t}{h} + 1} + 4.18rh},
\end{align*}
where $r$ denotes the rank of $\bB^{*}$.

\subsection{Proof outline}
We first restate Theorem~\ref{thm:succ_recover_refine} as the following,
where the specific values of $c_i$, $0\leq i \leq 5$ are given.
Notice that our proof focuses on the order
and hence  these values are not sharpened to their limits.

\begin{theorem*}
Suppose that $\dh(\bI; \bPi^{*}) \leq h_{\max}$ with $h_{\max}$ satisfying the relation
$h_{\mathsf{max}}r(\bB^{*}) \leq n/8$. Let further $\epsilon > 0$ be arbitrary, and suppose that
$n > N_2(\epsilon)$, where $N_2(\epsilon) > 0$ is a positive constant depending
only on $\epsilon$.
In addition, suppose that the following conditions hold:
\begin{align*}
(i)\;\snr > 26.2, \quad (ii)\;\varrho(\bB^{*})\geq  5(1+\epsilon)\log n/c_0, \quad (iii)\;\log\bracket{\snr}\geq
\frac{288(1+\epsilon)\log n}{\varrho(\bB^{*})} + 33.44.
\end{align*}
Then the ML estimator \eqref{eq:sys_ml_estimator} subject to the constraint $\dh(\bI; \bPi)\leq h_{\mathsf{max}}$
equals $\bPi^{*}$ with probability at least
$1 - 10n^{-\epsilon}\Bracket{(n^{\epsilon} - 1)^{-1}\vcup 1}$,
where $c_0, \ldots, c_4 > 0$ are some positive constants.
\end{theorem*}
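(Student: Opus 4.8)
The plan is to reuse the four-stage architecture of the proof of Theorem~\ref{thm:succ_recover} (Stages I--IV and Lemma~\ref{lemma:succ_final_bound}) essentially verbatim for the parts that involve only the sensing noise $\bW$, and to replace the single signal-side estimate by a sharper one that exploits the sparsity forced by the constraint $\dh(\bI;\bPi)\le h_{\max}$. As before I would assume $\bPi^{*}=\bI$ without loss of generality by the rotational invariance of $\bW$, decompose $\{\wh{\bPi}\neq\bPi^{*}\}=\bigcup_{2\le h\le h_{\max}}\calE_3(h)$ (the upper limit $h_{\max}$ now coming from the constraint), and bound each $\Psi(\calE_3(h))$ through the containment $\calE_3(h)\subseteq \calE_4(t,h)\cup\calE_5(t,h)\cup\calE_6(t,h)$. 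The terms $\Psi(\calE_5(t,h))$ and $\Psi(\calE_6(t,h))$ are controlled exactly as in Lemma~\ref{lemma:succ_final_bound}, yielding the same Gaussian and $\chi^2$ tail contributions $T_1(t,h)$ and $T_2(t,h)$; nothing changes there, since those steps never touch the structure of $\bB^{*}$.

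The heart of the refinement is a new bound on $\Psi(\calE_4(t,h))=\Prob(T_{\bPi}\le t\fnorm{\bB^{*}}^2/m,\ \dh(\bI;\bPi)=h)$, where $T_{\bPi}=\fnorm{P_{\bPi\bX}^{\perp}\bX\bB^{*}}^2$. The first step I would carry out is the exact identity $P_{\bPi\bX}^{\perp}=\bPi P_{\bX}^{\perp}\bPi^{\rmt}$, which together with $P_{\bX}^{\perp}\bX=\bZero$ gives $T_{\bPi}=\fnorm{P_{\bX}^{\perp}(\bI-\bPi^{-1})\bX\bB^{*}}^2$. After relabeling $\bPi^{-1}\mapsto\bPi$ (immaterial for a union bound over a fixed Hamming weight, since $\bPi\mapsto\bPi^{-1}$ preserves $\dh(\bI;\cdot)$), this converts $T_{\bPi}$ into the \emph{fixed} projection $P_{\bX}^{\perp}$ applied to the structured matrix $\bG:=(\bI-\bPi)\bX\bB^{*}$, which is supported on at most $h$ rows and has rank at most $r(\bB^{*})$. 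Writing $\bG=\fnorm{\bG}\,\bV$ with $\bV$ of unit Frobenius norm, I would split $T_{\bPi}=\fnorm{\bG}^2\,\fnorm{P_{\bX}^{\perp}\bV}^2$ and use the elementary containment $\calE_4(t,h)\subseteq\calE_7(h)\cup\calE_8(t,h)$: on the complement of $\calE_7(h)$ the signal energy exceeds $h\fnorm{\bB^{*}}^2$, which together with $T_{\bPi}\le t\fnorm{\bB^{*}}^2/m$ forces $\fnorm{P_{\bX}^{\perp}\bV}^2<t/(mh)\le t/h$, i.e.\ $\calE_8(t,h)$.

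For the magnitude factor I would invoke a small-ball estimate of the Latala et al.\ type already used in Lemma~\ref{lemma:oracle_succ_optim_small_ball}: the nonzero rows of $\bG$ are (essentially independent) vectors of the form $\bB^{*\rmt}(\bX_{i,:}-\bX_{\bpi(i),:})$, so $\fnorm{\bG}^2$ concentrates around a multiple of $h\fnorm{\bB^{*}}^2$ with a sub-mean lower tail $\Psi(\calE_7(h))$ of order $\exp(-\Omega(h\varrho(\bB^{*})))$; this is where the \emph{stable rank} enters and where condition (ii) is consumed. For the projection factor, $\bV$ ranges over unit-Frobenius matrices with at most $h$ nonzero rows and rank at most $r(\bB^{*})$; I would cover this set by an $\varepsilon$-net of log-cardinality $O(r(\bB^{*})h)$ (a $\binom{n}{h}$ choice of support contributing $h\log(en/h)$, plus an $O(r(\bB^{*})(h+m))$ net on the low-rank part), the hypothesis $h_{\max}r(\bB^{*})\le n/8$ being exactly what keeps this entropy under control. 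For each fixed net point $\bV_0$, $\fnorm{P_{\bX}^{\perp}\bV_0}^2$ concentrates near $\tfrac{n-p}{n}\fnorm{\bV_0}^2$ with a $\chi^2$-type lower tail; combining the per-point tail with the net cardinality yields the bound $T_4(t,h)$, which behaves like $\Prob(\chi^2_{r(\bB^{*})h}\le r(\bB^{*})t)$ up to the covering factor $e^{O(r(\bB^{*})h)}$. The decisive gain over Theorem~\ref{thm:succ_recover} is that $T_4$ carries $r(\bB^{*})$, and hence $\varrho(\bB^{*})$, in the exponent.

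Finally, in Stage III I would choose $t$ analogously to Lemma~\ref{lemma:succ_p12_bound} and feed in conditions (i)--(iii): because the signal-side exponent now scales with $\varrho(\bB^{*})$, the requirement $\log(\snr)\gsim \log n/\varrho(\bB^{*})$ in (iii) suffices to make $T_1(t,h)+T_2(t,h)+T_4(t,h)$ together with the small-ball remainder $\Psi(\calE_7(h))$ at most a constant times $n^{-(1+\epsilon)h}$, uniformly in $2\le h\le h_{\max}$. Stage IV is then the same union bound over the $\le\binom{n}{h}h!\le n^{h}$ permutations of Hamming weight $h$ and the geometric sum over $h$, giving the stated probability $1-10n^{-\epsilon}[(n^{\epsilon}-1)^{-1}\vcup 1]$. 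I expect the main obstacle to be the covering step for $\calE_8(t,h)$: controlling $\fnorm{P_{\bX}^{\perp}\bV}^2$ \emph{uniformly} over the data-dependent direction $\bV$ while keeping the net entropy $O(r(\bB^{*})h)$ small enough to be absorbed both by the per-point $\chi^2$ lower tail and by the $n^{h}$ permutation count. It is precisely this balance that dictates the constraint $h_{\max}r(\bB^{*})\le n/8$, and it is the reason the sharper, stable-rank-dependent rate cannot (yet) be obtained without such a restriction on $\dh(\bI;\bPi^{*})$.
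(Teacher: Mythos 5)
Your overall architecture is the paper's: reduce to $\calE_3(h)\subseteq(\calE_3(h)\cap\br{\calE}_4)\cup\calE_7(h)\cup\calE_8(t,h)$, keep the noise terms $T_1,T_2$ from Theorem~\ref{thm:succ_recover} untouched, bound $\Psi(\calE_7(h))$ by a Latala-type small-ball estimate with exponent $\Omega(h\varrho(\bB^{*}))$, bound $\Psi(\calE_8(t,h))$ by an $\varepsilon$-net plus a projection lower tail, and close with the same choice of $t$ and the same Stage~IV union bound. The one step where your plan as written fails is the net for $\calE_8$. You cover the set of unit-Frobenius matrices with at most $h$ nonzero rows and rank at most $r$, and assert log-cardinality $O(rh)$ via an ``$O(r(h+m))$ net on the low-rank part.'' But $O(r(h+m))$ is not $O(rh)$: the theorem places no upper bound on $m$ (only $h_{\max}r(\bB^{*})\leq n/8$ is assumed, and $m$ may far exceed $n$), and neither the per-point tail, which lives at scale $n$ --- $\exp\bracket{\frac{n-p}{2}(\log\alpha-\alpha+1)}$ with $\alpha\asymp nt/((n-p)h)$ --- nor the outer $n^{h}$ permutation count can absorb entropy growing linearly in $m$. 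The paper sidesteps all $m$-dependence by a column-wise reduction you do not mention: from $\fnorm{\Proj_{\bX}^{\perp}\bTheta}^2<(t/h)\fnorm{\bTheta}^2$ it extracts a single column index $i\leq r$ with $\norm{\Proj_{\bX}^{\perp}\btheta_i}{2}^2\leq t/h$ for the normalized column $\btheta_i$, pays only a factor $r$, and then covers just the $h$-sparse unit sphere with a net of size $(3/\varepsilon)^{h}$ --- with support \emph{fixed} by $\bPi$, so no $\binom{n}{h}$ factor is needed (your support enumeration double-counts against the Stage~IV union over permutations). The hypothesis $h_{\max}r(\bB^{*})\leq n/8$, which via the triangle inequality gives $rh\leq n/4$ for all competitors, is then used only at the very end to downgrade the $n$-scale exponent into the $rh$-scale bound $T_4(t,h)$; your diagnosis of where that hypothesis bites is correct, but your covering as stated does not deliver $T_4$.

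Two smaller points. In $\calE_7$ the rows of $(\bI-\bPi)\bX$ indexed by $j$ and $\bpi(j)$ share randomness, so they are not ``essentially independent''; the paper partitions the displaced indices into three classes so that no class contains both $j$ and $\bpi(j)$, applies the small-ball bound (Theorem~2.5 of Latala et al.) within each class, and this is where the constant in the exponent $c_0h\varrho(\bB^{*})/5$ comes from (via $h_i\geq\lfloor h/3\rfloor$). And in Stage~I, since the estimator's constraint is anchored at $\bI$ rather than at $\bPi^{*}$, the triangle inequality bounds the relevant Hamming distances by $2h_{\max}$, not $h_{\max}$ as you state --- harmless for the rates, but it is exactly why the proof works with $rh\leq n/4$ downstream of the assumption $h_{\max}r(\bB^{*})\leq n/8$.
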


\noindent
Here we adopt the same proof strategy as in
Theorem~\ref{thm:succ_recover}.
For the sake of brevity, we
only present the parts that
are different compared with the proof of Theorem~\ref{thm:succ_recover}.

\noindent\textbf{Stage I:}
Given the requirement $\dh(\bI; \bPi^{*}) \leq h_{\mathsf{max}}$, the triangle inequality implies that
\[
\dh(\wh{\bPi}; \bPi^{*}) \leq \dh(\bI; \wh{\bPi}) +
\dh(\bI; \bPi^{*}) \leq 2h_{\mathsf{max}}.
\]
Hence, we can confine ourselves to the
case in which $\dh(\bPi; \bPi^{*}) \leq  2h_{\mathsf{max}}$.

\noindent\textbf{Stage II}:
We replace Lemma~\ref{lemma:succ_final_bound} with
Lemma~\ref{lemma:succ_final_bound_refine}.

\noindent\textbf{Stage III}:
We replace Lemma~\ref{lemma:succ_p12_bound} with
Lemma~\ref{lemma:succ_p12_bound_refine}.

\noindent\textbf{Stage IV}:
We use the same argument as Stage IV in proving
Theorem~\ref{thm:succ_recover} and complete the proof.

\subsection{Supporting lemmas}
\begin{lemma}
\label{lemma:succ_final_bound_refine}
Given that $rh \leq n/4$ and $t\leq 0.125h$, we have
$\Psi(\calE_3(h)) \leq T_1(mt, h) + T_2(mt, h) + T_4(t, h) +
6\exp\bracket{-\frac{c_0h\varrho(\bB^{*})}{5}}$,
where $h\geq 2$, $\calE_3$ is defined in~\eqref{eq:event3456},
and $T_1(\cdot, \cdot), T_2(\cdot,\cdot)$ are defined in~\eqref{eq:t123}.
\end{lemma}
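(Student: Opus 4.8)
The plan is to follow the architecture of the proof of Lemma~\ref{lemma:succ_final_bound} almost verbatim for the noise terms, and to inject a sharper estimate only where the weak bound $T_3$ was previously produced. As before I would use the inclusion $\calE_3(h)\subseteq\calE_5(t,h)\cup\calE_6(t,h)$ and split $\Psi(\calE_3(h))\le\Psi(\calE_4(mt,h))+\Psi\bracket{\br{\calE}_4(mt,h)\cap\calE_5(mt,h)}+\Psi\bracket{\br{\calE}_4(mt,h)\cap\calE_6(mt,h)}$, running the parameter at $mt$ so that the threshold in $\calE_4$ becomes $mt\fnorm{\bB^{*}}^2/m=t\fnorm{\bB^{*}}^2$. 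Conditioning on $\bX$, on $\br{\calE}_4(mt,h)$ one has $T_{\bPi}>t\fnorm{\bB^{*}}^2=tm\sigma^2\,\snr$, and the Gaussian and $\chi^2$ tail computations already carried out for $T_1,T_2$ then return $T_1(mt,h)$ and $T_2(mt,h)$. Thus the entire novelty is the estimate $\Psi(\calE_4(mt,h))\le 6\exp\bracket{-c_0h\varrho(\bB^{*})/5}+T_4(t,h)$.

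To prove this I would first record a distributional identity that explains the appearance of $P_{\bX}^{\perp}$ in $\calE_8$. Taking $\bPi^{*}=\bI$ without loss of generality and substituting $\bX=\bPi^{-1}\bX'$ with $\bX'$ i.i.d.\ Gaussian gives $T_{\bPi}=\fnorm{P_{\bPi\bX}^{\perp}\bX\bB^{*}}^2\stackrel{d}{=}\fnorm{P_{\bX'}^{\perp}(\bI-\bPi^{-1})\bX'\bB^{*}}^2$; since $\dh(\bI;\bPi^{-1})=h$ and the final union in Stage~IV runs over the whole class $\{\dh=h\}$, it suffices to bound $\Prob\bracket{\fnorm{P_{\bX}^{\perp}\bD}^2\le t\fnorm{\bB^{*}}^2}$ with $\bD\defequal(\bI-\bPi)\bX\bB^{*}$, which has exactly $h$ nonzero rows (the indices moved by $\bpi$) and whose rows lie in the fixed $r$-dimensional space $\range(\bB^{*\rmt})$. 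On the complement $\br{\calE}_7(h)$ one has $\fnorm{\bD}^2>h\fnorm{\bB^{*}}^2$, so $\fnorm{P_{\bX}^{\perp}\bD}^2\le t\fnorm{\bB^{*}}^2$ forces $\fnorm{P_{\bX}^{\perp}\bD/\fnorm{\bD}}^2<t/h$, which is exactly $\calE_8(t,h)$. Hence $\calE_4(mt,h)\subseteq\calE_7(h)\cup\calE_8(t,h)$ and $\Psi(\calE_4(mt,h))\le\Psi(\calE_7(h))+\Psi(\calE_8(t,h))$.

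The term $\Psi(\calE_7(h))$ is a lower-deviation estimate for $\fnorm{\bD}^2=\sum_{i:\,\bpi(i)\neq i}\ltwonorm{\bB^{*\rmt}(\bX_{i,:}-\bX_{\bpi(i),:})}^2$. Each summand has mean $2\fnorm{\bB^{*}}^2$, and among the $h$ moved indices at least $\lceil h/2\rceil$ yield stochastically independent differences; applying the Hanson--Wright/small-ball bound used in Lemma~\ref{lemma:fail_recover_smallball} (\cite{rudelson2013hanson}) to the event that this sum drops below half its mean returns a bound of the form $6\exp\bracket{-c_0h\varrho(\bB^{*})/5}$, the stable rank $\varrho(\bB^{*})$ quantifying the effective number of degrees of freedom per difference. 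This is precisely where signal diversity enters the argument.

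The hard part is $\Psi(\calE_8(t,h))\le T_4(t,h)$, i.e.\ controlling the lower tail of $\fnorm{P_{\bX}^{\perp}\widehat{\bD}}^2$ for the normalized direction $\widehat{\bD}=\bD/\fnorm{\bD}$; its difficulty is the statistical coupling between the random projection $P_{\bX}^{\perp}$ and $\widehat{\bD}$, both being functions of $\bX$. I would decouple them with an $\varepsilon$-net. For the fixed support $\Lambda$ (of size $h$) and the fixed row-space $\range(\bB^{*\rmt})$ attached to $\bPi$, the admissible directions $\widehat{\bD}$ form a unit sphere inside a space of dimension $\lesssim rh$, which admits a net of cardinality $\exp(O(rh))$; this accounts for the factor $\exp(4.18\,rh)$ in $T_4$. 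For each fixed net element $\bu$, now independent of the projection, the quantity $\fnorm{P_{\bX}^{\perp}\bu}^2$ obeys a $\chi^2$-type lower-tail bound with $\asymp rh$ degrees of freedom, which at the level $t/h\le 1/8$ contributes the factor $\exp\bracket{\tfrac{rh}{2}\bracket{\log(t/h)-t/h+1}}$; the net approximation error is absorbed using $rh\le n/4$, which keeps $P_{\bX}^{\perp}$ well-conditioned on the relevant coordinates and guarantees $n-p\gtrsim n$. The restriction $t\le 0.125h$ ensures $t/h<1$, so the lower-tail regime is the correct one. Multiplying the net cardinality by the per-point tail yields $T_4(t,h)$, and assembling the four contributions gives the stated bound on $\Psi(\calE_3(h))$; the constants $c_0$ and $4.18$ emerge from the small-ball and covering estimates, respectively.
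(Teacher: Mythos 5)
Your overall architecture coincides with the paper's: the same decomposition with the rescaled event $\calE_4(mt,h)$ yielding $T_1(mt,h)$ and $T_2(mt,h)$ for the noise terms, the same containment $\calE_4(mt,h)\subseteq \calE_7(h)\cup\calE_8(t,h)$ via $\fnorm{\Proj_{\bX}^{\perp}\bPi\bX\bB^{*}} = \fnorm{\Proj_{\bX}^{\perp}(\bI-\bPi)\bX\bB^{*}}$ (your explicit relabeling of the permutation class under $\bPi\mapsto\bPi^{-1}$ is a point the paper glosses over, and you handle it correctly), and essentially the same $\varepsilon$-net treatment of $\calE_8$: the paper nets each of the $r$ normalized columns over the $h$-sparse sphere with cardinality $(3/\varepsilon)^h$ and applies the Dasgupta--Gupta lower-tail bound, whose true degrees-of-freedom count is $n-p$, not $\asymp rh$ as you state; but since the exponent bracket is negative and $rh\leq n/4 \leq (n-p)/2$, your weaker phrasing leads to the same $T_4(t,h)$.

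The genuine gap is in your treatment of $\Psi(\calE_7(h))$. First, the combinatorial claim is false: for a permutation made of disjoint $3$-cycles, consecutive differences $\bX_{i,:}-\bX_{\bpi(i),:}$ along a cycle share a row, so a cycle of length $\ell$ contributes at most $\lfloor \ell/2\rfloor$ pairwise independent differences — only $1$ out of $3$ when $\ell=3$ — giving roughly $h/3$, not $\lceil h/2\rceil$, independent differences in general. Second, and more fatally, the threshold bookkeeping fails: if you retain only an independent subset of $k\approx h/2$ differences, each with mean squared norm $2\fnorm{\bB^{*}}^2$, then $\calE_7$ (nonnegativity of the discarded terms) only forces the sub-sum below $h\fnorm{\bB^{*}}^2$, which equals or exceeds the sub-sum's full mean $2k\fnorm{\bB^{*}}^2$; a small-ball or Hanson--Wright bound applied at a threshold equal to the mean gives no exponential decay, since such bounds require the threshold to sit a constant fraction \emph{below} the mean. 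The paper's Lemma~\ref{lemma:event7} avoids both problems by partitioning \emph{all} $h$ moved indices into three groups $I_1,I_2,I_3$ with $|I_i|\geq \lfloor h/3\rfloor$ such that $j$ and $\bpi(j)$ never share a group (so within-group differences are genuinely independent), and then applying pigeonhole: the total being at most $h\fnorm{\bB^{*}}^2=\sum_i h_i\fnorm{\bB^{*}}^2$ forces some group sum below $h_i\fnorm{\bB^{*}}^2$, i.e., below half that group's mean $2h_i\fnorm{\bB^{*}}^2$, where Theorem 2.5 of \cite{latala2007banach} yields $2\exp\bracket{-c_0 h_i\varrho(\bB^{*})}$; a union bound over the three groups together with $h_i\geq \lfloor h/3\rfloor\geq h/5$ produces exactly the stated $6\exp\bracket{-c_0 h\varrho(\bB^{*})/5}$. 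Your sketch needs this covering-partition-plus-pigeonhole step; a single independent subset cannot deliver the bound.
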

\begin{proof}
Similar to the proof of
Lemma~\ref{lemma:succ_final_bound},
we bound $\Psi(\calE_3(h))$ by decomposing it as
\begin{align*}
\Psi(\calE_3(h)) \leq ~&
\Psi(\calE_3(h)\bigcap \br{\calE}_4(mt, h)) + \Psi(\calE_4(mt, h)) \\
\stackrel{\circled{1}}{\leq}~&\Psi(\calE_3(h)\bigcap \br{\calE}_4(mt, h)) +
\Psi(\calE_7(h)) + \Psi(\calE_8(t, h)),
\end{align*}
where $\calE_3(h), \calE_4(mt, h)$ are defined in~\eqref{eq:event3456},
$\circled{1}$ is due to
\begin{align*}
\calE_4(mt) = \set{\fnorm{\Proj_{\bX}^{\perp}\bPi \bX \bB^{*}}^2\leq t \fnorm{\bB^{*}}^2,\dh(\bI; \bPi) = h},
\end{align*}
event $\set{\fnorm{(\bI-\bPi)\bX\bB^{*}} = 0}$ being with measure zero,
and the relation $\fnorm{\Proj_{\bX}^{\perp}\bPi \bX \bB^{*}} = \fnorm{\Proj_{\bX}^{\perp}(\bI - \bPi)\bX\bB^{*}}$.

Using the same argument as in
proving Lemma.~\ref{lemma:succ_final_bound}, we can prove
$
\Psi(\calE_3(h)\bigcap \br{\calE}_4(mt, h))  
\leq T_1(mt, h) + T_2(mt, h).
$
For the clarify of presentation, we leave
the proof for $\Psi(\calE_7(h))$ and
$\Psi(\calE_8(t, h))$ to Lemma~\ref{lemma:event7}
and Lemma~\ref{lemma:event8}, respectively.
The proof is completed by summarizing the upper upper-bounds.
\end{proof}

\begin{lemma}
\label{lemma:succ_p12_bound_refine}
Given that $\snr > 26.2$,
$rh \leq n/4$, $t\leq 0.125h$, $\varrho(\bB^{*})\geq 5(1+\epsilon)\log n/c_0$,
and
\[
\log\bracket{\snr}\geq \
\frac{288(1+\epsilon)\log n}{\varrho(\bB^{*})} + 33.44,
\]
we have one positive number $0 < t < 0.125h$ such that
\begin{align*}
T_1(mt, h) + T_2(mt, h) + T_4(t, h)+ 6\exp\bracket{-\frac{c_0 h\varrho(\bB^{*})}{5}}
\leq
10 n^{-(1 + \epsilon)h},
\end{align*}	
where $c_0, \epsilon > 0$ are positive constants.
\end{lemma}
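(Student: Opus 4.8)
The plan is to follow the template of Lemma~\ref{lemma:succ_p12_bound}, but adapted to the refined decomposition of Lemma~\ref{lemma:succ_final_bound_refine}: it suffices to exhibit one value $t \in (0, 0.125h)$ making each of the four summands $T_1(mt,h)$, $T_2(mt,h)$, $T_4(t,h)$ and $6\exp(-c_0 h\varrho(\bB^*)/5)$ at most a constant multiple of $n^{-(1+\epsilon)h}$, so that their sum is $\le 10\,n^{-(1+\epsilon)h}$. I would take
\[
t = \frac{h\log(\snr)}{\snr}, \qquad \text{equivalently} \qquad z := \frac{t}{h} = \frac{\log(\snr)}{\snr}.
\]
Because $x\mapsto(\log x)/x$ is decreasing for $x>e$ and equals $\approx 0.1247$ at $x=26.2$, condition $(i)$ ($\snr>26.2$) forces $0<z<0.125$, i.e.\ $0<t<0.125h$, so the hypotheses of Lemma~\ref{lemma:succ_final_bound_refine} hold (its remaining requirement $rh\le n/4$ follows from $\dh(\bPi;\bPi^*)\le 2h_{\max}$ together with $h_{\max}r(\bB^*)\le n/8$, as established in Stage~I).

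The last summand is immediate: condition $(ii)$ gives $c_0\varrho(\bB^*)/5 \ge (1+\epsilon)\log n$, whence $6\exp(-c_0 h\varrho(\bB^*)/5)\le 6\,n^{-(1+\epsilon)h}$. For $T_1$ and $T_2$ I would exploit the structural fact that the stable rank never exceeds the rank, which never exceeds $m$, so $m\ge\varrho(\bB^*)$. With the chosen $t$ one has $mt\,\snr = mh\log(\snr)$ and $m t^2\snr^2/h = mh\log^2(\snr)\ge mt\,\snr$ (since $\log\snr>1$), so the minimum inside $T_2$ is $mt\,\snr$ and both $T_1(mt,h)$ and $T_2(mt,h)$ collapse to multiples of $\exp(-c\,mh\log\snr)$. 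Invoking condition $(iii)$ and $m\ge\varrho(\bB^*)$ yields $m\log(\snr)\ge\varrho(\bB^*)\log(\snr)\ge 288(1+\epsilon)\log n$, which makes $T_1(mt,h)\le n^{-(1+\epsilon)h}$ and $T_2(mt,h)\le 2\,n^{-(1+\epsilon)h}$.

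The delicate term is $T_4(t,h)=\exp\big(\tfrac{rh}{2}(\log z - z + 1) + 4.18\,rh\big)$, whose unusual shape stems from the $\chi^2$/sparse $\varepsilon$-net tail bound underlying Lemma~\ref{lemma:event8}. Here I would use $\log z - z + 1 \le \log z + 1$ and reduce the desired inequality $T_4(t,h)\le n^{-(1+\epsilon)h}$ to $\log z \le -2(1+\epsilon)(\log n)/r - 9.36$. Substituting $\log z = \log\log(\snr)-\log(\snr)$, this is equivalent to $\log(\snr)-\log\log(\snr)\ge 2(1+\epsilon)(\log n)/r + 9.36$, which I would dispatch via the elementary bound $x-\log x\ge x/2$ for $x\ge 33.44$ applied to $x=\log(\snr)$, followed by condition $(iii)$ and the inequality $\varrho(\bB^*)\le r(\bB^*)$. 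Summing the four estimates gives $\le(1+2+1+6)\,n^{-(1+\epsilon)h}=10\,n^{-(1+\epsilon)h}$.

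The main obstacle is the $T_4$ analysis and, more conceptually, verifying that the feasibility window for $t$ is non-empty: $T_1$ and $T_2$ push $z$ upward while $T_4$ pushes $z$ downward, and it is precisely condition $(iii)$—the quantitative expression of the benefit of a large stable rank—that reconciles these opposing demands. The bookkeeping requiring the most care is the repeated passage between the rank $r(\bB^*)$ appearing in $T_4$ and the stable rank $\varrho(\bB^*)$ appearing in the hypotheses, which rests throughout on $\varrho(\bB^*)\le r(\bB^*)$.
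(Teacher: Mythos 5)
Your proposal is correct and takes essentially the same route as the paper's proof: the identical choice $t = h\log(\snr)/\snr$ (with $t < 0.125h$ checked via $\snr > 26.2$), the same use of $\varrho(\bB^{*}) \leq r(\bB^{*}) \leq m$ together with condition (iii) to make $T_1(mt,h) \leq n^{-(1+\epsilon)h}$ and $T_2(mt,h) \leq 2n^{-(1+\epsilon)h}$, the same handling of the fourth term via condition (ii), and the same $1+2+1+6=10$ tally. The only (immaterial) difference is the elementary inequality in the $T_4$ step: you drop the $-t/h$ term and invoke $x - \log x \geq x/2$ at $x = \log(\snr) \geq 33.44$, whereas the paper uses $\frac{\log z}{z} - 1 - \log\frac{\log z}{z} \geq \frac{\log z}{4}$ for $z = \snr \geq 1.5$ to reach $T_4 \leq \exp\bracket{-\frac{rh}{8}\log(\snr) + 4.18\,rh}$ — both manipulations are equivalent in substance and yield the same conclusion.
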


\begin{proof}
We complete the proof by choosing
$t ={h \log\bracket{\snr}}/{\snr}$.
Note that if $\snr > 26.2$, we have
$t < 0.125h$.
Given~\eqref{eq:snr_refine_cond},
we have
\begin{equation}
\label{eq:snr_log_inequal}
\log\bracket{\snr}\geq
\frac{288(1+\epsilon)\log n}{\varrho(\bB^{*})}
\stackrel{\circled{1}}{\geq}
\frac{288(1+\epsilon)\log n}{m},
\end{equation}
where in $\circled{1}$ we use $\varrho^{*}(\bB) \leq r(\bB^{*}) \leq m$.
First we verify
\begin{equation}
\label{eq:succ_prac_t1_refine_bound}
e^{-{c_0h\varrho(\bB^{*})}/{5}}\leq n^{-(1+\epsilon)h},
\end{equation}
if $\varrho(\bB^{*})$ satisfies
$\varrho(\bB^{*}) \geq 5(1 + \epsilon)\log n/c_0$.
In the sequel we will separately bound the
terms.

\noindent\textbf{Term $T_1(mt, h)$:}
We have
\begin{align}
\exp\bracket{-\frac{mt\times \snr}{72} } =
\exp\bracket{-\frac{mh}{72}\log(\snr)} \stackrel{\circled{2}}{\leq} n^{-\bracket{1+\epsilon}h},
\label{eq:succ_prac_t4_refine_bound}
\end{align}
where in $\circled{2}$ we use
\eqref{eq:snr_log_inequal}.

\noindent\textbf{Term $T_2(mt, h)$:}
Since we have $\snr \geq  26.2$, we obtain
$\bracket{\frac{mt^2\times \snr^2}{h}\vcap \bracket{mt\times \snr}}
\geq mh\log(\snr)$.
We then have
\begin{align}
T_2(mt, h)\leq2\exp\bracket{-\frac{mh}{288}\times \log(\snr)}
\stackrel{\circled{3}}{\leq} 2n^{-\bracket{1+\epsilon}h},
\label{eq:succ_prac_t3_refine_bound}
\end{align}
where in $\circled{3}$ we use~\eqref{eq:snr_log_inequal}.

\noindent\textbf{Term $T_4(t, h)$:}
We have
\begin{align}
T_4(t, h) \stackrel{\circled{4}}{\leq} \exp\bracket{-\frac{rh}{8}\log\bracket{\snr} +
4.18rh} \stackrel{\circled{5}}{\leq} n^{-\bracket{1+\epsilon}h},
\label{eq:succ_prac_t2_refine_bound}
\end{align}
where in $\circled{4}$ we use
$\frac{\log z}{z} - 1 - \log \frac{\log z}{z} \geq
\frac{\log z}{4}$,
for $z\geq 1.5$,
and in $\circled{5}$ we use
the assumption such that
\begin{align*}
\log\bracket{\snr}\geq \
\frac{8(1+\epsilon)\log n}{\varrho(\bB^{*})} + 33.44.
\end{align*}
We finish the
proof by combining~\eqref{eq:succ_prac_t1_refine_bound},
\eqref{eq:succ_prac_t4_refine_bound},~\eqref{eq:succ_prac_t3_refine_bound},
and\eqref{eq:succ_prac_t2_refine_bound}.
\end{proof}

\begin{lemma}
\label{lemma:event7}
We bound $\Psi(\calE_7(h)) \leq 6\exp\bracket{-\frac{c_0h\varrho(\bB^{*})}{5}}$
for $2\leq h \leq n$. 	
\end{lemma}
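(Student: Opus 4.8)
The plan is to recognize $\Psi(\calE_7(h))$, for a fixed permutation $\bPi$ with $\dh(\bI;\bPi)=h$, as a small-ball probability and to control it via the large stable rank of a Kronecker-structured matrix. Since the event $\set{\fnorm{(\bI-\bPi)\bX\bB^{*}}^2 = 0}$ has measure zero, it suffices to bound $\Prob\bracket{\fnorm{(\bI-\bPi)\bX\bB^{*}}^2 \leq h\fnorm{\bB^{*}}^2}$. First I would vectorize, using the identity $\textup{vec}\bracket{(\bI-\bPi)\bX\bB^{*}} = \bA\,\textup{vec}\bracket{\bX}$ with $\bA \defequal \bB^{*\rmt}\otimes(\bI-\bPi)$, so that $\fnorm{(\bI-\bPi)\bX\bB^{*}}^2 = \ltwonorm{\bA\,\textup{vec}\bracket{\bX}}^2$ where $\textup{vec}\bracket{\bX}\sim \normdist(\bZero,\bI_{np})$ is an isotropic Gaussian vector.

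Next I would compute the two quantities feeding the small-ball bound. Using multiplicativity of the Frobenius and operator norms under the Kronecker product, together with $\fnorm{\bI-\bPi}^2 = 2h$ (each of the $h$ moved rows of $\bI-\bPi$ contributes a $+1$ and a $-1$) and $\opnorm{\bI-\bPi}\leq \opnorm{\bI}+\opnorm{\bPi} = 2$, I obtain $\fnorm{\bA}^2 = 2h\fnorm{\bB^{*}}^2$ and a lower bound on the stable rank
\[
\varrho(\bA) = \varrho(\bB^{*})\,\varrho(\bI-\bPi) = \varrho(\bB^{*})\,\frac{2h}{\opnorm{\bI-\bPi}^2} \geq \frac{h}{2}\,\varrho(\bB^{*}).
\]
The crucial point is that the stable rank of $\bA$ grows linearly in both $h$ and $\varrho(\bB^{*})$, which is exactly what will produce the exponent $h\varrho(\bB^{*})$.

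I would then invoke the small-ball estimate of Latala et al.\ (the same Lemma~2.6 of \cite{latala2007banach} used in Lemma~\ref{lemma:oracle_succ_optim_small_ball}) applied to $\bA$ and the isotropic Gaussian $\bZ = \textup{vec}\bracket{\bX}$ at radius $s = \sqrt{h}\,\fnorm{\bB^{*}}$. Since $s/\fnorm{\bA} = 1/\sqrt{2}$, this yields a bound of the form $\exp\bracket{-\kappa\,\varrho(\bA)\log(\fnorm{\bA}/s)} = \exp\bracket{-\tfrac{\kappa\log 2}{2}\,\varrho(\bA)}$, and plugging in $\varrho(\bA)\geq \tfrac{h}{2}\varrho(\bB^{*})$ gives $\exp\bracket{-\tfrac{\kappa\log 2}{4}\,h\varrho(\bB^{*})}$. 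Choosing the universal constant $c_0$ so that $c_0/5 \leq \kappa\log 2/4$ and absorbing any multiplicative prefactor of the small-ball lemma into the factor $6$ then delivers $\Psi(\calE_7(h)) \leq 6\exp\bracket{-c_0 h\varrho(\bB^{*})/5}$.

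The main obstacle is the applicability of the small-ball lemma at the radius ratio $s/\fnorm{\bA} = 1/\sqrt{2}$, which is not ``small'': the threshold in $\Prob\bracket{\ltwonorm{\bA\bZ}\leq s}$ is fixed at $h\fnorm{\bB^{*}}^2$, precisely half the mean $\fnorm{\bA}^2 = 2h\fnorm{\bB^{*}}^2$, so there is no freedom to push the ratio below the universal validity constant $\alpha_0$ appearing in Lemma~\ref{lemma:oracle_succ_optim_small_ball}. I would therefore need to verify that the restated version of Latala's lemma remains valid in this moderate-deviation regime --- equivalently, that the relevant universal constant exceeds $1/\sqrt{2}$ --- and it is here that the precise prefactor $6$ and the precise value of $c_0$ get pinned down. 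Everything else (the vectorization identity, the Kronecker norm computations, and the exclusion of the null event) is routine.
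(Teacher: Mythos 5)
Your proposal is correct in substance but takes a genuinely different route from the paper's proof. The paper first rotates out the right singular vectors via $\textup{SVD}(\bB^{*})$, then splits into the cases $h=2$ and $h\geq 3$; for $h\geq 3$ it partitions the displaced indices into three sets $I_1,I_2,I_3$ with $|I_i|\geq \lfloor h/3\rfloor$ such that $j$ and $\bpi(j)$ never share a set (the device from Lemma~8 of \cite{pananjady2016linear}), so that within each set the difference rows $(\wt{X}_{j,:}-\wt{X}_{\bpi(j),:})/\sqrt{2}$ are i.i.d.\ Gaussian, and then applies Theorem~2.5 of \cite{latala2007banach} to the block matrix $\diag\bracket{\bSigma,\dots,\bSigma}$ of stable rank $h_i\varrho(\bB^{*})$; the prefactor $6=3\times 2$ and the $1/5$ in the exponent (from $\lfloor h/3\rfloor \geq h/5$) are artifacts of this partition. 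Your vectorization $\textup{vec}\bracket{(\bI-\bPi)\bX\bB^{*}} = \bracket{\bB^{*\rmt}\otimes(\bI-\bPi)}\textup{vec}\bracket{\bX}$, with $\fnorm{\bI-\bPi}^2=2h$, $\opnorm{\bI-\bPi}\leq 2$ and hence $\varrho(\bA)\geq \tfrac{h}{2}\varrho(\bB^{*})$ by exact multiplicativity of both norms under the Kronecker product, reaches the same exponent in one stroke: no case split, no combinatorial partition, no appeal to independence across difference vectors (the Kronecker stable rank absorbs the row correlations), and a slightly better constant ($h/2$ in place of $h/5$). All the routine computations in your write-up check out.

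The obstacle you flag is real as stated — the threshold sits at exactly $\fnorm{\bA}^2/2$, so the ratio $1/\sqrt{2}$ cannot be pushed below the unspecified $\alpha_0$ of Lemma~\ref{lemma:subgauss_proj} — but two remarks defuse it. First, the paper's own proof invokes the small-ball bound at the identical ratio: its event $\la \wt{\bz},\bSigma^2\wt{\bz}\ra \leq \fnorm{\bB^{*}}^2/2$ is $\ltwonorm{\bSigma\wt{\bz}}\leq \fnorm{\bSigma}/\sqrt{2}$, whereas the restated Theorem~2.5 (Lemma~\ref{lemma:gauss_proj}) licenses only radius $\fnorm{\bA}/2$; so the citation is applied loosely there too. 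Second, and more importantly, at a constant fraction of the mean you do not need anti-concentration at all: by rotational invariance $\ltwonorm{\bA\bg}^2$ has the law of $\sum_i s_i^2 g_i^2$ with mean $\fnorm{\bA}^2$, and the standard lower tail for weighted $\chi^2$ variables (Laurent--Massart, i.e., the weighted analogue of Lemma~\ref{lemma:appendix_chi_square}) yields $\Prob\bracket{\ltwonorm{\bA\bg}^2 \leq \fnorm{\bA}^2/2} \leq \exp\bracket{-\fnorm{\bA}^4/(16\textstyle\sum_i s_i^4)} \leq \exp\bracket{-\varrho(\bA)/16}$, using $\sum_i s_i^4 \leq \opnorm{\bA}^2\fnorm{\bA}^2$. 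Substituting this concentration step for the small-ball step closes your gap (and would equally repair the paper's $h=2$ computation), giving $\Psi(\calE_7(h))\leq \exp\bracket{-h\varrho(\bB^{*})/32}$, from which the stated bound with prefactor $6$ follows by renaming the universal constant $c_0$; the small-ball machinery is genuinely needed only at small radii, as in Lemma~\ref{lemma:oracle_succ_optim_small_ball}.
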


\begin{proof}
With $\textup{SVD}\bracket{\bB^{*}} = \bU \bSigma \bV^{\rmt}$,
we first verify
$\fnorm{\bracket{\bI-\bPi}\bX\bB^{*}}=
\fnorm{\bracket{\bI-\bPi}\wt{\bX}\bSigma}$,
where $\wt{\bX} \defequal \bX \bU$. Due to the rotational invariance
of the Gaussian distribution, $\wt{\bX}$ has the same distribution $\bX$.
We separately consider the cases where $h = 2$ and $h\geq 3$.

For $h = 2$, we assume w.l.o.g.~that the first row and second row are permuted.
Then we have
\begin{align}
\Prob\bracket{\|(\bI-\bPi)\wt{\bX}\bSigma\|_{\textup{F}}^2 \leq 2\fnorm{\bB^{*}}^2 } 
=~&\Prob\Bracket{2\sum_{i=1}^r \beta_i^2 \bracket{\wt{X}_{1,i} - \wt{X}_{2,i}}^2
\leq 2\bracket{\sum_{i=1}^r \beta_i^2}} \notag \\
\stackrel{\circled{1}}{=}~&\
\Prob\Bracket{\sum_{i=1}^r \beta_i^2 \wt{z}^2_{1,i}
\leq \frac{\sum_{i=1}^r \beta_i^2}{2}}\notag \\
\stackrel{\circled{2}}{=}~&
\Prob\Bracket{ \la \wt{\bz},~\bSigma^2 \wt{\bz}\ra
\leq \frac{\sum_{i=1}^r \beta_i^2}{2}} \stackrel{\circled{3}}{\leq}
2\exp\bracket{-c_0\varrho(\bB^{*})},
\label{eq:succ_prac_h2_dev_bound}
\end{align}
where $\bSigma = \diag\bracket{\beta_1, \cdots, \beta_r, 0, \cdots}$,
$\beta_i$ denotes the
$i^{\mathsf{th}}$ singular values of $\bB^{*}$,
$\wt{X}_{i,j}$ denotes the
$(i, j)$ element of $\wt{\bX}$,
in $\circled{1}$ we define
$\wt{z}_{1, i}= (\wt{X}_{1, i} - \wt{X}_{2, i})/\sqrt{2}$,
in $\circled{2}$ we define $\wt{\bz}$ as the
vectorized version, and
$\Expc\la \wt{\bz}, \bSigma^2 \wt{\bz}\ra = \sum_{i=1}^r \beta_i^2$,
and in $\circled{3}$ we use Theorem~2.5 in~\cite{latala2007banach}
(c.f.~also Appendix~\ref{sec:appendix_prob}) and
$c_0$ is the corresponding constant.

Then we consider the case where $h\geq 3$, by studying the index set $I \defequal \set{j:\bpi(j)\neq j}$,
where $\bpi(\cdot)$ is the permutation corresponding to the permutation matrix
$\bPi$. 
Adopting the same argument as in Lemma~8 in~\cite{pananjady2016linear},
we  decompose the index set $I$
into $3$ subsets $\set{I_1, I_2, I_3}$, such that
\begin{itemize}
\item $\sum_{i=1}^3 |I_i| = h$ with
$|I_i|\geq \lfloor h/3\rfloor, 1\leq i \leq 3$.
\item For arbitrary $j$, the indices $j$ and $\bpi(j)$ will not
be in the same index set $I_i$, $(1\leq i \leq 3)$ at the same time.
\end{itemize}

We define a matrix $\wt{\bZ}_i$
which consists of the rows
$(\bI-\bPi)\wt{\bX}\bSigma$ corresponding to
indices in $I_i$.
Accordingly, we can verify that $\|(\bI-\bPi)\wt{\bX}\bSigma\|_{\textup{F}}^2 =
\sum_{i=1}^3\|\wt{\bZ}_i\|_{\textup{F}}^2$. 
Let $h_i$ denote the corresponding cardinality of $|I_i|$, $i = 1, 2, 3$. We have
\begin{align*}
\Prob\bracket{\fnorm{(\bI-\bPi)\bX\bB^{*}}^2 \hspace{-0.2mm} \leq \hspace{-0.2mm} h\fnorm{\bB^{*}}^2} \hspace{-0.5mm}\leq \sum_{i=1}^3 \Prob\bracket{\|\wt{\bZ}_i\|_{\textup{F}}^2 \hspace{-0.2mm}\leq \hspace{-0.2mm} h_i \fnorm{\bB}^2}.
\end{align*}
In the sequel, we bound
$\Prob\bracket{\|\wt{\bZ}_1\|_{\textup{F}}^2 \leq h_1 \fnorm{\bB}^2}$;
the other two probabilities can be bounded similarly.
Since $j$ and $\bpi(j)$ cannot be in $I_1$ simultaneously,
we define
$\wt{z}_{j,k} = (\wt{X}_{j,k} - \wt{X}_{\bpi(j),k})/\sqrt{2}$, $j\in I_1, 1\leq k \leq r$,
and can treat the $\{ \wt{z}_{jk} \}$ as independent
$\normdist(0, 1)$-RVs. Similar to the case $h=2$,
we have
\begin{align}
\Prob\bracket{\|\wt{\bZ}_1\|_{\textup{F}}^2 \leq h_1 \fnorm{\bB}^2} 
\stackrel{\circled{4}}{=}~& \Prob\Bracket{\big\langle \wt{\bz}, \diag(\bSigma^2, \cdots, \bSigma^2) \wt{\bz}\big\rangle \
\stackrel{\circled{5}}{\leq} \frac{h_1(\sum_{i=1}^r \beta_i^2)}{2} } \notag \\
\stackrel{\circled{6}}{\leq}~&2\exp\bracket{-c_0h_1\varrho(\bB^{*})} \stackrel{\circled{7}}{\leq} \
2\exp\bracket{-\frac{c_0h\varrho(\bB^{*})}{5}} \label{eq:succ_prac_hmany_dev_bound},
\end{align}
where the diagonal matrix $ \diag(\bSigma^2, \cdots, \bSigma^2)$ in $\circled{4}$
consists of $h_1$ terms,
in $\circled{5}$ we define $\wt{\bz}$ as the vectorization of $\wt{\bZ}_1$,
in $\circled{6}$ we use Theorem~$2.5$ in~\cite{latala2007banach} (also listed in Appendix~\ref{sec:appendix_prob}),
and in $\circled{7}$ we use the fact $h_i\geq \lfloor h/3 \rfloor$.
We hence bound $\Psi(\calE_7(h))$ by
combing the above cases in~\eqref{eq:succ_prac_h2_dev_bound} and
\eqref{eq:succ_prac_hmany_dev_bound}.
\end{proof}

\begin{lemma}
\label{lemma:event8}	
We bound $\Psi(\calE_8(t, h)) \leq  \exp\bracket{\frac{rh}{2}\bracket{\log\bracket{\frac{t}{h} } - \frac{t}{h} + 1} + 4.18rh}$
for $rh\leq \frac{n}{4}$ and $t\leq 0.125h$.
\end{lemma}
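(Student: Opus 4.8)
The plan is to exploit the sparse row support of $(\bI-\bPi)\bX\bB^{*}$ and then to break the statistical coupling between the matrix being projected and the projector. Fix a $\bPi$ with $\dh(\bI;\bPi)=h$ and set $T\defequal\set{i:\bpi(i)\neq i}$, so that $|T|=h$ and every row of $\bA\defequal(\bI-\bPi)\bX\bB^{*}$ outside $T$ vanishes; writing $\bA_T$ for the $h\times m$ block on the rows $T$ we have $\bA_T=(\bI_{h}-\bPi_T)\bX_T\bB^{*}$, where $\bPi_T$ is the induced derangement of $T$ and $\bX_T$ is the corresponding $h\times p$ Gaussian block. As in the treatment of $\calE_7$ in Lemma~\ref{lemma:event7}, I would first use $\textup{SVD}(\bB^{*})=\bU\bSigma\bV^{\rmt}$ and the rotational invariance of the Gaussian law to replace $\bX$ by $\wt{\bX}=\bX\bU$ (same distribution) and to absorb $\bV$ into the column labels; this exposes only the $r=r(\bB^{*})$ nonzero singular directions, so that $\bA_T$ carries $rh$ effective Gaussian degrees of freedom while $\Proj_{\bX}^{\perp}=\Proj_{\wt{\bX}}^{\perp}$ is unchanged. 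Because $\bA$ is supported on the rows $T$, the projection collapses to the principal submatrix
\[
\fnorm{\Proj_{\bX}^{\perp}\bA}^{2}=\trace\bracket{\bA_T^{\rmt}\bG\,\bA_T},\qquad \bG\defequal\bracket{\Proj_{\bX}^{\perp}}_{TT}\preceq\bI_{h}.
\]

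Next I would convert the normalized ratio defining $\calE_8(t,h)$ into a small-ball statement. Since $\bA_T$ has rank at most $r$, its column space is an at-most-$r$-dimensional subspace of the $h$-dimensional coordinate space spanned by $\set{\be_i:i\in T}$, and $\trace(\bA_T^{\rmt}\bG\bA_T)/\fnorm{\bA_T}^{2}$ is controlled from below by the action of $\bG$ on the directions spanning that subspace. Thus $\calE_8(t,h)$ is contained in the event that some $h$-sparse, rank-$\le r$ unit direction supported on $T$ is almost entirely absorbed into $\range(\bX)$, i.e.\ that $\bG$ is nearly singular on that low-dimensional subspace.

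The main obstacle is that both $\bA_T$ and $\bG$ are functions of the same $\bX$, so one cannot apply a single small-ball bound to a fixed direction. I would resolve this exactly as flagged in the proof outline: pass to a deterministic $\varepsilon$-covering of the set of $h$-sparse unit vectors supported on $T$ (the $h$-sparsity being a consequence of $\dh(\bI;\bPi)\le h$), refined to the rank-$\le r$ frames that can span the column space of $\bA_T$. Such a net has log-cardinality of order $rh$, and for a fixed net element — now independent of the randomness producing $\bG$ — the relevant quantity is a (ratio of) $\chi^{2}$ variables; a standard $\chi^{2}$ lower-tail estimate of the same flavor as Lemma~\ref{lemma:appendix_chi_square} bounds the probability that it drops below $t/h$ by a factor of the form $\exp\bracket{\tfrac{rh}{2}\bracket{\log(t/h)-t/h+1}}$, the $O(\varepsilon)$ discretization slack being absorbed using $\bG\preceq\bI_{h}$. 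The hypotheses enter precisely here: $rh\le n/4$ keeps $\range(\bX)$ and the coordinate subspace on $T$ in general position, so the effective number of degrees of freedom is at least $rh$ and the displayed $\chi^{2}_{rh}$ tail is a valid (if loose) upper bound, while $t\le0.125h$ guarantees $t/h<1$, hence $\log(t/h)-t/h+1<0$ and a genuinely decaying tail.

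Finally, a union bound over the net multiplies the per-element $\chi^{2}$ tail by the covering cardinality, yielding
\[
\Psi(\calE_8(t,h))\ \leq\ \exp\bracket{\frac{rh}{2}\bracket{\log\bracket{\frac{t}{h}}-\frac{t}{h}+1}+4.18\,rh}\ =\ T_4(t,h),
\]
as claimed. The two steps I expect to demand the most care are (i) organizing the covering of the $h$-sparse, rank-$\le r$ directions so that its log-cardinality stays within $4.18\,rh$ rather than a larger multiple of $rh$, and (ii) certifying that, after discretization, the per-direction small-ball probability may be charged at the $\chi^{2}_{rh}$ rate; both difficulties stem from the coupling between $\bA_T$ and $\bG$, which the net (or, equivalently, conditioning on the block $\bX_T$ so as to freeze $\bA_T$ before exposing the fresh randomness in the remaining rows) is designed to absorb.
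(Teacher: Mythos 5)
Your overall architecture---exploiting the $h$-sparse support forced by $\dh(\bI;\bPi)=h$, reducing via the SVD and rotational invariance to $r$ effective directions, discretizing the sparse unit vectors with an $\varepsilon$-net to break the coupling between the projected directions and $\Proj_{\bX}^{\perp}$, and finishing with a union bound---is exactly the paper's strategy. However, your central per-point estimate is both unjustified and quantitatively insufficient, and this is a genuine gap. For a fixed net element $\btheta_0$, the quantity $\norm{\Proj_{\bX}^{\perp}\btheta_0}{2}^2$ is the squared norm of the projection of a \emph{fixed} unit vector onto a uniformly random $(n-p)$-dimensional subspace; its lower tail decays at rate $\frac{n-p}{2}\geq\frac{n}{4}$, which is what the paper uses via~\cite{dasgupta2003elementary}: $\Prob\bracket{\norm{\Proj_{\bX}^{\perp}\btheta_0}{2}^2\leq \alpha(n-p)/n}\leq\exp\bracket{\frac{n-p}{2}\bracket{\log\alpha-\alpha+1}}$. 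It does not decay at a ``$\chi^2_{rh}$ rate'': $rh$ is not the number of degrees of freedom of anything in this per-point computation, and the hypothesis $rh\leq n/4$ does not enter as a ``general position'' condition---it is used only at the very end to downgrade the $\frac{n}{4}$-rate exponent to an $rh$-rate exponent so that the final bound can be stated in terms of $rh$ (and that downgrade requires $\log(8t/h)\leq 0$, which is where $t\leq 0.125h$ actually earns its keep, beyond merely giving $t/h<1$).

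The rate matters because the net is not free: covering the $h$-dimensional sphere at resolution $\varepsilon=\sqrt{t/h}$ costs $\log|\calN_{\varepsilon}|\leq h\log 3+\frac{h}{2}\log(h/t)$. If the per-point tail were only $\exp\bracket{\frac{rh}{2}\bracket{\log(t/h)-t/h+1}}$ as you assert, the union bound would leave $\exp\bracket{\frac{(r-1)h}{2}\log(t/h)+O(rh)}$: for $r=1$ the $\log(t/h)$ decay is wiped out entirely, and in general you fall short of $T_4(t,h)$ by a factor $\exp\bracket{\frac{h}{2}\log(h/t)}$ that cannot be absorbed into the additive $4.18rh$ term since $\log(h/t)$ is unbounded. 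With the paper's rate $\frac{n}{4}\geq rh$, one retains $(rh-\frac{h}{2})\log(t/h)\leq\frac{rh}{2}\log(t/h)$ after paying the net entropy, which is precisely how the claimed exponent arises. Two smaller points: the paper avoids your net of rank-$\leq r$ frames (log-cardinality $O(rh\log(h/t))$, even more expensive) by first union-bounding over the $r$ columns of the normalized matrix $\bTheta$ and netting a single $h$-dimensional sphere, paying only a factor $r$ absorbed via $rh\geq 2\log r$; and your parenthetical alternative of conditioning on $\bX_T$ does not decouple anything, since $\Proj_{\bX}^{\perp}$ depends on $\bX_T$ as well---this coupling is exactly why the uniform bound over deterministic net directions is needed.
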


\begin{proof}
For ease of notation, we define
$\bTheta = (\bI-\bPi)\bX\bB^{*}/\fnorm{(\bI-\bPi)\bX\bB^{*}}$.
Then the probability of the event $\calE_8$ can
be bounded as
\begin{align*}
\Psi(\calE_8)
= &~\
\Prob\bracket{\fnorm{\Proj_X^{\perp}\bTheta}^2  <
\frac{t}{h}\fnorm{\bTheta}^2}
\stackrel{}{=}
\Prob\bracket{\sum_{i=1}^r \fnorm{\Proj_{ \bX}^{\perp} \bTheta_{:, i}}^2 \leq
\frac{t}{h}\fnorm{\bTheta_{:, i}}^2 } \\
\leq &~\sum_{i=1}^r \Prob\bracket{\fnorm{\Proj_{\bX}^{\perp} \bTheta_{:, i} }^2 \leq
\frac{t}{h}\fnorm{\bTheta_{:, i}}^2 }
\stackrel{\circled{1}}{=}\sum_{i=1}^r \Prob\bracket{\norm{\Proj_{ \bX}^{\perp}\btheta_i}{2}^2 \leq \frac{t}{h} },
\end{align*}
where in $\circled{1}$ we define
$\btheta_i$ as the normalized version of $\bTheta_{:, i}$, namely,
$\bTheta_{:, i}/\norm{\bTheta_{:, i}}{2}$. Here, we define the set $\Theta_{h}$ by
\begin{align*}
\Theta_{h} = \
\big\{\btheta \in \mathbb{R}^{n}~|~\norm{\btheta}{2} = 1,
\btheta \; \textup{has at most $h$ non-zero elements}\big\}.
\end{align*}

We can verify that $\btheta_i \in \Theta_{h}$ for $1\leq i \leq r$,
since $\dh(\bI; \bPi) = h \geq 2$.
Before delving into detailed calculations,
we first  summarize our proof strategy:

\begin{itemize}
\item
\noindent\textbf{Step I}:
We cover the set $\Theta_{h}$ with an $\varepsilon$-net $\+N_{\varepsilon}$
such that for arbitrary $\btheta \in \Theta_{h}$, there
exists a $\btheta_0 \in \+N_{\varepsilon}$ such that
$\norm{\btheta_0 - \btheta}{2} \leq \varepsilon$.
\item
\noindent\textbf{Step II}:
Setting $\varepsilon = \sqrt{{t}/{h}}$,
we define events $\calE_{\Theta}$ and $\calE_{\+N_{\varepsilon}}$ by
\begin{align*}
\calE_{\Theta} &\defequal
\set{\exists~\btheta \in \Theta_h~\St~\norm{\Proj_{\bX}^{\perp}\btheta}{2} < \varepsilon = \sqrt{{t}/{h}}}  \\
\calE_{\+N_{\varepsilon}} &\defequal
\set{\exists~\btheta_0 \in \+N_{\varepsilon}~\St~\norm{\Proj_{\bX}^{\perp} \btheta_0}{2} < 2\varepsilon = 2\sqrt{{t}/{h}}}.
\end{align*}
Then we will prove
\begin{align*}
&\Prob\bracket{\fnorm{\Proj_{\bX}^{\perp}\frac{(\bI-\bPi)\bX\bB^{*}}{\fnorm{(\bI - \bPi)\bX\bB^{*}}}}^2 \
< \frac{t}{h}} 
\leq~ r\Prob\bracket{\calE_{\Theta}} \leq
r\Prob\bracket{\calE_{\+N_{\varepsilon}}}.
\notag
\end{align*}
\item
\noindent\textbf{Step III}:
We consider an arbitrary fixed element $\btheta_0 \in \+N_{\varepsilon}$,
and study
$\Prob\bracket{\norm{\Proj^{\perp}_{\bX}\btheta_0}{2} \leq 2\varepsilon}$.
Adopting the union bound
\begin{align*}
\Prob\bracket{\calE_{\+N_{\varepsilon}}} \leq
|\+N_{\varepsilon}|\times
\Prob\bracket{\norm{\Proj^{\perp}_{\bX}\btheta_0}{2}\leq 2\varepsilon},
\end{align*}
we finish the bound of $\Psi(\calE_8)$.
\end{itemize}

The following analysis fills in the details.

\noindent\textbf{Stage I:}
We cover the set $\Theta_{h}$ with
an $\varepsilon$-net $\+N_{\varepsilon}$. Its cardinality can be bounded
as
\begin{align*}
|\+N_{\varepsilon}| \stackrel{\circled{2}}{\leq} \bracket{1 + \frac{2}{\varepsilon}}^{h}
\stackrel{\circled{3}}{\leq}\
\bracket{\frac{3}{\varepsilon}}^{h},
\end{align*}
where in $\circled{2}$ we use that elements of $\Theta_h$ have at least $(n-h)$ zero elements,
and accordingly we cover the sphere $\mathbb{S}^{h-1}$ with an $\varepsilon$-net $\+N_{\varepsilon}$,
whose cardinality can be bounded as in~\cite{wainwright_2019},
and in $\circled{3}$ we assume that $\varepsilon \leq 1$.

\noindent\textbf{Stage II:}
We will prove the relation
\begin{align*}
&\Prob\bracket{\fnorm{\Proj_{\bX}^{\perp}\frac{(\bI-\bPi)\bX\bB^{*}}{\fnorm{(\bI - \bPi)\bX\bB^{*}}}}^2 < \frac{t}{h}} \stackrel{\circled{4}}{\leq}  r\Prob\bracket{\calE_{\Theta}}
\stackrel{\circled{5}}{\leq} r\Prob\bracket{\calE_{\+N_{\varepsilon}}},
\notag
\end{align*}
when $\varepsilon = \sqrt{{t}/{h}}$ and $\circled{4}$ follows from
the definition of $\calE_{\Theta}$. We here focus on proving inequality $\circled{5}$, which is done by
\begin{align*}
\Prob\bracket{\calE_{\Theta}}=
\Prob\bracket{\calE_{\Theta}\bigcap \calE_{\+N_{\varepsilon}} } +
\Prob\bracket{\calE_{\Theta}\bigcap \br{\calE}_{\+N_{\varepsilon}} } 
\leq &~\Prob\bracket{\calE_{\+N_{\varepsilon}}} +
\Prob\bracket{\calE_{\Theta}\bigcap \br{\calE}_{\+N_{\varepsilon}}}
\stackrel{\circled{6}}{=}\Prob\bracket{\calE_{\+N_{\varepsilon}}},
\end{align*}
where $\circled{6}$ is due to the fact
$\Prob\bracket{\calE_{\Theta}\bigcap \br{\calE}_{\+N_{\varepsilon}}} = 0$.
A detailed explanation is given as follows. Note that, given $\br{\calE}_{\+N_{\varepsilon}}$, it holds that
 for all $\btheta_0 \in \+N_{\varepsilon}$, we have
$\norm{\Proj_{\bX}^{\perp}\btheta_0 }{2} \geq 2\varepsilon$.
Then for arbitrary $\btheta \in \Theta_{h}$,
we consider an element $\btheta_0 \in \+N_{\varepsilon}$ such that
$\norm{\btheta - \btheta_0}{2} \leq \varepsilon$ and consequently
\begin{align*}
\norm{\Proj^{\perp}_{\bX}\btheta}{2} \geq~&
\norm{\Proj^{\perp}_{\bX}\btheta_0}{2} -
\norm{\Proj^{\perp}_{\bX}\bracket{\btheta - \btheta_0}}{2} 
\geq ~ 2\varepsilon - \norm{\Proj^{\perp}_{\bX}\bracket{\btheta - \btheta_0}}{2}
\stackrel{\circled{7}}{\geq} ~2\varepsilon - \norm{\btheta - \btheta_0}{2} \stackrel{\circled{8}}{\geq}
\sqrt{\frac{t}{h}},
\end{align*}
where in $\circled{7}$ we use
the contraction property of projections,
and in $\circled{8}$ the
fact $\norm{\btheta - \btheta_0}{2} \leq \varepsilon = \sqrt{{t}/{h}}$.

\noindent\textbf{Stage III:}
We study the probability
$\Prob\bracket{\norm{\Proj_{\bX}^{\perp}\btheta_0}{2}^2 \leq \frac{4t}{h}}$
for fixed $\btheta_0 \in \+N_{\varepsilon}$.
In virtue of results in~\cite{dasgupta2003elementary}, we have
\begin{align*}
\Prob\bracket{\norm{\Proj^{\perp}_{\bX}\btheta_0}{2}^2 \leq
\frac{\alpha (n-p)}{n}\norm{\btheta_0}{2}^2}
\stackrel{}{\leq} \exp\bracket{\frac{n-p}{2}\bracket{\log \alpha- \alpha + 1}},
~~\alpha \leq 1.
\end{align*}
We can set $\alpha = {4nt}/((n-p)h)$ $(< 1)$ and
obtain
\begin{align*}
\Prob\bracket{\norm{\Proj^{\perp}_{\bX}\btheta_0}{2}^2 \leq
\frac{4t }{h}}
=~& \Prob\bracket{\norm{\Proj^{\perp}_{\bX}\btheta_0}{2}^2 \leq
\frac{\alpha (n-p)}{n}}\\
\leq ~& \exp\bracket{\frac{n-p}{2}\bracket{\log\bracket{\frac{4nt}{(n-p)h}}- \frac{4nt}{(n-p)h} + 1}}  \\
\stackrel{\circled{9}}{\leq}~&\exp\bracket{\frac{n}{4}\bracket{\log\bracket{\frac{8t}{h}}- \frac{8t}{h} + 1}},
\end{align*}
where in $\circled{9}$ we use that
$(a)$ $n\geq 2p$, $(b)$ $\log x - x + 1$ is increasing
in range $(0, 1)$, and $(c)$ $\log x+ 1\leq x$.

In the end, we bound $\Psi(\calE_8)$ as
\begin{align}
\Psi(\calE_8) 
\leq~&
r\bracket{\frac{3}{\sqrt{{t}/{h}}}}^{h}
\exp\bracket{\frac{n}{4}\bracket{\log \bracket{\frac{8t}{h}} - \frac{8t}{h} + 1}} \notag\\
=~&
\exp\bigg(h\log(3) - \frac{h}{2}\log\bracket{\frac{t}{h}}
+ \log r +\frac{n}{4}\bracket{\log \bracket{\frac{8t}{h}} - \frac{8t}{h} + 1}\bigg)\notag \\
\stackrel{\circled{A}}{\leq}~&
\exp\bigg(\frac{rh}{2}\bracket{\log\bracket{\frac{t}{h}} - \frac{16t}{h} + 1}
+ 3.68rh + \log r \bigg) \notag \\
\stackrel{\circled{B}}{\leq}~&\exp\bracket{\frac{rh}{2}\bracket{\log\bracket{\frac{t}{h}} - \frac{t}{h} + 1} + 4.18rh}\label{eq:prob_omega_1_omega_2c_bound}
\end{align}
where in $\circled{A}$ we use the assumption that $n\geq 4rh$,
and in $\circled{B}$ we use that $rh \geq 2r \geq 2\log(r)$.
Combining~\eqref{eq:succ_prac_h2_dev_bound},
\eqref{eq:succ_prac_hmany_dev_bound} and~\eqref{eq:prob_omega_1_omega_2c_bound}, we finish the proof.
\end{proof}

\begin{remark}
\label{remark:h_constraint}
Note that we cannot improve $h$ from
$\Omega\bracket{\frac{n}{\log n}}$ to $n$ in
general, since there is an inherent problem
when dealing with the case $h\rightarrow n$.
A detailed explanation is given as the following.
The key ingredient in bounding
$\Psi(\calE_8)$ is based on the step
\begin{align*}
\Psi(\calE_8) \leq
\Prob\bracket{\norm{\Proj_{\bX}^{\perp}\btheta}{2} \leq \sqrt{\frac{t}{h}},~~
\exists~\btheta \in \Theta_{h}}
\stackrel{}{\leq}
|\+N_{\varepsilon}|\cdot
\Prob\bracket{\norm{\Proj_{\bX}^{\perp}\btheta_0}{2} \leq \sqrt{\frac{t}{h}} + \varepsilon,~
\exists~\btheta \in \+N_{\varepsilon}} < 1.
\end{align*}
For the extreme case $h = n$, we
cannot have $|\+N_{\varepsilon}|\Prob\bracket{\norm{\Proj_{\bX}^{\perp}\btheta_0}{2} \leq \sqrt{{t}/{h}} + \varepsilon,~
\exists \btheta \in \+N_{\varepsilon}} < 1$
since
\begin{align*}
&\Prob\bracket{\norm{\Proj_{\bX}^{\perp}\btheta}{2} \leq \sqrt{\frac{t}{h}},~~
\exists~\btheta \in \Theta_{n}}
\geq \Prob\bracket{\fnorm{\Proj_{\bX}^{\perp}\frac{\bX\bB^{*}}{\fnorm{\bX\bB^{*}}}} \leq \sqrt{\frac{t}{h}}} = 1.
\end{align*}
The reason behind this is that
we lose control of the cardinality
$|\+N_{\varepsilon}| \lsim  \bracket{C/\varepsilon}^{rh}$ when
$h\rightarrow n$.
\end{remark}

\section{Useful Probability inequalities}
\label{sec:appendix_prob}

\begin{lemma}[Thm. 2.5 in~\cite{latala2007banach}]
\label{lemma:gauss_proj}
Let $\bA\in \RR^{n\times n}$ be a non-zero matrix
and let $\bxi = (\xi_i)_{i=1}^n$ be a random vector with independent sub-Gaussian
entries such that $(i)$ $\Var(\xi_i)\geq 1$, $1\leq i \leq n$, and
$(ii)$ the sub-Gaussian constant of the $\{ \xi_i \}$ is at most $\beta$. Then $\forall \by \in \RR^n$, there exists a $c_0 > 0$ such that
\[
\Prob\bracket{\norm{\by - \bA\bxi}{2} \leq \frac{\fnorm{\bA}}{2}}
\leq 2\exp\bracket{-\frac{c_0}{\beta^4} \varrho(\bA)}.
\]

\end{lemma}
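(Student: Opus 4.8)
This statement is quoted verbatim as Theorem~2.5 in~\cite{latala2007banach}, so the most economical route is to invoke that reference directly; what follows is a sketch of the self-contained small-ball argument one would otherwise carry out. The plan is to convert the small-ball event into a Laplace-transform estimate, then exploit the independence of the coordinates of $\bxi$ through a characteristic-function factorization, and finally extract the stable rank $\varrho(\bA)$ from a truncated Gaussian chaos.

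First I would fix a scale $s > 0$ and use the elementary inequality
\[
\Prob\bracket{\norm{\by - \bA\bxi}{2} \leq \tfrac{1}{2}\fnorm{\bA}}
\leq \exp\bracket{\tfrac{s}{4}\fnorm{\bA}^2}\,
\Expc \exp\bracket{-s\norm{\by - \bA\bxi}{2}^2},
\]
which is valid because $\norm{\by - \bA\bxi}{2} \leq \tfrac{1}{2}\fnorm{\bA}$ forces $e^{-s\norm{\by-\bA\bxi}{2}^2} \geq e^{-s\fnorm{\bA}^2/4}$. The exponential kernel is then represented through an independent standard Gaussian vector $\bg \sim \normdist(\bZero, \bI)$ via $\exp(-s\norm{\bv}{2}^2) = \Expc_{\bg}\exp(i\sqrt{2s}\,\la \bg, \bv\ra)$, which, after Fubini and the independence of the entries $\xi_k$, turns the expectation over $\bxi$ into a product of one-dimensional characteristic functions:
\[
\Expc_{\bxi}\exp\bracket{-s\norm{\by - \bA\bxi}{2}^2}
= \Expc_{\bg}\Bracket{e^{i\sqrt{2s}\la \bg, \by\ra}\prod_{k}\phi_{\xi_k}\bracket{\sqrt{2s}\,(\bA^{\rmt}\bg)_k}},
\]
where $\phi_{\xi_k}$ denotes the characteristic function of $\xi_k$ and the phase factor will be discarded upon taking moduli.

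The anti-concentration content enters when bounding each factor. Since $\Var(\xi_k) \geq 1$ and the sub-Gaussian constant is at most $\beta$, a standard estimate (via $1 - |\phi_{\xi_k}(u)|^2 = \Expc[1 - \cos(u(\xi_k - \xi_k'))]$ with $\xi_k'$ an independent copy, together with $1 - \cos x \geq c x^2$ for $|x|$ not too large) yields $|\phi_{\xi_k}(u)| \leq \exp(-c\min(u^2, \beta^{-2}))$ for an absolute constant $c$. Taking moduli and substituting gives
\[
\Prob\bracket{\norm{\by - \bA\bxi}{2} \leq \tfrac{1}{2}\fnorm{\bA}}
\leq \exp\bracket{\tfrac{s}{4}\fnorm{\bA}^2}\,
\Expc_{\bg}\exp\bracket{-c\sum_k \min\bracket{2s\,(\bA^{\rmt}\bg)_k^2,\,\beta^{-2}}}.
\]
Choosing $s \asymp \opnorm{\bA}^{-2}$ makes the prefactor exponent $\tfrac{s}{4}\fnorm{\bA}^2$ of order $\varrho(\bA)$, while the truncation at $\beta^{-2}$ keeps every summand bounded so that no single dominant singular direction of $\bA$ can saturate the sum; a concentration estimate for the Gaussian chaos $\sum_k \min(2s\,(\bA^{\rmt}\bg)_k^2, \beta^{-2})$ then shows it concentrates around a mean of order $\varrho(\bA)$ with overwhelming $\bg$-probability. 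The main obstacle is precisely this final balancing act: one must certify that the net exponent is negative of order $\varrho(\bA)/\beta^4$ for a suitably small universal $s$, which is exactly where the \emph{stable rank}---rather than the ambient dimension or $\fnorm{\bA}^2$---is forced to appear, and where the sharp $\beta^{-4}$ dependence of the constant $c_0$ is pinned down.
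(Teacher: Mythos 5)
The paper never proves this lemma: it is imported verbatim as Theorem~2.5 of~\cite{latala2007banach} and used as a black box (via Lemma~\ref{lemma:event7} in the proof of Theorem~\ref{thm:succ_recover_refine}), so your opening move --- simply invoking the reference --- coincides exactly with the paper's own treatment, and for the purposes of this paper nothing more is required.

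Your supplementary self-contained sketch, however, has a genuine error at its pivotal step. The claimed bound $|\phi_{\xi_k}(u)| \leq \exp\bracket{-c\min(u^2, \beta^{-2})}$ for \emph{all} $u$ is false for lattice-valued sub-Gaussian variables: for Rademacher $\xi_k$ (variance $1$, sub-Gaussian constant $O(1)$) one has $\phi(u) = \cos u$, so $|\phi(2\pi)| = 1$, violating any uniform floor of the form $\exp(-c\beta^{-2})$. The symmetrization identity $1 - |\phi(u)|^2 = \Expc[1-\cos(u(\xi_k - \xi_k'))]$ yields decay $|\phi(u)| \leq \exp(-cu^2)$ only in the low-frequency regime $|u| \lesssim 1/\beta$; for larger $u$ the modulus can return to $1$ (the arithmetic obstructions familiar from Littlewood--Offord theory). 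Consequently the truncated chaos $\sum_k \min\bracket{2s(\bA^{\rmt}\bg)_k^2, \beta^{-2}}$ and its claimed mean of order $\varrho(\bA)$ are not justified: with your choice $s \asymp \opnorm{\bA}^{-2}$ the arguments $u_k = \sqrt{2s}(\bA^{\rmt}\bg)_k$ exit the valid regime with constant probability per coordinate. The route is salvageable --- take $s \asymp \beta^{-2}\opnorm{\bA}^{-2}$ so the $u_k$ remain small, replace $\min(u^2,\beta^{-2})$ by $u^2\,\Ind(|u|\leq c/\beta)$, and check that the truncated Gaussian chaos still has mean $\gtrsim s\fnorm{\bA}^2 \asymp \varrho(\bA)/\beta^2$ while the prefactor exponent $s\fnorm{\bA}^2/4$ is a small multiple of the same quantity --- but as written the key inequality fails. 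A shorter complete argument bypasses characteristic functions entirely via Hanson--Wright sub-Gaussian concentration (the paper already cites Theorem~2.1 of~\cite{rudelson2013hanson} in the proof of Lemma~\ref{lemma:fail_recover_smallball}): after absorbing the means of the $\xi_k$ into $\by$, the condition $\Var(\xi_k)\geq 1$ gives $\Expc\norm{\by - \bA\bxi}{2}^2 \geq \fnorm{\bA}^2$, and concentration of the quadratic form at deviation scale $t \asymp \fnorm{\bA}^2$ has exponent $\min\bracket{t^2/(\beta^4\opnorm{\bA}^2\fnorm{\bA}^2),\ t/(\beta^2\opnorm{\bA}^2)} \asymp \varrho(\bA)/\beta^4$, which is precisely the stated bound including its $\beta^{-4}$ dependence.
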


\begin{lemma}[Lemma 2.6 in~\cite{latala2007banach}]
\label{lemma:subgauss_proj}
Let $\bA\in \RR^{n\times n}$ be a non-zero matrix
and $\bg$ be Gaussian $\normdist(\b0, \bI_{n\times n})$. Then
we have
\[
\Prob\left(\norm{\by - \bA\bg}{2} \leq \alpha{\fnorm{\bA}}\right)
\leq \exp\left(\kappa \log(\alpha)\varrho(\bA)\right),
\]
for any $\alpha \in (0, \alpha_0)$,
where $\by\in \RR^n$ is an arbitrary fixed vector,
$\alpha_0 \in (0, 1)$ and $\kappa > 0$ are
universal constants.
\end{lemma}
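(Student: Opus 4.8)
The plan is to establish this as a small-ball (anti-concentration) estimate via a Laplace-transform (Chernoff) argument, with the stable rank emerging from a single concavity inequality. First I would strip away the role of $\by$ and diagonalize $\bA$. Writing the singular value decomposition $\bA = \bU\bSigma\bV^{\rmt}$ with $\bSigma = \diag(\sigma_1,\ldots,\sigma_n)$ and $\sigma_1 = \opnorm{\bA}$, rotational invariance of the standard Gaussian shows that $\bV^{\rmt}\bg$ has the same law as $\bg$, so, applying the orthogonal $\bU^{\rmt}$ inside the norm, $\ltwonorm{\by - \bA\bg}$ is distributed as $\ltwonorm{\bz - \bSigma\bg}$ with $\bz \defequal \bU^{\rmt}\by$. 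The target probability therefore equals $\Prob\bracket{\sum_{i=1}^n (z_i - \sigma_i g_i)^2 \leq \alpha^2\fnorm{\bA}^2}$, a sum of independent terms.

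Next I would invoke the exponential Markov inequality in its small-ball form: for any $\lambda > 0$, pointwise $\Ind\{S \leq a\} \leq e^{\lambda a - \lambda S}$, so that $\Prob\bracket{\sum_i (z_i - \sigma_i g_i)^2 \leq \alpha^2\fnorm{\bA}^2} \leq e^{\lambda\alpha^2\fnorm{\bA}^2}\prod_{i=1}^n \Expc\, e^{-\lambda(z_i - \sigma_i g_i)^2}$. Since $z_i - \sigma_i g_i \sim \normdist(z_i,\sigma_i^2)$, the elementary Gaussian integral gives $\Expc\, e^{-\lambda(z_i - \sigma_i g_i)^2} = (1 + 2\lambda\sigma_i^2)^{-1/2}\exp\bracket{-\lambda z_i^2/(1 + 2\lambda\sigma_i^2)} \leq (1 + 2\lambda\sigma_i^2)^{-1/2}$, where the inequality simply discards the nonpositive exponent and thereby renders the dependence on $\by$ (equivalently $\bz$) harmless. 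Taking logarithms yields $\log\Prob \leq \lambda\alpha^2\fnorm{\bA}^2 - \tfrac12\sum_i \log(1 + 2\lambda\sigma_i^2)$.

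The crucial step — the one I expect to carry the whole argument — is to convert $\sum_i \log(1 + 2\lambda\sigma_i^2)$ into the stable rank rather than the rank or the ambient dimension. Setting $\tau \defequal 2\lambda\opnorm{\bA}^2$ and $u_i \defequal \sigma_i^2/\opnorm{\bA}^2 \in [0,1]$, concavity of $x \mapsto \log(1 + \tau x)$ on $[0,1]$ together with its vanishing at $x=0$ gives the chord bound $\log(1 + \tau u_i) \geq u_i\log(1+\tau)$; summing and using $\sum_i u_i = \fnorm{\bA}^2/\opnorm{\bA}^2 = \varrho(\bA)$ produces $\sum_i \log(1 + 2\lambda\sigma_i^2) \geq \varrho(\bA)\log(1+\tau)$. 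Combining this with $\lambda\alpha^2\fnorm{\bA}^2 = \tfrac{\tau\alpha^2}{2}\varrho(\bA)$ collapses the estimate to $\log\Prob \leq \tfrac{\varrho(\bA)}{2}\bracket{\tau\alpha^2 - \log(1+\tau)}$.

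Finally I would optimize over $\tau$ by the clean choice $\tau = \alpha^{-2}$, which gives $\tau\alpha^2 - \log(1+\tau) = 1 - \log(1 + \alpha^{-2}) \leq 1 + 2\log\alpha$ after the trivial lower bound $\log(1+\alpha^{-2}) \geq \log(\alpha^{-2}) = -2\log\alpha$. Hence $\log\Prob \leq \tfrac{\varrho(\bA)}{2}(1 + 2\log\alpha)$, and restricting to $\alpha \leq \alpha_0 \defequal e^{-1}$ makes $1 + 2\log\alpha \leq \log\alpha$, so $\log\Prob \leq \tfrac12\varrho(\bA)\log\alpha$. This establishes the claim with the explicit constants $\kappa = \tfrac12$ and $\alpha_0 = e^{-1}$. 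The only genuinely delicate points are neutralizing the $\by$-dependence (handled by dropping the negative exponent in the Gaussian integral) and the concavity/chord inequality, which is precisely the mechanism through which $\varrho(\bA)$, rather than $r(\bA)$ or $n$, governs the decay rate.
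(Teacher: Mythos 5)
Your proof is correct, but it takes a different route from the paper for the simple reason that the paper has no proof of this statement at all: the lemma is imported verbatim as Lemma 2.6 of~\cite{latala2007banach} and used as a black box (notably inside Lemma~\ref{lemma:oracle_succ_optim_small_ball} in the proof of Theorem~\ref{thm:oracle_succ_optim}). Your Laplace-transform argument checks out at every step: the SVD plus rotational invariance correctly reduces the event to $\Prob\bracket{\sum_{i=1}^n (z_i - \sigma_i g_i)^2 \leq \alpha^2 \fnorm{\bA}^2}$ with independent summands; the Gaussian integral $\Expc\, e^{-\lambda(z_i - \sigma_i g_i)^2} = (1+2\lambda\sigma_i^2)^{-1/2}\exp\bracket{-\lambda z_i^2/(1+2\lambda\sigma_i^2)} \leq (1+2\lambda\sigma_i^2)^{-1/2}$ is the right way to neutralize $\by$ (and handles degenerate $\sigma_i = 0$ automatically); the chord inequality $\log(1+\tau u) \geq u\log(1+\tau)$ for $u \in [0,1]$ is indeed the single mechanism that makes $\varrho(\bA) = \sum_i \sigma_i^2/\opnorm{\bA}^2$ rather than $r(\bA)$ or $n$ govern the exponent; and the choices $\tau = \alpha^{-2}$ and $\alpha_0 = e^{-1}$ legitimately yield $\log\Prob \leq \tfrac{1}{2}\varrho(\bA)\bracket{1 + 2\log\alpha} \leq \tfrac{1}{2}\varrho(\bA)\log\alpha$, i.e., $\kappa = \tfrac{1}{2}$. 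What your approach buys over the paper's citation is a short, self-contained proof with \emph{explicit} universal constants, which could in principle be propagated through Theorem~\ref{thm:oracle_succ_optim} to make the unspecified $\alpha_0$ and $\kappa$ there concrete. What the black-box route buys instead is access to the companion sub-Gaussian version (Theorem 2.5 of the same source, restated as Lemma~\ref{lemma:gauss_proj}), which the paper also needs and which your exact-Gaussian-integral computation does not extend to directly; for the Gaussian statement at hand, however, your argument loses nothing in generality.
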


\begin{lemma}[\cite{wainwright_2019} (Example 2.11, P29)]
\label{lemma:appendix_chi_square}
For a $\chi^2$-RV $Y$ with $\ell$ degrees of freedom, we have
\[
\Prob\bracket{|Y - \ell| \geq t} \leq
2\exp\bracket{-\bracket{\frac{t^2}{8\ell} \vcap \frac{t}{8}}},
~~\forall~t\geq 0.
\]
\end{lemma}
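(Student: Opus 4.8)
The plan is to establish the stated two-sided tail bound as a standard sub-exponential concentration inequality, obtained by the Chernoff method applied to the moment generating function (MGF) of a $\chi^2$ random variable. Writing $Y = \sum_{i=1}^{\ell} Z_i^2$ with $Z_i \iid \normdist(0,1)$, the centered variable decomposes as $Y - \ell = \sum_{i=1}^{\ell}(Z_i^2 - 1)$, a sum of $\ell$ i.i.d.\ centered terms, so it suffices to control the MGF of a single summand and then exploit independence.

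First I would compute the MGF of $Z_i^2 - 1$ exactly: for $\lambda < 1/2$ one has $\Expc e^{\lambda(Z_i^2 - 1)} = e^{-\lambda}(1 - 2\lambda)^{-1/2}$. The central technical step is to show that this summand is sub-exponential with parameters $(\nu, b) = (2, 4)$ in the sense that
\[
\log \Expc e^{\lambda(Z_i^2 - 1)} = -\lambda - \tfrac{1}{2}\log(1 - 2\lambda) \leq 2\lambda^2, \qquad |\lambda| \leq \tfrac{1}{4}.
\]
This is a scalar calculus estimate, which I would verify by setting $g(\lambda) = 2\lambda^2 + \lambda + \tfrac{1}{2}\log(1-2\lambda)$, checking $g(0) = g'(0) = 0$, and analyzing $g''(\lambda) = 4 - 2/(1-2\lambda)^2$ to confirm $g \geq 0$ throughout $[-\tfrac{1}{4}, \tfrac{1}{4}]$.

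By independence the log-MGF tensorizes, giving $\log \Expc e^{\lambda(Y - \ell)} \leq 2\ell\lambda^2$ for $|\lambda| \leq \tfrac{1}{4}$, so $Y - \ell$ is sub-exponential with parameters $(2\sqrt{\ell}, 4)$. I would then apply the Chernoff bound $\Prob(Y - \ell \geq t) \leq \exp(2\ell\lambda^2 - \lambda t)$ and optimize over $\lambda \in (0, \tfrac{1}{4}]$: the unconstrained optimizer is $\lambda^{\star} = t/(4\ell)$, admissible precisely when $t \leq \ell$, yielding the sub-Gaussian-type bound $\exp(-t^2/(8\ell))$; when $t > \ell$ the constraint binds at $\lambda = \tfrac{1}{4}$, yielding $\exp(-t/8)$. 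The two regimes combine into $\exp(-(t^2/(8\ell) \vcap t/8))$. The identical argument with $\lambda \in [-\tfrac{1}{4}, 0)$ handles the lower tail $\Prob(Y - \ell \leq -t)$, and a union bound over the two one-sided events produces the factor $2$ and the claimed inequality.

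The main obstacle will be the scalar MGF estimate: one must pin down the sub-exponential parameters $(2,4)$ so that the downstream Chernoff optimization reproduces exactly the constant $1/8$ in the exponent rather than a looser one. The tail split at $t = \ell$ (equivalently $\nu^2/b = \ell$) must be matched carefully to the range of validity $|\lambda| \leq \tfrac{1}{4}$ of the MGF bound, since outside this range the term $-\tfrac{1}{2}\log(1-2\lambda)$ is no longer dominated by $2\lambda^2$ and the Gaussian-type estimate fails.
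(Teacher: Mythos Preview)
The paper does not give its own proof of this lemma; it simply cites it from Wainwright's textbook (Example~2.11) as a standard auxiliary inequality. Your proposal supplies a correct, self-contained proof via the Chernoff method, and it is essentially the same argument that appears in that reference: identify $Z_i^2 - 1$ as sub-exponential with parameters $(2,4)$, tensorize, optimize over $\lambda \in (0,\tfrac14]$ to get the Gaussian regime $t \leq \ell$ and the exponential regime $t > \ell$, and union-bound the two tails. There is nothing to correct or contrast.
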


\end{appendices}

\end{document}